\documentclass[10pt]{article}
\usepackage{fullpage,graphicx,subfigure,mathdots,mathpazo,color}
\usepackage{amsmath,amscd,tikz,mathrsfs}
\usepackage[normalem]{ulem}
\usepackage{amsmath}
\usepackage{setspace,booktabs}

\usepackage{epsfig,amsmath,graphicx,amssymb,overpic}

\usepackage{graphicx}
\usepackage{dcolumn}
\usepackage{bm}
\usepackage{graphicx}
\usepackage{subfigure}
\usepackage{epsfig,amsmath,graphicx,amssymb,overpic,cite}

\usepackage{graphicx}
\usepackage{dcolumn}
\usepackage{bm}
\usepackage{graphicx}
\usepackage{subfigure}
\usepackage{amsthm}

\newcommand{\ee}{\mathrm{e}}

\usepackage[T1]{fontenc}
\usepackage{amsmath, amsfonts, amssymb, amsthm}
\usepackage{geometry} 
\usepackage{hyperref} \hypersetup{colorlinks=true, linkcolor=blue, citecolor=red}
\usepackage{enumitem}

\usepackage{graphicx}
\usetikzlibrary{positioning,arrows.meta}

\usepackage{graphicx}      
\usepackage{titlesec}
\def\bee{\begin{eqnarray}}
\def\ene{\end{eqnarray}}
\def\bes{\begin{subequations}}
\def\ees{\end{subequations}}

\usepackage{color}

\def\i{\mathrm{i}}
\def\v{\vspace{0.06in}}
\def\be{\begin{equation}}
\def\ee{\end{equation}}
\def\bee{\begin{eqnarray}}
\def\ene{\end{eqnarray}}
\def\bes{\begin{subequations}}
\def\ees{\end{subequations}}

\def\d{\displaystyle}

\theoremstyle{plain}
\newtheorem{theorem}{Theorem}[section]

\newtheorem{proposition}[theorem]{Proposition}
\newtheorem{corollary}[theorem]{Corollary}
\newtheorem{RH}[theorem]{Riemann-Hilbert Problem}

\theoremstyle{definition}

\newtheorem{remark}[theorem]{Remark}

\numberwithin{equation}{section}

\begin{document}

\baselineskip=15pt
\renewcommand {\thefootnote}{\dag}
\renewcommand {\thefootnote}{\ddag}
\renewcommand {\thefootnote}{ }

\pagestyle{plain}

\begin{center}
\baselineskip=16pt \leftline{} \vspace{-0.1in} {\Large \bf On the global well-posedness for the nonlocal Fokas-Lenells equation  with the weighted Sobolev initial data on the line} \\[0.2in]
\end{center}

\begin{center}
{Zhenya Yan$^{a,b,c,*}$,\,  Guoqiang Zhang$^{b,c}$,\, Guancheng Zhu$^{b,c}$,}
\footnote{$^{*}${\it Email address}: zyyan@mmrc.iss.ac.cn (Corresponding author)}  \\[0.05in]
\baselineskip=12pt
{\footnotesize $^a${\it School of Mathematics and Information Science, Zhongyuan University of Technology, Zhengzhou 450007, China}} \\
{\footnotesize
$^b${\it State Key Laboratory of Mathematical Sciences, Academy of Mathematics and Systems Science,\\ Chinese Academy of Sciences, Beijing 100190, China}}\\
{\footnotesize $^c${\it School of Mathematical Sciences, University of Chinese Academy of Sciences, Beijing 100049, China}}
\\
\end{center}

\v

\noindent\rule{\textwidth}{0.6pt}

\noindent {\bf Abstract}

 We establish the global well-posedness of the Cauchy problem for the reverse space-time nonlocal Fokas-Lenells equation with the weighted Sobolev initial data $q_0(x)\in H^{3}(\mathbb{R}) \cap H^{2,1}(\mathbb{R})$ on the line. We develop the inverse scattering transform formulated via the associated Riemann-Hilbert problems to study this issue. A spectral uniformization transform is introduced to resolve the singular behavior inherent in the KN-type negative flow spectral problem. Owing to the reverse space-time reduction, reflection coefficients no longer satisfy the usual Hermitian conjugation symmetry, and the coercivity of the jump matrix is therefore not available a priori. The quantitative smallness condition on the initial data yields uniform bounds on the reflection coefficients and ensures the uniform positive definiteness of the Hermitian part of the associated jump matrix. The resulting coercivity allows us to establish the bounded invertibility of the associated singular integral operator through a Fredholm and vanishing-lemma argument. Under this condition, we prove an $L^{2}$-Sobolev bijective correspondence between the potential and scattering data, exclude spectral singularities on continuous spectra, and obtain the global existence and uniqueness of solutions. Moreover, the associated solution map is Lipschitz continuous on the admissible initial-data class.

\vspace{0.08in} \noindent {\it MSC:} 35P25, 35Q51, 35Q15, 35A01, 35G25, 37K15

\vspace{0.05in} \noindent {\it Keywords:}\,Reverse space-time nonlocal Fokas-Lenells equation; Cauchy problem;  Lipschitz continuous; Riemann-Hilbert problem; Scattering data; Global well-posedness.

\noindent\rule{\textwidth}{0.6pt}


\begin{spacing}{1.2}
\baselineskip=13pt
\tableofcontents
\vspace{-0.2in}
\end{spacing}

\section{Introduction}

The study of completely integrable nonlinear systems has long been a central theme in the fields of mathematical physics and applied mathematics, owing to their rich algebraic structures and deep physical significance \cite{Ablowitz1991, Faddeev1987}. A prototypical example is the integrable nonlinear Schr\"odinger (NLS) equation \cite{ZS}
\begin{equation} \label{nls}
\mathrm{i}q_t+q_{xx}+2\sigma|q|^2q=0, \quad q=q(x,t), \quad \sigma=\pm 1,
\end{equation}
which has served for decades as a fundamental model for the propagation of picosecond optical pulses in single-mode optical fibers \cite{Agrawal2013, Hasegawa1973}. However, as experimental techniques entered the femtosecond regime, the standard NLS equation (\ref{nls}) became insufficient to capture all relevant dynamics, since higher-order nonlinear dispersive effects, including spatio-temporal coupling, can no longer be neglected \cite{Brabec1997, Rothenberg1992}. In this context, the Fokas--Lenells (FL) equation \cite{Lenells2009b}
\begin{equation} \label{FL}
q_{xt}-q+\mathrm{i}|q|^2q_x=0
\end{equation}
arose as an integrable model incorporating such higher-order effects,  where $q_{xt}$ is the spatio-temporal dispersion, $|q|^2q_x$ corresponds to the nonlinear self-steepening effect of the Kerr medium. It was first derived by Fokas \cite{Fokas1995} through two Hamiltonian operators, and was later obtained by Lenells \cite{Lenells2009b} via a canonical transformation from the Kaup-Newell (KN) spectral problem \cite{Kaup1978}. Therefore, the FL equation (\ref{FL}) may be viewed as an integrable generalization of the NLS equation \eqref{nls}, the first negative flow of the KN hierarchy, and is particularly relevant to the modelling of ultrashort optical pulses \cite{Matsuno2012}.
In fact, Eq.~(\ref{FL}) can be equivalent to the general form \cite{Fokas1995,Lenells2009a,lene2010}
\bee\label{gfl}
 \mathrm{i}\hat{q}_{\tau}+\alpha \hat{q}_{\xi\xi}-\beta \hat{q}_{\xi\tau}+\gamma |\hat{q}|^2\hat{q}+\mathrm{i}\beta\gamma |\hat{q}|^2\hat{q}_{\xi}=0,
\ene
via the gauge transformation \cite{Lenells2009b}
\bee
\hat{q}(\xi,\tau)=\frac{1}{\beta}\sqrt{\frac{\alpha}{2\beta|\gamma|}}q(x,t)
\exp\left[\mathrm{i}\left(\frac{1}{\beta}\xi+\frac{2\alpha}{\beta^2}\tau\right)\right],\quad
x=2\xi+\frac{2\alpha}{\beta}\tau,\quad t=-\frac{\alpha}{\beta^3}\tau.
\ene
where $\alpha\beta>0$, otherwise, one can make the transform $\xi\to -\xi$ to make the condition hold.
$\hat{q}_{\xi\xi}$ denotes the group velocity dispersion, $\hat{q}_{\xi\tau}$ is the spatio-temporal dispersion,
$|\hat{q}|^2\hat{q}$ is the Kerr nonlinear term, $|\hat{q}|^2\hat{q}_{\xi}$ corresponds to the nonlinear self-steepening effect of the Kerr medium. Eq.~(\ref{gfl}) is sometime simplified as \cite{Lenells2009ab,lene2010}
\bee
 q_{xt}-q_{xx}+q-2\mathrm{i}q_x\mp \mathrm{i}|q|^2q_x=0
\ene
via $\hat{q}(\xi,\tau)=e^{\mathrm{i}\xi}q(x,t)$ with $x=\xi,\, t=\tau$ and $\alpha=\beta=\pm\gamma=1$.
As $\beta=0$, Eq.~(\ref{gfl}) reduces to the NLS equation (\ref{nls}). However, as $\beta\not=0$, The properties of Eq.~(\ref{gfl}) or (\ref{FL}) ) are rather different from ones of the NLS equation (\ref{nls}).

In recent years, the theory of integrable systems has been substantially enriched by the work of Ablowitz and Musslimani \cite{Ablowitz2013, Ablowitz2016}, who introduced a class of {\it nonlocal integrable equations} (e.g., nonlocal NLS equation)
 \begin{equation} \label{n-nls}
 \mathrm{i}q_t+q_{xx}+2\sigma q^2 \bar{q}(-x,t)=0,\quad q=q(x,t), \quad \sigma=\pm 1
\end{equation}
 involving the parity-time $(\mathcal{PT})$ symmetry \cite{Bender1998}, where the bar denotes the complex conjugate (Notice that (\ref{n-nls}) reduces to (\ref{nls}) as $q(-x,t)=q(x,t)$).
 The notion of $\mathcal{PT}$ symmetry originates in quantum mechanics, where certain non-Hermitian Hamiltonians may possess entirely real spectra under suitable symmetry conditions \cite{Bender1998,benderrmp-24,Konotop2016,Yan2013,yanbook23}. It has since found physical realizations in classical optics and metamaterials, especially in systems whose refractive indices are engineered to exhibit balanced gain and loss \cite{Makris2008, Ruter2010}. Following these developments, nonlocal reductions involving reverse space, reverse time, and reverse space-time symmetries have been systematically explored \cite{Ablowitz2018a,yanaml15,yanaml16,yanaml18}, and their physical relevance has also been investigated in multi-component settings \cite{Yang2018, Yan2013,yanaml15}. Representative examples include the nonlocal NLS equation \cite{Ablowitz2013,Wen2016}, the nonlocal modified Korteweg-de Vries (mKdV) equation \cite{Liu2024}, and the nonlocal derivative NLS (DNLS) equation \cite{Li2023}; related multidimensional and general nonlocal integrable models have also been developed \cite{Fokas2016, Ablowitz2017}. These equations exhibit a variety of wave phenomena, such as unconventional soliton structures and singularity formation, that differ markedly from those of their local counterparts \cite{Konotop2016, Wen2016,Yang2019}.

Over the past several decades, the inverse scattering transform (IST) \cite{Gardner1967, Lax1968}, in its Riemann--Hilbert (RH) formulation \cite{Biondini2014, Zhang2020}, has become one of the principal rigorous tools for analyzing the global dynamics and well-posedness of integrable partial differential equations \cite{Novikov1984, Deift1979}. The $L^2$-Sobolev bijectivity theory developed by Zhou \cite{Zhou1989, Zhou1998}, together with the Deift--Zhou approach \cite{Deift1993a, Deift2003}, provides a powerful analytical framework for constructing direct and inverse scattering maps and for controlling the time evolution of the associated scattering data. In a number of local self-adjoint settings, this framework permits the treatment of large initial data without imposing a small-norm assumption on the potential \cite{Deift1993b}. It also forms a fundamental basis for the nonlinear steepest descent method and for the analysis of long-time asymptotics and asymptotic stability of solitons \cite{Cuccagna2014, Borghese2018, Cuccagna2016}.

Aided by these RH techniques, substantial progress has been made for classical local integrable equations. Global well-posedness results for the DNLS equation were obtained by Pelinovsky {\it et al} \cite{Pelinovsky2017, Pelinovsky2018}, Liu {\it et al} \cite{Liu2016}, and Jenkins {\it et al} \cite{Jenkins2020}. For the local FL equation (\ref{FL}), the global well-posedness in the weighted Sobolev space $H^{3}(\mathbb{R})\cap H^{2,1}(\mathbb{R})$ has been established under various assumptions on the discrete spectrum and spectral singularities \cite{Cheng2025, Cheng2023, Li2025}. Moreover, Zhou's $L^2$-Sobolev RH methodology has been applied to several other integrable systems, including the massive Thirring model \cite{Pelinovsky2019}, the modified Camassa--Holm equation \cite{Yang2025}, and matrix NLS equations \cite{Fan2025}. In many local models, self-adjointness or an appropriate Hermitian reduction yields Schwarz reflection symmetries for the scattering data and coercivity properties for the associated jump matrices. These features are central to the unique solvability of the corresponding RH problems.

The passage from local to nonlocal integrable systems introduces substantial analytical differences. Under nonlocal reductions, the standard involution symmetries of the scattering matrix are generally modified \cite{Gerdjikov2017}. In particular, discrete eigenvalues need not occur in complex-conjugate pairs, and the jump matrices on the continuous spectrum are typically non-Hermitian. Consequently, the coercivity mechanisms available in classical local problems are no longer automatic \cite{Ablowitz2016, Yang2018}. Despite these difficulties, Ablowitz and Musslimani \cite{Ablowitz2016} developed the IST for the reverse-space nonlocal NLS equation, and Rybalko and Shepelsky \cite{Rybalko2019} subsequently derived its long-time asymptotics by the nonlinear steepest descent method. More recently, global well-posedness has been established for several spatially nonlocal equations, including the nonlocal NLS equation \cite{Zhao2024}, the nonlocal mKdV equation \cite{Liu2024}, and the nonlocal DNLS equation \cite{Li2023}, in suitable weighted Sobolev spaces.

While purely spatial nonlocalities have received considerable attention, the incorporation of time-reversal symmetry introduces an additional layer of difficulty. Ablowitz {\it et al} \cite{Ablowitz2018a, Ablowitz2018b} studied the IST for the reverse space-time nonlocal NLS and sine--Gordon equations, showing that the simultaneous coupling of spatial and temporal reflections leads to a nonstandard interaction between the spatial scattering problem and the time evolution. In particular, the potential entering the scattering problem at time $t$ involves the value of the field at the reflected time $-t$. However, to the best of our knowledge,  the global well-posedness of the reverse space-time nonlocal FL equation has not yet been fully addressed.

In this paper, we consider the existence of global solutions of the Cauchy problem for the reverse space-time nonlocal Fokas-Lenells (nFL) equation \cite{Zhang2019h}
\begin{equation}\label{eq:nFL}
\left\{\begin{array}{l}
q_{xt}-q+\mathrm{i}q q^{\mathcal{PT}}q_x=0,
 \vspace{0.1in}\\
q(x,0)=q_0(x)\in H^{3}(\mathbb{R})\cap H^{2,1}(\mathbb{R}),
\end{array}\right.
\end{equation}
where the complex envelope field $q=q(x,t)$, the nonlocal term $q^{\mathcal{PT}}(x,t):=q(-x,-t)$, no complex conjugation is involved in the definition of $q^{\mathcal{PT}}$, which differs from the nonlocal term $\bar{q}(-x,t)$ in the nonlocal NLS equation (\ref{n-nls}). Notice that the term $q q^{\mathcal{PT}}$ in the nonlocal FL equation (\ref{eq:nFL}) must not be real and differs from the similar real-valued term $|q|^2$ in the FL equation (\ref{FL}).  The nFL Eq.~(\ref{eq:nFL}) possesses the Kaup–Newell-type Lax pair
\begin{equation}\label{eq:Lax pair}
		\left\{ \begin{array}{ll}
			  \phi_{x} = X(x,t;\lambda)\phi, &  X = \mathrm{i}\lambda^2 \sigma_3 + \mathrm{i}\lambda Q_x, \\[0.5em]
			  \phi_{t} = T(x,t;\lambda)\phi, & T = -\dfrac{\mathrm{i}}{4\lambda^2}\sigma_3
+\dfrac{1}{2\lambda}Q\sigma_3
+\dfrac{\mathrm{i}}{2}Q^2\sigma_3,
		\end{array} \right.
	\end{equation}
where $\phi=\phi(x,t;\lambda)$ is a matrix-valued eigenfunction, $\lambda\in\mathbb{C}\setminus\{0\}$ is the spectral parameter, and
\begin{equation}\label{eq:PotentialMatrix}
Q=Q(x,t)=
\begin{pmatrix}
0 & q^{\mathcal{PT}}(x,t) \v\\
q(x,t) & 0
\end{pmatrix},
\qquad
\sigma_3=
\begin{pmatrix}
1 & 0 \v\\
0 & -1
\end{pmatrix}.
\end{equation}
The compatibility condition of the Lax pair (\ref{eq:Lax pair}), $X_t-T_x+[X,T]=0$,
just yields the nFL equation \eqref{eq:nFL} together with its reverse space-time reflected counterpart.

The main analytical difficulties arise in the nonlocal FL equation (\ref{eq:nFL}) from two sources: The first one is the singular spectral dependence of the Kaup-Newell-type Lax pair, especially near $\lambda=0$ and $\lambda=\infty$; The second one is the reverse space-time reduction itself, which modifies the usual symmetry relations among the scattering coefficients. As a result, the RH jump matrix is generally non-Hermitian, and the positivity and coercivity properties available in standard local RH formulations cannot be invoked directly. Recovering an appropriate coercive structure is therefore one of the central issues addressed in this paper.

To address these difficulties, we develop an IST framework adapted to the reverse space-time nonlocal FL equation. We establish a direct and inverse scattering correspondence of $L^2$-Sobolev type between the potential and the associated reflection data \cite{Zhou1998} for initial data in the weighted Sobolev space
$H^{3}(\mathbb{R})\cap H^{2,1}(\mathbb{R})$.
The $H^3$ regularity provides the derivative control needed for the effective potential and for the reconstruction formulas, whereas the $H^{2,1}$ regularity supplies the spatial decay and spectral regularity required in the direct and inverse scattering maps. These assumptions yield, in particular, the Sobolev regularity, decay properties, and Lipschitz estimates for the reflection coefficients and for the reconstructed potential.

The analysis proceeds in several steps. In the direct scattering transform, the Kaup--Newell-type spectral problem \cite{Kaup1978} exhibits singular behavior at both the origin and infinity in the natural $\lambda$-plane. Following the strategies developed for local FL equations \cite{Cheng2025, Cheng2023}, we combine a suitable gauge transformation with the spectral change of variables
$z=\lambda^2$,
thereby converting the problem into a Zakharov--Shabat-type spectral problem on the $z$-plane. We then construct the Jost solutions, establish their analytic and asymptotic properties, and derive the symmetry relations imposed on the scattering data by the reverse space-time reduction.

In the inverse scattering transform, we formulate a regularized RH problem on the real axis of the transformed $z$-plane and reduce it, via the Cauchy projection operators, to a Beals--Coifman singular integral equation. Since the reverse space-time reduction does not impose the usual Hermitian conjugation relation between the two reflection coefficients, the coercivity of the associated jump matrix is not automatic. To recover this property, we impose an explicit smallness condition on the initial effective potential. Volterra estimates for the scattering matrix then show that the two reflection coefficients are uniformly bounded in modulus by a constant strictly smaller than one. This bound implies the uniform positive definiteness of the Hermitian part of the RH jump matrix.

\begin{figure}[!t]
		\centering
		\hspace{-0.1in}\begin{tikzpicture}
           [>=Stealth,
			node distance=2cm and 4.8cm ]
			\node (q0) {$\begin{array}{c} q_0(x)\in H^{3}(\mathbb{R})\cap H^{2,1}(\mathbb{R})\vspace{0.05in} \\
                     \footnotesize{Initial \,\, condition}\end{array}$ };
			\node[right=of q0] (r0)
			{$\begin{array}{c}r_{1,2}(0;z)\in\mathcal{W}(\mathbb{R})  \vspace{0.05in}\\
            \footnotesize{Stationary\,\, reflection \,\, coefficients} \end{array} $};
			
			\node[below=of q0] (qt)
			{$\begin{array}{c} q(x,t)\in C(H^{3}(\mathbb{R})\cap H^{2,1}(\mathbb{R}), \mathbb{R}) \vspace{0.05in}\\
               Global \,\, solution
               \end{array} $};
			
			\node[below=of r0] (rt)
			{$\begin{array}{c} r_{1}(0;z) \mathrm{e}^{-\frac{\mathrm{i}t}{2z}}, r_2(0;z) \mathrm{e}^{\frac{\mathrm{i}t}{2z}}\in\mathcal{W}(\mathbb{R})  \vspace{0.05in}\\
            Time$-$dependent \,\, reflection\,\, coefficients \end{array}$};
			
			\draw[->] (q0) -- node[above]{{\it Lipschitz\,\,continuous}} node[below]{{\bf Step\, 1}: direct scattering} (r0);
			
			\draw[->] (r0) -- node[left]{\bf Step \,2} node[right]{\it Time\, propagation} (rt);
			
			\draw[->] (rt) -- node[above]{{\it Lipschitz\,\, continuous}} node[below]{{\bf Step\, 3}: inverse scattering} (qt);
			
			\draw[dashed,->] (q0) -- (qt);
			
		\end{tikzpicture}
		\caption{The general scheme for the global well-posedness of the nFL equation (\ref{eq:nFL}).}
		\label{fig1}
	\end{figure}

The resulting coercivity enables a Fredholm alternative and vanishing-lemma argument for the corresponding Beals--Coifman singular integral equation \cite{Beals1984}. More precisely, the singular integral operator is shown to be Fredholm of index zero, while the coercivity estimate excludes nontrivial solutions of the associated homogeneous problem. The Fredholm alternative then gives bounded invertibility and hence the unique solvability of the RH problem. The same effective-potential estimates also provide a uniform lower bound for the relevant scattering coefficients on the continuous spectrum, thereby excluding spectral singularities. Thus, the imposed smallness condition is a sufficient technical hypothesis for the inverse-scattering framework developed here, rather than a necessary condition for the well-posedness of the equation itself.

The primary objective of this paper is to rigorously establish the global well-posedness of the reverse space-time nonlocal FL equation. The time evolution of the reflection coefficients preserves their moduli and the coercivity bounds for the jump matrix. Consequently, the inverse problem remains uniquely solvable for every $t\in\mathbb{R}$, allowing the solution to be reconstructed globally in time (see Fig.~\ref{fig1}).

\subsection{ Main results}

Our main result is stated as follows.

\begin{theorem}\label{main-thm}
Let the initial datum $q_0(x)=q(x,0) \in H^3(\mathbb{R}) \cap H^{2,1}(\mathbb{R})$, and denote its reverse-space counterpart at $t=0$ by
$q_0^{\mathcal{PT}}(x):=q_0(-x)$. Let the associated initial effective potential matrix be given by
\begin{equation}
\widetilde{Q}_0(x)
:=
\frac{\mathrm{i}}{2}
\begin{pmatrix}
q_{0,x}(x)(q_0^{\mathcal{PT}})_{x}(x)
&
-(q_0^{\mathcal{PT}})_{x}(x)
\\[0.8em]
q_{0,x}^2(x)(q_0^{\mathcal{PT}})_{x}(x)-2\mathrm{i}q_{0,xx}(x)
&
-q_{0,x}(x)(q_0^{\mathcal{PT}})_{x}(x)
\end{pmatrix}.
\end{equation}
Assume that
\begin{equation}\label{eq:small-norm}
\|\widetilde{Q}_0\|_{L^1(\mathbb{R})}
<
\ln\left(\frac{3}{2}\right).
\end{equation}
Then the Cauchy problem for the nonlocal Fokas--Lenells equation \eqref{eq:nFL} admits a unique global solution
\begin{equation}
q\in C\big(\mathbb{R};H^3(\mathbb{R})\cap H^{2,1}(\mathbb{R})\big).
\end{equation}
Furthermore, the solution map
\begin{equation}
H^3(\mathbb{R})\cap H^{2,1}(\mathbb{R})\ni q_0
\longmapsto
q\in C\big(\mathbb{R};H^3(\mathbb{R})\cap H^{2,1}(\mathbb{R})\big)
\end{equation}
is Lipschitz continuous.
\end{theorem}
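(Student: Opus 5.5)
The proof follows the three-step scheme of Fig.~\ref{fig1}: direct scattering, time propagation of the reflection data, and inverse scattering via a Riemann--Hilbert problem. At the end, uniqueness and Lipschitz continuity are obtained by composing Lipschitz maps.

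\textbf{Step 1: direct scattering.} We first remove the singular behaviour of the Kaup--Newell operator at $\lambda=0$ and $\lambda=\infty$. To do this, we apply a gauge transformation built from $\exp\big(\tfrac{\mathrm{i}}{2}\int_{\pm\infty}^x q\,q^{\mathcal{PT}}_y\,dy\,\sigma_3\big)$ together with the uniformizing variable $z=\lambda^2$. This turns the $x$-part of \eqref{eq:Lax pair} into a Zakharov--Shabat-type system $\psi_x=\mathrm{i}z[\sigma_3,\psi]+\widetilde Q(x;z)\psi$, whose effective potential at $t=0$ is essentially $\widetilde Q_0$, with off-diagonal terms carrying factors of $z$ or $z^{-1}$. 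On this system we proceed as follows.
\begin{itemize}
\item We build Jost solutions $\psi^\pm$ from Volterra integral equations, using Gr\"onwall-type bounds $|\psi^\pm-I|\le \mathrm{e}^{\|\widetilde Q_0\|_{L^1}}-1$.
\item We derive the scattering matrix, its analyticity in $\mathbb{C}^\pm$, and the reverse space--time symmetry, which relates $r_1$ and $r_2$ only through $z\mapsto -z$ (or $\bar z$) with no complex conjugation.
\item Using the hypothesis \eqref{eq:small-norm}, we bound $|a(z)-1|\le \mathrm{e}^{\|\widetilde Q_0\|_{L^1}}-1<\tfrac12$ and $|b(z)|<\tfrac12$ on $\mathbb{R}$. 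Hence $|a|>\tfrac12$, which rules out spectral singularities; the same Volterra bounds in $\mathbb{C}^\pm$ exclude zeros of $a$, so there is no discrete spectrum. This also gives $\sup_{\mathbb{R}}|r_j|\le c<1$.
\item With the $H^{3}\cap H^{2,1}$ regularity, the standard Zhou-type estimates (differentiating the Volterra equations in $z$, integrating by parts in $x$) place $r_{1,2}\in\mathcal W(\mathbb{R})$, a weighted Sobolev class of the type $H^{1}\cap L^{2,1}$ in $z$ with the appropriate behaviour at $z=0$. They also show that $q_0\mapsto r_{1,2}$ is Lipschitz.
\end{itemize}

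\textbf{Step 2: time propagation.} The $t$-part of the Lax pair gives $r_1(t;z)=r_1(0;z)\mathrm{e}^{-\mathrm{i}t/(2z)}$ and $r_2(t;z)=r_2(0;z)\mathrm{e}^{\mathrm{i}t/(2z)}$. We must check two things. First, these exponentials are unimodular for $z\in\mathbb{R}$, so the bounds $|r_j|\le c<1$ persist for all $t$. Second, the $\mathcal W$ norm grows at most polynomially in $t$ despite the singular phase at $z=0$; here the vanishing of $r_j$ at $z=0$, encoded in $\mathcal W$, absorbs the $t/z^2$ factor produced by differentiating the phase.

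\textbf{Step 3: inverse scattering.} This is the main obstacle. We formulate the RH problem on $\mathbb{R}$ with jump
\[
V=\begin{pmatrix}1-r_1r_2 & r_2\mathrm{e}^{-2\mathrm{i}\theta}\\ -r_1\mathrm{e}^{2\mathrm{i}\theta}&1\end{pmatrix}
\]
(or its appropriate normalization) and reduce it to the Beals--Coifman equation $(I-C_w)\mu=I$ on $L^2(\mathbb{R})$. Because there is no relation $r_2=\bar r_1$, positivity of $V+V^\dagger$ has to be proved directly. The Hermitian part of $V$ has diagonal entries $\mathrm{Re}(1-r_1r_2)\ge 1-c^2$ and $1$, and off-diagonal size at most $|r_1|+|r_2|\le 2c$. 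Positive definiteness therefore reduces to a determinant inequality, which holds once $c$ is small enough. The threshold $\ln(3/2)$ appears to be chosen exactly so that $\mathrm{e}^{\|\widetilde Q_0\|}-1<\tfrac12$ gives such a $c$; if the naive bound is not sharp enough, we will use the finer estimate $|r_j|=|b|/|a|\le (\mathrm{e}^{s}-1)/(2-\mathrm{e}^{s})$ with $s=\|\widetilde Q_0\|_{L^1}$ and adjust.

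With coercivity in hand, the argument runs as follows.
\begin{itemize}
\item $I-C_w$ is Fredholm of index zero, by a homotopy $r_j\to \epsilon r_j$ that keeps coercivity.
\item The vanishing lemma applies: a solution $M$ of the homogeneous problem satisfies $\int_{\mathbb{R}} M_-(V+V^\dagger)M_-^\dagger=0$ by Cauchy's theorem, hence $M\equiv0$.
\item The Fredholm alternative then gives bounded invertibility, uniformly for $r$ in bounded subsets of $\mathcal W$ that satisfy the coercivity bound.
\end{itemize}
Next we recover $q(x,t)$ through the reconstruction formula, namely the $z^{-1}$ coefficient at $\infty$ together with the expansion at $z=0$ to undo the gauge. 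Resolvent identities then show that the map $r\mapsto q\in H^3\cap H^{2,1}$ is Lipschitz, and the standard dressing argument verifies that $q$ solves \eqref{eq:nFL}.

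\textbf{Uniqueness and Lipschitz continuity.} Uniqueness follows from the bijectivity of the scattering map: any solution in the class generates the same evolving scattering data, and the nonlocal coupling of $t$ with $-t$ is consistent because the data are defined for all $t\in\mathbb{R}$ simultaneously. Continuity in $t$ and Lipschitz dependence on $q_0$ follow by composing the Lipschitz maps of Steps 1 through 3.
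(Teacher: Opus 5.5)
Your architecture matches the paper's: uniform Volterra bounds for a Zakharov--Shabat system on the $z$-plane giving $|r_j|\le c_0<1$, coercivity of the Hermitian part of the jump, Fredholm alternative and vanishing lemma, unimodular time evolution, reconstruction, and composition of Lipschitz maps. There is, however, a genuine gap at $z=0$.

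\textbf{The vanishing at $z=0$.} Step~2 needs $z^{-2}r_j\in L^2$ to absorb the factor $\frac{t}{2z^2}$, and you attribute this to ``standard Zhou-type estimates''. Those estimates control $H^1\cap L^{2,1}$ regularity. The order of vanishing at $z=0$ is instead dictated by the small-$\lambda$ expansion of the Kaup--Newell Jost functions. Generically $b(\lambda)=b_3\lambda^3+\mathcal{O}(\lambda^5)$, with $b_3=\int_{\mathbb{R}}\bigl(2q+\mathrm{i}(q^{\mathcal{PT}})_xq^2\bigr)\,dx=2\int_{\mathbb{R}}\bigl(q-\mathrm{i}qq^{\mathcal{PT}}q_x\bigr)\,dx$. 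Using $\breve b=-b$ at $t=0$ and $\breve a(0)=1$, this gives $r_1=-\breve b/(2\lambda\breve a)=\frac{b_3}{2}z+\mathcal{O}(z^2)$. If $b_3\neq0$, then $z^{-2}r_1\notin L^2$ near $0$ and $\partial_z\bigl(r_1\mathrm{e}^{-\mathrm{i}t/(2z)}\bigr)\notin L^2$ for $t\neq0$. So $r_1(t;\cdot)$ leaves $\mathcal W$, and Step~3 cannot be applied.

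The paper fills this in Propositions~\ref{prop: b-lambda-O5} and~\ref{prop:zminus2-reflection}. It computes $b_3$ and uses the equation $q-\mathrm{i}qq^{\mathcal{PT}}q_x=q_{xt}$ to get $b_3=2\int_{\mathbb{R}}q_{xt}\,dx=0$. Hence $b=\mathcal{O}(\lambda^5)$ and $z^{-2}r_{1,2}\in L^2$. You need this ingredient, and you should be clear about what it is. It does not follow from $H^3\cap H^{2,1}$ regularity. It is the compatibility condition $\int_{\mathbb{R}}\bigl(q_0-\mathrm{i}q_0q_0^{\mathcal{PT}}q_{0,x}\bigr)\,dx=0$, which every solution decaying in $x$ must satisfy (integrate the equation over $\mathbb{R}$). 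It is best stated explicitly as a condition on $q_0$.

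\textbf{Coercivity.} Your coercivity step does not close under the stated threshold. Bounding the entries of $H=\frac12(V+V^\dagger)$ separately (diagonal entries $\ge1-c^2$ and $1$, off-diagonal $\le c$) gives only $\det H\ge1-2c^2$. That needs $c<1/\sqrt2$, i.e.\ $\|\widetilde Q_0\|_{L^1}<\frac12\ln2$. The finer bound $c=(\mathrm{e}^s-1)/(2-\mathrm{e}^s)$ cannot rescue this, since $s<\ln(3/2)$ yields exactly $c<1$ and nothing better. What works for every $c<1$ is the exact identity
$\det H=1-\mathrm{Re}(r_1r_2)-\frac14|r_2-\bar r_1|^2=1-\frac14|r_1+\bar r_2|^2\ge1-c^2$.
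The analogous identity is how the paper proves Proposition~\ref{prop:V-positive-definite}, and it is precisely why $\ln(3/2)$ suffices.

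\textbf{Smaller points.}
\begin{itemize}
\item The transformation in Step~1 must produce a $z$-independent potential. Residual $z^{\pm1}$ factors, as you describe, would not give $z$-uniform bounds on $a$ and $b$. The paper's choice $\Psi=T_1\psi T_2$ with $T_1=\begin{pmatrix}1&0\\ q_x&-2\lambda\end{pmatrix}$ and $T_2=\mathrm{diag}\bigl(1,(-2\lambda)^{-1}\bigr)$ yields exactly $\widetilde Q_0$ at $t=0$.
\item The Fourier-projection estimates from the original RH problem give weighted bounds only as $x\to+\infty$. For $x<0$ the paper passes to the $\delta$-conjugated problem (RH Problem~\ref{RH:N-delta}), which your reconstruction sketch omits.
\item Your homotopy $r_j\to\epsilon r_j$ for the index is a fine substitute for the paper's appeal to the classical theory, since coercivity is uniform along it.
\end{itemize}
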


\begin{remark} Similarly, the results in Theorem \ref{main-thm} can also be extended to another forms of the nFL equation (\ref{eq:nFL}) \cite{zhang2022}, for example,
\bee
\begin{array}{ll}
{\rm (I)}\quad & q_{xt}+q-q\bar{q}(-x,-t)q_x=0, \v\\
{\rm (II)} \quad & q_{xt}-\i q+\i q\bar{q}(-x,t)q_x=0, \v\\
{\rm (III)} \quad & q_{xt}-\i q-q\bar{q}(x, -t)q_x=0, \v\\
{\rm (IV)} \quad & q_{xt}-q-q\bar{q}(-x,-t)q_x=0,
 \end{array}
\ene
from some nonlocal reductions $p(x,t)=F(q(x,t))$ \cite{yanaml15,yanaml16,yanaml18, yang2017} of the coupled Fokas–Lenells system~\cite{Lenells2009b}
\bee
\left\{\begin{array}{l}
 q_{xt}+q-\i qpq_x=0, \v\\
 p_{xt}+p+\i pqp_x=0.
\end{array}\right.
\ene
\end{remark}

The remainder of this paper is organized as follows. In Section~\ref{section-DST}, we develop the direct scattering transform, including the construction and analyticity of the Jost solutions, as well as the symmetry and regularity properties of the scattering coefficients under the reverse space-time reduction. In Section~\ref{section-RHP}, we formulate the associated RH problems, beginning with a basic RH problem on the $\lambda$-plane and subsequently transforming it into a regularized problem on the $z$-plane. In Section~\ref{section-IST}, we develop the inverse scattering transform, derive coercive estimates for the associated jump matrix, and prove the bounded invertibility of the corresponding Beals-Coifman singular integral operator under the effective-potential smallness condition. In Section~\ref{section-Time}, we determine the time evolution of the scattering data, formulate the corresponding time-dependent RH problems, and verify that the coercivity bounds are preserved for all time. Finally, in Section~\ref{section-proof}, we complete the proof of Theorem~\ref{main-thm} by establishing global existence, uniqueness, and the Lipschitz continuity of the solution map.

\v \textbf{Notations}. We collect here some notations used throughout the paper. Let $I$ be an interval on the real line $\mathbb{R}$ and let $X$ be a Banach space. We denote by $C(I,X)$ the space of continuous functions on $I$ taking values in $X$, equipped with the standard supremum norm
\begin{equation}
\|q\|_{C(I,X)}
=
\sup_{\tau\in I}\|q(\tau)\|_X .
\end{equation}

Throughout this paper, we shall use the weighted Lebesgue and Sobolev spaces defined by
\begin{subequations}
\begin{align}
L^{p,s}(\mathbb{R})
&:=
\left\{
u\in L^p(\mathbb{R})
\;\big|\;
\langle x\rangle^s u\in L^p(\mathbb{R})
\right\},
\\
H^{k,s}(\mathbb{R})
&:=
\left\{
u\in L^2(\mathbb{R})
\;\big|\;
\langle x\rangle^s\partial_x^j u\in L^2(\mathbb{R}),
\quad j=0,1,\ldots,k
\right\},
\end{align}
\end{subequations}
where $\langle x\rangle:=\sqrt{1+|x|^2}$, $1\le p\le\infty$ and $k,s\in\mathbb{N}\cup\{0\}$. Furthermore, to characterize the regularity of the scattering data, we introduce the space $\mathcal{W}(\mathbb{R})$ by
\begin{equation}
\mathcal{W}(\mathbb{R})
:=
\left\{
u\in H^1(\mathbb{R})\cap L^{2,1}(\mathbb{R})
\;\big|\;
x^{-2}u(x)\in L^2(\mathbb{R})
\right\}.
\end{equation}

 The $L^{1}$ matrix norm of the 2-by-2 matrix function $A=(A_{ij})_{2\times 2}$ is defined as $\| A \|_{L^1} := \Sigma_{i,j=1}^2 \| A_{ij} \|_{L^1}.$ And two column vectors $e_j\, (j=1,2)$ are defined as
 \bee
  e_1= (1,\, 0)^T, \qquad  e_2=(0,\, 1)^T.
 \ene


	\section{Direct scattering problem: Jost solutions and scattering data}\label{section-DST}

	In this section, we will prove the existence of the eigenfunctions of the Lax pair (\ref{eq:Lax pair}).
	With the transformation
	\begin{equation}\label{eq:Transform}
		\phi(x,t;\lambda)=\psi(x,t;\lambda) \exp\left[\mathrm{i}\left(\lambda^2 x - \frac{1}{4 \lambda^{2}} t\right) \sigma_3\right],
	\end{equation}
	$\psi(x,t;\lambda)$ satisfies a modified Lax pair
	\begin{equation}\label{eq:New Laxpair}
			\left\{ \begin{array}{ll}
			 \psi_{x} = \mathrm{i} \lambda^2 [\sigma_3, \psi] + \mathrm{i} \lambda Q_{x} \psi, \\[0.5em]
			  \psi_{t} = -\dfrac{\mathrm{i}}{4 \lambda^2} [\sigma_3, \psi] +(\dfrac{1}{2 \lambda}Q+\dfrac{\mathrm{i}}{2} Q^2) \sigma_3 \psi,
		\end{array} \right.
	\end{equation}
	which can be written in the full derivative form
	\begin{equation}\label{eq: fullderivative}
		d(\mathrm{e}^{\mathrm{i} (\lambda^2 x - (4 \lambda)^{-2} t ) \hat{\sigma}_3} \psi) = \mathrm{e}^{\mathrm{i} (\lambda^2 x - (4 \lambda)^{-2} t ) \hat{\sigma}_3} \left[\mathrm{i} \lambda Q_{x}\, dx + (\dfrac{Q}{2 \lambda} + \dfrac{\mathrm{i}}{2} Q^2) \sigma_3\, dt\right] \psi.
	\end{equation}
	and $\hat{\sigma}_3 (A) = \sigma_3 A \sigma_3^{-1}$. At the same time, this modified Lax pair (\ref{eq:New Laxpair}) admits the Jost solutions with asymptotics as the potential matrix $Q_x$ vanishes sufficiently quickly at infinity, $\psi^{\pm}(x,t,\lambda)\to I, \quad x\to\pm\infty.$
	and satisfy Volterra integral equations
	\begin{equation}\label{eq: psi-Volterra}
		\psi^{\pm}(x,t;\lambda) = I + \mathrm{i} \lambda \int_{\pm \infty}^{x} \mathrm{e}^{\mathrm{i} \lambda^2 (x-s) \hat{\sigma}_3} Q_s(s,t) \psi^{\pm}(s,t;\lambda) ds.
	\end{equation}
	\par
	The Lax pair \eqref{eq:New Laxpair} has singularities at $\lambda=0$ and $\lambda=\infty$, which is the same as the FL equation. Here, we will make a transformation to construct the Jost solutions and so on.

\begin{figure}[!t]
	\centering
	\includegraphics[scale=0.75]{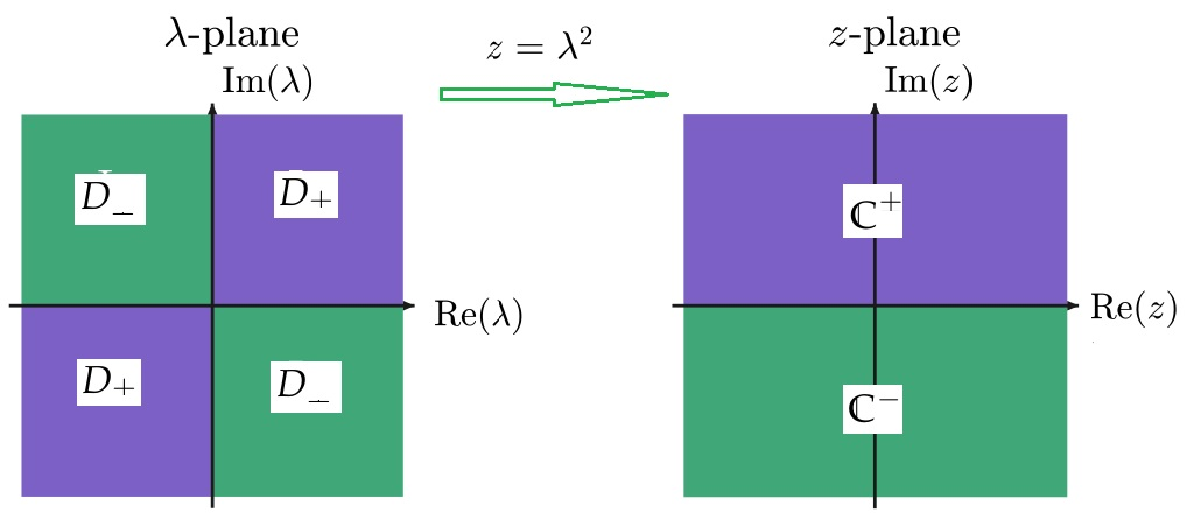}
	\caption{Analytical regions on the $\lambda$-plane (left) and $z=\lambda^2$-plane (right).}
	\label{fig2}
\end{figure}

	\subsection{Jost solutions and analyticity on the $z$-plane}

	As usual, we set $t$ to be fixed and omit it in the following analysis. To more conveniently analyze the Lax pair \eqref{eq:New Laxpair}, we introduce a transformation defined by
	\begin{equation}\label{eq:lambda to z}
		\Psi(x;z) = T_1(x;\lambda) \psi(x;\lambda) T_2(x;\lambda),
	\end{equation}
	with $z = \lambda^2 $ and
	\bee T_1(x;\lambda) = \begin{pmatrix}
		1 & 0 \vspace{0.05in}\\
		q_x & -2\lambda
	\end{pmatrix}, \quad
	T_2(x;\lambda)=\begin{pmatrix}
		1 & 0 \vspace{0.05in} \\
		0 & (-2\lambda)^{-1}
	\end{pmatrix},
\ene
	then the Lax pair \eqref{eq:New Laxpair} is changed into a Zakharov-Shabat-type spectral problem
	\begin{equation}\label{eq:ZSprobelm}
		\Psi_x = \mathrm{i} z \left[\sigma_3, \Psi\right]+ \widehat{Q} \Psi,
	\end{equation}
	where
	\bee \widehat{Q} = \frac{\mathrm{i}}{2} \begin{pmatrix}
		q_x (q^{PT})_{x} & -(q^{PT})_{x} \vspace{0.1in}\\
		q_x^2 (q^{PT})_{x} - 2\mathrm{i} q_{xx} & -q_x (q^{PT})_{x}
	\end{pmatrix}.\ene
 Similarly, the time-part of Lax pair \eqref{eq:New Laxpair} is also changed into the corresponding form, which is omitted here.
 	
It can be shown that $\Psi^{\pm}(x;z)$ satisfies the Volterra integral equation
	\begin{equation}\label{eq:Volterra equation}
		\Psi^{\pm}(x,t;z) = I + \int_{\pm \infty}^{x} \mathrm{e}^{\mathrm{i} z (x-y) \hat{\sigma}_3} \widehat{Q} \Psi^{\pm}(s,t;z) ds.
	\end{equation}
Let $\Psi^{\pm}(x,t;z) =(\Psi_1^{\pm}(x,t;z), \Psi_2^{\pm}(x;z))$, where $\Psi_j^{\pm}(x,t;z)\, (j=1,2)$ denotes the column vectors. Thus, we can show the following proposition (see Fig.~\ref{fig2}(right)).

	\begin{proposition}\label{prop:Analytic}
	Let $q \in H^{3}(\mathbb{R}) \cap H^{2,1}(\mathbb{R})$. Then for every $z \in \mathbb{R},$ there exist unique solutions $\Psi^{\pm}(\cdot,t;z) \in L^{\infty} (\mathbb{R})$ satisfying \eqref{eq:Volterra equation}. Moreover, for every $x\in \mathbb{R}, \Psi_1^{\mp}(x,t; \cdot)$ and $\Psi_2^{\pm}(x,t;z)$ are continued analytically in $\mathbb{C}^{\mp}$. Finally, there exists a $z$-independent constant $C>0$ such that
		\begin{equation}\label{eq:bound1}
			\| \Psi_1^{\pm}(\cdot,t;z) \|_{L^{\infty}} + \| \Psi_2^{\mp}(\cdot,t;z) \|_{L^{\infty}} \le C \quad z\in \mathbb{C}^{\pm}.
		\end{equation}
	\end{proposition}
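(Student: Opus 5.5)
The plan is the standard Volterra-series argument for Zakharov--Shabat-type problems, applied column by column to \eqref{eq:Volterra equation}. First I will check that $\widehat{Q}(\cdot,t)\in L^1(\mathbb{R})$. Each entry of $\widehat{Q}$ is built from $q_x$, $(q^{\mathcal{PT}})_x$ and $q_{xx}$. Since $q\in H^{3}\cap H^{2,1}$, the functions $q_x$ and $q_{xx}$ lie in $L^{2,1}\subset L^1$ (by Cauchy--Schwarz with $\langle x\rangle^{-1}\in L^2$). They also lie in $H^1\subset L^\infty$, so the products $q_x(q^{\mathcal{PT}})_x$ and $q_x^2(q^{\mathcal{PT}})_x$ are in $L^1$ as well. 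Reflection $x\mapsto -x$ preserves all these norms. Hence $\|\widehat{Q}\|_{L^1}\le C(\|q\|_{H^3\cap H^{2,1}})$.

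Next I will write the equations columnwise. Using $\mathrm{e}^{\mathrm{i}z(x-y)\hat\sigma_3}$, the first column $\Psi_1^{\pm}$ picks up the factor $\mathrm{e}^{-2\mathrm{i}z(x-y)}$ in its second component, with the first component unweighted. The second column $\Psi_2^{\pm}$ picks up $\mathrm{e}^{2\mathrm{i}z(x-y)}$ in its first component. For $\Psi_1^{-}$ we have $y<x$, so $|\mathrm{e}^{-2\mathrm{i}z(x-y)}|=\mathrm{e}^{2\,\mathrm{Im}z\,(x-y)}\le 1$ exactly when $\mathrm{Im}\,z\le 0$. The same computation gives the half-plane for each of the other three columns. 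In each case the kernel $K_z(x,y)$ of the Volterra operator is bounded by $|\widehat{Q}(y)|$, uniformly for $z$ in the closed half-plane. I will adopt whatever half-plane labels this computation produces, reconciling them with the statement's $\mathbb{C}^{\mp}$ and \eqref{eq:bound1}. The sign conventions there look slightly inconsistent, so I will simply follow the exponentials.

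Then I will construct the solutions by Neumann series. I set $\Psi_j=\sum_{n\ge0}\Psi_j^{(n)}$ with $\Psi_j^{(0)}=e_j$ and $\Psi_j^{(n+1)}(x)=\int_{\pm\infty}^x K_z(x,y)\Psi_j^{(n)}(y)\,dy$. Iterating over the ordered simplex gives
\[
|\Psi_j^{(n)}(x)|\le \frac{1}{n!}\Big(\int_{\pm\infty}^{x}|\widehat{Q}(y)|\,dy\Big)^n ,
\]
so the series converges uniformly in $x$. This yields $\|\Psi_j^{\pm}\|_{L^\infty}\le \exp(\|\widehat{Q}\|_{L^1})=:C$, and $C$ is independent of $z$. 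Uniqueness in $L^\infty$ follows by applying the same Gronwall/iteration bound to the difference of two solutions. For real $z$ both columns of $\Psi^\pm$ exist, because the exponentials are unimodular there.

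Analyticity is the last step. Each term $\Psi_j^{(n)}(x;z)$ is analytic in the open half-plane, since it is an iterated integral of entire functions of $z$ dominated by an integrable $z$-independent bound. It is continuous up to the real axis by dominated convergence. Uniform convergence on the closed half-plane then transfers analyticity in the interior and continuity to the boundary to the sum.

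The only delicate point I expect is confirming that $\widehat{Q}\in L^1$ uses only the stated regularity, in particular the term $q_{xx}\in L^1$, which needs the weight in $H^{2,1}$. Tracking the sign conventions carefully is the other thing to watch; the rest is routine.
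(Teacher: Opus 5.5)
Your proposal is correct and takes essentially the same route as the paper: a Volterra--Neumann series for each column with the factorial bound $\|K^n g\|_{L^\infty}\le \|\widehat{Q}\|_{L^1}^n\|g\|_{L^\infty}/n!$, which gives analyticity, continuity up to the real axis and the $z$-independent constant $\mathrm{e}^{\|\widehat{Q}\|_{L^1}}$; the paper additionally proves existence and uniqueness by a contraction on half-lines glued over finitely many intervals, which your factorial estimate makes unnecessary. The sign conventions in the statement are in fact consistent: both the analyticity claim and \eqref{eq:bound1} place $\Psi_1^-,\Psi_2^+$ in $\mathbb{C}^-$ and $\Psi_1^+,\Psi_2^-$ in $\mathbb{C}^+$, exactly as your computation of the exponentials gives.
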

	\begin{proof}
		It suffices to prove the statement for one of the situations, for example, for $\Psi_1^{-}$. The proof for others is analogous. From \eqref{eq:Volterra equation}, we can obtain that
		\begin{equation}\label{eq:Psi1}
			\Psi_1^{-}(x;z) = e_1 + (K \Psi_1^{-})(x;z),\qquad e_1=(1, 0)^T,
		\end{equation}
		with the definition of the integral operator $K$ being
		\begin{equation}\label{op:K operator}
			(Kg)(x;z) := \dfrac{i}{2} \int_{-\infty}^{x}
			\begin{pmatrix}
				1 & 0 \v\\
				0 & \mathrm{e}^{-2  \mathrm{i}z (x-y)}
			\end{pmatrix}
			\begin{pmatrix}
				q_y (q^{PT})_{y} & -(q^{PT})_{y} \vspace{0.1in}\\
				q_y^2 (q^{PT})_{y} - 2\mathrm{i} q_{yy} & -q_y (q^{PT})_{y}
			\end{pmatrix} g(y) \, dy.
		\end{equation}
	where
   $g(x) = (g_1, g_2)^{T}$. We define that $ \| g \|_{L^{\infty}} = \| g_1\|_{L^{\infty}} + \| g_2 \| _{L^{\infty}}.$   Thus for every $z \in \mathbb{C}^{-}$ and every $x_0 \in \mathbb{R}$, we have
		\begin{equation}
		\begin{split}
	    \| (Kg)(\cdot;z) \|_{L^{\infty} (-\infty, x_0)} \le \bigg( \|q_x\|_{L^2(-\infty, x_0)} \|(q^{PT})_x\|_{{L^2(-\infty,x_0)}} + \dfrac{1}{2} \| (q^{PT})_{x} \|_{L^1(-\infty, x_0)} + \| q_{xx} \|_{L^1(-\infty, x_0)} \\
	    +\dfrac{1}{2} \|q_x\|_{L^\infty(-\infty, x_0)} \|q_x\|_{L^2(-\infty, x_0)} \|(q^{PT})_x\|_{{L^2(-\infty,x_0)}} \bigg) \| g \|_{L^{\infty}(-\infty, x_0)},
	    \end{split}
	    \end{equation}
		which implies that the operator $K$  is contraction if $x_0 \in \mathbb{R} $ is chosen so that
		\begin{equation}\label{eq:estimate1}
			 \|q_x\|_{L^2} \|(q^{PT})_{x}\|_{L^2} + \dfrac{1}{2} \| (q^{PT})_x \|_{L^1} + \| q_{xx} \|_{L^1} \\
			+ \dfrac{1}{2}\|q_x\|_{L^\infty} \|q_x\|_{L^2} \|(q^{PT})_{x}\|_{L^2} < 1,
		\end{equation}
		here we omit the range to make writing easier. By the Banach fixed point theorem, for $x_0$ and every $z\in \mathbb{C}^{-}$, there exist a unique solution $\Psi_1^{-}(x;z) \in L^{\infty}(-\infty, x_0)$ to the equation \eqref{eq:Psi1}. To extend this result to $L^{\infty}(\mathbb{R})$, we can split $\mathbb{R}$ into a finite number of intervals such that the estimate \eqref{eq:estimate1} is satisfied in each interval. Unique solutions in each interval can be glued together to obtain the unique solution $\Psi_1^{-}(\cdot;z) \in L^{\infty}(\mathbb{R})$ for every $z \in \mathbb{C}^{-}$.\par
		The analyticity of $\Psi_1^{-}(x;\cdot)$ in $\mathbb{C}^{-}$ for every $x\in \mathbb{R}$ follows from the absolute and uniform convergence of the Neumann series of analytic functions in $z$.
		If $q\in H^{3}(\mathbb{R}) \cap H^{2,1}(\mathbb{R})$, we have $\widehat{Q} \in L^{1}(\mathbb{R})$. For every $g(x;z) \in L^{\infty}(\mathbb{R} \times \mathbb{C}^{-})$, we have
		\begin{equation}\label{eq:converge}
			\| K^{n} g \|_{L^{\infty}} \leq \frac{1}{n!} \| \widehat{Q} \|_{L^1}^n \| g \|_{L^{\infty}}.
		\end{equation}
		As a result, the Neumann series for the Volterra integral equation \eqref{eq:Psi1} for $\Psi_1^{-}$ converges
		absolutely and uniformly for every $x\in \mathbb{R}$ and $z\in \mathbb{C}^{-}$ and contains analytic functions of $z$ for $z\in\mathbb{C}^{-}$. Therefore, $\Psi_1(x;\cdot)$ is analytic in $\mathbb{C}^{-}$ for every $x\in \mathbb{R}$ and it satisfies the bound \eqref{eq:bound1}. \hfill
	\end{proof}\par
	\begin{proposition}\label{prop:Limits}
		Under the conditions of Proposition \ref{prop:Analytic}, for every $x\in \mathbb{R}$, the Jost solutions $\Psi^{\pm}(x;z)$ admit the following limits along a contour in the domains of their analyticity extended to $|\mathrm{Im} z| \to \infty$
		\begin{equation}\label{eq:inftylimits}
			\lim_{|z|\to \infty} \Psi^{\pm}(x;z) = \mathrm{e}^{\frac{i}{2} m_{\pm} \sigma_3},\quad m_{\pm}(x) = \int_{\pm \infty}^{x} q_y (q^{PT})_{y} \, dy.
		\end{equation}
		Moreover, if $q\in C^{2}(\mathbb{R})$, then for every $x\in \mathbb{R}$, the Jost solutions satisfy the following limits along a contour in the domains of their analyticity extended to $|\mathrm{Im} z| \to \infty$
		\begin{subequations}\label{eq:zlimits}
			\begin{align}
				\lim_{|z| \to \infty} z(\Psi_1^{\pm}(x;z) - \mathrm{e}^{\frac{\mathrm{i}}{2} m_{\pm}(x)} e_1) &= s_1^{\pm}(x) e_1 + s_2^{\pm}(x) e_2, \\
				\lim_{|z| \to \infty} z(\Psi_2^{\pm}(x;z) - \mathrm{e}^{-\frac{\mathrm{i}}{2} m_{\pm}(x)} e_2) &= \nu_{1}^{\pm}(x) e_1 + \nu_{2}^{\pm}(x) e_2,
			\end{align}
		\end{subequations}
		where
		\begin{align*}
		s_1^{\pm}(x) &= -\frac{1}{4} \mathrm{e}^{\frac{\mathrm{i}}{2} m_{\pm}(x)} \int_{\pm \infty}^{x} \left(  (q^{PT})_{y}(y) q_{yy}(y) + \frac{\mathrm{i}}{2} ((q^{PT})_{y})^2(y) q_y^2(y) \right) dy, \\
		 s_2^{\pm}(x) &= -\dfrac{\mathrm{i}}{2} \partial_x (q_{x}(x) \mathrm{e}^{\frac{i}{2} m_{\pm}(x)} ), \\
		 \nu_1^{\pm}(x) &= \frac{1}{4} (q^{PT})_{x}(x) \mathrm{e}^{-\frac{\mathrm{i}}{2} m_{\pm}(x)}, \\
		 \nu_2^{\pm}(x) &= \dfrac{1}{4} \mathrm{e}^{-\frac{\mathrm{i}}{2} m_{\pm}(x)} \int_{\pm\infty}^{x} \left(  (q^{PT})_{y}(y) q_{yy}(y) + \frac{\mathrm{i}}{2} ((q^{PT})_{y})^2(y) q_y^2(y) \right) dy.
		\end{align*}
	\end{proposition}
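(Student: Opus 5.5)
We prove the limits by a WKB-type asymptotic expansion of the Volterra equations \eqref{eq:Volterra equation}, exploiting that the diagonal entries of $\widehat{Q}$ do not oscillate while the off-diagonal ones do. Take $\Psi_1^{-}$ as the model case, as in the proof of Proposition~\ref{prop:Analytic}; the other columns are handled the same way. Write $\Psi_1^{-}=(\Psi_{11},\Psi_{21})^T$. The equation \eqref{eq:Psi1} gives
\begin{align*}
\Psi_{11}(x)&=1+\tfrac{\mathrm{i}}{2}\int_{-\infty}^x\big(q_y(q^{PT})_y\Psi_{11}-(q^{PT})_y\Psi_{21}\big)dy,\\
\Psi_{21}(x)&=\tfrac{\mathrm{i}}{2}\int_{-\infty}^x \mathrm{e}^{-2\mathrm{i}z(x-y)}\big((q_y^2(q^{PT})_y-2\mathrm{i}q_{yy})\Psi_{11}-q_y(q^{PT})_y\Psi_{21}\big)dy .
\end{align*}

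\textbf{Leading order.} The first step is to show $\Psi_{21}\to0$ as $|z|\to\infty$ in $\overline{\mathbb{C}^-}$. We use the uniform bound \eqref{eq:bound1}. The kernel $\mathrm{e}^{-2\mathrm{i}z(x-y)}$ has modulus at most $1$ for $y<x$ and tends to $0$ pointwise as $\mathrm{Im}\,z\to-\infty$. Along real directions we approximate the $L^1$ coefficient by smooth compactly supported functions and integrate by parts, a Riemann--Lebesgue type argument. Substituting $\Psi_{21}=o(1)$ into the first equation reduces it to the scalar linear Volterra equation $\Psi_{11}=1+\frac{\mathrm{i}}{2}\int_{-\infty}^x q_y(q^{PT})_y\Psi_{11}\,dy+o(1)$. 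Its solution is $\mathrm{e}^{\frac{\mathrm{i}}{2}m_-(x)}$ plus an $o(1)$ error, by Gronwall, since $q_y(q^{PT})_y\in L^1$. The second column works symmetrically and yields $\mathrm{e}^{-\frac{\mathrm{i}}{2}m_\pm}$, which proves \eqref{eq:inftylimits}.

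\textbf{Order $z^{-1}$.} Assume $q\in C^2$. Integrate by parts once in the $\Psi_{21}$ equation, using $\mathrm{e}^{-2\mathrm{i}z(x-y)}=\frac{1}{2\mathrm{i}z}\partial_y \mathrm{e}^{-2\mathrm{i}z(x-y)}$. The boundary term at $y=x$ gives
\[
z\Psi_{21}=\tfrac{1}{4}\big(q_x^2(q^{PT})_x-2\mathrm{i}q_{xx}\big)\mathrm{e}^{\frac{\mathrm{i}}{2}m_-}+o(1).
\]
The derivative of the bracket times $\Psi_{11}$ produces a remaining integral that is $o(1)$, by the same Riemann--Lebesgue argument. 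We then check that this equals $s_2^-=-\frac{\mathrm{i}}{2}\partial_x(q_x\mathrm{e}^{\frac{\mathrm{i}}{2}m_-})$. Indeed $\partial_x m_-=q_x(q^{PT})_x$, so the right side is $-\frac{\mathrm{i}}{2}q_{xx}\mathrm{e}^{\frac{\mathrm{i}}{2}m_-}+\frac14 q_x^2(q^{PT})_x\mathrm{e}^{\frac{\mathrm{i}}{2}m_-}$, which matches.

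Next, set $w=z(\Psi_{11}-\mathrm{e}^{\frac{\mathrm{i}}{2}m_-})$ and substitute into the first equation. This gives $w=\frac{\mathrm{i}}{2}\int_{-\infty}^x(q_y(q^{PT})_y w-(q^{PT})_y z\Psi_{21})dy$. Replacing $z\Psi_{21}$ by its limit and solving the linear scalar equation by variation of constants yields
\[
w\to-\tfrac{\mathrm{i}}{2}\mathrm{e}^{\frac{\mathrm{i}}{2}m_-}\int_{-\infty}^x \mathrm{e}^{-\frac{\mathrm{i}}{2}m_-}(q^{PT})_y\,s_2^-\,dy=-\tfrac{\mathrm{i}}{8}\mathrm{e}^{\frac{\mathrm{i}}{2}m_-}\int_{-\infty}^x(q^{PT})_y\big(q_y^2(q^{PT})_y-2\mathrm{i}q_{yy}\big)dy,
\]
which matches the stated formula for $s_1^-$. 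The same computation for $\Psi_2^\pm$, with the roles of the entries exchanged, gives $\nu_1^\pm$ as a boundary term and $\nu_2^\pm$ from the Volterra step.

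\textbf{Main obstacle.} The delicate step is justifying that the error terms vanish uniformly along contours where $\mathrm{Im}\,z$ may stay bounded while $|z|\to\infty$. There Riemann--Lebesgue must be applied to terms like $\int_{-\infty}^x \mathrm{e}^{-2\mathrm{i}z(x-y)}\partial_y(\cdots)\Psi_{11}\,dy$, which requires $L^1$ control of $q_{xxx}$ and $\partial_y\Psi_{11}$. The plan is to use $q\in H^3\cap H^{2,1}$ so that $q_{xxx}\in L^2$ and $q_{xx}\in L^1$. Then $\partial_y\Psi_{11}$ is bounded by the equation itself, and density plus \eqref{eq:bound1} and the Gronwall bound \eqref{eq:converge} make the errors $o(1)$ uniformly in $x$ on compact sets. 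Note that $C^2$ alone would give only a pointwise statement; this is the precise sense in which the limits should be understood.
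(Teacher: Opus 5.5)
Your skeleton matches the paper's. You use dominated convergence, via $|\mathrm{e}^{-2\mathrm{i}z(x-y)}|=\mathrm{e}^{2\,\mathrm{Im}z\,(x-y)}\to0$ for $y<x$, to get $\Psi_{21}^-\to0$ and $\Psi_{11}^-\to\mathrm{e}^{\frac{\mathrm{i}}{2}m_-}$. You then use the integrating factor $\mathrm{e}^{-\frac{\mathrm{i}}{2}m_-}$ for $w=z(\Psi_{11}^- -\mathrm{e}^{\frac{\mathrm{i}}{2}m_-})$. Your values of $s_1^\pm,s_2^\pm,\nu_1^\pm,\nu_2^\pm$ are correct.

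The genuinely different step is how you extract $\lim z\Psi_{21}^-$. The paper never differentiates the integrand $\mu(y;z)=\frac{\mathrm{i}}{2}\bigl[(q_y^2(q^{PT})_y-2\mathrm{i}q_{yy})\Psi_{11}^- - q_y(q^{PT})_y\Psi_{21}^-\bigr]$. Instead it splits the integral at $x-\delta$ with $\delta=|\mathrm{Im}z|^{-1/2}$. On $(x-\delta,x)$ it freezes $\mu$ at $y=x$, where the integral is exactly $\frac{1}{2\mathrm{i}z}(1-\mathrm{e}^{-2\mathrm{i}z\delta})\mu(x;z)$. The rest is bounded by exponential decay and by continuity of $\mu$ at $x$. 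This needs only $q_{xx}$ continuous, which is why the hypothesis is $q\in C^2$. Its bounds on the far piece and on the freezing error do tacitly need $|z|/|\mathrm{Im}z|$ to stay bounded. Your integration by parts costs one more derivative ($q_{xxx}\in L^2$, available from $H^3$). Once completed, it works along any path with $|\mathrm{Im}z|\to\infty$.

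As written, however, your integration by parts is not closed.
\begin{itemize}
\item[(i)] Since $\partial_y\Psi_{21}^-=-2\mathrm{i}z\Psi_{21}^-+\mu$, differentiating the term $-\frac{\mathrm{i}}{2}q_y(q^{PT})_y\Psi_{21}^-$ of $\mu$ leaves $\int_{-\infty}^x\mathrm{e}^{-2\mathrm{i}z(x-y)}q_y(q^{PT})_y\,z\Psi_{21}^-(y;z)\,dy$ in the remainder.
\item[(ii)] That term involves the very quantity you are computing, and you give no a priori bound for it; you only discuss $\partial_y\Psi_{11}^-$.
\item[(iii)] The same uniform-in-$y$ bound on $z\Psi_{21}^-(y;z)$ is needed to justify ``replacing $z\Psi_{21}$ by its limit'' in the $w$-equation by dominated convergence.
\end{itemize}

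To close the gap, integrate by parts only the piece $\frac{\mathrm{i}}{2}(q_y^2(q^{PT})_y-2\mathrm{i}q_{yy})\Psi_{11}^-$. Then $z$ times its contribution is bounded uniformly in $x$ and in $\mathrm{Im}z\le-1$. The $q_{yyy}$ term is handled by Cauchy--Schwarz, since $\|\mathrm{e}^{2\,\mathrm{Im}z\,(x-\cdot)}\|_{L^2(-\infty,x)}=(4|\mathrm{Im}z|)^{-1/2}$, so $q_{xxx}\in L^1$ is not needed. Next, invert the scalar Volterra operator with kernel $-\frac{\mathrm{i}}{2}\mathrm{e}^{-2\mathrm{i}z(x-y)}q_y(q^{PT})_y$ to get $\sup_y|z\Psi_{21}^-(y;z)|\le C$. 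Along non-tangential paths it is even simpler: $|\Psi_{21}^-(y;z)|\le\|\mu(\cdot;z)\|_{L^\infty}/(2|\mathrm{Im}z|)$, where $\|\mu\|_{L^\infty}$ is uniform in $z$ because $q_x,q_{xx}\in L^\infty$.

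Finally, drop the real-direction Riemann--Lebesgue discussion. The statement only concerns $|\mathrm{Im}z|\to\infty$. The sketch would also fail as written, because the integrands depend on $z$ and the same $z\Psi_{21}^-$ term reappears.
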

	\begin{proof}
		Again, we prove the statement for the Jost solution $\Psi_1^{-}$ only. The proofs of others are analogous. We rewrite the integral equation \eqref{eq:Volterra equation} in the component form:
		\begin{equation}\label{eq:component1}
			\Psi_{11}^{-}(x;z) = 1 + \dfrac{\i}{2} \int_{-\infty}^{x} (q^{PT})_{y} \left[q_y \Psi_{11}^{-}(y;z) - \Psi_{21}^{-}(y;z) \right] dy,
		\end{equation}
		\begin{equation}\label{eq:component2}
			\Psi_{21}^{-}(x;z) = \dfrac{\i}{2} \int_{-\infty}^{x} \mathrm{e}^{-2 \mathrm{i} z (x-y)} \left[\left(q_y^2 (q^{PT})_{y} -2 \mathrm{i} q_{yy} \right)\Psi_{11}^{-}(y;z) - q_y (q^{PT})_{y} \Psi_{21}^{-}(y;z)
			\right] dy.
		\end{equation}
		By bounds \eqref{eq:bound1} in Proposition \ref{prop:Analytic}, for every $q \in H^{3}(\mathbb{R}) \cap H^{2,1}(\mathbb{R})$, the integrand of  \eqref{eq:component2} is bounded for every $z\in \mathbb{C}^{-}$ by am absolutely integrable $z$-independent function. While for $z\in \mathbb{C}^{-}$, we have $| \mathrm{e}^{-2 \mathrm{i} z (x-y)} |= \mathrm{e}^{2 \mathrm{Im}z (x-y)} \to 0, \quad |z|\to \infty$, by Lebesgue's dominated convergence theorem, we obtain that $\Psi_{21}^{-}(x;z)\to 0,\,\, |z|\to \infty$ and
		\begin{equation}
			\lim_{|z| \to \infty} \Psi_{11}^{-}(x;z) = 1 + \dfrac{\i}{2} \int_{-\infty}^{x} q_y (q^{PT})_{y} \lim_{|z| \to \infty} \Psi_{11}^{-}(y;z) \, dy.
		\end{equation}
		By solving the integral equation, we have a unique solution
		\begin{equation}
			\lim_{|z| \to \infty} \Psi_{11}^{-}(x;z) = \mathrm{e}^{\frac{\mathrm{i}}{2} m_{-}(x)},
		\end{equation}
		which means
		\begin{equation}
			\lim_{|z| \to \infty} \Psi_{1}^{-}(x;z) = \mathrm{e}^{\frac{\mathrm{i}}{2} m_{-}(x)} e_1.
		\end{equation}
		In a similar way, we can show that
		\begin{equation}
		    \lim_{|z| \to \infty} \Psi_{2}^{-}(x;z) = \mathrm{e}^{-\frac{\mathrm{i}}{2} m_{-}(x)} e_2.
		\end{equation}
		Combining the above equations, we can get \eqref{eq:inftylimits}.\par
		For every $x\in \mathbb{R}$ and every $\delta > 0$, we use the Laplace method of asymptotic analysis, which allows us to split integration for $(-\infty, x-\delta)$ and $(x-\delta, x)$. And we rewrite the integral equation of $\Psi_{21}^{-}$ in the equivalent form:
		\begin{equation}
			\begin{array}{rl}
				\Psi_{21}^{-}(x;z) & \d = \int_{-\infty}^{x-\delta} \mathrm{e}^{-2 \mathrm{i} z (x-y)} \mu(y;z)\, dy + \mu(x;z) \int_{x - \delta}^{x} \mathrm{e}^{-2 \mathrm{i} z (x-y)}\, dy  \v\\
				&\qquad  \d + \int_{x - \delta}^{x} \mathrm{e}^{-2 \mathrm{i} z (x-y)} (\mu(y;z)- \mu(x;z))\, dy  \v\\
            &\d \equiv I_1 + I_2 + I_3,
			\end{array}
		\end{equation}
		where
		$$ \mu(x;z) = \dfrac{\i}{2} \left[\left(q_x^2 (q^{PT})_{x} -2 \mathrm{i} q_{xx} \right)\Psi_{11}^{-}(x;z) - q_x (q^{PT})_{x} \Psi_{21}^{-}(x;z)
		\right]. $$
		Since $\mu(\cdot;z) \in L^{1}(\mathbb{R})$, we have
		$$ | I_1 | \le \mathrm{e}^{2 \delta \mathrm{Im}z} \| \mu(\cdot;z) \|_{L^{1}}. $$
		Since $\mu(\cdot;z) \in C(\mathbb{R})$, we have
		$$ | I_3 | \le \dfrac{1}{2 | \mathrm{Im}z |} \| \mu(\cdot;z) - \mu(x;z) \|_{L^{\infty}(x - \delta, x)}.$$
		On the other hand, we have the exact value
		$$ I_2 = \dfrac{1}{2 \mathrm{i} z} (1 - \mathrm{e}^{-2 \mathrm{i} z \delta}) \mu(x;z).$$
		Let us choose $\delta := [-\mathrm{Im}z]^{-1/2}$ such that $\delta\to 0$ as $| \mathrm{Im}z | \to \infty$, then we obtain
		\begin{equation}\label{eq:zcomponet2}
			\begin{split}
			\lim_{|z| \to \infty} z \Psi_{21}^{-}(x;z)& = -\dfrac{\mathrm{i}}{2} \lim_{|z| \to \infty} \mu(x;z)
			= \dfrac{1}{4} \left(-2 \mathrm{i} q_{xx}(x) + q_x^2(x) (q^{PT})_{x}(x) \right) \mathrm{e}^{\frac{i}{2} m_{-}(x)} = s_{2}^{-}(x).
		    \end{split}
		\end{equation}
		with $s_{2}^{-}(x) = -\dfrac{\mathrm{i}}{2} \partial_x (q_{x}(x) \mathrm{e}^{\frac{i}{2} m_{-}(x)} ).$\par
		On the other hand, the first equation \eqref{eq:component1} can be rewritten as the differential equation
		$$ \partial_x \Psi_{11}^{-}(x;z) = \frac{\mathrm{i}}{2} (q_x (q^{PT})_{x})(x) \Psi_{11}^{-}(x;z) - \frac{\mathrm{i}}{2} (q^{PT})_{x}(x) \Psi_{21}^{-}(x;z),$$
		and we can use the integrating factor $\mathrm{e}^{-\frac{i}{2} m_{-}(x)}$, then
		$$\partial_x \left( \Psi_{11}^{-}(x;z) \mathrm{e}^{-\frac{\mathrm{i}}{2} m_{-}(x)} \right) = - \frac{\mathrm{i}}{2} (q^{PT})_{x}(x) \Psi_{21}^{-}(x;z) \mathrm{e}^{-\frac{\mathrm{i}}{2} m_{-}(x)}.$$
		and we can get that
		\begin{equation}
			\Psi_{11}^{-}(x;z) - \mathrm{e}^{\frac{\mathrm{i}}{2} m_{-}(x)} = \mathrm{e}^{\frac{\mathrm{i}}{2} m_{-}(x)} \int_{-\infty}^{x} - \frac{\mathrm{i}}{2} (q^{PT})_{y}(y) \Psi_{21}^{-}(y;z) \mathrm{e}^{-\frac{\mathrm{i}}{2} m_{-}(y)}\, dy.
		\end{equation}
		Also, taking the limit $|z| \to \infty$ of $z (\Psi_{11}^{-}(x;z) - \mathrm{e}^{\frac{\mathrm{i}}{2} m_{-}(x)})$, we have
		\begin{equation}\label{eq:zcomponet1}
	\begin{array}{rl}
	\displaystyle	\lim_{|z| \to \infty} z (\Psi_{11}^{-}(x;z) - \mathrm{e}^{\frac{\mathrm{i}}{2} m_{-}(x)})
= & \displaystyle \!\!\!\! -\frac{1}{4} \mathrm{e}^{\frac{\mathrm{i}}{2} m_{-}(x)} \int_{-\infty}^{x} \left(  (q^{PT})_{y}(y) q_{yy}(y) + \frac{\mathrm{i}}{2} ((q^{PT})_{y})^2(y) q_y^2(y) \right) dy \vspace{0.1in}\\
 = &\!\!\!\! s_1^{-}(x).
 \end{array}
		\end{equation}
		In the end, we get the limits of $\Psi_1^{-}$ in \eqref{eq:zlimits}. \hfill
	\end{proof}\par
	The next proposition plays a very important role in our subsequent estimates. Its proof is similar to the argument in \cite{Pelinovsky2018}, which we just modify the notation, so we will not elaborate on it here.
	
\begin{proposition}\label{prop: Estimate}
		If $g \in L^{2}(\mathbb{R})$, then
		\begin{equation}\label{eq:L2control}
			\sup_{x \in \mathbb{R}} \left\| \int_{-\infty}^{x} \mathrm{e}^{-2\mathrm{i}z(x-y)} g(y) \, dy \right\|_{L^2_z(\mathbb{R})} \le \sqrt{\pi} \|g\|_{L^2}.
		\end{equation}
	If $g \in H^{1}(\mathbb{R})$, then
		\begin{equation}\label{eq:partialx L2control}
			\sup_{x \in \mathbb{R}} \left\| g(x)-2\mathrm{i}z \int_{-\infty}^{x} \mathrm{e}^{-2\mathrm{i}z(x-y)} g(y) \, dy  \right\|_{L^2_z(\mathbb{R})} \le \sqrt{\pi} \|g_x\|_{L^2}.
		\end{equation}
		Moreover, if $g \in L^{2,1}(\mathbb{R})$, then
		\begin{equation}\label{eq:L2,1control}
			\sup_{x \in (-\infty, x_0)} \left\| \langle x \rangle \int_{-\infty}^{x} \mathrm{e}^{-2\mathrm{i}z(x-y)} g(y) \, dy \right\|_{L^2_z(\mathbb{R})} \le \sqrt{\pi} \|g\|_{L^{2,1}(-\infty, x_0)}
		\end{equation}
for every $x_0 \in \mathbb{R}^-$.
	\end{proposition}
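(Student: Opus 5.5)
The three estimates are Fourier--Plancherel bounds, and I would prove them in order, with \eqref{eq:L2,1control} reducing to \eqref{eq:L2control}.

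\textbf{Estimate \eqref{eq:L2control}.} Fix $x\in\mathbb{R}$ and substitute $y=x-s$:
\[
\int_{-\infty}^{x}\mathrm{e}^{-2\mathrm{i}z(x-y)}g(y)\,dy=\int_0^{\infty}\mathrm{e}^{-2\mathrm{i}zs}g(x-s)\,ds .
\]
For real $z$ this is, up to the rescaling $\xi=2z$, the Fourier transform of $h_x(s):=\chi_{[0,\infty)}(s)g(x-s)$. Plancherel gives
\[
\int_{\mathbb{R}}|\hat h_x(2z)|^2\,dz=\tfrac12\cdot 2\pi\|h_x\|_{L^2}^2=\pi\|h_x\|_{L^2}^2 .
\]
Since $\|h_x\|_{L^2}=\|g\|_{L^2(-\infty,x)}\le\|g\|_{L^2}$, the bound holds uniformly in $x$, and taking the supremum gives \eqref{eq:L2control}. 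I would first argue for $g\in L^1\cap L^2$ and then pass to general $g$ by density, so that the integral is defined as an $L^2_z$ limit.

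\textbf{Estimate \eqref{eq:partialx L2control}.} For $g\in H^1$, integrate by parts using $\partial_y \mathrm{e}^{-2\mathrm{i}z(x-y)}=2\mathrm{i}z\,\mathrm{e}^{-2\mathrm{i}z(x-y)}$:
\[
2\mathrm{i}z\int_{-\infty}^{x}\mathrm{e}^{-2\mathrm{i}z(x-y)}g(y)\,dy=g(x)-\int_{-\infty}^{x}\mathrm{e}^{-2\mathrm{i}z(x-y)}g_y(y)\,dy .
\]
The boundary term at $-\infty$ vanishes because $H^1$ functions decay, which I would first check for $g$ in the Schwartz class and then extend by density. The expression in \eqref{eq:partialx L2control} is therefore exactly $\int_{-\infty}^x \mathrm{e}^{-2\mathrm{i}z(x-y)}g_y\,dy$, and \eqref{eq:L2control} applied to $g_y$ gives the bound $\sqrt\pi\|g_x\|_{L^2}$.

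\textbf{Estimate \eqref{eq:L2,1control}.} This is the main point requiring care, since the weight $\langle x\rangle$ sits outside the integral. Take $x\le x_0\le0$ and $y\le x$. Then $|y|\ge|x|$, so $\langle x\rangle\le\langle y\rangle$. I would write
\[
\langle x\rangle\int_{-\infty}^{x}\mathrm{e}^{-2\mathrm{i}z(x-y)}g(y)\,dy=\int_{-\infty}^{x}\mathrm{e}^{-2\mathrm{i}z(x-y)}\frac{\langle x\rangle}{\langle y\rangle}\langle y\rangle g(y)\,dy .
\]
The ratio $\langle x\rangle/\langle y\rangle$ depends on $x$ and $y$ but not on $z$, and is bounded by $1$. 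The Plancherel argument of the first step still applies for fixed $x$ to the function $s\mapsto\chi_{[0,\infty)}(s)\frac{\langle x\rangle}{\langle x-s\rangle}\langle x-s\rangle g(x-s)$. Its $L^2$ norm is at most $\|\langle y\rangle g\|_{L^2(-\infty,x)}\le\|g\|_{L^{2,1}(-\infty,x_0)}$. Taking the supremum over $x\in(-\infty,x_0)$ concludes the proof.

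The restriction $x_0\in\mathbb{R}^-$ is exactly what makes $\langle x\rangle\le\langle y\rangle$ hold, so the weighted estimate reduces to \eqref{eq:L2control} with no loss. The corresponding statement on $(x_0,\infty)$ would be obtained from the integrals over $[x,\infty)$ by the symmetric argument.
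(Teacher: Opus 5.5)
Your proof is correct and takes essentially the same route as the argument the paper defers to (Pelinovsky--Shimabukuro), which the paper itself does not write out. That argument applies Plancherel to the zero-extended function $\chi_{[0,\infty)}(s)g(x-s)$, with the rescaling $\xi=2z$ producing the constant $\sqrt{\pi}$. It then integrates by parts to reduce the $H^1$ bound to the first bound applied to $g_x$, and uses $\langle x\rangle\le\langle y\rangle$ for $y\le x\le x_0<0$ to get the weighted bound.
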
\par
In fact, Proposition \ref{prop: Estimate} can also allow us to analyze the smoothness of various remainder terms. It is worthy noting that $q^{PT}(x) = q(-x)$, which means if we integrate over the entire real axis, the integral values of the two functions will actually be the same in the sense of the norm. The following proposition gives the smoothness of the Jost solutions
	\begin{proposition}\label{prop: Smooth}
		If $q\in H^{2,1}(\mathbb{R})$, then for every $x\in \mathbb{R}^{\pm}$, we have
		\begin{equation}\label{eq:smooth1}
			\Psi^{\pm}(x;\cdot) - \mathrm{e}^{\frac{\mathrm{i}}{2} m_{\pm}(x) \sigma_3} \in H^{1}(\mathbb{R}).
		\end{equation}
		Moreover, if $q\in H^{3}(\mathbb{R}) \cap H^{2,1}(\mathbb{R})$, then for every $x \in \mathbb{R} $, we have
		\begin{subequations}\label{eq:smooth2}
			\begin{align}
				z(\Psi_1^{\pm}(x;z) - \mathrm{e}^{\frac{\mathrm{i}}{2} m_{\pm}(x)} e_1) &- (s_1^{\pm}(x) e_1 + s_2^{\pm}(x) e_2) \in L_{z}^2(\mathbb{R}), \\
				z(\Psi_2^{\pm}(x;z) - \mathrm{e}^{-\frac{\mathrm{i}}{2} m_{\pm}(x)} e_2) &- (\nu_{1}^{\pm}(x) e_1 + \nu_{2}^{\pm}(x) e_2) \in L_{z}^2 (\mathbb{R}).
			\end{align}
		\end{subequations}
	\end{proposition}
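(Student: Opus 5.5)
We follow the scheme of Pelinovsky and Shimabukuro for the local Fokas--Lenells and DNLS problems. The proof treats $\Psi_1^{-}$ on $x\in\mathbb{R}^-$; the other columns and signs follow by symmetry. We write the Volterra system \eqref{eq:component1}--\eqref{eq:component2} and subtract the $z$-independent profile. Set
\[
w_1(x;z):=\Psi_{11}^{-}(x;z)-\mathrm{e}^{\frac{\mathrm{i}}{2}m_-(x)}, \qquad w_2(x;z):=\Psi_{21}^{-}(x;z).
\]
Using the integrating factor as in the proof of Proposition~\ref{prop:Limits}, $w_1$ is expressed through $\int_{-\infty}^x (q^{PT})_y\,\mathrm{e}^{\frac{\mathrm{i}}{2}(m_-(x)-m_-(y))}w_2\,dy$. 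So it suffices to control $w_2$. The vector $w=(w_1,w_2)^T$ satisfies
\[
w=F+Kw,
\]
with forcing term
\[
F_2(x;z)=\frac{\mathrm{i}}{2}\int_{-\infty}^{x}\mathrm{e}^{-2\mathrm{i}z(x-y)}\big(q_y^2(q^{PT})_y-2\mathrm{i}q_{yy}\big)\mathrm{e}^{\frac{\mathrm{i}}{2}m_-(y)}\,dy .
\]
Since $q\in H^{2,1}$ and $q_x\in L^\infty$, the function $g=(q_y^2(q^{PT})_y-2\mathrm{i}q_{yy})\mathrm{e}^{\frac{\mathrm{i}}{2}m_-}$ lies in $L^2$. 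Then \eqref{eq:L2control} gives $\sup_x\|F(x;\cdot)\|_{L^2_z}\le C\|q\|_{H^2}$.

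\textbf{Step 1: $L^2_z$ bound for $w$.} We solve $w=(I-K)^{-1}F$ in the Banach space $L^\infty_x(-\infty,x_0;L^2_z)$. The kernel of $K$ has modulus at most $|\widehat Q(y)|$ uniformly in real $z$. By Minkowski's inequality, $K$ maps this space to itself, and the iterates obey the Volterra bound $\|K^n\|\le\|\widehat Q\|_{L^1}^n/n!$, as in \eqref{eq:converge}. Hence $I-K$ is invertible and $\sup_x\|w(x;\cdot)\|_{L^2_z}\le C\|F\|$.

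\textbf{Step 2: $\partial_z w\in L^2_z$ for $x\in\mathbb{R}^-$.} Differentiating the integral equation gives
\[
\partial_z w=\partial_zF+(\partial_zK)w+K\partial_z w,
\]
and $\partial_z$ of the phase produces the factor $-2\mathrm{i}(x-y)$. We bound $|x-y|\le\langle x\rangle+\langle y\rangle$. The $\langle y\rangle$ part is controlled by \eqref{eq:L2control} applied to $\langle y\rangle g\in L^2$, which uses the weight in $H^{2,1}$. The $\langle x\rangle$ part is controlled by \eqref{eq:L2,1control}, which is exactly why the statement is restricted to $x\in\mathbb{R}^{-}$ (respectively $\mathbb{R}^+$ for $\Psi^+$): for $x<0$ and $y<x$ one has $|x-y|\le|y|$, so only $\|g\|_{L^{2,1}}$ enters. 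The term $(\partial_zK)w$ is handled the same way: we use $w\in L^\infty_xL^2_z$ together with $\langle y\rangle\widehat Q\in L^1$, the latter by Cauchy--Schwarz with $q\in H^{2,1}$. The same Volterra inversion then gives \eqref{eq:smooth1}. The $w_1$ component and its $z$-derivative inherit these bounds through the integral relation with $w_2$.

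\textbf{Step 3: the $z$-weighted statement \eqref{eq:smooth2}.} We multiply the equation for $w_2$ by $z$ and integrate by parts in $y$, as in the Laplace computation of Proposition~\ref{prop:Limits}. This yields
\[
2\mathrm{i}z\int_{-\infty}^{x}\mathrm{e}^{-2\mathrm{i}z(x-y)}\mu\,dy=\mu(x)-\int_{-\infty}^x\mathrm{e}^{-2\mathrm{i}z(x-y)}\mu_y\,dy,
\]
which is exactly the content of \eqref{eq:partialx L2control}. The boundary term produces $s_2^-(x)$ plus an $L^2_z$ remainder. The remainder requires $\mu_y\in L^2$, i.e.\ $q_{xxx}\in L^2$; this is where $H^3$ is needed. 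It also requires $\partial_x\Psi^-_1$, which we rewrite via the ODE \eqref{eq:ZSprobelm} and bound using Step 1. For $zw_1$ we use the integral relation with $w_2$. The constant $s_1^-$ appears from the $z^{-1}$ term of $w_2$, and the difference lies in $L^2_z$ because $zw_2-s_2^-\in L^2_z$ and $(q^{PT})_x\in L^1$.

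\textbf{Main obstacle.} The hardest part is Step 3. There, $z\,\mu(x;z)$ itself contains $\Psi_{21}^-$, so $\mu(x;z)-\mu(x;\infty)$ must be shown to lie in $L^2_z$ before the integration by parts closes. I would first close Step 1 with the weighted Volterra inversion, and only then feed that bound into the integration-by-parts identity.
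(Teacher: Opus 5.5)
Your architecture matches the paper's: the forcing $F=(0,h)$ with $h=\int_{-\infty}^{x}\mathrm{e}^{-2\mathrm{i}z(x-y)}\partial_y(q_y\mathrm{e}^{\frac{\mathrm{i}}{2}m_-})\,dy$, inversion of $I-K$ on $L^\infty_xL^2_z$, and Proposition~\ref{prop: Estimate} for the explicit terms. Step~1 is fine. Step~2, however, does not close as written.

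\textbf{Step~2.} You bound $(\partial_zK)w$ using only the unweighted bound on $w$ together with $\langle y\rangle\widehat Q\in L^1$. The latter does not follow from $q\in H^{2,1}$. Cauchy--Schwarz turns $\langle y\rangle\widehat Q\in L^2$ into $\widehat Q\in L^1$, not into $\langle y\rangle\widehat Q\in L^1$. For the linear entries $(q^{PT})_y$ and $q_{yy}$ it genuinely fails: $q=\langle y\rangle^{-2}\sin y$ lies in $H^3\cap H^{2,1}$, yet $\langle y\rangle q_{yy}\notin L^1$. So $\int_{-\infty}^{x}|x-y|\,|\widehat Q(y)|\,\|w(y;\cdot)\|_{L^2_z}\,dy$ need not be finite.

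The missing ingredient is a \emph{weighted} Step~1. For $x_0\in\mathbb{R}^-$, prove $\sup_{x<x_0}\langle x\rangle\|w(x;\cdot)\|_{L^2_z}<\infty$. Apply \eqref{eq:L2,1control} to $F$, and note that $\langle x\rangle\le\langle y\rangle$ for $y<x<x_0$, so the weight passes through $K$ with the same $1/n!$ Neumann bounds. Then put the weight produced by $\partial_z$ on $w$ rather than on $\widehat Q$: $|x-y|\,|\widehat Q(y)|\,\|w(y;\cdot)\|_{L^2_z}\le|\widehat Q(y)|\,\langle y\rangle\|w(y;\cdot)\|_{L^2_z}$, which is integrable with only $\widehat Q\in L^1$. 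This is exactly the paper's route. It differentiates $\mathrm{e}^{2\mathrm{i}zx}\Psi_{21}^{-}$, i.e.\ works with $v=(\partial_z\Psi_{11}^{-},\,\partial_z\Psi_{21}^{-}+2\mathrm{i}x\Psi_{21}^{-})$. It then bounds the sources $\mu_1,\mu_2$ by unweighted $L^1$-type norms of the entries of $\widehat Q$ times $\sup_{x<x_0}\|\langle x\rangle(\Psi_1^{-}-\mathrm{e}^{\frac{\mathrm{i}}{2}m_-}e_1)\|_{L^2_z}$.

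\textbf{Step~3.} Here there is an unresolved circularity. Since $\mu(y;z)$ depends on $z$, \eqref{eq:partialx L2control} does not apply to it. Integrating by parts brings in $\partial_y\Psi_{21}^{-}=-2\mathrm{i}z\Psi_{21}^{-}+\widehat Q_{21}\Psi_{11}^{-}+\widehat Q_{22}\Psi_{21}^{-}$, so $z\Psi_{21}^{-}$ reappears under the integral. Step~1 does not control it; it is not even in $L^2_z$, since it tends to $s_2^-$. So ``bound using Step 1'' is incorrect, and you would need a further Volterra argument for $zw_2-s_2^-$.

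The paper never integrates by parts on the unknown. Because $K$ acts only in $y$ and commutes with multiplication by $z$, one has $(I-K)\bigl[z(\Psi_1^{-}-\mathrm{e}^{\frac{\mathrm{i}}{2}m_-}e_1)-(s_1^-e_1+s_2^-e_2)\bigr]=\tilde\mu e_2$. The first component vanishes by the very definition of $s_1^-$. The remaining term $\tilde\mu=zh-s_2^-+\bigl(K(s_1^-e_1+s_2^-e_2)\bigr)_2$ depends only on $q$. The only integration by parts is on the explicit $h$: this is \eqref{eq:partialx L2control} applied to $\partial_x(q_x\mathrm{e}^{\frac{\mathrm{i}}{2}m_-})\in H^1$, which is where $H^3$ enters. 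One more inversion of $I-K$ on $L^\infty_xL^2_z$ then gives \eqref{eq:smooth2}.
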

	\begin{proof}
		Again, we prove the statement for the Jost solution $\Psi_1^{-}$. The proof for other Jost solutions is analogous. We rewrite equation \eqref{eq:Psi1} in the equivalent form
		\begin{equation}\label{eq:difference}
			(I-K)(\Psi_1^{-} - \mathrm{e}^{\frac{\mathrm{i}}{2} m_{-}(x)} e_1) = h e_2,
		\end{equation}
		where
		\begin{equation}
			h(x;z) = \int_{-\infty}^{x} \mathrm{e}^{-2 \mathrm{i} z (x-y)} w(y)\, dy, \quad
			w(x) = \partial_x \left( q_x(x) \mathrm{e}^{\frac{\mathrm{i}}{2} m_{-}(x)} \right).
		\end{equation}
		If $q\in H^{2,1}(\mathbb{R})$, then $w\in L^{2,1}(\mathbb{R})$. By the bounds \eqref{eq:L2control} and \eqref{eq:L2,1control}, we have $h(x;z) \in L_{x}^{\infty}(\mathbb{R} ; L_{z}^{2}(\mathbb{R}))$ and for every $x_0 \in \mathbb{R}^{-}$, the following bound is satisfied:
		\begin{equation}\label{eq:bound2}
			\begin{split}
				\sup_{x \in (-\infty, x_0)} \| \langle x \rangle h(x;z) \|_{L_{z}^{2}(\mathbb{R})} &\le \mathrm{e}^{\frac{\| q_x \|_{L^{2}}^2}{2}} \sqrt{\pi}\left( \|q_{xx}\|_{L^{2,1}} + \frac{1}{2} \| q_x^2 (q^{PT})_{x} \|_{L^{2,1}}
				\right) \\
				&\le C \left( \| q \|_{H^{2,1}} + \| q \|_{H^{2,1}}^3
				\right),
			\end{split}
		\end{equation}
		where $C$ is a positive constant and we use the Sobolev embedding to get the final inequality.\par
		By using estimates similar to those in the derivation of the bound \eqref{eq:converge} we find that for every $g(x;z)\in L_x^{\infty}(\mathbb{R}; L_z^2(\mathbb{R}))$, we have
		\begin{equation}
			\| K^n g \|_{L_{x}^{\infty} L_{z}^2} \le \dfrac{1}{n!} \|\widehat{Q}\|_{L^1}^{n} \| g(x;z) \|_{L_{x}^{\infty} L_z^{2}}.
		\end{equation}
		Therefore, the operator $I-K$ is invertible on the space $L_x^{\infty}(\mathbb{R}; L_z^2(\mathbb{R}))$, and a bound on the inverse operator is given by
		\begin{equation}\label{eq:inverse operator}
			\| (I - K)^{-1} \|_{L_x^{\infty} L_z^2(\mathbb{R}) \to L_x^{\infty} L_z^2(\mathbb{R})} \le \mathrm{e}^{\| \widehat{Q} \|_{L^1}}.
		\end{equation}
		Moreover, the same estimate \eqref{eq:inverse operator} can be obtained in the norm $L_x^{\infty}((-\infty, x_0); L_z^2(\mathbb{R}))$ for every $x_0 \in \mathbb{R}^{-}$. By using \eqref{eq:difference}, \eqref{eq:bound2} and \eqref{eq:inverse operator}, we have the estimate
		\begin{equation}
			\sup_{x \in (-\infty, x_0)} \| \langle x \rangle \left( \Psi_1^{-}(x;z) -\mathrm{e}^{\frac{\mathrm{i}}{2} m_{-}(x)} e_1
			\right) \|_{L_{z}^{2}} \le C \mathrm{e}^{\| \widehat{Q} \|_{L^1}} \left( \| q \|_{H^{2,1}} + \| q \|_{H^{2,1}}^3
			\right).
		\end{equation}\par
	To end the proof of \eqref{eq:smooth2}, we need to show $\partial_z \Psi_1^{-}(x;z) \in L_x^{\infty}((-\infty, x_0); L_z^2(\mathbb{R}))$ for every $x_0 \in \mathbb{R}$. We
		differentiate the integral equation \eqref{eq:Psi1} in $z$ and introduce the vector
		$$ v(x;z) := \left[ \partial_z \Psi_{11}^{-}(x;z), \,  \partial_z \Psi_{21}^{-}(x;z) + 2ix \Psi_{21}^{-}(x;z)
		\right],$$
		Thus, we obtain from \eqref{eq:Psi1} that
		\begin{equation}\label{eq:partial difference}
			(I - K) v = \mu_1 e_1 + \mu_2 e_2 + \mu_3 e_2,
		\end{equation}
		where
		\begin{align*}
			\mu_1(x;z) &= -\int_{-\infty}^{x} y (q^{PT})_{y}(y) \Psi_{21}^{-}(y;z)\, dy, \\
			\mu_2(x;z) &= \int_{-\infty}^{x} y \mathrm{e}^{-2\mathrm{i}z(x-y)} \left(2\mathrm{i}q_{yy}(y) - q_y^2(y) (q^{PT})_{y}(y) \right) \left( \Psi_{11}^{-}(y;z) - \mathrm{e}^{\frac{\mathrm{i}}{2}m_{-}(y)} \right) dy, \\
			\mu_3(x;z) &= \int_{-\infty}^{x} y \mathrm{e}^{-2\mathrm{i}z(x-y)} \left(2\mathrm{i}q_{yy}(y) - q_y^2(y) (q^{PT})_{y}(y) \right) \mathrm{e}^{\frac{\mathrm{i}}{2}m_{-}(y)}\, dy.
		\end{align*}
		For every $x_0 \in \mathbb{R}^{-}$, applying the H\"{o}lder's inequality to $\mu_1$ and using the estimate \eqref{eq:L2control} for $\mu_2$ and $\mu_3$, we obtain the following bounds:
	\begin{align*}
	\begin{array}{rl}
	\d\sup_{x \in (-\infty, x_0)} \|\mu_1(x;\cdot) \|_{L_z^2(\mathbb{R})}\!\!\! & \le \d \| (q^{PT})_x \|_{L^1} \sup_{x \in (-\infty, x_0)} \| \langle x \rangle \Psi_{21}^{-}(x;z) \|_{L_z^2(\mathbb{R})}, \vspace{0.1in} \\
		\d	\sup_{x \in (-\infty, x_0)} \|\mu_2(x;\cdot) \|_{L_z^2(\mathbb{R})} \!\!\! & \d \le \left(2\|q_{xx}\|_{L^1} + \| q_x \|_{L^{\infty}} \| q_x \|_{L^2}\|(q^{PT})_x \|_{L^2} \right) \v\\
\!\!\! & \qquad \d \times \sup_{x \in (-\infty, x_0)} \left\| \langle x \rangle \left(\Psi_{11}^{-}(x;z) - \mathrm{e}^{\frac{\mathrm{i}}{2}m_{-}(x)}\right) \right\|_{L_z^2(\mathbb{R})}, \v\\
			\d \sup_{x \in (-\infty, x_0)} \| \mu_3(x;\cdot) \|_{L_z^2(\mathbb{R})} \!\!\! &\le  \d
\sqrt{\pi}\mathrm{e}^{\frac{1}{2}\| q_x \|_{L^{2}}\| (q^{PT})_x \|_{L^{2}}} \left( 2\|q_{xx}\|_{L^{2,1}} +\|q_x^2 (q^{PT})_{x}\|_{L^{2,1}} \right).
	\end{array}
	\end{align*}
		Because of the estimate \eqref{eq:bound2}, the first two bounds are finite. By the invertibility of $(I-K)$ in \eqref{eq:inverse operator} via the uniformly convergent Neumann series, we summarize that $v(x;z) \in L_x^{\infty}((-\infty, x_0); L_z^2(\mathbb{R}))$ for every $x_0 \in \mathbb{R}^{-}$. This implies $\partial_z \Psi_1^{-}(x;\cdot) \in L_z^2(\mathbb{R})$, yielding the proof of \eqref{eq:smooth1}.
		
		To prove \eqref{eq:smooth2}, we note that $q \in H^3(\mathbb{R}) \cap H^{2,1}(\mathbb{R})$ guarantees that $w(x) \in H^1(\mathbb{R})$. We can obtain that
		\begin{equation}\label{eq: zlimits diff}
			(I - K)\left[z \left(\Psi_1^{-} - \mathrm{e}^{\frac{\mathrm{i}}{2} m_{-}(x)} e_1
			\right) - \left(s_{1}^{-} e_1 +s_{2}^{-} e_2
			\right) \right] = \tilde{\mu} e_2,
		\end{equation}
		with
		$$
\begin{array}{rl}
\tilde{\mu} (x;z) \!\!\! &= \d z \int_{-\infty}^{x} \mathrm{e}^{-2 \mathrm{i} z (x-y)} w(y)\, dy - s_2^{-}(x) \vspace{0.1in}\\
 \!\!\! & \d \qquad + \frac{\mathrm{i}}{2} \int_{-\infty}^x \mathrm{e}^{-2\mathrm{i}z(x-y)} \left[ \left(q_y^2 (q^{PT})_{y} - 2\mathrm{i}q_{yy}\right) s_1^-(y) - q_y (q^{PT})_{y} s_2^-(y) \right] dy.
 \end{array}$$
		By using bounds \eqref{eq:L2control} and \eqref{eq:partialx L2control}, we have $\tilde{\mu}(x;z) \in L_{x}^{\infty}(\mathbb{R}; L_{z}^{2}(\mathbb{R}))$ if $w \in H^{1}(\mathbb{R})$. Inverting $(I - K)$ on $L_{x}^{\infty}(\mathbb{R}; L_{z}^{2}(\mathbb{R}))$, we finally obtain \eqref{eq:smooth2} for $\Psi_1^{-}$. \hfill
	\end{proof} \par
	
To deduce the Lipschitz continuity of the scattering data, we need to analyze the Lipschitz continuity of the Jost solutions. The following result shows that the map
	\begin{equation}\label{map:L-continue1}
		H^{2,1}(\mathbb{R}) \ni q \to \left(\Psi^{\pm}(x;z) - \mathrm{e}^{\frac{\mathrm{i}}{2} m_{\pm}(x) \sigma_3}
		\right) \in L_{x}^{\infty}(\mathbb{R}^{\pm}; H_{z}^{1}(\mathbb{R}))
	\end{equation}
	is Lipschitz continuous. Moreover, the $\mathcal{O} (z^{-1})$ term also has the continuity.
	
\begin{corollary}\label{corollay:Jost L-C}
		Let $q, \tilde{q} \in H^{2,1}(\mathbb{R})$ satisfy $\| u \|_{H^{2,1}}, \| \tilde{u} \|_{H^{2,1}} \le \delta$ for some $\delta > 0$. Then, there is a positive $\delta$-independent constant $C(\delta)$ such that for every $x \in \mathbb{R}^{\pm}$, we have
		\begin{equation}\label{eq:L c control1}
			\| \Psi_1^{\pm}(x;\cdot) - \mathrm{e}^{\frac{\mathrm{i}}{2} m_{\pm}(x)} e_1 - \tilde{\Psi}_1^{\pm} (x;\cdot) + \mathrm{e}^{\frac{i}{2} \tilde{m}_{\pm}(x)} e_1 \|_{H^{1}} \le C(\delta) \| q - \tilde{q} \|_{H^{2,1}}.
		\end{equation}
		and
		\begin{equation}\label{eq:L c control2}
			\| \Psi_2^{\pm}(x;\cdot) - \mathrm{e}^{-\frac{\mathrm{i}}{2} m_{\pm}(x)} e_2 - \tilde{\Psi}_2^{\pm} (x;\cdot) + \mathrm{e}^{-\frac{i}{2} \tilde{m}_{\pm}(x)} e_2 \|_{H^{1}} \le C(\delta) \| q - \tilde{q} \|_{H^{2,1}}.
		\end{equation}
		Furthermore, if $q, \tilde{q} \in H^{3}(\mathbb{R}) \cap H^{2,1}(\mathbb{R})$ satisfy $\| q \|_{H^{3} \cap H^{2,1}}, \| \tilde{q} \|_{H^{3} \cap H^{2,1}} \le \delta$, then for every $x\in \mathbb{R}$, there is a positive constant such that
		\begin{equation}\label{eq:z-L-c control}
			\| \hat{\Psi}_1^{\pm}(x;\cdot) - \hat{\tilde{\Psi}}_1^{\pm}(x;\cdot) \|_{L^2} + \| \hat{\Psi}_2^{\pm}(x;\cdot) - \hat{\tilde{\Psi}}_2^{\pm}(x;\cdot) \|_{L^2} \le C(\delta) \| q - \tilde{q} \|_{H^{3} \cap H^{2,1}}.
		\end{equation}
		where
		\begin{align}
		 \hat{\Psi}_1^{\pm} (x;z) &:= z(\Psi_1^{\pm}(x;z) - \mathrm{e}^{\frac{\mathrm{i}}{2} m_{\pm}(x)} e_1) - (s_1^{\pm}(x) e_1 + s_2^{\pm}(x) e_2), \v \\
		 \hat{\Psi}_2^{\pm} (x;z) &:= z(\Psi_2^{\pm}(x;z) - \mathrm{e}^{-\frac{\mathrm{i}}{2} m_{\pm}(x)} e_2) - (\nu_{1}^{\pm}(x) e_1 + \nu_{2}^{\pm}(x) e_2).
	    \end{align}
	\end{corollary}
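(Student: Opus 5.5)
We argue for $\Psi_1^{-}$ on $x\in\mathbb{R}^-$; the other Jost columns are handled in the same way. The idea is to write the difference of the two Jost solutions as the solution of a Volterra equation of the same type as \eqref{eq:difference}. Let $K$ and $\tilde K$ be the operators \eqref{op:K operator} built from $q$ and $\tilde q$. Set $\Phi:=\Psi_1^{-}-\mathrm{e}^{\frac{\mathrm{i}}{2}m_-}e_1$ and define $\tilde\Phi$ analogously. Then \eqref{eq:difference} gives $(I-K)\Phi=he_2$ and $(I-\tilde K)\tilde\Phi=\tilde h e_2$. Subtracting, we obtain
\begin{equation*}
(I-K)(\Phi-\tilde\Phi)=(h-\tilde h)e_2+(K-\tilde K)\tilde\Phi .
\end{equation*}
We then invert $I-K$ on $L^\infty_x((-\infty,x_0);L^2_z)$ using \eqref{eq:inverse operator}. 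The norm bound $\mathrm{e}^{\|\widehat Q\|_{L^1}}$ depends only on $\delta$, because $\|\widehat Q\|_{L^1}\lesssim \|q\|_{H^{2,1}}+\|q\|_{H^{2,1}}^3$ via Sobolev embedding. The reflected potential $q^{PT}$ has the same norms as $q$, and $\|q^{PT}-\tilde q^{PT}\|=\|q-\tilde q\|$ in every norm used.

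\textbf{Step 1 ($L^2_z$ part).} The source $h-\tilde h$ is $\int_{-\infty}^x \mathrm{e}^{-2\mathrm{i}z(x-y)}(w-\tilde w)\,dy$. By \eqref{eq:L2control} and \eqref{eq:L2,1control} it is bounded by $\sqrt\pi\|w-\tilde w\|_{L^{2,1}}$. To estimate $w-\tilde w=\partial_x(q_x\mathrm{e}^{\frac{\mathrm{i}}{2}m_-}-\tilde q_x\mathrm{e}^{\frac{\mathrm{i}}{2}\tilde m_-})$ we use $|\mathrm{e}^{\frac{\mathrm{i}}{2}m}-\mathrm{e}^{\frac{\mathrm{i}}{2}\tilde m}|\le \frac12\|m-\tilde m\|_{L^\infty}$ together with $\|m-\tilde m\|_{L^\infty}\le \|q_x\|_{L^2}\|(q^{PT}-\tilde q^{PT})_x\|_{L^2}+\|q_x-\tilde q_x\|_{L^2}\|\tilde q^{PT}_x\|_{L^2}$. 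Multilinear terms are then handled by telescoping and Sobolev embedding. The other source $(K-\tilde K)\tilde\Phi$ is bounded by $\|\widehat Q-\widehat{\tilde Q}\|_{L^1}\,\|\tilde\Phi\|_{L^\infty_xL^2_z}$. The second factor is controlled by \eqref{eq:bound2} through $C(\delta)$, and the first is $\le C(\delta)\|q-\tilde q\|_{H^{2,1}}$. The weighted version with $\langle x\rangle$ for $x<x_0$ follows in the same way.

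\textbf{Step 2 ($\partial_z$ part).} We difference the equation \eqref{eq:partial difference} for $v$, obtaining $(I-K)(v-\tilde v)=\sum_j(\mu_j-\tilde\mu_j)e_{(j)}+(K-\tilde K)\tilde v$. Each difference $\mu_j-\tilde\mu_j$ splits into a coefficient difference multiplied by a weighted Jost quantity, plus a coefficient multiplied by a weighted Jost difference. The weighted Jost differences are exactly those bounded in Step 1, so each term is controlled by $C(\delta)\|q-\tilde q\|_{H^{2,1}}$. Note that $v$ involves $2\mathrm{i}x\Psi_{21}^-$. Recovering $\partial_z\Psi_{21}^-$ therefore also needs a bound on the difference of $x\Psi_{21}^-$, which is precisely the weighted $L^2_z$ estimate from Step 1. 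Combining Steps 1 and 2 gives \eqref{eq:L c control1}; \eqref{eq:L c control2} is proved identically.

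\textbf{Step 3 (\eqref{eq:z-L-c control}).} We difference \eqref{eq: zlimits diff}, obtaining $(I-K)(\hat\Psi_1^- - \hat{\tilde\Psi}_1^-)=(\tilde\mu-\tilde{\tilde\mu})e_2+(K-\tilde K)\hat{\tilde\Psi}_1^-$. The key term is $z\int \mathrm{e}^{-2\mathrm{i}z(x-y)}(w-\tilde w)\,dy-(s_2^--\tilde s_2^-)$. Since $s_2^-=-\frac{\mathrm{i}}{2}w$, this term is exactly the expression in \eqref{eq:partialx L2control} applied to $w-\tilde w$, and so it is bounded by $\sqrt\pi\|w_x-\tilde w_x\|_{L^2}$. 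Because $w_x$ involves $q_{xxx}$, the $H^3$ norm enters here. The remaining pieces involve $s_1^-,s_2^-$ and their differences, and are bounded via \eqref{eq:L2control}.

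\textbf{Main obstacle.} The main difficulty is bookkeeping. Every Lipschitz constant must depend only on $\delta$. This needs uniform control of the phase factors $\mathrm{e}^{\pm\frac{\mathrm{i}}{2}m_\pm}$, which are not unimodular because $q_xq^{PT}_x$ is complex. We bound them by $\mathrm{e}^{\frac12\|q_x\|_{L^2}^2}\le \mathrm{e}^{\delta^2/2}$, and we use the mean value theorem for their differences.
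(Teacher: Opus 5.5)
Your proposal is correct and follows essentially the paper's proof: subtracting the two Volterra equations to get $(I-K)(\Phi-\tilde\Phi)=(h-\tilde h)e_2+(K-\tilde K)\tilde\Phi$ is exactly the paper's resolvent identity, and you obtain the $\partial_z$ and $\mathcal{O}(z^{-1})$ parts as the paper does, by differencing \eqref{eq:partial difference} and \eqref{eq: zlimits diff}. The only slip is the Step 1 inequality $|\mathrm{e}^{\frac{\mathrm{i}}{2}m}-\mathrm{e}^{\frac{\mathrm{i}}{2}\tilde m}|\le\frac12\|m-\tilde m\|_{L^\infty}$: because $m$ is complex, it needs the extra factor $\max(|\mathrm{e}^{\frac{\mathrm{i}}{2}m}|,|\mathrm{e}^{\frac{\mathrm{i}}{2}\tilde m}|)\le \mathrm{e}^{\delta^2/2}$, as you yourself note in your final paragraph.
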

	\begin{proof}
		Again, we prove the statement for the Jost solution $\Psi_1^{-}$. The proof for other Jost solutions is analogous.\par
		For every $x \in \mathbb{R}$, utilizing the nonlocal symmetry $q^{PT}(x)=q(-x)$ and $\tilde{q}^{PT}(x)=\tilde{q}(-x)$, we can estimate the difference of the exponential phase factors. Since $\|(q^{PT})_{x}\|_{L^2} = \|q_x\|_{L^2}$, we have:
		\begin{equation}\label{eq:term1}
		    \begin{split}
		    	|m_{-}(x) - \tilde{m}_{-}(x)| &\le \int_{-\infty}^{x} |q_y (q^{PT})_{y} - \tilde{q}_y (\tilde{q}^{PT})_y|\, dy \\
		    	&\le \int_{-\infty}^{x} |q_y((q^{PT})_{y} - \tilde{q}^{PT}_y)| dy + \int_{-\infty}^{x} |(q_y - \tilde{q}_y)\tilde{q}^{PT}_y|\, dy \\
		    	&\le \|q_x\|_{L^2} \|(q^{PT})_{x} - (\tilde{q}^{PT})_x\|_{L^2} + \|q_x - \tilde{q}_x\|_{L^2} \|(\tilde{q}^{PT})_x\|_{L^2} \le 2\delta \|q - \tilde{q}\|_{H^{2,1}}.
			\end{split}
		\end{equation}
		Using the basic inequality $|\mathrm{e}^{A} - \mathrm{e}^{B}| \le |A - B| \max(|\mathrm{e}^A|, |\mathrm{e}^B|)$ for complex exponents, we can bound the difference of the exponential terms by $C_1(\delta)\|q - \tilde{q}\|_{H^{2,1}}$.\par Using the integral equation \eqref{eq:difference}, we can express the difference of the Jost solutions via the resolvent identity:
		\begin{equation}\label{eq:Lip diff}
			\begin{split}
				(\Psi_1^{-} - \mathrm{e}^{\frac{\mathrm{i}}{2} m_{-}(x)} e_1) - (\tilde{\Psi}_1^{-} - \mathrm{e}^{\frac{\mathrm{i}}{2} \tilde{m}_{-}(x)} e_1) &= (I-K)^{-1}(\mu e_2 - \tilde{\mu} e_2)
				+ (I-K)^{-1}(K - \tilde{K})(I-\tilde{K})^{-1} \tilde{\mu} e_2,
			\end{split}
		\end{equation}
		where $\tilde{K}$ and $\tilde{\mu}$ denote the same quantities as $K$ and $\mu$ but with $q$ replaced by $\tilde{q}$.\par
		To estimate the first term on the right-hand side of \eqref{eq:Lip diff}, we analyze the difference $h(x;z) - \tilde{h}(x;z) = \int_{-\infty}^{x} \mathrm{e}^{-2\mathrm{i}z(x-y)} (w(y) - \tilde{w}(y)) dy$. The term $w(y) - \tilde{w}(y)$ is given by:
		\begin{equation}
			w(y) - \tilde{w}(y) = \left( q_{yy} + \frac{\mathrm{i}}{2} q_y^2 (q^{PT})_{y} \right) \mathrm{e}^{\frac{\mathrm{i}}{2} m_{-}(y)} - \left( \tilde{q}_{yy} + \frac{\mathrm{i}}{2} \tilde{q}_y^2 (\tilde{q}^{PT})_y \right) \mathrm{e}^{\frac{\mathrm{i}}{2} \tilde{m}_{-}(y)}.
		\end{equation}
		Utilizing \eqref{eq:term1} and the Sobolev embedding, we can deduce that $\|w - \tilde{w}\|_{L^{2,1}} \le C_2(\delta) \|q - \tilde{q}\|_{H^{2,1}}$. Then, applying the estimate \eqref{eq:L2,1control}, we obtain for every $x_0 \in \mathbb{R}^{-}$:
		\begin{equation}
			\sup_{x \in (-\infty, x_0)} \|\langle x \rangle [\mu(x;\cdot) - \tilde{\mu}(x;\cdot)]\|_{L_z^2(\mathbb{R})} \le \sqrt{\pi}\, C_2(\delta) \|q - \tilde{q}\|_{H^{2,1}}.
		\end{equation}\par
		For the second term in \eqref{eq:Lip diff}, we note that $K$ is a Lipschitz continuous operator from $L_x^\infty(\mathbb{R}; L_z^2(\mathbb{R}))$ to itself. The difference of the integral kernels is bounded by $\|\tilde{Q} - \tilde{\tilde{Q}}\|_{L^1} \le C_3(\delta)\|q - \tilde{q}\|_{H^{2,1}}$. Thus, for any $f \in L_x^\infty(\mathbb{R}; L_z^2(\mathbb{R}))$, we have:
		\begin{equation}
			\|(K - \tilde{K})f\|_{L_x^\infty L_z^2} \le C_3(\delta) \|q - \tilde{q}\|_{H^{2,1}} \|f\|_{L_x^\infty L_z^2}.
		\end{equation}\par
		Combining these estimates with the uniform boundedness of the inverse operators $(I-K)^{-1}$ and $(I-\tilde{K})^{-1}$ from \eqref{eq:inverse operator}, we derive that:
		\begin{equation}
			\sup_{x \in (-\infty, x_0)} \|\langle x \rangle (\Psi_1^{-}(x;\cdot) - \mathrm{e}^{\frac{\mathrm{i}}{2} m_{-}(x)} e_1 - \tilde{\Psi}_1^{-}(x;\cdot) + \mathrm{e}^{\frac{\mathrm{i}}{2} \tilde{m}_{-}(x)} e_1)\|_{L_z^2(\mathbb{R})} \le C_4(\delta) \|q - \tilde{q}\|_{H^{2,1}}.
		\end{equation}\par
		The bound for the $z$-derivative in $H^1_z(\mathbb{R})$ follows from applying the identical difference scheme to the differentiated integral equation \eqref{eq:partial difference}. Combining these results yields the Lipschitz continuity in \eqref{eq:L c control1}.\par Finally, the proof for the bound \eqref{eq:z-L-c control} involving the asymptotic terms $\hat{\Psi}_1^{\pm}$ and $\hat{\Psi}_2^{\pm}$ is completed by recalling the same difference analysis on the integral equations to \eqref{eq:partial difference} and \eqref{eq: zlimits diff} previously. \hfill
	\end{proof}

	\subsection{Scattering coefficients on the $\lambda$-plane}
	
 In this subsection, we introduce the scattering data through the original Jost solutions
	$\psi^{\pm}(x;\lambda)$ associated with \eqref{eq: psi-Volterra}. This is the same point of
	view as in the inverse scattering transform on the $\lambda$-plane. The transformation
	\eqref{eq:lambda to z} will only be used later as an auxiliary transformation to obtain a
	Zakharov--Shabat type problem, but it is not used to define the scattering coefficients.\par
	Throughout this subsection, the time variable $t$ is fixed and suppressed. We denote the continuous spectra by
	\begin{equation}\label{eq:lambda-continuous-spectrum}
		\Sigma_{\lambda}:=\{\lambda\in\mathbb{C}:\operatorname{Im}\lambda^2=0\}
		=\mathbb{R}\cup \mathrm{i}\mathbb{R},
	\end{equation}
	and set (see Fig.~\ref{fig2}(left))
	\begin{equation}\label{eq:lambda-domains}
		D_{\pm}:=\{\lambda\in\mathbb{C}: \pm\operatorname{Im}\lambda^2>0\}.
	\end{equation}\par
	From the estimate of  \cite{Cheng2022}, we can get the following proposition, which we omit the proof:
	\begin{proposition}\label{prop:original analytic}
		Let $q \in H^{3}(\mathbb{R}) \cap H^{2,1}(\mathbb{R})$, denote the Jost function
		$\psi^{\pm}(x;\lambda)=		\bigl(\psi_1^{\pm}(x;\lambda),\psi_2^{\pm}(x;\lambda)\bigr),$
 where $\psi_j^{\pm}(x;\lambda)\, (j=1,2)$ denotes the column vectors. Then the integral equation \eqref{eq: psi-Volterra} admits the unique solutions $\psi^{\pm}(x,t;\lambda)$. Moreover, $\psi_1^{\pm}(x;\lambda),\, \psi_2^{\mp}(x;\lambda)$ are analytic in $D_{\pm}$.
	\end{proposition}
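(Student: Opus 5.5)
The plan is to treat \eqref{eq: psi-Volterra} column by column as a Volterra equation with a $\lambda$-dependent kernel. The key step is to take analyticity from the $z$-plane Jost functions of Proposition \ref{prop:Analytic}, pulled back through the transformation \eqref{eq:lambda to z}. Bounds uniform in $\lambda$ are not needed here.

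\textbf{Existence and uniqueness.} Fix $\lambda\in\mathbb{C}\setminus\{0\}$. Write the first column of \eqref{eq: psi-Volterra} for $\psi^-$ componentwise. The kernel is $\mathrm{i}\lambda\,\mathrm{diag}(1,\mathrm{e}^{-2\mathrm{i}\lambda^2(x-s)})Q_s$, so its modulus is controlled by $|\lambda|\,|Q_s(s)|$ whenever $\operatorname{Im}\lambda^2\le 0$. Since $q\in H^{2,1}$, we have $q_x\in L^1$, hence $Q_x\in L^1(\mathbb{R})$. The Neumann series estimate $\|K_\lambda^n g\|_{L^\infty}\le (|\lambda|\|Q_x\|_{L^1})^n\|g\|_{L^\infty}/n!$ then gives a unique bounded solution, in the same way as the proof of Proposition \ref{prop:Analytic}.

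For $\lambda$ on the continuous spectrum $\Sigma_\lambda$, the exponential has modulus one. The same argument therefore applies to all four columns, so the full matrices $\psi^{\pm}(x;\lambda)$ exist and are unique.

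\textbf{Analyticity.} For $\lambda\in D_-$ we have $\operatorname{Im}\lambda^2<0$, and $|\mathrm{e}^{-2\mathrm{i}\lambda^2(x-s)}|\le 1$ for $s<x$. Each Neumann term is entire in $\lambda$, and the series converges locally uniformly on compact subsets of $D_-$. Hence $\psi^-_1(x;\cdot)$ is analytic in $D_-$. The columns $\psi_1^+$ (in $D_+$) and $\psi_2^{\mp}$ are handled symmetrically by the sign changes.

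\textbf{Main obstacle.} The real point is uniform control near $\lambda=0$ and $\lambda=\infty$. The factor $\lambda$ in the kernel spoils uniform bounds as $|\lambda|\to\infty$, although locally uniform bounds suffice for analyticity. To address this, I would relate $\psi^{\pm}$ to $\Psi^{\pm}(x;z)$ via
\[
\psi=T_1^{-1}\Psi T_2^{-1}, \qquad z=\lambda^2 .
\]
Proposition \ref{prop:Analytic} gives analyticity of $\Psi_1^{\mp},\Psi_2^{\pm}$ in $\mathbb{C}^{\mp}$. The map $\lambda\mapsto\lambda^2$ sends $D_\pm$ onto $\mathbb{C}^\pm$, and $T_1^{-1}$, $T_2^{-1}$ are rational in $\lambda$ and analytic away from $0$. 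Composing therefore yields analyticity and consistency on $D_\pm$, and uniqueness identifies the two constructions.

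Care is needed to check that $T_1$, $T_2$ preserve the normalization at $x\to\pm\infty$. This holds because $q_x\to 0$ there, so $T_1\psi T_2\to I$. With that checked, the proposition reduces to the already-established $z$-plane results, consistent with the estimates cited from \cite{Cheng2022}.
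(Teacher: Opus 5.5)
The paper does not prove this proposition; it defers to \cite{Cheng2022}. Your direct Neumann-series argument on the $\lambda$-plane therefore supplies a proof where the paper has only a citation, and it is correct. The ingredients all check out:
\begin{itemize}
\item $q\in H^{2,1}$ gives $q_x\in L^{2,1}\subset L^1$, hence $Q_x\in L^1$.
\item The first column of $\psi^-$ satisfies a closed equation with kernel $\mathrm{i}\lambda\,\mathrm{diag}(1,\mathrm{e}^{-2\mathrm{i}\lambda^2(x-s)})Q_s$.
\item For $s<x$ and $\operatorname{Im}\lambda^2\le 0$ one has $|\mathrm{e}^{-2\mathrm{i}\lambda^2(x-s)}|\le 1$.
\item The bound $(|\lambda|\,\|Q_x\|_{L^1})^n/n!$ gives existence, uniqueness in $L^\infty$, and convergence uniform on compact subsets of $\overline{D_-}$.
\end{itemize}

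The one inaccuracy is your claim that each Neumann term is entire in $\lambda$. It is not. For $\operatorname{Im}\lambda^2>0$ the factor $\mathrm{e}^{-2\mathrm{i}\lambda^2(x-s)}$ grows exponentially as $s\to-\infty$, while $Q_s$ decays only polynomially. So already the first iterate $\mathrm{i}\lambda\int_{-\infty}^{x}\mathrm{e}^{-2\mathrm{i}\lambda^2(x-s)}q_s(s)\,ds$ need not converge there. What is true, and what you need, is that each iterate is analytic in $D_-$ and continuous on $\overline{D_-}$. This holds because the integrand is entire in $\lambda$ and dominated on compact subsets of $\overline{D_-}$ by $|\lambda|^n\prod_j|Q_{s_j}|$, so differentiation under the integral (or Fubini plus Morera) applies. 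With that correction, the locally uniform limit is analytic in $D_-$.

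The transfer from the $z$-plane in your last paragraph is not needed here, since the statement claims no uniform bounds. It is nonetheless a legitimate second route, and it is the mechanism the paper itself uses immediately afterwards in Corollary \ref{corollary: original smooth}. To close it, you must check three things about $T_1\psi T_2$:
\begin{itemize}
\item it satisfies \eqref{eq:ZSprobelm};
\item it is bounded;
\item its columns tend to $e_1$, resp.\ $e_2$, at the appropriate end, so that it solves \eqref{eq:Volterra equation}.
\end{itemize}
Uniqueness then identifies the two constructions.

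Be aware that the two routes are complementary rather than interchangeable. The $\lambda$-plane Volterra bound is uniform for bounded $|\lambda|$, so it controls $\lambda\to 0$. The $z$-plane bound controls $\lambda\to\infty$, but through $\psi_{21}=(q_x\Psi_{11}-\Psi_{21})/(2\lambda)$ it gives nothing directly near $\lambda=0$. So the transfer alone does not resolve the ``main obstacle'' you describe; combining the two arguments does.
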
\par
	Analytic properties of the Jost solutions $\psi^{\pm}$ for every $x \in \mathbb{R}$ are summarized in the following result. The result is a corollary of Proposition \ref{prop: Smooth}.
	
\begin{corollary}\label{corollary: original smooth}
		If $q_0(x) \in H^{3} (\mathbb{R}) \cap H^{2,1}(\mathbb{R})$, then for every $x \in \mathbb{R}^{\pm}$, the Jost solutions $\psi^{\pm}$ have the following properties
		\begin{equation}\label{eq:orginal-Jost estimate1}
			\psi_{11}^{\pm} - \mathrm{e}^{\frac{\mathrm{i}}{2} m_{\pm}(x)}, \ 2 \lambda \psi_{21}^{\pm} - q_x \mathrm{e}^{\frac{\mathrm{i}}{2} m_{\pm}(x)}, \ \lambda^{-1} \psi_{21}^{\pm} \in H_{\lambda}^{1}(\mathbb{R}),
		\end{equation}
		and
		\begin{equation}\label{eq:orginal-Jost estimate2}
			\psi_{22}^{\pm} - \mathrm{e}^{-\frac{\mathrm{i}}{2} m_{\pm}(x)}, \ \lambda^{-1} \psi_{12}^{\pm} \in H_{\lambda}^{1}(\mathbb{R}).
		\end{equation}
	\end{corollary}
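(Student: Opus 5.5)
The plan is to invert the change of variables \eqref{eq:lambda to z}, so that every entry of $\psi^{\pm}$ is expressed through entries of $\Psi^{\pm}$ with $z=\lambda^2$. Proposition \ref{prop: Smooth} and Proposition \ref{prop:Limits} then supply the required regularity in $z$, which we transfer to regularity in $\lambda$. Since $\psi=T_1^{-1}\Psi T_2^{-1}$ with
\[
T_1^{-1}=\begin{pmatrix}1&0\\ \frac{q_x}{2\lambda}&-\frac{1}{2\lambda}\end{pmatrix},\qquad T_2^{-1}=\begin{pmatrix}1&0\\0&-2\lambda\end{pmatrix},
\]
a direct multiplication gives
\[
\psi_{11}=\Psi_{11},\quad \psi_{12}=-2\lambda\Psi_{12},\quad \psi_{21}=\frac{q_x\Psi_{11}-\Psi_{21}}{2\lambda},\quad \psi_{22}=\Psi_{22}-q_x\Psi_{12}.
\]
Hence $\psi_{11}^{\pm}-\mathrm{e}^{\frac{\mathrm{i}}{2}m_\pm}=\Psi_{11}^{\pm}-\mathrm{e}^{\frac{\mathrm{i}}{2}m_\pm}$ and $2\lambda\psi_{21}^{\pm}-q_x\mathrm{e}^{\frac{\mathrm{i}}{2}m_\pm}=q_x(\Psi_{11}^{\pm}-\mathrm{e}^{\frac{\mathrm{i}}{2}m_\pm})-\Psi_{21}^{\pm}$. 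Likewise $\lambda^{-1}\psi_{12}^{\pm}=-2\Psi_{12}^{\pm}$, and $\psi_{22}^{\pm}-\mathrm{e}^{-\frac{\mathrm{i}}{2}m_\pm}=(\Psi_{22}^{\pm}-\mathrm{e}^{-\frac{\mathrm{i}}{2}m_\pm})-q_x\Psi_{12}^{\pm}$. Finally, $\lambda^{-1}\psi_{21}^{\pm}=\big(q_x(\Psi_{11}^{\pm}-\mathrm{e}^{\frac{\mathrm{i}}{2}m_\pm})-\Psi_{21}^{\pm}+q_x\mathrm{e}^{\frac{\mathrm{i}}{2}m_\pm}\big)/(2z)$.

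For every $x\in\mathbb{R}^\pm$, Proposition \ref{prop: Smooth} shows that the functions $F(z):=\Psi^{\pm}(x;z)-\mathrm{e}^{\frac{\mathrm{i}}{2}m_\pm\sigma_3}$ lie in $H^1_z(\mathbb{R})$. The next step is a change-of-variables lemma: if $F\in H^1_z(\mathbb{R})$ and $z^{-1}F$ or $zF$ is suitably controlled, then $f(\lambda):=F(\lambda^2)\in H^1_\lambda(\mathbb{R})$. The computation is $\int|F(\lambda^2)|^2d\lambda=\int|F(z)|^2|z|^{-1/2}dz$ over $z>0$, which is finite because $F$ is bounded near $z=0$ ($H^1\subset L^\infty$) and $|z|^{-1/2}\le1$ for $|z|\ge1$. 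For the derivative, $\int|2\lambda F'(\lambda^2)|^2d\lambda=2\int|F'(z)|^2|z|^{1/2}dz$ over $z>0$. This needs $|z|^{1/4}F'\in L^2$, which is the one place where more than plain $H^1_z$ is used. I would obtain it from the $\langle z\rangle$-decay furnished by the second part of Proposition \ref{prop: Smooth}: since $q\in H^3\cap H^{2,1}$, $zF-(\text{limit})\in L^2_z$, and differentiating the Volterra equation for $zF$ (as was done for $v$ in the proof of Proposition \ref{prop: Smooth}, now with $w\in H^1$) gives $z\partial_zF\in L^2_z$ for large $|z|$. On the real $\lambda$-line $z=\lambda^2\ge0$, so only $z\ge0$ matters, which makes these estimates one-sided.

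With the lemma, $\psi_{11}^\pm-\mathrm{e}^{\frac{\mathrm{i}}{2}m_\pm}$, $\psi_{22}^\pm-\mathrm{e}^{-\frac{\mathrm{i}}{2}m_\pm}$, $\lambda^{-1}\psi_{12}^\pm$, and $2\lambda\psi_{21}^\pm-q_x\mathrm{e}^{\frac{\mathrm{i}}{2}m_\pm}$ all belong to $H^1_\lambda$, because $q_x(x)$ is a fixed constant. The remaining term is $\lambda^{-1}\psi_{21}^\pm=G(z)/(2z)$ with $G=q_x(\Psi_{11}^{\pm}-\mathrm{e}^{\frac{\mathrm{i}}{2}m_\pm})-\Psi_{21}^{\pm}+q_x\mathrm{e}^{\frac{\mathrm{i}}{2}m_\pm}$. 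For large $z$, Proposition \ref{prop:Limits} and \eqref{eq:smooth2} show that $G=q_x\mathrm{e}^{\frac{\mathrm{i}}{2}m_\pm}+O(z^{-1})$ in $L^2$, so $G/z$ is $O(\lambda^{-2})$ plus an $L^2$ remainder, and it lies in $H^1$ on $|\lambda|\ge1$.

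The main obstacle is the behaviour near $\lambda=0$, where $1/z=\lambda^{-2}$ is singular. Here I would use the fact that $\psi$ itself is regular at $\lambda=0$. Indeed, \eqref{eq: psi-Volterra} gives $\psi^\pm=I+O(\lambda)$ uniformly in $x$, with $\psi_{21}^\pm=\mathrm{i}\lambda\int_{\pm\infty}^x\mathrm{e}^{-2\mathrm{i}\lambda^2(x-s)}q_s\psi_{11}^\pm\,ds$. Hence $\lambda^{-1}\psi_{21}^\pm$ is smooth and bounded on $|\lambda|\le1$, and its $\lambda$-derivative brings down only factors $(x-s)\lambda$, which are controlled by the weight $L^{2,1}$ for $x\in\mathbb{R}^\pm$. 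The same argument for $\psi_{12}^\pm$ handles $\lambda^{-1}\psi_{12}^\pm$ near $0$, and it also bypasses any delicate small-$z$ regularity needed in the change-of-variables lemma. Gluing the bounds on $|\lambda|\le1$ and $|\lambda|\ge1$ completes \eqref{eq:orginal-Jost estimate1}–\eqref{eq:orginal-Jost estimate2}.
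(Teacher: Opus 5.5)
Your expressions for the entries of $\psi^{\pm}$ in terms of $\Psi^{\pm}$ are correct and are the same identities the paper uses. The step you build on them, however, fails. You read $H^1_\lambda(\mathbb{R})$ literally, so for large $|\lambda|$ you need $|z|^{1/4}\partial_z F\in L^2_z$. You propose to get the stronger bound $z\partial_z F\in L^2_z$ by differentiating the Volterra equation for $zF$. For the leading term $h$ of $\Psi^{-}_{21}$, one integration by parts gives
\[
z\partial_z h(x;z)=\int_{-\infty}^{x}\mathrm{e}^{-2\mathrm{i}z(x-y)}\bigl(-w(y)+(x-y)w'(y)\bigr)\,dy .
\]
Proposition \ref{prop: Estimate} controls this only if $\langle x\rangle w'\in L^2$, i.e.\ $\langle x\rangle q_{xxx}\in L^2$. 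You cite $w\in H^1$, but what is needed is this weighted statement, and $q\in H^3\cap H^{2,1}$ does not provide it.

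Even the weaker bound you actually need can fail. A tail $q_{xx}(y)\sim|y|^{-8/5}\mathrm{e}^{\mathrm{i}y^2}$ as $y\to-\infty$ is compatible with $q\in H^3\cap H^{2,1}$. Stationary phase at $y=-z$ then gives $|\partial_z h(x;z)|\sim z^{-3/5}$, so $\int^{\infty}z^{1/2}|\partial_z h|^2\,dz=\infty$; the higher Neumann terms carry an extra factor $O(z^{-1})$. Consequently
\[
2\lambda\psi_{21}^{\pm}-q_x\mathrm{e}^{\frac{\mathrm{i}}{2}m_{\pm}}=q_x\bigl(\Psi_{11}^{\pm}-\mathrm{e}^{\frac{\mathrm{i}}{2}m_{\pm}}\bigr)-\Psi_{21}^{\pm}
\]
need not have its $\lambda$-derivative in $L^2_\lambda$, and no change-of-variables lemma can deliver that item.

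The way out is that the corollary is meant in the $z$-variable. The paper's proof concludes membership in $H^1_z(\mathbb{R})$ for every item, and that is the only form used later, e.g.\ in Proposition \ref{prop: Scattering-coefficients Smooth}. In that reading, your identities together with \eqref{eq:smooth1} settle four items at once, with no change of variables:
\[
\psi_{11}^{\pm}-\mathrm{e}^{\frac{\mathrm{i}}{2}m_{\pm}},\quad 2\lambda\psi_{21}^{\pm}-q_x\mathrm{e}^{\frac{\mathrm{i}}{2}m_{\pm}},\quad \lambda^{-1}\psi_{12}^{\pm},\quad \psi_{22}^{\pm}-\mathrm{e}^{-\frac{\mathrm{i}}{2}m_{\pm}} .
\]
For $\lambda^{-1}\psi_{21}^{\pm}$ the paper avoids dividing by $z$ altogether. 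It writes
\[
\lambda^{-1}\psi_{21}^{\pm}=\mathrm{i}\int_{\pm\infty}^{x}\mathrm{e}^{-2\mathrm{i}z(x-y)}q_y\psi_{11}^{\pm}\,dy ,
\]
splits $\psi_{11}^{\pm}=\mathrm{e}^{\frac{\mathrm{i}}{2}m_{\pm}}+\bigl(\psi_{11}^{\pm}-\mathrm{e}^{\frac{\mathrm{i}}{2}m_{\pm}}\bigr)$, and applies Proposition \ref{prop: Estimate} as in Proposition \ref{prop: Smooth}, with $q_x\in L^{2,1}$ supplying the $z$-derivative. This treats $z=0$ and $z=\infty$ together, so your separate analysis of $G/(2z)$ near $\lambda=0$ and $\lambda=\infty$ is not needed.
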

	\begin{proof}
		We still only prove the statement for $\psi_1^{\pm}$.  The proof for other Jost solutions is analogous. From the first column of the transformation \eqref{eq:lambda to z}, we have
		\begin{equation}
			\begin{aligned}
				\Psi_{11}^{\pm}(x;z) &= \psi_{11}^{\pm}(x;\lambda), \\
				\Psi_{21}^{\pm}(x;z) &= q_x \psi_{11}^{\pm}(x;\lambda) - 2 \lambda \psi_{21}^{\pm}(x;\lambda).
			\end{aligned}
		\end{equation}
		Recalling the result \eqref{eq:smooth1}, we can obtain
		$$ \psi_{11}^{\pm}(x;\lambda) - \mathrm{e}^{\frac{\mathrm{i}}{2} m_{\pm}(x)} = \Psi_{11}^{\pm}(x;z) - \mathrm{e}^{\frac{\mathrm{i}}{2} m_{\pm}(x)} \in H_{z}^{1}(\mathbb{R}).$$
		Noting that $q_x(x) \in L^{\infty}(\mathbb{R})$ and $\Psi_{21}^{\pm}(x;z) \in H_{z}^{1}(\mathbb{R})$, we find
		$$ 2 \lambda \psi_{21}^{\pm}(x;\lambda) - q_x \mathrm{e}^{\frac{\mathrm{i}}{2} m_{\pm}(x)} = -\Psi_{21}^{\pm}(x;z) + q_x (\psi_{11}^{\pm}(x;\lambda) - \mathrm{e}^{\frac{\mathrm{i}}{2} m_{\pm}(x)}) \in H_{z}^{1}(\mathbb{R}).$$\par
		To prove $ \lambda^{-1}\psi_{21}^{\pm} \in H_{z}^{1}(\mathbb{R})$, we write explicitly from the integral equation \eqref{eq: psi-Volterra}
		\begin{equation}
			\begin{split}
				\lambda^{-1} \psi_{21}^{\pm}(x;\lambda) &=  \mathrm{i} \int_{\pm \infty}^{x} \mathrm{e}^{-2 \mathrm{i} z (x-y)} q_y \mathrm{e}^{\frac{\mathrm{i}}{2} m_{\pm}(y)} \, dy
				+ \int_{\pm \infty}^{x} \mathrm{e}^{-2 \mathrm{i} z (x-y)} q_y (\psi_{11}^{\pm} - \mathrm{e}^{\frac{\mathrm{i}}{2} m_{\pm}(y)})\, dy.
			\end{split}
		\end{equation}
		Due to $q \in H^{3}(\mathbb{R}) \cap H^{2,1}(\mathbb{R})$ and $\psi_{11}^{\pm}(x;\lambda) - \mathrm{e}^{\frac{\mathrm{i}}{2} m_{\pm}(x)} \in H_{z}^{1}(\mathbb{R})$, we can use Proposition \ref{prop: Estimate} in the same way as it was used in the proof of Proposition \ref{prop: Smooth}. Then, we can obtain $\lambda^{-1} \psi_{21}^{\pm}(x;\lambda) \in H_{z}^{1}(\mathbb{R})$ for every $x\in \mathbb{R}^{\pm}$. The proof of $\psi_2(x;\lambda)$ is similar. \hfill
	\end{proof}\par
	Since $\psi^{\pm}(x;\lambda) \mathrm{e}^{\mathrm{i}\lambda^2x\sigma_3}$ are two fundamental matrix solutions of the original spectral problem, they are linearly dependent for $\lambda\in\Sigma_{\lambda}$. Hence there exists a scattering matrix
	$S(\lambda)$ such that
	\begin{equation}\label{eq:psi-scattering-relation}
		\psi^{-}(x;\lambda)
		=
		\psi^{+}(x;\lambda)
		\mathrm{e}^{\mathrm{i}\lambda^2x\,\hat{\sigma}_3}
		S(\lambda),  \quad S(\lambda)=
		\begin{pmatrix}
			a(\lambda) & \breve b(\lambda) \v\\
			b(\lambda) & \breve a(\lambda)
		\end{pmatrix},
		\qquad \lambda\in\Sigma_{\lambda}.
	\end{equation}
	Equivalently,
	\begin{equation}\label{eq:S-lambda-def}
		S(\lambda)
		=
		\mathrm{e}^{-\mathrm{i}\lambda^2x\,\hat{\sigma}_3}
		\bigl(\psi^{+}(x;\lambda)\bigr)^{-1}\psi^{-}(x;\lambda),
	\end{equation}
	and the right-hand side is independent of $x$. In particular, by taking $x=0$, we have
	$$
	S(\lambda)=\bigl(\psi^{+}(0;\lambda)\bigr)^{-1}\psi^{-}(0;\lambda).
	$$
	
	Since $\operatorname{tr}(Q_x)=0$, it follows from \eqref{eq:New Laxpair} and the normalization
	$\psi^{\pm}(x;\lambda)\to I$ as $x\to\pm\infty$ that
	\bee
	\det \psi^{\pm}(x;\lambda)=1.
	\ene
	Therefore
	\begin{equation}\label{eq:det-S-lambda}
		\det S(\lambda)=a(\lambda)\breve a(\lambda)-b(\lambda)\breve b(\lambda)=1.
	\end{equation}
	
	Using \eqref{eq:psi-scattering-relation}, the scattering coefficients can be written in terms
	of Wronskians as
	\begin{align}
		a(\lambda)
		&=
		\mathrm{Wr} \bigl(\psi_1^{-}(x;\lambda), \psi_2^{+}(x;\lambda)\bigr) = \mathrm{Wr} \bigl(\psi_1^{-}(0;\lambda), \psi_2^{+}(0;\lambda)\bigr),
		\label{eq:a-lambda-def}\\
		\breve a(\lambda)
		&=
		\mathrm{Wr} \bigl(\psi_1^{+}(x;\lambda), \psi_2^{-}(x;\lambda)\bigr) = \mathrm{Wr} \bigl(\psi_1^{+}(0;\lambda), \psi_2^{-}(0;\lambda)\bigr),
		\label{eq:abreve-lambda-def}\\
		b(\lambda)
		&=
		\mathrm{e}^{2 \mathrm{i} \lambda^2 x}
		\mathrm{Wr} \bigl(\psi_1^{+}(x;\lambda), \psi_1^{-}(x;\lambda)\bigr) = \mathrm{Wr} \bigl(\psi_1^{+}(0;\lambda), \psi_1^{-}(0;\lambda)\bigr),
		\label{eq:b-lambda-def}\\
		\breve b(\lambda)
		&=
		\mathrm{e}^{-2 \mathrm{i} \lambda^2 x}
		\mathrm{Wr} \bigl(\psi_2^{-}(x;\lambda), \psi_2^{+}(x;\lambda)\bigr) = \mathrm{Wr} \bigl(\psi_2^{-}(0;\lambda), \psi_2^{+}(0;\lambda)\bigr).
		\label{eq:bb-lambda-def}
	\end{align}\par

We define the reflection coefficients by
	\begin{equation}\label{refl}
		r(\lambda) := \dfrac{b(\lambda)}{a(\lambda)}, \quad
		\breve{r}(\lambda) := \dfrac{\breve{b}(\lambda)}{\breve{a}(\lambda)}, \quad \lambda \in \Sigma_{\lambda}.
	\end{equation}

	Now we note the symmetry on solutions to the first linear equation of \eqref{eq:New Laxpair}. By using the boundary conditions for the normalized Jost solutions, we obtain the following results
	\begin{equation}\label{eq:Jost-solution-Symmetry}
		\psi^{+}(x,t;\lambda) = \sigma_1 \psi^{-}(-x,-t;\lambda) \sigma_1, \quad \psi^{+}(x,t;\lambda) = \sigma_3 \psi^{-}(x,t;-\lambda) \sigma_3.
	\end{equation}
	where $\sigma_1$ is Pauli matrix. As usual, the direct scattering transform is based on $t = 0$, the scattering coefficients have the results
	\begin{equation}\label{eq:Scattering-matrix symmetry}
		S(\lambda;t) = \sigma_3 S(-\lambda;t) \sigma_3, \quad S(\lambda;t) = \sigma_1 S^{-1}(\lambda;-t) \sigma_1, \quad \lambda \in \Sigma_{\lambda}.
	\end{equation}
	which means that when $t = 0$, we have
	\begin{align}
		a(\lambda) &= a(-\lambda), \qquad \breve{a}(\lambda) = \breve{a}(-\lambda), \quad \lambda \in \Sigma_{\lambda},\label{eq:s-c symmetry1} \\
		b(\lambda) &= -b(-\lambda), \quad \breve{b}(\lambda) = -\breve{b}(-\lambda), \quad \breve{b}(\lambda) = -b(\lambda), \quad \lambda \in \Sigma_{\lambda},\\
 r(\lambda) &= -r(-\lambda), \quad  \breve{r}(\lambda)=-\breve{r}(-\lambda),\quad \lambda \in \Sigma_{\lambda}.
\label{eq:s-c symmetry2}
	\end{align}

We first consider the properties of them about $|\lambda| \to \infty$ and we have the following proposition:

	\begin{proposition}\label{prop: Scattering-coefficients Smooth}
		If $q \in H^{3}(\mathbb{R}) \cap H^{2,1}(\mathbb{R})$, then the function $a(\lambda)$ and $\breve{a}(\lambda)$ are analytically in $\mathbb{C}^{-}$ and $\mathbb{C}^{+}$ with respect to $z = \lambda^2$, respectively. In addition,
		\begin{equation}
			a(\lambda) - a_{\infty}, \ \breve{a}(\lambda) - \breve{a}_{\infty},\ \lambda b(\lambda), \ \lambda^{-1} b(\lambda), \ \lambda\breve{b}(\lambda), \ \lambda^{-1} \breve{b}(\lambda) \in H_{z}^{1}(\mathbb{R}),
		\end{equation}
		with
	\bee a_{\infty} = \exp\left(\frac{\mathrm{i}}{2}
		\int_{\mathbb{R}} q_y(y)(q^{PT})_{y}(y) dy\right),\qquad
	 \breve{a}_{\infty} = \exp\left(-
			\frac{\mathrm{i}}{2}
			\int_{\mathbb{R}} q_y(y)(q^{PT})_{y}(y) dy\right),
\ene
		and
		\begin{equation}
			\ \lambda b(\lambda), \ \lambda^{-1} b(\lambda), \ \lambda\breve{b}(\lambda), \ \lambda^{-1} \breve{b}(\lambda) \in L_{z}^{2,1}(\mathbb{R}).
		\end{equation}
	\end{proposition}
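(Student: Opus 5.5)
The approach is to express the scattering coefficients through the $z$-plane Jost solutions $\Psi^{\pm}$ of the Zakharov--Shabat problem \eqref{eq:ZSprobelm}. Their regularity is already controlled by Propositions \ref{prop:Analytic}, \ref{prop:Limits} and \ref{prop: Smooth}, together with Corollary \ref{corollary: original smooth}.

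Since $\psi^{\pm} = T_1^{-1}\Psi^{\pm}T_2^{-1}$ and $\det T_1\det T_2=1$, each Wronskian in \eqref{eq:a-lambda-def}--\eqref{eq:bb-lambda-def} can be evaluated at $x=0$ in terms of $\Psi^{\pm}(0;z)$. Concretely,
\[
a(\lambda)=\psi^-_{11}\psi^+_{22}-\psi^-_{21}\psi^+_{12}
\]
at $x=0$. Using $\Psi_{11}=\psi_{11}$, $\Psi_{22}=\psi_{22}-\frac{q_x}{2\lambda}\psi_{12}$ and $\Psi_{12}=-2\lambda\psi_{12}\cdot(-2\lambda)^{-1}$ (and similarly for the other entries), one checks that $a=\det(\Psi_1^-,\Psi_2^+)(0;z)$ is a function of $z$ alone. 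Similarly $\breve a=\det(\Psi_1^+,\Psi_2^-)$.

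The entries $b$ and $\breve b$ are not functions of $z$ alone. Computing the Wronskians explicitly shows instead that
\[
b(\lambda)=\lambda\,\beta(z)
\]
with $\beta$ built from $\lambda^{-1}\psi_{21}^{\pm}$ and $\psi_{11}^{\pm}$. Equivalently, $\lambda^{-1}b=\frac{1}{2\lambda^2}\cdot(\text{a }z\text{-Wronskian of }\Psi_1^{\pm})$, up to sign. This is consistent with the oddness $b(-\lambda)=-b(\lambda)$ in \eqref{eq:s-c symmetry2}. Consequently $\lambda b$ and $\lambda^{-1}b$ are both functions of $z$, equal to $z\beta$ and $\beta$ respectively.

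\textbf{Analyticity and limits of $a,\breve a$.} By Proposition \ref{prop:Analytic}, $\Psi_1^-$ and $\Psi_2^+$ are analytic in $\mathbb{C}^-$ and bounded there, so $a$ is analytic in $\mathbb{C}^-$; likewise $\breve a$ is analytic in $\mathbb{C}^+$. Proposition \ref{prop:Limits} gives
\[
a\to \mathrm{e}^{\frac{\mathrm{i}}{2}(m_-(0)-m_+(0))}=a_\infty, \qquad \text{since } m_-(0)-m_+(0)=\int_{\mathbb{R}}q_y(q^{PT})_y\,dy .
\]
For the $H^1_z$ statement, write
\[
a-a_\infty=\det(\Psi_1^- - \mathrm{e}^{\frac{\mathrm{i}}{2}m_-}e_1,\;\Psi_2^+)+\det(\mathrm{e}^{\frac{\mathrm{i}}{2}m_-}e_1,\;\Psi_2^+-\mathrm{e}^{-\frac{\mathrm{i}}{2}m_+}e_2)
\]
at $x=0$. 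Proposition \ref{prop: Smooth} gives $H^1_z$ membership of the differences, and the $L^\infty$ bounds from Proposition \ref{prop:Analytic} (extended to $\partial_z$ through the $v$-equation) control the products. Since $H^1$ is an algebra, $a-a_\infty\in H^1_z$. The same argument applies to $\breve a$.

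\textbf{Regularity and decay of $b$.} The function $\beta(z)$ is a combination of $\lambda^{-1}\psi_{21}^\pm(0)$, which lies in $H^1_z$ by Corollary \ref{corollary: original smooth}, multiplied by bounded $H^1$-type factors. Hence $\lambda^{-1}b\in H^1_z$. For $\lambda b=z\beta$, I would use \eqref{eq:smooth2}. After expansion, the constant terms $s_j^\pm(0)$ and $\nu_j^\pm(0)$ cancel in the Wronskian, because $b$ has no $O(1)$ part at infinity. This cancellation can be checked via the explicit formulas for $s_2^\pm$ and $\nu_1^\pm$ together with $a_\infty$. What remains is in $L^2_z$. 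To get the $z$-derivative of $z\beta$, differentiate and reuse the $v$-equation bounds.

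\textbf{Weighted bounds and the main obstacle.} The $L^{2,1}_z$ claims are then immediate: $\langle z\rangle\lambda^{-1}b\sim \lambda^{-1}b+\lambda b\in L^2$. The claim $\lambda b\in L^{2,1}_z$, i.e. $z^2\beta\in L^2$, is the main obstacle. I would attack it by representing $b$ at $x=0$ as a Fourier-type integral of $\lambda\widehat Q\,\psi$ over $(-\infty,0)$ and $(0,\infty)$. Integrating by parts twice in $y$, using $q\in H^3$ so that the resulting kernels $\partial_y^2(\cdot)$ are in $L^2$, and then applying \eqref{eq:L2control} and \eqref{eq:partialx L2control}, should give the extra power of $z$. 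The boundary terms at $y=0$ must cancel between the $\pm$ sides, as they did for the $O(1)$ terms above. The statements for $\breve b$ follow by the same argument, or directly from $\breve b=-b$ at $t=0$.
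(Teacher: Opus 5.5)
Your treatment of $a$, $\breve a$ and $\lambda^{-1}b$ matches the paper's: Wronskians at $x=0$, \eqref{eq:smooth1} via Corollary~\ref{corollary: original smooth}, and the algebra property of $H^1_z$. Writing $a=\det(\Psi_1^-,\Psi_2^+)(0;z)$ is, if anything, slightly cleaner than the paper's entrywise splitting. The $b$-part, however, is off by one power of $z$, and this leaves a gap at the step you call the main obstacle.

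Your own identity gives $2\lambda b=-\det(\Psi_1^+,\Psi_1^-)(0;z)=\Psi_{21}^+\Psi_{11}^- -\Psi_{11}^+\Psi_{21}^-$. Since $\Psi_{21}^\pm\in H^1_z$ and $\Psi_{11}^\pm\in\mathbb{C}\oplus H^1_z$, the claim $\lambda b\in H^1_z$ already follows from \eqref{eq:smooth1} alone. No $s_j^\pm$, $\nu_j^\pm$, cancellation, or $v$-equation is needed there.

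Expansion \eqref{eq:smooth2} together with the cancellation of constants is exactly what the paper uses to get $\lambda b\in L^{2,1}_z$. Multiply by $z$ and write $z\Psi_{21}^\pm(0;z)=s_2^\pm(0)+(\text{an }L^2_z\text{ remainder})$. This gives
\begin{align*}
2\lambda z\, b&=-\Psi_{11}^+\bigl(z\Psi_{21}^- - s_2^-(0)\bigr)+\Psi_{11}^-\bigl(z\Psi_{21}^+ - s_2^+(0)\bigr)\\
&\quad -s_2^-(0)\bigl(\Psi_{11}^+-\mathrm{e}^{\frac{\mathrm{i}}{2}m_+(0)}\bigr)+s_2^+(0)\bigl(\Psi_{11}^- -\mathrm{e}^{\frac{\mathrm{i}}{2}m_-(0)}\bigr).
\end{align*}
This decomposition holds because $s_2^-(0)\mathrm{e}^{\frac{\mathrm{i}}{2}m_+(0)}=s_2^+(0)\mathrm{e}^{\frac{\mathrm{i}}{2}m_-(0)}$, which is immediate from $s_2^\pm(x)=-\frac{\mathrm{i}}{2}\bigl(q_{xx}+\frac{\mathrm{i}}{2}q_x^2(q^{PT})_x\bigr)(x)\,\mathrm{e}^{\frac{\mathrm{i}}{2}m_\pm(x)}$. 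Every term on the right is in $L^2_z$, so $z\lambda b\in L^2_z$.

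Note that only $s_2^\pm$ enter, since only first columns appear. Also, the cancellation is this algebraic identity, not the fact that $b$ has no $\mathcal{O}(1)$ part at infinity; $b=\mathcal{O}(\lambda^{-1})$ only gives $\lambda b\to 0$.

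Your substitute for this step does not close as described. The step is two integrations by parts in $\lambda^{-1}b=\mathrm{i}\int_{\mathbb{R}}\mathrm{e}^{2\mathrm{i}zy}q_y\psi_{11}^-(y;\lambda)\,dy$. The trouble is that the $y$-derivatives fall on the Jost function: $\partial_y\psi_{11}^-=\frac{\mathrm{i}}{2}(q^{PT})_y\bigl(q_y\Psi_{11}^- -\Psi_{21}^-\bigr)$. At the second step, $\partial_y\Psi_{21}^-=-2\mathrm{i}z\Psi_{21}^-+(\widehat{Q}\Psi^-)_{21}$ hands back the very factor of $z$ you were trying to gain. To control the resulting $z\Psi_{21}^-(y;z)$ in $L^\infty_yL^2_z$ modulo $s_2^-(y)$ you need precisely \eqref{eq:smooth2}. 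So that route either silently uses the tool you had already spent one step too early, or it re-proves it. The estimates \eqref{eq:L2control}, \eqref{eq:partialx L2control} and $q\in H^3$ alone do not finish it.
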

	\begin{proof}
		Analytical proofs can be completed using the Wronskians determinant representation. And we only prove the results for $a(\lambda)$ and $b(\lambda)$. The proof of others is similar. In fact, the integration of $a(\lambda)$ is as follows:
		\begin{equation}
			\begin{split}
				a(\lambda) &= 1 + \mathrm{i} \lambda \int_{\mathbb{R}} (q^{PT})_{y}(y) \psi_{21}^{-}(y;\lambda)\, dy \\
				&= 1 + \dfrac{\mathrm{i}}{2}
				\int_{\mathbb R}
				q_y(y) (q^{PT})_{y}(y) \psi_{11}^{-}(y;\lambda) \, dy
				-
				\frac{\mathrm{i}}{2}
				\int_{\mathbb R}
				(q^{PT})_{y}(y)\Psi_{21}^{-}(y;z)\, dy,
			\end{split}
		\end{equation}
		which means
		\begin{equation}
			\lim_{|\lambda| \to \infty} a(\lambda) = a_{\infty}
         = \exp\left(\frac{\mathrm{i}}{2}
				\int_{\mathbb{R}} q_y(y)(q^{PT})_{y}(y) dy\right).
		\end{equation}
		To prove that $a(\lambda) - a_{\infty} \in H_{z}^{1}(\mathbb{R})$, we add and subtract the asymptotic limits of the Jost solutions obtained in Corollary \ref{corollary: original smooth}. Let $L_{11}^{\pm} = \mathrm{e}^{\frac{\mathrm{i}}{2} m_{\pm}(0)}$ and $L_{22}^{\pm} = \mathrm{e}^{-\frac{\mathrm{i}}{2} m_{\pm}(0)}$. We can factor $a(\lambda)$ as:
		\begin{equation}
			\begin{split}
				a(\lambda) - a_{\infty} &= (\psi_{11}^{-}(0;\lambda) - L_{11}^{-}) (\psi_{22}^{+}(0;\lambda) - L_{22}^{+}) + L_{11}^{-} (\psi_{22}^{+}(0;\lambda) - L_{22}^{+}) \v\\
				&\qquad + L_{22}^{+} (\psi_{11}^{-}(0;\lambda) - L_{11}^{-}) + L_{11}^{-} L_{22}^{+}
				- \psi_{21}^{-}(0;\lambda) \psi_{12}^{+}(0;\lambda).
			\end{split}
		\end{equation}
		Note that the constant term evaluates exactly to the asymptotic limit: $L_{11}^{-}L_{22}^{+} = \mathrm{e}^{\frac{\mathrm{i}}{2}(m_{-}(0) - m_{+}(0))} = a_{\infty}$. By Corollary \ref{corollary: original smooth}, functions like $\psi_{11}^{-} - L_{11}^{-}$, $\psi_{22}^{+} - L_{22}^{+}$, and $\lambda^{-1}\psi_{12}^{+}$ all belong to $H_{z}^{1}(\mathbb{R})$. Since $2\lambda\psi_{21}^{-} = (2\lambda\psi_{21}^{-} - q_x(0)L_{11}^{-}) + q_x(0)L_{11}^{-} \in H_{z}^{1}(\mathbb{R}) \oplus \mathbb{C}$, and $H_{z}^{1}(\mathbb{R})$ is a Banach algebra, their sums and products remain in $H_{z}^{1}(\mathbb{R})$. Thus, $a(\lambda) - a_{\infty} \in H_{z}^{1}(\mathbb{R})$.
		
		Similarly, we have
		\begin{equation}
			\lambda^{-1}b(\lambda) = \psi_{11}^{+}(0;\lambda)(\lambda^{-1}\psi_{21}^{-}(0;\lambda)) - (\lambda^{-1}\psi_{21}^{+}(0;\lambda))\psi_{11}^{-}(0;\lambda).
		\end{equation}
		Since $\lambda^{-1}\psi_{21}^{\pm} \in H_{z}^{1}(\mathbb{R})$ and $\psi_{11}^{\pm}$ differ from constants by an $H_{z}^{1}(\mathbb{R})$ function, we immediately obtain $\lambda^{-1}b(\lambda) \in H_{z}^{1}(\mathbb{R})$.
		
		Similarly, to establish the relation for $2\lambda b(\lambda)$ in terms of the components of $\Psi^{\pm}(0;z)$, we evaluate the Wronskian \eqref{eq:b-lambda-def} at $x=0$:
		\begin{equation}
			2\lambda b(\lambda) = \psi_{11}^{+}(0;\lambda)(2\lambda\psi_{21}^{-}(0;\lambda)) - (2\lambda\psi_{21}^{+}(0;\lambda))\psi_{11}^{-}(0;\lambda).
		\end{equation}
		Substituting the component relations from the transformation \eqref{eq:lambda to z}, namely $2\lambda\psi_{21}^{\pm}(0;\lambda) = q_x(0)\Psi_{11}^{\pm}(0;z) - \Psi_{21}^{\pm}(0;z)$, we obtain:
		\begin{equation}\label{eq: lambda b}
			\begin{split}
				2\lambda b(\lambda) &= \Psi_{11}^{+}(0;z) \left[ q_x(0)\Psi_{11}^{-}(0;z) - \Psi_{21}^{-}(0;z) \right] - \left[ q_x(0)\Psi_{11}^{+}(0;z) - \Psi_{21}^{+}(0;z) \right] \Psi_{11}^{-}(0;z) \v\\
				&= \Psi_{21}^{+}(0;z)\Psi_{11}^{-}(0;z) - \Psi_{11}^{+}(0;z)\Psi_{21}^{-}(0;z).
			\end{split}
		\end{equation}
		The remaining terms are combinations of $H_{z}^{1}(\mathbb{R})$ functions, proving that $\lambda b(\lambda) \in H_{z}^{1}(\mathbb{R})$.
		
		Since $z \lambda^{-1} b(\lambda) = \lambda b(\lambda) \in H_{z}^{1}(\mathbb{R})$, we have $ \lambda^{-1} b(\lambda) \in L_{z}^{2,1}(\mathbb{R})$. In addition, to prove $\lambda b(\lambda) \in  L_{z}^{2,1}(\mathbb{R})$, we multiply equation \eqref{eq: lambda b} by $z$ and write the resulting equation in the form
		\begin{equation}\label{eq:z-lambda-b}
			\begin{split}
				2\lambda z b(\lambda) &= -\Psi_{11}^{+}(0;z)\left(z\Psi_{21}^{-}(0;z) - s_{2}^{-}(0)\right) + \Psi_{11}^{-}(0;z)\left(z\Psi_{21}^{+}(0;z) - s_{2}^{+}(0)\right) \v\\
				&\qquad - s_{2}^{-}(0)\left(\Psi_{11}^{+}(0;z) - \mathrm{e}^{\frac{\mathrm{i}}{2} m_{+}(0)}\right) + s_{2}^{+}(0)\left(\Psi_{11}^{-}(0;z) - \mathrm{e}^{\frac{\mathrm{i}}{2} m_{-}(0)}\right),
			\end{split}
		\end{equation}
		where we have used the identity $s_{2}^{-}(0)\mathrm{e}^{\frac{\mathrm{i}}{2} m_{+}(0)} - s_{2}^{+}(0)\mathrm{e}^{\frac{\mathrm{i}}{2} m_{-}(0)} = 0$, which is obtained from the limits in \eqref{eq:zlimits}.
		By \eqref{eq:smooth1} and \eqref{eq:smooth2}, each term in the representation \eqref{eq:z-lambda-b} is in $L_{z}^{2}(\mathbb{R})$. Therefore, we derive $\lambda b(\lambda) \in L_{z}^{2,1}(\mathbb{R})$. \hfill
	\end{proof}
	
	Next we record the behavior of the scattering coefficients at $\lambda=0$. We seek the
	expansion of the original Jost functions in the form
	\begin{equation}\label{eq:psi-small-lambda-expansion}
		\psi^{\pm}(x;\lambda)
		=
		I+\lambda U_1(x)+\lambda^2U_2^{\pm}(x)+\mathcal{O}(\lambda^3),
		\qquad \lambda\to0.
	\end{equation}
	Substituting \eqref{eq:psi-small-lambda-expansion} into the $x$-part of
	\eqref{eq:New Laxpair} and comparing coefficients of the same powers of $\lambda$, we obtain
	$$
	U_{1,x}=\mathrm{i}Q_x,
	\qquad
	U_{2,x}^{\pm}=-Q_xQ.
	$$
	Using the normalization of $\psi^{\pm}$ at $\pm\infty$, we get
	\begin{equation}\label{eq:U-small-lambda}
		U_1(x)=\mathrm{i}Q(x),
		\qquad
		U_2^{\pm}(x)
		=
		-\int_{\pm\infty}^{x}Q_y(y)Q(y)\, dy.
	\end{equation}
	Therefore
	\begin{equation}\label{eq:psi-small-lambda}
		\psi^{\pm}(x;\lambda)
		=
		I+\mathrm{i}\lambda Q(x)
		-\lambda^2\int_{\pm\infty}^{x}Q_y(y)Q(y)\, dy
		+\mathcal{O}(\lambda^3),
		\qquad \lambda\to0.
	\end{equation}
	Taking $x=0$ in \eqref{eq:S-lambda-def}, we obtain
	\begin{equation}\label{eq:S-small-lambda}
		S(\lambda)
		=
		I
		-\lambda^2\int_{-\infty}^{+\infty}Q_y(y)Q(y)\, dy
		+\mathcal{O}(\lambda^3),
		\qquad \lambda\to0.
	\end{equation}
	Since
	$$
	Q_yQ=
	\begin{pmatrix}
		(q^{PT})_{y}q & 0 \v\\
		0 & q_y q^{PT}
	\end{pmatrix},
	$$
	we have
	\begin{align}
		a(\lambda)
		&=
		1-\lambda^2\int_{-\infty}^{+\infty}(q^{PT})_{y}(y)q(y)\, dy
		+\mathcal{O}(\lambda^4),
		\label{eq:a-small-lambda}\\
		\breve a(\lambda)
		&=
		1-\lambda^2\int_{-\infty}^{+\infty} q_y(y) q^{PT}(y)\, dy
		+\mathcal{O}(\lambda^4),
		\label{eq:abreve-small-lambda}\\
		b(\lambda)&=\mathcal{O}(\lambda^3),
		\qquad
		\breve b(\lambda)=\mathcal{O}(\lambda^3),
		\qquad \lambda\to0.
		\label{eq:b-small-lambda}
	\end{align}
	where we use \eqref{eq:s-c symmetry1}, \eqref{eq:s-c symmetry2}. In particular, the scattering coefficients are regular at $\lambda=0$.
	
	We also need the behavior of the scattering coefficients as $|\lambda|\to\infty$. Let
	\begin{equation}\label{eq:M-lambda-def}
		M:=\int_{-\infty}^{+\infty}q_y(y)(q^{PT})_{y}(y)\, dy.
	\end{equation}
	From the large-$\lambda$ asymptotics of the original Jost functions,
	$$
	\psi^{\pm}(x;\lambda)
	=
	\mathrm{e}^{\frac{\mathrm{i}}{2} m_{\pm}(x) \sigma_3}
	+\mathcal{O}(\lambda^{-1}),
	\qquad
	m_{\pm}(x):=\int_{\pm\infty}^{x}q_y(y)(q^{PT})_{y}(y)\, dy,
	$$
	we obtain
	\begin{align}
		a(\lambda)
		&=
		\mathrm{e}^{\frac{\mathrm{i}}{2} M}
		+\mathcal{O}(\lambda^{-1}),
		\qquad \lambda\in\overline{D_{-}},\quad |\lambda|\to\infty,
		\label{eq:a-large-lambda}\\
		\breve a(\lambda)
		&=
		\mathrm{e}^{-\frac{\mathrm{i}}{2} M}
		+\mathcal{O}(\lambda^{-1}),
		\qquad \lambda\in\overline{D_{+}},\quad |\lambda|\to\infty,
		\label{eq:abreve-large-lambda}\\
		b(\lambda)&=\mathcal{O}(\lambda^{-1}),
		\qquad
		\breve b(\lambda)=\mathcal{O}(\lambda^{-1}),
		\qquad \lambda\in\Sigma_{\lambda},\quad |\lambda|\to\infty.
		\label{eq:b-large-lambda}
	\end{align}
	
	For the convenience of the subsequent analysis, we provide here a sufficient condition that has no solitons and no spectral singularities. And in the subsequent discussions, we always assume no isolated vertices and no spectral singularities.

	\begin{proposition}\label{prop:z-minus-one-b} Let $q\in H^{3}(\mathbb{R})\cap H^{2,1}(\mathbb{R})$. Then
		\begin{equation}\label{eq:zminus1-lambdaminus1-b}
			z^{-1}\lambda^{-1}b(\lambda)\in H_z^{1}(\mathbb{R}),
			\qquad
			z^{-1}\lambda^{-1}\breve b(\lambda)\in H_z^{1}(\mathbb{R}).
		\end{equation}
	\end{proposition}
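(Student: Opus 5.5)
The plan is to observe first that, since $b$ is odd in $\lambda$ (by \eqref{eq:s-c symmetry2}), the function $f(z):=\lambda^{-1}b(\lambda)$ is a genuine function of $z=\lambda^2$, and $z^{-1}\lambda^{-1}b(\lambda)=f(z)/z$. By Proposition \ref{prop: Scattering-coefficients Smooth} we already know $f\in H^1_z(\mathbb{R})\cap L^{2,1}_z(\mathbb{R})$, and also $\lambda b=zf\in H^1_z(\mathbb{R})$. The same holds for $\breve b$, and the relation $\breve b=-b$ at $t=0$ makes the second claim identical to the first. I would therefore split $\mathbb{R}_z$ into $\{|z|\ge 1\}$ and $\{|z|\le 1\}$ with a smooth cutoff $\chi$ supported in $|z|\le 2$ and equal to $1$ on $|z|\le1$.

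On the region $|z|\ge1$ the claim is immediate. The function $(1-\chi(z))z^{-1}$ is smooth and bounded with bounded derivative, so multiplying $f\in H^1_z$ by it gives an $H^1_z$ function. All the work is therefore local near $z=0$, where the expansion \eqref{eq:b-small-lambda} is the key input. It gives $b=\mathcal O(\lambda^3)$, i.e.\ $f(z)=\mathcal O(z)$, so $f(0)=0$. For $\chi f/z$ to lie in $H^1$ near $0$, I would aim to prove the local regularity $f\in H^2(-2,2)$, with $f(0)=0$. Then
\[
\frac{f(z)}{z}=\int_0^1 f'(sz)\,ds,\qquad \partial_z\frac{f(z)}{z}=\int_0^1 s\,f''(sz)\,ds ,
\]
and Minkowski's inequality gives $\|\partial_z(f/z)\|_{L^2(-1,1)}\le \int_0^1 s\cdot s^{-1/2}\,ds\;\|f''\|_{L^2(-2,2)}<\infty$. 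The $L^2$ bound on $f/z$ itself follows in the same way from $f'\in L^2$. (Vanishing of $f'(0)$ is not needed for this step.)

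The main obstacle is the local second-derivative bound $\partial_z^2 f\in L^2(-2,2)$, since the global $H^1$ theory of Proposition \ref{prop: Smooth} only controls one $z$-derivative. To get it, I would use the Wronskian at $x=0$ together with \eqref{eq: psi-Volterra} to write
\[
f(z)=\mathrm{i}\int_{\mathbb{R}}\mathrm{e}^{2\mathrm{i}zy}\,F(y;z)\,dy ,
\]
where $F$ is built from $q_y$, $(q^{PT})_y$ and the entries $\psi_{11}^{\pm}$, $\lambda\psi_{21}^{\pm}$ (which are even in $\lambda$, hence functions of $z$). For $|z|\le 2$, which is a bounded set of spectral parameters, the Volterra equations can be differentiated twice in $z$. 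Each differentiation produces a factor $y$ or a $z$-derivative of the Jost entries, and these are controlled by the weighted estimates \eqref{eq:L2,1control} and the bound \eqref{eq:inverse operator}.

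If the available decay $q\in H^{2,1}$ proves insufficient for a full second moment, the fallback is to first subtract the $z=0$ value $f(0)=0$ inside the integral. This uses $\mathrm{e}^{2\mathrm{i}zy}-1=2\mathrm{i}zy\int_0^1\mathrm{e}^{2\mathrm{i}szy}\,ds$, which trades one power of $z$ for one power of $y$. One then estimates $f/z$ and its first derivative directly via Proposition \ref{prop: Estimate}, using only first moments of $q_y(q^{PT})_y$ and of $q_{yy}$, which are available from $q\in H^3\cap H^{2,1}$.
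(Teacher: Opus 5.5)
Your reduction to a neighbourhood of $z=0$ is fine: the cutoff argument for $|z|\ge 1$, the oddness of $b$, and $\breve b=-b$ all work. The Hardy-type step ``$f\in H^2$ near $0$ and $f(0)=0$ imply $f/z\in H^1$ near $0$'' is also correct. The gap is that its hypothesis is not available: $\partial_z^2 f\in L^2$ near $z=0$ does not follow from $q\in H^3\cap H^{2,1}$. It already fails for the linear part $f_0(z)=\mathrm{i}\int_{\mathbb R}q_y\,\mathrm{e}^{2\mathrm{i}zy}\,dy$ of $f=\lambda^{-1}b$.

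For $q(y)=\langle y\rangle^{-2}\cos(\omega y)$, which lies in $H^3\cap H^{2,1}$, one finds $f_0(z)=\pi z\bigl(\mathrm{e}^{-|2z+\omega|}+\mathrm{e}^{-|2z-\omega|}\bigr)$. Its second derivative carries nonzero Dirac masses at $z=\pm\omega/2$. Summing such terms with $\omega_k=2^{-k}$ and weights $2^{-k}$ gives data in $H^3\cap H^{2,1}$ for which $f_0\notin H^2(-\delta,\delta)$ for every $\delta>0$. The reason is that a second $z$-derivative costs $y^2q_y$, i.e.\ two moments, while $H^{2,1}$ supplies one.

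Your fallback has the same defect. Writing $\mathrm{e}^{2\mathrm{i}zy}-1=2\mathrm{i}zy\int_0^1\mathrm{e}^{2\mathrm{i}szy}\,ds$ spends one moment to divide by $z$, and the subsequent $\partial_z$ spends another:
\[
\partial_z\frac{f_0(z)}{z}=-4\mathrm{i}\int_0^1 s\int_{\mathbb R}y^2q_y(y)\,\mathrm{e}^{2\mathrm{i}szy}\,dy\,ds ,
\]
which again needs $y^2q_y\in L^2$. So the claim that only first moments enter is unjustified. The $z$-dependence of $\psi_{11}^-$ inside the integral is also not addressed.

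The missing idea is to extract the factor $z$ exactly, so that dividing by $z$ costs no moment. For the linear part, integrate by parts: $\mathrm{i}\int q_y\mathrm{e}^{2\mathrm{i}zy}dy=2z\int q\,\mathrm{e}^{2\mathrm{i}zy}dy$. The quotient is a Fourier transform of $q$ itself and lies in $H^1_z$ because $q\in L^{2,1}$; in the example above it is Lipschitz even though $f_0\notin H^2$. For the nonlinear part, substitute the Volterra equation for $\psi_{21}^-$ into the one for $\psi_{11}^-$. The two factors $\mathrm{i}\lambda$ combine into $-z$, giving $\psi_{11}^--1=-z\,\mathcal T\psi_{11}^-$ with the double Volterra operator $\mathcal T$. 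This yields the paper's identity
\[
z^{-1}\lambda^{-1}b(\lambda)=-\mathrm{i}\int_{\mathbb R}q_y(y)\,(\mathcal T\psi_{11}^-)(y;z)\,\mathrm{e}^{2\mathrm{i}zy}\,dy+2\int_{\mathbb R}q(y)\,\mathrm{e}^{2\mathrm{i}zy}\,dy .
\]
After this, only the single $z$-derivative required for $H^1_z$ has to be paid for with a moment, which $H^{2,1}$ provides. Your large-$|z|$ cutoff can be kept, but the local step must be replaced by this exact factorization; Taylor expansion in $z$ cannot substitute for it.
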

	
	\begin{proof}
		We only prove the statement for $b(\lambda)$, since the proof for
		$\breve b(\lambda)$ is analogous. From the Volterra integral equation
		\eqref{eq: psi-Volterra} and the scattering relation
		\eqref{eq:psi-scattering-relation}, the scattering coefficient
		$b(\lambda)$ admits the representation
		\begin{equation}\label{eq:b-integral-representation}
			\lambda^{-1}b(\lambda)
			=
			\mathrm{i}\int_{\mathbb{R}}
			q_y(y)\psi_{11}^{-}(y;\lambda)
			\mathrm{e}^{2\mathrm{i}zy}\,dy .
		\end{equation}
		Moreover, the first column of $\psi^{-}$ satisfies
		\begin{align}
			\psi_{11}^{-}(x;\lambda)
			&=
			1+\mathrm{i}\lambda\int_{-\infty}^{x}
			(q^{PT})_{y}(y)\psi_{21}^{-}(y;\lambda)\,dy,
			\label{eq:psi11-minus-eq}\\
			\psi_{21}^{-}(x;\lambda)
			&=
			\mathrm{i}\lambda\int_{-\infty}^{x}
			\mathrm{e}^{-2\mathrm{i}z(x-y)}
			q_y(y)\psi_{11}^{-}(y;\lambda)\,dy .
			\label{eq:psi21-minus-eq}
		\end{align}
		Substituting \eqref{eq:psi21-minus-eq} into
		\eqref{eq:psi11-minus-eq}, we obtain
		\begin{equation}\label{eq:psi11-minus-K}
			\psi_{11}^{-}(x;\lambda)-1
			=
			-z(\mathcal{T} \psi_{11}^{-})(x;z),
		\end{equation}
		where the integral operator $\mathcal{T}$ is defined by
		\begin{equation}\label{eq:T-operator-b}
			(\mathcal{T} f)(x;z)
			:=
			\int_{-\infty}^{x}(q^{PT})_{y}(y)
			\int_{-\infty}^{y}
			\mathrm{e}^{-2\mathrm{i}z(y-s)}
			q_s(s)f(s;z)\,ds\,dy .
		\end{equation}
		Since $q\in H^{3}(\mathbb{R})\cap H^{2,1}(\mathbb{R})$ , we have $(q^{PT})_{x},q_x\in L^1(\mathbb{R})\cap L^2(\mathbb{R})$.
		Hence
		\begin{equation}\label{eq:K-bound}
			\|\mathcal{T}\|_{L^\infty\to L^\infty}
			\le
			\|(q^{PT})_{x}\|_{L^1}\|q_x\|_{L^1} .
		\end{equation}
		Using \eqref{eq:b-integral-representation} and
		\eqref{eq:psi11-minus-K}, we get
		\begin{align}
			\lambda^{-1}b(\lambda)
			&=
			\mathrm{i}\int_{\mathbb{R}}
			q_y(y)\bigl(\psi_{11}^{-}(y;\lambda)-1\bigr)
			\mathrm{e}^{2\mathrm{i}zy}\,dy
			+
			\mathrm{i}\int_{\mathbb{R}}
			q_y(y)\mathrm{e}^{2\mathrm{i}zy}\,dy
			\nonumber\\
			&=
			-\mathrm{i}z\int_{\mathbb{R}}
			q_y(y)(\mathcal{T} \psi_{11}^{-})(y;z)
			\mathrm{e}^{2\mathrm{i}zy}\,dy
			+
			2z\int_{\mathbb{R}}q(y)
			\mathrm{e}^{2\mathrm{i}zy}\,dy .
			\label{eq:lambdaminus1-b-decomposition}
		\end{align}
		Therefore,
		\begin{equation}\label{eq:zminus1-lambdaminus1-b-formula}
			z^{-1}\lambda^{-1}b(\lambda)
			=
			-\mathrm{i}\int_{\mathbb{R}}
			q_y(y)(\mathcal{T} \psi_{11}^{-})(y;z)
			\mathrm{e}^{2\mathrm{i}zy}\,dy
			+
			2\int_{\mathbb{R}}q(y)
			\mathrm{e}^{2\mathrm{i}zy}\,dy .
		\end{equation}
		By the Plancherel's theorem and the weighted Fourier estimate, together
		with the uniform boundedness of $\psi_{11}^{-}$, we have
		\begin{align}
			\|z^{-1}\lambda^{-1}b(\lambda)\|_{H_z^1}
			&\le
			C\|q_x \mathcal{T} \psi_{11}^{-}\|_{L^{2,1}}
			+
			C\|q\|_{L^{2,1}}
			\nonumber\\
			&\le
			C\|\mathcal{T}\|_{L^\infty\to L^\infty}
			\sup_{x\in\mathbb{R}}\|\psi_{11}^{-}(\cdot;\lambda)\|_{L^\infty_z}
			\|q_x\|_{L^{2,1}}
			+
			C\|q\|_{L^{2,1}}
			\nonumber\\
			&\le
			C\bigl(\|q\|_{H^3\cap H^{2,1}}\bigr),
			\label{eq:zminus1-b-H1-bound}
		\end{align}
		which proves
		$z^{-1}\lambda^{-1}b(\lambda)\in H_z^1(\mathbb{R})$.
		The proof for $\breve b(\lambda)$ follows in the same way from the
		second column Volterra equation, or equivalently from the symmetry
		$\breve b(\lambda)=-b(\lambda)$ at $t=0$.  \hfill
	\end{proof}

	In conclusion, we can get the following corollary, which is the same as Corollary \ref{corollay:Jost L-C}.
	\begin{corollary}\label{corollary: Scattering-coefficients-L-C}
		If $q \in H^{3}(\mathbb{R}) \cap H^{2,1}(\mathbb{R})$, then the map
		\begin{equation}\label{eq: S-C Lipschitz continuous1}
			H^{2,1}(\mathbb{R}) \ni q \to [a(\lambda) - a_{\infty},\ \breve{a}(\lambda) - \breve{a}_{\infty},\ \lambda b(\lambda), \ \lambda \breve{b}(\lambda), \ \lambda^{-1} b(\lambda), \ \lambda^{-1} \breve{b}(\lambda)] \in H_{z}^{1}(\mathbb{R})
		\end{equation}
		is Lipschitz continuous. Moreover, the map
		\begin{equation}\label{eq: S-C Lipschitz continuous2}
			\begin{aligned}
				H^{3}(\mathbb{R}) \cap H^{2,1}(\mathbb{R}) \ni q &\to [\lambda b(\lambda), \ \lambda \breve{b}(\lambda), \ \lambda^{-1} b(\lambda), \ \lambda^{-1} \breve{b}(\lambda)] \in L_{z}^{2,1}(\mathbb{R}),
			\end{aligned}
		\end{equation}
		is Lipschitz continuous.
	\end{corollary}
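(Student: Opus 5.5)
The approach is to write each scattering coefficient as a bilinear (Wronskian) expression in the Jost functions at $x=0$, exactly as in the proof of Proposition \ref{prop: Scattering-coefficients Smooth}. The Lipschitz bounds of Corollary \ref{corollay:Jost L-C} are then transferred through these expressions using the algebra property of $H^1_z(\mathbb{R})$. Fix $q,\tilde q$ with $\|q\|,\|\tilde q\|\le\delta$ in the relevant norm, and write $\Delta f:=f-\tilde f$.

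\textbf{Step 1: the $H^1_z$ map.} We use the decomposition of $a(\lambda)-a_\infty$ from that proof:
\[
(\psi_{11}^{-}-L_{11}^{-})(\psi_{22}^{+}-L_{22}^{+}) + L_{11}^{-}(\psi_{22}^{+}-L_{22}^{+}) + L_{22}^{+}(\psi_{11}^{-}-L_{11}^{-}) - \psi_{21}^{-}\psi_{12}^{+}.
\]
Applying the telescoping identity $\Delta(fg)=(\Delta f)g+\tilde f\,\Delta g$ term by term reduces everything to three types of estimate.
\begin{enumerate}
\item Differences of the constants, $|\Delta L_{jj}^{\pm}|\le C(\delta)\|\Delta q\|_{H^{2,1}}$. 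This follows from \eqref{eq:term1} and the exponential inequality used there.
\item Differences of the Jost remainders in $H^1_z$. These are given by \eqref{eq:L c control1}--\eqref{eq:L c control2} at $x=0$, which lies in both $\mathbb{R}^+$ and $\mathbb{R}^-$.
\item Uniform $H^1_z$ bounds on the undifferenced factors, taken from Proposition \ref{prop: Smooth} with constants depending only on $\delta$.
\end{enumerate}
The term $\psi_{21}^-\psi_{12}^+$ is rewritten as $(2\lambda\psi_{21}^-)(\lambda^{-1}\psi_{12}^+)/2$. Here $2\lambda\psi_{21}^- = q_x(0)\Psi_{11}^- - \Psi_{21}^-$ and $\lambda^{-1}\psi_{12}^+$ are controlled via the $z$-plane Jost functions. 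The point value $q_x(0)$ is Lipschitz in $H^{2,1}$ by Sobolev embedding. For $\lambda b$ we use formula \eqref{eq: lambda b}, which involves only $\Psi^\pm_{j1}(0;z)$ and so is handled directly by \eqref{eq:L c control1}. For $\lambda^{-1}b$ we use its Wronskian form together with the Lipschitz dependence of $\lambda^{-1}\psi_{21}^\pm$. That dependence is obtained by differencing the Volterra representation in the proof of Corollary \ref{corollary: original smooth} and applying Proposition \ref{prop: Estimate} to $\Delta(q_y e^{\frac{\mathrm i}{2}m_\pm})\in L^{2,1}$. The breve quantities follow identically, or at $t=0$ from $\breve b=-b$.

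\textbf{Step 2: the $L^{2,1}_z$ map.} Since $z\cdot\lambda^{-1}b=\lambda b$, the weighted bound for $\lambda^{-1}b$ already follows from Step 1. For $\lambda b$ we difference the identity \eqref{eq:z-lambda-b}. Its terms are products of $\Psi_{11}^\pm(0;z)$, which are bounded and Lipschitz in $L^\infty_z\cap H^1_z$ up to constants, with $z\Psi_{21}^\mp(0;z)-s_2^\mp(0)$. The latter equals $\hat\Psi_{1}^{\mp}$'s second component, whose difference is controlled in $L^2_z$ by \eqref{eq:z-L-c control}. The remaining terms are $s_2^\pm(0)$ times the first-order remainders. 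Each $s_2^\pm(0)=-\tfrac{\mathrm i}{2}\partial_x(q_xe^{\frac{\mathrm i}{2}m_\pm})(0)$ is Lipschitz in $H^3\cap H^{2,1}$ because it involves only $q_{xx}(0)$ and $q_x(0)$, both controlled by $H^3$ via Sobolev embedding.

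\textbf{Main obstacle.} The delicate part is the $L^\infty_z$ control needed when multiplying an $L^2_z$ difference by an undifferenced factor in Step 2. I would supply it from Proposition \ref{prop:Analytic} (uniform boundedness) together with $H^1_z\hookrightarrow L^\infty_z$. A second subtlety is keeping all constants $\delta$-dependent only. This requires the bound $e^{\|\widehat Q\|_{L^1}}$ on $(I-K)^{-1}$ from \eqref{eq:inverse operator}, with $\|\widehat Q\|_{L^1}\le C(\delta)$, which holds by the same Sobolev and Cauchy–Schwarz bounds used for \eqref{eq:estimate1}.
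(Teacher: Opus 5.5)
Your proposal is correct and follows the same route as the paper. The paper's proof simply invokes the Wronskian representations from Proposition \ref{prop: Scattering-coefficients Smooth} (including \eqref{eq: lambda b} and \eqref{eq:z-lambda-b}) together with the Jost-function Lipschitz bounds of Corollary \ref{corollay:Jost L-C}; you spell out the product and telescoping bookkeeping it leaves implicit, namely the Lipschitz control of the point values $q_x(0)$ and $s_2^{\pm}(0)$ and of $\lambda^{-1}\psi_{21}^{\pm}$.
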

	The proof of this corollary is directly from Proposition \ref{prop: Scattering-coefficients Smooth} and Corollary \ref{corollay:Jost L-C}.
	
	\section{Construction of the Riemann-Hilbert problems}\label{section-RHP}

	\subsection{A basic RH problem on the $\lambda$-plane}

		From \eqref{refl}, \eqref{eq:a-small-lambda}--\eqref{eq:b-small-lambda} and \eqref{eq:a-large-lambda}--\eqref{eq:b-large-lambda}, we can get that $r(0) = \breve{r}(0) = 0$. Now we define the matrix function
	\begin{equation}\label{eq:Phi-lambda-def }
		\Phi(x;\lambda) := \mathrm{e}^{ -\frac{\mathrm{i}}{2} m_+(x) \sigma_3}
		\begin{cases}
			\left( \dfrac{\psi_1^{-}(x;\lambda)}{a(\lambda)},\, \psi_2^{+}(x;\lambda)
			\right), \quad &\lambda \in D_{-}, \\[1.2em]
			\left( \psi_1^{+}(x;\lambda),\, \dfrac{\psi_2^{-}(x;\lambda)}{\breve{a}(\lambda)}
			\right), \quad &\lambda \in D_{+},
		\end{cases}
	\end{equation}
	which satisfy the following RH problem:
	\begin{RH}\label{RH:lambda-basic}
		Find a matrix function $\Phi(x;\lambda)$ with the following properties:
		\begin{enumerate}
			\item \textbf{Analyticity:}
			$\Phi(x;\lambda)$ is analytical in $\mathbb{C} \setminus \Sigma_{\lambda}$.
			
			\item \textbf{Jump condition:}
			Let $\Phi_+(x;\lambda)$ and $ \Phi_{-}(x;\lambda)$ denote the non-tangential boundary values of $\Phi(x;\lambda)$ from $D_+$ and $D_-$, respectively. Then \begin{equation}\label{eq:Phi-lambda-jump}
			\Phi_+(x;\lambda) = \Phi_-(x;\lambda) (I + V(x;\lambda)), \quad \lambda \in \Sigma_{\lambda}, \end{equation}
			where the jump matrix is \begin{equation}\label{eq:V-lambda}
			V(x;\lambda) =
			    \begin{pmatrix}
				 0 & \breve{r}(\lambda) \mathrm{e}^{2\mathrm{i} \lambda^2 x} \\[0.4em]
				 -r(\lambda) \mathrm{e}^{-2\mathrm{i} \lambda^2 x} & -r(\lambda) \breve{r}(\lambda)
				\end{pmatrix}.
			\end{equation}
			
			\item \textbf{Normalization:}
			$	\Phi(x;\lambda) = I + \mathcal{O}(\lambda^{-1})$ as $\lambda \to \infty.$
		\end{enumerate}
	\end{RH}
	In the nonlocal case $q^{PT}(x)=q(-x)$, the standard involution symmetry is
	broken, and hence the reflection coefficients $r(\lambda)$ and
	$\breve r(\lambda)$ are not necessarily complex conjugates of each other.
	Consequently, the jump matrix $V(x;\lambda)$ in \eqref{eq:V-lambda} is in
	general non-Hermitian. In order to apply the Zhou's vanishing lemma, we impose
	a small-norm condition on the effective Volterra potential.
	
	\begin{proposition}\label{prop:reflection-small-bound}
		Suppose that $q\in H^3(\mathbb{R})\cap H^{2,1}(\mathbb{R})$, $q^{PT}(x)=q(-x)$, and satisfies the small-norm restriction
		\begin{equation}\label{eq:small-norm}
			\rho := \| \widehat{Q} \|_{L^1(\mathbb{R})} < \ln \left(\frac{3}{2}\right).
		\end{equation}
		Then the reflection coefficients satisfy the uniform bounds
		\begin{equation}\label{eq:r-rbreve-strict-bound}
			|r(\lambda)|\le c_0<1, \qquad |\breve r(\lambda)|\le c_0<1, \qquad \lambda\in\Sigma_{\lambda},
		\end{equation}
		for some positive constant $c_0$.
	\end{proposition}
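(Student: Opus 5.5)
Express the reflection coefficients through the $z$-plane Jost functions, whose Volterra equation \eqref{eq:Volterra equation} is governed by the effective potential $\widehat{Q}$ with $\|\widehat Q\|_{L^1}=\rho$. Then bound the scattering matrix entries by Gronwall-type Neumann-series estimates.

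\emph{Step 1: transfer the scattering matrix to the $z$-plane.} Since $\Psi^{\pm}=T_1\psi^{\pm}T_2$ and $T_1$ is independent of the $\pm$ label, we get $(\Psi^+)^{-1}\Psi^- = T_2^{-1}(\psi^+)^{-1}\psi^-T_2$. Hence the $z$-plane scattering matrix $\widehat S(z)$, defined by $\Psi^-=\Psi^+\mathrm{e}^{\mathrm{i}zx\hat\sigma_3}\widehat S$, is $\widehat S=T_2^{-1}ST_2$. This has the same diagonal entries $a,\breve a$, and off-diagonal entries $-b/(2\lambda)$ and $-2\lambda\breve b$. The quantities $|r|,|\breve r|$ themselves are not directly diagonal-invariant. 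However, at $t=0$ the symmetry $\breve b=-b$ gives $|r\breve r|=|b|^2/|a\breve a|$, and $a\breve a-b\breve b=1$ gives $a\breve a = 1-b^2$. I therefore plan to bound the entries of $\widehat S$, and separately the quantities $|a|,|\breve a|$ from below and $|b|$ from above, on $\Sigma_\lambda$ (i.e. $z\in\mathbb{R}$).

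\emph{Step 2: Volterra bounds.} For real $z$, the kernel $\mathrm{e}^{\mathrm{i}z(x-y)\hat\sigma_3}$ has modulus one. Iterating \eqref{eq:Volterra equation} as in \eqref{eq:converge} gives $\|\Psi^\pm\|_{L^\infty}\le \mathrm{e}^\rho$ and $\|\Psi^\pm - I\|\le \mathrm{e}^\rho-1$. Evaluating $\widehat S=\lim_{x\to+\infty}\mathrm{e}^{-\mathrm{i}zx\hat\sigma_3}\Psi^-(x)$ then yields
\[
|a-1|,\ |\breve a-1|,\ |\widehat S_{12}|,\ |\widehat S_{21}|\ \le\ \mathrm{e}^{\rho}-1<\tfrac12 .
\]
Here $\mathrm{e}^\rho<3/2$ is exactly the hypothesis \eqref{eq:small-norm}, and it explains the constant $\ln(3/2)$. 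Consequently $|a|,|\breve a|\ge 2-\mathrm{e}^\rho>\tfrac12$, and
\[
|\widehat S_{12}\widehat S_{21}|=|b\breve b|\le(\mathrm{e}^\rho-1)^2 .
\]
Using $|b|=|\breve b|$ gives $|b|\le \mathrm{e}^\rho-1$. Hence
\[
|r|=\frac{|b|}{|a|}\le \frac{\mathrm{e}^\rho-1}{2-\mathrm{e}^\rho}=:c_0 ,
\]
and $c_0<1$ precisely when $\mathrm{e}^\rho<3/2$. The same argument applies to $\breve r$. For $\lambda\in\mathrm{i}\mathbb{R}$ the variable $z$ is still real, so the bound holds on all of $\Sigma_\lambda$.

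\emph{Main obstacle.} The delicate point is that the gauge factors $2\lambda$ and $(2\lambda)^{-1}$ mean the off-diagonal entries of $\widehat S$ are not $b,\breve b$ themselves. A direct bound on $b$ is therefore only available through the product $b\breve b$ combined with the reduction $\breve b=-b$. If this symmetry is unavailable, for instance at $t\neq0$, I would instead bound $b$ directly from the $\lambda$-plane relation \eqref{eq:b-lambda-def}, using $\det\psi^\pm=1$ together with the lower bound on $|a|$. Whether that route closes with the same constant is the step I am least sure of. The time evolution $r\mapsto r\,\mathrm{e}^{\mp\mathrm{i}t/(2z)}$ preserves moduli, so establishing the bound at $t=0$ suffices.
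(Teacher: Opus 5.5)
Your proposal is correct and is essentially the paper's own proof. Both arguments use the Neumann-series bound $\mathrm{e}^{\rho}-1$ on the entries of $\widehat S-I$ for $z\in\mathbb{R}$, the lower bound $|a|,|\breve a|\ge 2-\mathrm{e}^{\rho}$, and the $t=0$ reduction $\breve b=-b$ to neutralize the gauge factors $2\lambda$ and $(2\lambda)^{-1}$. The paper writes this last step as $|b|\le(\mathrm{e}^{\rho}-1)\min\bigl(2|\lambda|,(2|\lambda|)^{-1}\bigr)$, while you write the equivalent $|b|^2=|\widehat S_{12}\widehat S_{21}|\le(\mathrm{e}^{\rho}-1)^2$.

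One harmless slip: $T_2^{-1}ST_2$ has off-diagonal entries $-\breve b/(2\lambda)$ and $-2\lambda b$, so you swapped $b$ and $\breve b$. This does not affect the argument, since you only use the product of the two entries.
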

	
	\begin{proof}
		We first prove the estimate for $r(\lambda)$. Let us consider the Zakharov-Shabat type spectral problem \eqref{eq:ZSprobelm} and its associated Volterra integral equation \eqref{eq:Volterra equation}.
		
		For $z \in \mathbb{R}$, the Jost solution $\Psi^-(x; z)$ satisfies:
		\begin{equation}\label{eq:Volterra-Psi}
			\Psi^-(x; z) = I + \int_{-\infty}^{x} \mathrm{e}^{\mathrm{i}z(x-y)\hat{\sigma}_3} \widehat{Q}(y) \Psi^-(y; z) \, dy.
		\end{equation}
		According to the classical definition in the inverse scattering transform, the scattering matrix $S_{\Psi}(z)$ is associated with the asymptotics of the Jost solution as $x \to +\infty$:
		\begin{equation}
			S_{\Psi}(z) = \lim_{x \to +\infty} \mathrm{e}^{-\mathrm{i}zx\sigma_3} \Psi^-(x; z) \mathrm{e}^{\mathrm{i}zx\sigma_3}.
		\end{equation}
		Multiplying \eqref{eq:Volterra-Psi} by $\mathrm{e}^{-\mathrm{i}zx\sigma_3}$ from the left and $\mathrm{e}^{\mathrm{i}zx\sigma_3}$ from the right, and taking the limit $x \to +\infty$, we get the exact integral representation for $S_{\Psi}(z) - I$:
		\begin{equation}\label{eq:S-minus-I}
			S_{\Psi}(z) - I = \int_{-\infty}^{+\infty} \mathrm{e}^{-\mathrm{i}zy\hat{\sigma}_3} \left( \widehat{Q}(y) \Psi^-(y; z) \right) \, dy.
		\end{equation}
		
		To estimate $\Psi^-(x; z)$ in \eqref{eq:S-minus-I}, we employ the Volterra-Neumann expansion.
  Let $\Psi^-(x; z) = \sum_{n=0}^{\infty} \Psi_n(x; z)$, where $\Psi_0(x; z) = I$ and
		\begin{equation}
			\Psi_{n+1}(x; z) = \int_{-\infty}^{x} \mathrm{e}^{\mathrm{i}z(x-y)\hat{\sigma}_3} \widehat{Q}(y) \Psi_n(y; z) \, dy.
		\end{equation}
		For $z \in \mathbb{R}$, the matrix norm (defined as the sum of $L^1$ norms of entries) is non-increasing under conjugation by $\mathrm{e}^{\mathrm{i}z(x-y)\sigma_3}$. Thus, $\|\mathrm{e}^{\mathrm{i}z(x-y)\hat{\sigma}_3} \widehat{Q}(y) \Psi_n(y; z)\| \le \|\widehat{Q}(y)\| \|\Psi_n(y; z)\|$. Introducing $\rho(x) = \int_{-\infty}^{x} \|\widehat{Q}(y)\| \, dy$, we can prove by induction that $\|\Psi_n(x; z)\| \le \frac{1}{n!} \rho(x)^n$. Summing these norms yields the uniform bound:
		\begin{equation}\label{eq:Psi_bound}
			\|\Psi^-(x; z)\| \le \sum_{n=0}^{\infty} \frac{1}{n!} \rho(x)^n = \mathrm{e}^{\rho(x)}.
		\end{equation}
		
		Taking the matrix norm of \eqref{eq:S-minus-I} and using \eqref{eq:Psi_bound}, we obtain
		\begin{equation}
			\|S_{\Psi}(z) - I\| \le \int_{-\infty}^{+\infty} \|\widehat{Q}(y)\| \|\Psi^-(y; z)\| \, dy \le \int_{-\infty}^{+\infty} \|\widehat{Q}(y)\| \mathrm{e}^{\rho(y)} \, dy.
		\end{equation}
		Using the substitution $d\rho(y) = \|\widehat{Q}(y)\| \, dy$, we obtain the uniform global bound:
		\begin{equation}
			\|S_\Psi(z) - I\| \le \int_{0}^{\rho} \mathrm{e}^u \, du = \mathrm{e}^\rho - 1, \qquad z \in \mathbb{R}.
		\end{equation}
		
		This strictly bounds the components of $S_\Psi(z)$:
		\begin{equation}\label{eq:Psi-bounds}
			|a_\Psi(z) - 1| \le \mathrm{e}^\rho - 1, \quad
			|\breve{a}_{\Psi}(z) - 1| \le \mathrm{e}^{\rho} - 1, \quad
			|b_\Psi(z)| \le \mathrm{e}^\rho - 1, \quad |\breve{b}_\Psi(z)| \le \mathrm{e}^\rho - 1.
		\end{equation}
		
		From the gauge transformation \eqref{eq:lambda to z}, the original scattering coefficients are related to the $\Psi$-system components by $a(\lambda) = a_\Psi(z)$, $\breve a(\lambda) = \breve a_{\Psi}(z)$,  $b(\lambda) = -(2\lambda)^{-1} b_\Psi(z)$, and $\breve{b}(\lambda) = -2\lambda \breve{b}_\Psi(z)$.
		For $a(\lambda)$, we immediately obtain the uniform lower bound:
		\begin{equation}\label{eq:a-lower-bound-proof}
			|a(\lambda)| \ge 1 - |a_\Psi(z) - 1| \ge 2 - \mathrm{e}^\rho.
		\end{equation}
		
		To bound $b(\lambda)$ globally and circumvent the spectral singularities at $\lambda=0$ and $\lambda=\infty$, we critically exploit the involution symmetry of the nonlocal Fokas-Lenells equation.
		Under $p(x) = q(-x)$, the scattering data satisfy $\breve{b}(\lambda) = -b(\lambda)$ for $\lambda \in \Sigma_\lambda$.
		This symmetry provides two simultaneous constraints for $|b(\lambda)|$:
		\begin{align}
			|b(\lambda)| &= \left|-\frac{1}{2\lambda} b_\Psi(z)\right| \le \frac{1}{2|\lambda|}(\mathrm{e}^\rho - 1), \\
			|b(\lambda)| &= |\breve{b}(\lambda)| = \left|-2\lambda \breve{b}_\Psi(z)\right| \le 2|\lambda|(\mathrm{e}^\rho - 1).
		\end{align}
		Therefore, for all $\lambda \in \Sigma_\lambda$, $|b(\lambda)|$ must be bounded by their minimum:
		\begin{equation}\label{eq:b-global-bound}
			|b(\lambda)| \le (\mathrm{e}^\rho - 1) \min\left(2|\lambda|, \frac{1}{2|\lambda|}\right) \le \mathrm{e}^\rho - 1,
		\end{equation}
		where we used the elementary fact that $\min(x, x^{-1}) \le 1$ for all $x > 0$.
		To guarantee $|r(\lambda)| = |b(\lambda)/a(\lambda)| \le c_0 < 1$, it is sufficient to require $|b(\lambda)| < |a(\lambda)|$.
		Substituting \eqref{eq:a-lower-bound-proof} and \eqref{eq:b-global-bound}, this is satisfied if:
		\begin{equation}
			\mathrm{e}^\rho - 1 < 2 - \mathrm{e}^\rho \quad \iff \quad 2\mathrm{e}^\rho < 3 \quad \iff \quad \rho < \ln \left(\frac{3}{2}\right).
		\end{equation}
		Since our small norm restriction explicitly demands $\rho < \ln(3/2)$, there exists a constant $c_0 = \frac{\mathrm{e}^\rho - 1}{2 - \mathrm{e}^\rho} < 1$ such that $|r(\lambda)| \le c_0$.
		
		The proof for $|\breve{r}(\lambda)| \le c_0 < 1$ follows analogously.
		\hfill
	\end{proof}
	
	In fact, in the above estimation, although we set $z \in \mathbb{R}$, if we discuss $a(\lambda)$ and $\breve{a}(\lambda)$, they can be extended to the corresponding analytic region, which ensures the absence of solitons and spectral singularities. This also means that under \eqref{eq:small-norm}, the propositions discussed earlier are valid.
	
	\begin{remark}\label{re:small-norm-restriction}
		The small norm inequality \eqref{eq:small-norm} leads to the absence of both discrete spectrum and spectral singularities. Therefore, this situation cannot be avoided when characterizing nonlocal regions.
	\end{remark}
	
	It is worthy noting that under \eqref{eq:small-norm}, the Hermitian part of the jump matrix
	$V(x;\lambda)$ is strictly positive definite. More precisely, we have the
	following proposition.
	
	\begin{proposition}\label{prop:V-positive-definite}
		Suppose that \eqref{eq:small-norm} holds. Then there exist
		two positive constants $\eta_-$ and $\eta_+$ such that, for any
		$x\in\mathbb{R}$, $\lambda\in\Sigma_{\lambda}$, and any column vector
		$f\in\mathbb{C}^2$, one has
		\begin{equation}\label{eq:V-coercive}
			\mathrm{Re}\left[f^{\dagger} (I + V(x;\lambda)) f\right]
			\ge
			\eta_- f^{\dagger}f,
		\end{equation}
		and
		\begin{equation}\label{eq:V-bounded}
			\|(I + V(x;\lambda)) f\|
			\le
			\eta_+\|f\|.
		\end{equation}
		In particular, the Hermitian part
		\begin{equation}
			I+V_H(x;\lambda):= I +
			\frac{1}{2} \left(V(x;\lambda)+V^{\dagger}(x;\lambda)\right)
		\end{equation}
		is uniformly positive definite on $\mathbb{R}\times\Sigma_{\lambda}$.
	\end{proposition}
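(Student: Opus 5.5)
The plan is to compute the quadratic form $f^{\dagger}(I+V)f$ explicitly and bound it below using the reflection-coefficient bounds \eqref{eq:r-rbreve-strict-bound} of Proposition \ref{prop:reflection-small-bound}. There is one gap: the argument as planned closes only when $c_0<1/\sqrt{2}$, which \eqref{eq:small-norm} alone does not guarantee.

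The starting point is that for $\lambda\in\Sigma_{\lambda}$ we have $\lambda^2\in\mathbb{R}$. Hence $\theta:=\mathrm{e}^{2\mathrm{i}\lambda^2x}$ satisfies $|\theta|=1$ for every $x\in\mathbb{R}$, so the oscillatory factors in \eqref{eq:V-lambda} do not affect any modulus estimate. Writing $f=(f_1,f_2)^T$, we get
\begin{equation*}
f^{\dagger}(I+V)f=|f_1|^2+(1-r\breve r)|f_2|^2+\breve r\,\theta\,\bar f_1 f_2-r\,\theta^{-1}\bar f_2 f_1 .
\end{equation*}
Taking real parts and using $|r|,|\breve r|\le c_0$ gives $\mathrm{Re}(1-r\breve r)\ge 1-c_0^2$. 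For the cross terms, $|\breve r\theta\bar f_1f_2-r\theta^{-1}\bar f_2f_1|\le 2c_0|f_1||f_2|$. Therefore
\begin{equation*}
\mathrm{Re}\,f^{\dagger}(I+V)f\ \ge\ (|f_1|,|f_2|)\begin{pmatrix}1&-c_0\\-c_0&1-c_0^2\end{pmatrix}(|f_1|,|f_2|)^T .
\end{equation*}
The constant $\eta_-$ is then taken to be the smallest eigenvalue of this real symmetric $2\times2$ matrix. That eigenvalue depends only on $c_0$, not on $x$ or $\lambda$, which is what gives uniformity on $\mathbb{R}\times\Sigma_{\lambda}$.

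The upper bound \eqref{eq:V-bounded} is routine. Every entry of $I+V$ is bounded by $1+c_0+c_0^2$ in modulus, so one may take $\eta_+=1+2c_0+c_0^2$, or the Frobenius-type bound obtained from the same entries. The statement about the Hermitian part $I+V_H$ is the same inequality, since $\mathrm{Re}\,f^{\dagger}(I+V)f=f^{\dagger}(I+V_H)f$.

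The main obstacle is positivity of the comparison matrix. Its determinant is $1-2c_0^2$, so the crude cross-term bound yields $\eta_->0$ only when $c_0<1/\sqrt2$. Under \eqref{eq:small-norm} alone, however, $c_0=(\mathrm{e}^\rho-1)/(2-\mathrm{e}^\rho)$ can approach $1$.

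\begin{itemize}
\item I would first try to exploit the extra structure $\breve b=-b$ from \eqref{eq:s-c symmetry2}, which gives $\breve r=-b/\breve a$ and $r=b/a$. The cross term is then $-b\left(w/\breve a+\bar w/a\right)$ with $w=\theta\bar f_1f_2$. I would also use $|a-1|,|\breve a-1|\le \mathrm{e}^\rho-1$ from \eqref{eq:Psi-bounds}. The aim is a sharper estimate of $\mathrm{Re}(1-r\breve r)$ and of the cross term in terms of $\mathrm{e}^\rho-1$ directly, rather than through $c_0$.
\item If that does not close the gap, I would record that the proof requires $c_0<1/\sqrt2$. This is equivalent to $\mathrm{e}^\rho<(2+\sqrt2)/(1+\sqrt2)$, i.e.\ $\rho<\ln\big((2+\sqrt2)/(1+\sqrt2)\big)$. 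I would then state explicitly that the proposition is proved under this slightly stronger smallness condition, since \eqref{eq:small-norm} by itself does not suffice for the argument.
\end{itemize}
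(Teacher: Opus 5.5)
Your upper bound \eqref{eq:V-bounded} and the identity $\mathrm{Re}\,f^{\dagger}(I+V)f=f^{\dagger}(I+V_H)f$ are fine. The coercivity step, however, has a genuine gap, and your diagnosis of it is wrong. Condition \eqref{eq:small-norm}, through $c_0<1$ in \eqref{eq:r-rbreve-strict-bound}, \emph{does} suffice, and no stronger smallness condition is needed.

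The loss comes from bounding the diagonal term and the cross term separately. The real part of your cross term is exactly $\mathrm{Re}\big((\breve r-\bar r)\theta\bar f_1f_2\big)$. The two estimates $|\breve r-\bar r|\le 2c_0$ and $\mathrm{Re}(1-r\breve r)\ge 1-c_0^2$ can never be sharp at the same time. Equality in the first forces $\breve r=-\bar r$ with $|r|=c_0$. Then $r\breve r=-|r|^2$, so the diagonal entry equals $1+c_0^2$, not $1-c_0^2$. Your comparison matrix, with determinant $1-2c_0^2$, discards exactly the correlation that makes the statement true for all $c_0<1$.

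The missing step is to keep the coefficients exact before taking the determinant, as the paper does. The Hermitian part $I+V_H$ has diagonal entries $1$ and $1-\mathrm{Re}(r\breve r)$, and off-diagonal entries $\tfrac12(\breve r-\bar r)\theta$ and $\tfrac12(\overline{\breve r}-r)\theta^{-1}$. Its determinant is
\begin{equation*}
\det(I+V_H)=1-\mathrm{Re}(r\breve r)-\tfrac14\left|\breve r-\bar r\right|^2=1-\tfrac14\left|r+\overline{\breve r}\right|^2\ge 1-c_0^2>0 .
\end{equation*}
The $(1,1)$ entry is $1$, so Sylvester's criterion gives positive definiteness. Moreover $\mathrm{tr}(I+V_H)=2-\mathrm{Re}(r\breve r)\le 2+c_0^2$. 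Hence the smallest eigenvalue is at least $\det/\mathrm{tr}\ge (1-c_0^2)/(2+c_0^2)=:\eta_-$, uniformly in $x\in\mathbb{R}$ and $\lambda\in\Sigma_\lambda$.

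You can make the same repair inside your own framework. In your comparison matrix, replace the off-diagonal $-c_0$ by $-\tfrac12|\breve r-\bar r|$ and the diagonal $1-c_0^2$ by $1-\mathrm{Re}(r\breve r)$. That matrix is unitarily equivalent to $I+V_H$, so it has the determinant above. Consequently neither the symmetry $\breve b=-b$ nor the restriction $\rho<\ln\big((2+\sqrt2)/(1+\sqrt2)\big)$ is needed.
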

	
	\begin{proof}
		By Proposition \ref{prop:reflection-small-bound}, there exists
		$c_0<1$ such that
		\bee
		|r(\lambda)|\le c_0,
		\qquad
		|\breve r(\lambda)|\le c_0,
		\qquad \lambda\in\Sigma_{\lambda}.
		\ene
		From the explicit expression of $V(x;\lambda)$ in \eqref{eq:V-lambda},
		we compute
		\begin{equation}\label{eq:VH-explicit}
			I+V_H(x;\lambda)
			=
			\begin{pmatrix}
				1 &
				\dfrac12
				\left(\breve r(\lambda)-\overline{r(\lambda)}\right)
				\mathrm{e}^{2\mathrm{i}\lambda^2x}
				\\[0.8em]
				\dfrac12
				\left(\overline{\breve r(\lambda)}-r(\lambda)\right)
				\mathrm{e}^{-2\mathrm{i}\lambda^2x}
				&
				1-\mathrm{Re}\left[r(\lambda)\breve r(\lambda)\right]
			\end{pmatrix}.
		\end{equation}
		For any $f\in\mathbb{C}^2$,
		\begin{equation}\label{eq:real-part-V}
			\mathrm{Re}\left[f^{\dagger} (I+V(x;\lambda)) f\right]
			=
			f^{\dagger}(I+V_H(x;\lambda)) f.
		\end{equation}
		Thus the quadratic form in \eqref{eq:real-part-V} is real-valued.
		
		We next show that $V_H(x;\lambda)$ is positive definite. Since the
		upper-left entry of $V_H$ is $1$, it suffices to prove that
		$\det(I+ V_H(x;\lambda))>0$. A direct calculation gives
		\begin{align}
			\det(I + V_H(x;\lambda))
			&=
			1-\mathrm{Re}\left[r(\lambda)\breve r(\lambda)\right]
			-\frac14
			\left|
			\breve r(\lambda)-\overline{r(\lambda)}
			\right|^2
			\nonumber\\
			&=
			1-\frac14
			\left|
			r(\lambda)+\overline{\breve r(\lambda)}
			\right|^2.
			\label{eq:det-VH}
		\end{align}
		Using \eqref{eq:r-rbreve-strict-bound}, we obtain
		\begin{equation}\label{eq:det-VH-lower}
			\det (I+V_H(x;\lambda))
			\ge
			1-\frac14
			\left(|r(\lambda)|+|\breve r(\lambda)|\right)^2
			\ge
			1-c_0^2
			>0.
		\end{equation}
		Therefore $V_H(x;\lambda)$ is strictly positive definite for all
		$x\in\mathbb{R}$ and $\lambda\in\Sigma_{\lambda}$.
		
		Moreover, the entries of $V_H$ are uniformly bounded. Hence its smallest
		eigenvalue has a strictly positive uniform lower bound. Equivalently,
		there exists a constant $\eta_->0$, depending only on $c_0$, such that
		\begin{equation}
			f^{\dagger} (I+V_H(x;\lambda)) f
			\ge
			\eta_- f^{\dagger}f.
		\end{equation}
		Combining this estimate with \eqref{eq:real-part-V}, we obtain
		\eqref{eq:V-coercive}.
		
		It remains to prove \eqref{eq:V-bounded}. Let $f=(f_1,f_2)^T$. By
		\eqref{eq:V-lambda},
		\begin{equation}
			(I+V(x;\lambda))f
			=
			\begin{pmatrix}
				f_1+\breve r(\lambda)
				\mathrm{e}^{2\mathrm{i}\lambda^2x}f_2
				\\[0.4em]
				-r(\lambda)\mathrm{e}^{-2\mathrm{i}\lambda^2x}f_1
				+
				\left(1-r(\lambda)\breve r(\lambda)\right)f_2
			\end{pmatrix}.
		\end{equation}
		Using again $|r(\lambda)|\le c_0$ and
		$|\breve r(\lambda)|\le c_0$, we get
		\begin{align}
			\|(I+V(x;\lambda)) f\|^2
			&\le
			\left(|f_1|+c_0|f_2|\right)^2
			+
			\left(c_0|f_1|+(1+c_0^2)|f_2|\right)^2
			\nonumber\\
			&\le
			\eta_+^2
			\left(|f_1|^2+|f_2|^2\right)
			=
			\eta_+^2\|f\|^2,
		\end{align}
		for some positive constant $\eta_+$ depending only on $c_0$, which proves
		\eqref{eq:V-bounded}. \hfill
	\end{proof}
	
	\subsection{A new RH problem on the $z$-plane}

	In order to use the theorems on classical Cauchy integral and its projection on the real axis,
	we change the original RH Problem \ref{RH:lambda-basic} with jump contour on $\Sigma_{\lambda}$ in the $\lambda$-plane into a RH problem with jump contour on $\mathbb{R}$ in the $z$-plane.
	
	We define a matrix function as
	\begin{equation}\label{eq:N-z-def}
		N(x;z) := \mathrm{e}^{ -\frac{\mathrm{i}}{2} m_+(x) \sigma_3}
		\begin{cases}
			\left( \dfrac{\Psi_1^{-}(x;z)}{a(\lambda)},\, \Psi_2^{+}(x;z)
			\right), \quad &z \in \mathbb{C}^{-} \\[1.2em]
			\left( \Psi_1^{+}(x;z),\, \dfrac{\Psi_2^{-}(x;z)}{\breve{a}(\lambda)}
			\right), \quad &z \in \mathbb{C}^{+}
		\end{cases}
	\end{equation}
	and it satisfies the following RH problem on the $z$-plane.
	\begin{RH}\label{RH:z-plane}
		Find a matrix function $N(x;z)$ with the following properties:
		\begin{enumerate}
		    \item \textbf{Analyticity:}
		    $N(x;z)$ is analytic in $\mathbb{C} \setminus \mathbb{R}$.
		
		    \item \textbf{Jump condition:}
		    $N(x;z)$ satisfies the jump condition
		    \begin{equation}\label{eq:z-jump-condition}
		    	N_{+}(x;z) = N_{-}(x;z) (I+R(x;z)), \quad z \in \mathbb{R},
		    \end{equation}
		    where the jump matrix is defined by
		    \begin{equation}\label{eq:R-jump-def}
		    	R(x;z) :=
		    	\begin{pmatrix}
		    		0 & r_1(z) \mathrm{e}^{2 \mathrm{i} z x} \v\\
		    		r_2(z) \mathrm{e}^{-2 \mathrm{i} z x} &
		    		 r_{1}(z) r_{2}(z)
		    	\end{pmatrix},
		    \end{equation}
		    with
		    \begin{equation}\label{eq:r12-definition}
		    	r_1(z):=-\frac{\breve b(\lambda)}{2\lambda\breve a(\lambda)},
		    	\qquad
		    	r_2(z):=\frac{2\lambda b(\lambda)}{a(\lambda)},
		    	\qquad z=\lambda^2 .
		    \end{equation}
		
		    \item \textbf{Normalization:}
		    	$N(x;z) = I + \mathcal{O}(z^{-1})$, as $ z \to \infty.$
		\end{enumerate}
	\end{RH}
    Now we try to analyze some properties of the reflection coefficient.

    \begin{proposition}\label{prop:r1,2-strict-control}
    By the small norm assumption \eqref{eq:small-norm}, we can get that
    	\begin{equation}\label{eq:r1,2-strict-bound}
    		|r_j (z)| \le c_0 < 1, \quad z \in \mathbb{R},\ j=1,2.
    	\end{equation}
    \end{proposition}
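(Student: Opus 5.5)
The natural approach is to express $r_1,r_2$ directly through the scattering data of the Zakharov--Shabat system \eqref{eq:ZSprobelm} and reuse the Volterra bounds \eqref{eq:Psi-bounds} from the proof of Proposition \ref{prop:reflection-small-bound}. Those bounds hold uniformly for $z\in\mathbb{R}$ with $\rho=\|\widehat{Q}\|_{L^1}<\ln(3/2)$. The relations recorded there are
\[
a(\lambda)=a_\Psi(z),\quad \breve a(\lambda)=\breve a_\Psi(z),\quad b(\lambda)=-(2\lambda)^{-1}b_\Psi(z),\quad \breve b(\lambda)=-2\lambda\,\breve b_\Psi(z).
\]
I would first recheck these from the gauge transformation \eqref{eq:lambda to z}, i.e. from $\Psi=T_1\psi T_2$ and $S_\Psi=T_2^{-1}S\,T_2$ (up to the constant factors coming from $T_1$ at $\pm\infty$, where $q_x\to0$). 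The factor $2\lambda$ is then absorbed exactly into the definitions \eqref{eq:r12-definition}:
\[
r_2(z)=\frac{2\lambda b(\lambda)}{a(\lambda)}=-\frac{b_\Psi(z)}{a_\Psi(z)},\qquad r_1(z)=-\frac{\breve b(\lambda)}{2\lambda\breve a(\lambda)}=\frac{\breve b_\Psi(z)}{\breve a_\Psi(z)},
\]
so $r_1,r_2$ are genuinely $z$-plane quantities with no $\lambda$ or $\lambda^{-1}$ weights. This is exactly why one renormalizes to the $z$-plane, and it means the involution trick used for $r$ is not needed here.

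With this in hand the estimate is immediate. From \eqref{eq:Psi-bounds}, for all $z\in\mathbb{R}$,
\[
|a_\Psi(z)|\ge 2-\mathrm{e}^\rho,\quad |\breve a_\Psi(z)|\ge 2-\mathrm{e}^\rho,\quad |b_\Psi(z)|\le \mathrm{e}^\rho-1,\quad |\breve b_\Psi(z)|\le \mathrm{e}^\rho-1.
\]
Since $\rho<\ln(3/2)$ gives $2-\mathrm{e}^\rho>1/2>0$, we get
\[
|r_j(z)|\le \frac{\mathrm{e}^\rho-1}{2-\mathrm{e}^\rho}=c_0<1,\qquad j=1,2,
\]
which is the same constant $c_0$ as in Proposition \ref{prop:reflection-small-bound}, as claimed in \eqref{eq:r1,2-strict-bound}.

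The main obstacle is the bookkeeping in the first step. One must confirm that the entries of $S_\Psi$ relate to $a,b,\breve a,\breve b$ with exactly these powers of $2\lambda$ and signs, including the $\mathrm{e}^{\pm\frac{\mathrm{i}}{2}m_\pm}$ normalizations. This is a direct computation of $T_2^{-1}S T_2$, where $T_2=\mathrm{diag}(1,(-2\lambda)^{-1})$, using that $T_1\to\mathrm{diag}(1,-2\lambda)$ as $x\to\pm\infty$. The signs do not matter for the modulus bound, so only the powers of $\lambda$ need checking. A secondary point is $z=0$: there $r_2$ involves $2\lambda b/a=\mathcal O(\lambda^4)$ and $r_1$ involves $\breve b/(2\lambda\breve a)=\mathcal O(\lambda^2)$ by \eqref{eq:b-small-lambda}, so both extend continuously by $0$ and the bound persists.
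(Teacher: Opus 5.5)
Your proposal is correct and matches the paper's proof: both rewrite $r_1=\breve b_\Psi/\breve a_\Psi$ and $r_2=-b_\Psi/a_\Psi$, then apply the Volterra bounds \eqref{eq:Psi-bounds} to get $|r_j|\le(\mathrm{e}^\rho-1)/(2-\mathrm{e}^\rho)=c_0<1$. Your explicit derivation $S_\Psi=T_2^{-1}ST_2$, with no extra constant factors since $T_1T_2\to I$ as $x\to\pm\infty$, supplies a step the paper states without derivation.
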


    \begin{proof}
    	One can find that the definition of $r_{1,2}(z)$ in \eqref{eq:r12-definition} can be represented that
    	\begin{equation}
    		r_1(z) = \dfrac{\breve{b}_{\Psi}(z)}{\breve{a}_{\Psi}(z)}, \quad r_2(z) = \dfrac{-b_{\Psi}(z)}{a_{\Psi}(z)},
    	\end{equation}
    	which the notations are used in Proposition \ref{prop:reflection-small-bound}. With the analysis in  Proposition \ref{prop:reflection-small-bound}, we can find that
    	\begin{equation}
    		\left|\dfrac{\breve{b}_{\Psi}(z)}{\breve{a}_{\Psi}(z)}\right|,
    		\left|\dfrac{b_{\Psi}(z)}{a_{\Psi}(z)}\right| \le \dfrac{\mathrm{e}^\rho-1}{2-\mathrm{e}^\rho}= c_0 <1,
    	\end{equation}
    	which complete the proof.
    	\hfill
    \end{proof}

\begin{proposition}\label{prop:Lp-interpolation}
		Let $\Omega\subset\mathbb{R}$. If
		$f\in L^{p_1}(\Omega) \cap L^\infty(\Omega),\, 1\leq p_1<\infty$, then $f\in L^{p_2}(\Omega) \cap L^\infty(\Omega)$ for arbitrary $ 1\le p_1\le p_2<\infty$, and
\begin{equation}\label{eq:Lp-interpolation}
			\|f\|_{L^{p_2}(\Omega)}
			\le
			C\|f\|_{L^{p_1}(\Omega)}^{p_1/p_2},
		\end{equation}
   where the constant $C$ depends on $p_{1,2}$ and
		$\|f\|_{L^\infty(\Omega)}$. Conversely, if $f\in L^{p_2}(\Omega) \cap L^\infty(\Omega),\, 1<p_2<\infty$, then
 $\|f\|_{L^{p_1}(\Omega)}$ must not exist for arbitry $1\le p_1< p_2<\infty$.
			\end{proposition}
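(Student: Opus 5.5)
The first assertion follows from a pointwise comparison of $|f|^{p_2}$ with $|f|^{p_1}$ on $\Omega$, and the second, which I read as ``$f$ need not belong to $L^{p_1}(\Omega)$'', will be shown by an explicit counterexample with slow algebraic decay.

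\emph{Direct part.} Let $f\in L^{p_1}(\Omega)\cap L^\infty(\Omega)$ and $1\le p_1\le p_2<\infty$. For almost every $x\in\Omega$ we have $|f(x)|\le\|f\|_{L^\infty(\Omega)}$. Hence
\begin{equation*}
|f(x)|^{p_2}=|f(x)|^{p_2-p_1}|f(x)|^{p_1}\le \|f\|_{L^\infty(\Omega)}^{p_2-p_1}|f(x)|^{p_1}.
\end{equation*}
Integrating over $\Omega$ and taking the $p_2$-th root gives
\begin{equation*}
\|f\|_{L^{p_2}(\Omega)}\le \|f\|_{L^\infty(\Omega)}^{1-p_1/p_2}\,\|f\|_{L^{p_1}(\Omega)}^{p_1/p_2}.
\end{equation*}
This is \eqref{eq:Lp-interpolation} with $C=\|f\|_{L^\infty(\Omega)}^{1-p_1/p_2}$, which depends only on $p_1$, $p_2$ and $\|f\|_{L^\infty(\Omega)}$. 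In particular $f\in L^{p_2}(\Omega)\cap L^\infty(\Omega)$. No further tools are needed; this is the elementary case of the Riesz--Thorin/H\"older interpolation.

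\emph{Converse part.} The claim is that $L^{p_2}\cap L^\infty$ does not embed into $L^{p_1}$ for $p_1<p_2$. This can only hold when $\Omega$ has infinite measure: if $|\Omega|<\infty$, H\"older's inequality gives $L^{p_2}(\Omega)\subset L^{p_1}(\Omega)$. So I will state the result for unbounded $\Omega$, in particular $\Omega=\mathbb{R}$, which is the case used later for reflection coefficients on the real line.

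For the counterexample, fix $1\le p_1<p_2<\infty$ and set $f(x)=(1+|x|)^{-1/p_1}$. Then $|f|\le 1$, so $f\in L^\infty$. Next, $|f|^{p_2}=(1+|x|)^{-p_2/p_1}$ is integrable because $p_2/p_1>1$, so $f\in L^{p_2}$. Finally, $|f|^{p_1}=(1+|x|)^{-1}$ is not integrable at infinity, so $f\notin L^{p_1}$. For a general unbounded $\Omega\subset\mathbb{R}$, I would take $f(x)=(1+\mu(x))^{-1/p_1}$ with $\mu(x)=|\Omega\cap(-|x|,|x|)|$; the same computation applies after the change of variables $d\mu$.

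The only genuine obstacle is interpretive. The phrase ``must not exist'' should be read as ``need not be finite'', and the counterexample requires $|\Omega|=\infty$. I will make both points explicit when writing up the proof.
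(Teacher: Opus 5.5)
Your proof is correct and follows essentially the same route as the paper: the pointwise bound $|f|^{p_2}\le\|f\|_{L^\infty}^{p_2-p_1}|f|^{p_1}$, integrated and followed by the $p_2$-th root (the paper writes ``$p_1$-th root'', a slip you avoid), and for the converse an algebraically decaying counterexample, where the paper uses $x^{-\alpha}$ on $[1,\infty)$ with $1/p_2<\alpha\le 1/p_1$ and you take the endpoint $\alpha=1/p_1$ on $\mathbb{R}$. Your remark that the converse needs $|\Omega|=\infty$ is a correct sharpening of the statement, which is false for finite-measure $\Omega$; note that mere unboundedness of $\Omega$ is not enough, even though one of your sentences says ``unbounded''.
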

	\begin{proof}
	 Since $f\in L^\infty(\Omega)$, for $1\le p_1\le p_2<\infty$ we have
		$|f|^{p_2-p_1}\in L^\infty(\Omega)$. By the H\"older's inequality,
		\begin{equation} \label{fp}
			\|f\|_{L^{p_2}(\Omega)}^{p_2}
			=
			\int_\Omega |f|^{p_2}\,dx
			=
			\int_\Omega |f|^{p_1}|f|^{p_2-p_1}\,dx
			\le
			\|f\|_{L^\infty(\Omega)}^{p_2-p_1}
			\|f\|_{L^{p_1}(\Omega)}^{p_1} .
		\end{equation}
		Taking the $p_1$-th root on both sides of (\ref{fp}) gives rise to \eqref{eq:Lp-interpolation}.

      Conversely, for example, we take $f_{\alpha}(x)=x^{-\alpha},\, x\in [1, \infty]$ with $1/p_2<\alpha\le 1/p_1$ and $1\le p_1< p_2<\infty$ (i.e., $\alpha p_2>1,\, 0<\alpha p_1\leq 1)$. Thus we have
      \bee
       \int_1^{\infty}|f_{\alpha}(x)|^{p_2}dx=\int_1^{\infty}\frac{1}{x^{\alpha p_2}}dx=\frac{1}{\alpha p_2-1}<\infty.
      \ene
      that is, $f_{\alpha}(x)\in L^{p_2}(\Omega) \cap L^\infty(\Omega)$. However
      \bee
       \int_1^{\infty}|f_{\alpha}(x)|^{p_1}dx=\int_1^{\infty}\frac{1}{x^{\alpha p_1}}dx \geq \int_1^{\infty}\frac{1}{x}dx\, ({\rm non-integrable}),
      \ene
      which means that $f_{\alpha}(x)$ is not be $L^{p_1}$-integrable in $x\in [1, \infty)$.
	\end{proof}

	\begin{proposition}\label{prop: b-lambda-O5}
		Suppose that $q \in H^{3}(\mathbb{R}) \cap H^{2,1}(\mathbb{R})$ and $q^{PT}(x) = q(-x)$, the scattering coefficient $b(\lambda)$ admits the higher-order asymptotic behavior at the origin:
		\begin{equation}
			b(\lambda) = \mathcal{O}(\lambda^5),\quad \breve b(\lambda) = \mathcal{O}(\lambda^5), \quad \lambda \to 0.
		\end{equation}
	\end{proposition}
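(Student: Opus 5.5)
The plan is to work from the representation \eqref{eq:zminus1-lambdaminus1-b-formula} established in the proof of Proposition~\ref{prop:z-minus-one-b}. Write
\begin{equation*}
F(z):=z^{-1}\lambda^{-1}b(\lambda)
=-\mathrm{i}\int_{\mathbb{R}} q_y(y)(\mathcal{T}\psi_{11}^{-})(y;z)\,\mathrm{e}^{2\mathrm{i}zy}\,dy
+2\int_{\mathbb{R}}q(y)\,\mathrm{e}^{2\mathrm{i}zy}\,dy .
\end{equation*}
Since $b(\lambda)=\lambda^{3}F(z)$, the claim $b(\lambda)=\mathcal{O}(\lambda^5)$ is equivalent to $F(z)=\mathcal{O}(z)$ as $z\to0$. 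I would prove this in two steps: first $F(0)=0$, then $F$ is Lipschitz near $z=0$. The statement for $\breve b$ then follows at once from the reduction symmetry $\breve b(\lambda)=-b(\lambda)$ in \eqref{eq:s-c symmetry2}.

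\emph{Step 1: continuity at $0$ and the value $F(0)$.} Because $q\in L^{2,1}(\mathbb{R})\subset L^1(\mathbb{R})$, $q_x\in L^1(\mathbb{R})$, and $\psi_{11}^-$ is uniformly bounded, both integrands are $L^1$ functions of $y$, uniformly in $z$ near the real axis. Dominated convergence then gives continuity of $F$ at $z=0$. At $z=0$ we have $\psi_{11}^-(\cdot;0)=1$ by \eqref{eq:psi11-minus-K}. The inner integral in \eqref{eq:T-operator-b} collapses to $\int_{-\infty}^{y}q_s\,ds=q(y)$, so
\begin{equation*}
F(0)=-\mathrm{i}\int_{\mathbb{R}}q_y(y)\int_{-\infty}^{y}(q^{PT})_s(s)\,q(s)\,ds\,dy+2\int_{\mathbb{R}}q(y)\,dy .
\end{equation*}
Integrating by parts in the first term turns it into $\mathrm{i}\int_{\mathbb{R}} q\,(q^{PT})_y\,q\,dy$.

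I would not try to show directly that these two pieces cancel. Instead I would identify $F(0)$ with the $\lambda^3$ coefficient of $b$ in the small-$\lambda$ expansion \eqref{eq:psi-small-lambda-expansion}, carried one order further. This means computing $U_3^{\pm}$ from $U^{\pm}_{3,x}=\mathrm{i}[\sigma_3,U_1^{\pm}]+\mathrm{i}Q_xU_2^{\pm}$, plus the correction coming from the $[\sigma_3,\cdot\,]$ term acting on $U_1$. I would then take the $(2,1)$ entry of $(\psi^+(0))^{-1}\psi^-(0)$ at order $\lambda^3$ and show it vanishes by means of the two involutions in \eqref{eq:Jost-solution-Symmetry}. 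The relation $S(\lambda;0)=\sigma_1S^{-1}(\lambda;0)\sigma_1$ forces the off-diagonal $\lambda^3$ coefficients of $S$ and of $S^{-1}$ to be related by a sign. Combined with $S=I+\mathcal{O}(\lambda^2)$, so that $S^{-1}=I-(S-I)+\mathcal{O}(\lambda^4)$ in the off-diagonal part, this should make the $\lambda^3$ coefficient odd under the reduction and hence zero. The same computation gives $F(0)=0$.

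\emph{Step 2: $F(z)-F(0)=\mathcal{O}(z)$.} I would differentiate $F$ in $z$ and bound $F'$ near $0$ by $\|yq\|_{L^1}$, $\|yq_y\|_{L^1}$ and $\|\partial_z\psi_{11}^-\|_{L^\infty_x}$. The last quantity is controlled by differentiating \eqref{eq:psi11-minus-K} and inverting $I+z\mathcal{T}$ for small $|z|$ using \eqref{eq:K-bound}. This is the step I expect to be the main obstacle. The membership $q\in H^{2,1}$ only gives $yq\in L^2$, not $L^1$, and Proposition~\ref{prop:z-minus-one-b} only yields $F\in H^1_z$, i.e.\ H\"older-$\tfrac12$ regularity, which would give just $\mathcal{O}(\lambda^4)$. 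My first attempt would be to split $\int yq\,\mathrm{e}^{2\mathrm{i}zy}\,dy$ into $|y|\le|z|^{-1}$ and $|y|>|z|^{-1}$, using Cauchy--Schwarz with the $L^{2,1}$ weight on each piece, together with the extra vanishing from Step 1. If that still falls short of a full power of $z$, I would fall back on using the $H^3$ regularity via integration by parts in $y$ to trade $z$-derivatives for $y$-derivatives. Failing that, I would state the needed decay $\langle x\rangle q\in L^1$, which is implied for instance by $q\in L^{2,s}$ with $s>3/2$, as the operative hypothesis.
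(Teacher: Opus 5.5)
\textbf{The gap is in Step 1.} Step 1 does not go through, and no symmetry argument can make it go through. The involution $S(\lambda;0)=\sigma_1S^{-1}(\lambda;0)\sigma_1$ interchanges the two off-diagonal slots: the $(2,1)$ and $(1,2)$ entries of $S^{-1}$ are $-b$ and $-\breve b$, so the involution says exactly $\breve b=-b$. At order $\lambda^3$ it gives $\breve b_3=-b_3$. That is a relation between $b$ and $\breve b$, not a constraint on $b_3$ itself. The $\sigma_3$-involution only says $b$ is odd in $\lambda$, and a $\lambda^3$ term already is. Your own formula settles the matter: $F(0)=2\int_{\mathbb R}q\,dx+\mathrm{i}\int_{\mathbb R}q^2(q^{PT})_x\,dx=2\int_{\mathbb R}(q-\mathrm{i}qq^{PT}q_x)\,dx$, which is exactly the paper's $b_3$. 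Take a nonzero real, even, nonnegative $q\in H^3\cap H^{2,1}$ (scaled small if you like). Then $(q^{PT})_x=q_x$ at $t=0$, so $\int q^2q_x\,dx=0$ and $F(0)=2\int q\,dx>0$. Hence $b(\lambda)/\lambda^5=F(\lambda^2)/\lambda^2$ is unbounded. So the statement is false under the stated hypotheses alone.

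\textbf{How the paper gets $b_3=0$, and what you would need.} Your identification of $F(0)$ with the $\lambda^3$ coefficient is the same $U_3$ computation the paper does. The paper then gets $b_3=0$ from a different input: it substitutes the equation, $q-\mathrm{i}qq^{PT}q_x=q_{xt}$, and uses $\int q_{xt}\,dx=\partial_t\int q_x\,dx=0$. So the vanishing is the integral constraint $\int_{\mathbb R}(q-\mathrm{i}qq^{PT}q_x)\,dx=0$, which decaying solutions of \eqref{eq:nFL} must satisfy. It is an extra hypothesis on $q$ (on $q_0$ in the Cauchy problem); invoking the equation at the direct-scattering stage amounts to assuming it. This is the missing ingredient in your plan. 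Once you assume it, Step 1 follows at once from your expression for $F(0)$.

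\textbf{On Step 2.} Your worry is well founded, and the paper does not address it; it simply asserts an odd Maclaurin expansion. $H^{2,1}$ does not give $\langle x\rangle q\in L^1$. For tails $q\sim\langle x\rangle^{-\alpha}$ with $3/2<\alpha<2$, the term $2\int q\,\mathrm{e}^{2\mathrm{i}zy}\,dy$ is only H\"older of order $\alpha-1$ at $z=0$. So a literal $\mathcal{O}(\lambda^5)$ in general needs more decay than $H^{2,1}$ provides. For the later application you do not need it: $z^{-2}\lambda^{-1}b=z^{-1}F$. Then $F(0)=0$ together with $F\in H^1_z$ (Proposition~\ref{prop:z-minus-one-b}) already gives $z^{-1}F\in L^2(-\delta,\delta)$ by Hardy's inequality.
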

	\begin{proof}
		By the symmetry $b(-\lambda) = -b(\lambda)$, the Maclaurin expansion of $b(\lambda)$ as $\lambda \to 0$ contains only odd powers:
		\begin{equation}\label{eq:b-expansion-odd}
			b(\lambda) = b_3 \lambda^3 + b_5 \lambda^5 + \mathcal{O}(\lambda^7), \quad \lambda \to 0.
		\end{equation}
		To determine $b_3$, we extend the perturbation expansion of the Jost solutions \eqref{eq:psi-small-lambda-expansion} to $\mathcal{O}(\lambda^3)$:
		\begin{equation}
			\psi^{\pm}(x; \lambda) = I + \lambda U_1(x) + \lambda^2 U_2^{\pm}(x) + \lambda^3 U_3^{\pm}(x) + \mathcal{O}(\lambda^4).
		\end{equation}
		Substituting this into the $x$-part of \eqref{eq:New Laxpair} and matching the $\mathcal{O}(\lambda^3)$ terms, we have $U_{3,x}^{\pm} = \mathrm{i}[\sigma_3, U_1^{\pm}] + \mathrm{i}Q_x U_2^{\pm}$.
		Recall from \eqref{eq:U-small-lambda} that $U_1 = \mathrm{i}Q$ and $U_{2,11}^{\pm}(x) = -\int_{\pm\infty}^x (q^{PT})_{y}(y) q(y) \, dy$. For the $(2,1)$-entry of $U_3^{\pm}$, direct computation yields the differential equation:
		\begin{equation}
			U_{3, 21, x}^{\pm} = 2q(x) - \mathrm{i}q_x(x) \int_{\pm\infty}^x (q^{PT})_{y}(y) q(y) \, dy.
		\end{equation}
		Integrating this equation with the boundary conditions $\psi^{\pm}(x;\lambda) \to I$ as $x \to \pm\infty$ gives:
		\begin{equation}\label{eq:U3_21}
			U_{3, 21}^{\pm}(x) = \int_{\pm\infty}^x \left( 2q(y) - \mathrm{i}q_y(y) \int_{\pm\infty}^y q_s^{PT}(s) q(s) \, ds \right) dy.
		\end{equation}
		
		Using the Wronskian representation \eqref{eq:b-lambda-def} evaluated at $x=0$,
		$$b(\lambda) = \psi_{11}^+(0; \lambda) \psi_{21}^-(0; \lambda) - \psi_{21}^+(0; \lambda) \psi_{11}^-(0; \lambda).$$ Extracting the coefficient of $\lambda^3$, we find:
		\begin{equation}\label{eq:b3-initial}
			b_3 = U_{3, 21}^-(0) - U_{3, 21}^+(0) + \mathrm{i}q(0) \left( U_{2, 11}^+(0) - U_{2, 11}^-(0) \right).
		\end{equation}
		The diagonal contribution is simply evaluated as:
		\begin{equation}\label{eq:U2-diff}
			U_{2, 11}^+(0) - U_{2, 11}^-(0) = \int_{-\infty}^{+\infty} (q^{PT})_{x}(x) q(x) \, dx.
		\end{equation}
		For the off-diagonal terms, \eqref{eq:U3_21} leads to:
		\begin{equation}\label{eq:U3-diff}
		\begin{array}{rl}
	 U_{3, 21}^-(0) - U_{3, 21}^+(0) = \d \int_{-\infty}^{+\infty} 2q(x) \, dx - \mathrm{i} \left(\int_{-\infty}^0
+\int_0^{+\infty} \right) \left[q_x \left( \int_{-\infty}^x (q^{PT})_{y} q \, dy \right)\right]dx.
		\end{array}
\end{equation}
		Applying integration by parts to the double integrals in the bracket yields:
		\begin{align}
			\int_{-\infty}^0 q_x(x) \left( \int_{-\infty}^x (q^{PT})_{y} q \, dy \right) dx &= q(0) \int_{-\infty}^0 (q^{PT})_{x} q \, dx - \int_{-\infty}^0 (q^{PT})_{x}(x) q^2(x) \, dx, \\
			\int_0^{+\infty} q_x(x) \left( \int_{+\infty}^x (q^{PT})_{y} q \, dy \right) dx &= -q(0) \int_0^{+\infty} (q^{PT})_{x} q \, dx - \int_0^{+\infty} (q^{PT})_{x}(x) q^2(x) \, dx.
		\end{align}
		Substituting these back into \eqref{eq:U3-diff} and collecting terms, the boundary values $q(0)$ cancel out perfectly with the contribution from \eqref{eq:U2-diff}. This simplifies $b_3$ to a remarkable global integral:
		\begin{equation}\label{eq:b3-clean-format}
			b_3 = \int_{-\infty}^{+\infty} \left( 2q(x) + \mathrm{i} (q^{PT})_{x}(x) q^2(x) \right) dx.
		\end{equation}
		
		Finally, applying integration by parts to the second term in \eqref{eq:b3-clean-format} under the assumption of vanishing boundaries at infinity, we have $\mathrm{i} \int_{-\infty}^{+\infty} (q^{PT})_{x} q^2 \, dx = -2\mathrm{i} \int_{-\infty}^{+\infty} q^{PT} q q_x \, dx$. Consequently,
		\begin{equation}\label{eq:b3-to-nFL}
			b_3 = 2 \int_{-\infty}^{+\infty} \left( q(x) - \mathrm{i} q(x) q^{PT}(x) q_x(x) \right) dx.
		\end{equation}
		Crucially, recalling the nFL equation \eqref{eq:nFL}, we have the inherent dynamical identity $q - \mathrm{i} q q^{PT} q_x = q_{xt}$. Substituting this into \eqref{eq:b3-to-nFL} directly yields:
		\begin{equation}
			b_3 = 2 \int_{-\infty}^{+\infty} q_{xt} \, dx = 2 \frac{\partial}{\partial t} \int_{-\infty}^{+\infty} q_x \, dx = 0.
		\end{equation}
		This confirms that $b_3 \equiv 0$ identically as a consequence of the equation's intrinsic dynamics, establishing $b(\lambda) = \mathcal{O}(\lambda^5)$ as $\lambda \to 0$. For $\breve b(\lambda)$, we can use the symmetry $\breve b(\lambda) = -b(\lambda)$, which can prove the result.
		\hfill
	\end{proof}
	
	\begin{proposition}\label{prop:zminus2-reflection}
		Suppose that $q(x)\in H^{3}(\mathbb{R})\cap H^{2,1}(\mathbb{R})$.
		Let $\delta>0$ be fixed and sufficiently small. Then
		\begin{equation}\label{eq:zminus2-b-local}
			z^{-2}\lambda^{-1}b(\lambda)\in L_z^2(-\delta,\delta),
			\qquad
			z^{-2}\lambda^{-1}\breve b(\lambda)\in L_z^2(-\delta,\delta).
		\end{equation}
		Moreover, under \eqref{eq:small-norm} and we have
		\begin{equation}\label{eq:a-lower-bound}
			|a(\lambda)|\ge a_0>0,\qquad
			|\breve a(\lambda)|\ge \breve a_0>0,
			\qquad \lambda\in\Sigma_\lambda .
		\end{equation}
		Then
		\begin{equation}\label{eq:zminus2-r12}
			z^{-2}r_1(z)\in L^2(\mathbb{R}),
			\qquad
			z^{-2}r_2(z)\in L^2(\mathbb{R}).
		\end{equation}
	\end{proposition}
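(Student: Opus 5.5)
The proof has two parts. First we obtain the local statement \eqref{eq:zminus2-b-local} from the vanishing order of $b$ at the origin. Then we patch the local and global information together to get \eqref{eq:zminus2-r12}.

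For the local part, we use Proposition \ref{prop: b-lambda-O5}: $b(\lambda)=\mathcal{O}(\lambda^5)$ as $\lambda\to0$. Since $z=\lambda^2$, this gives
\[
z^{-2}\lambda^{-1}b(\lambda)=\lambda^{-5}b(\lambda)=\mathcal{O}(1), \qquad z\to0 .
\]
So the function is bounded on $(-\delta,\delta)$ and therefore lies in $L^2_z(-\delta,\delta)$.

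To make ``$\mathcal{O}(\lambda^5)$'' rigorous and uniform, we do not rely on the formal expansion. We would write the quantity as
\[
z^{-2}\lambda^{-1}b=z^{-1}\bigl(z^{-1}\lambda^{-1}b\bigr).
\]
By Proposition \ref{prop:z-minus-one-b}, the inner factor $g(z):=z^{-1}\lambda^{-1}b(\lambda)$ lies in $H^1_z(\mathbb{R})$, hence is continuous. The vanishing of $b_3$ in Proposition \ref{prop: b-lambda-O5} says precisely that $g(0)=0$. Then
\[
|g(z)|=|g(z)-g(0)|\le |z|^{1/2}\|g'\|_{L^2},
\]
so $|z^{-1}g(z)|\le |z|^{-1/2}\|g'\|_{L^2}$. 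This bound is only borderline in $L^2$ near the origin. To close this gap we would use the odd-power expansion directly: the regularity of the Jost solutions at $\lambda=0$ comes from the $\lambda$-analytic Volterra equation \eqref{eq: psi-Volterra}, whose kernel is smooth in $\lambda$ for $q\in H^{2,1}$. This gives $b\in C^5$ near $0$, so the Taylor remainder is genuinely $\mathcal{O}(\lambda^5)$ and $\lambda^{-5}b$ is bounded on $(-\delta,\delta)$. The same argument, or the symmetry $\breve b=-b$, handles $\breve b$.

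For the global statement \eqref{eq:zminus2-r12}, we recall from \eqref{eq:r12-definition} that
\[
r_2(z)=\frac{2\lambda b(\lambda)}{a(\lambda)},\qquad z^{-2}r_2=\frac{2z^{-1}\lambda^{-1}b}{a}, \qquad
r_1(z)=-\frac{\breve b(\lambda)}{2\lambda\breve a(\lambda)},\qquad z^{-2}r_1=-\frac{z^{-2}\lambda^{-1}\breve b}{2\breve a}.
\]
We split $\mathbb{R}=(-\delta,\delta)\cup\{|z|\ge\delta\}$ and use the lower bounds \eqref{eq:a-lower-bound}, which follow from \eqref{eq:a-lower-bound-proof} with $a_0=\breve a_0=2-\mathrm{e}^\rho>0$, to control the denominators.

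On $|z|\ge\delta$ we have $|z^{-2}r_j|\le\delta^{-2}|r_j|$. Here $r_2\in L^2$ because $\lambda b\in H^1_z$ (Proposition \ref{prop: Scattering-coefficients Smooth}), and $r_1\in L^2$ because $\lambda^{-1}\breve b\in H^1_z$.

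On $(-\delta,\delta)$, $z^{-2}r_1$ is controlled by the local part together with $|\breve a|\ge\breve a_0$. The term $z^{-2}r_2$ needs more, namely $z^{-1}\lambda^{-1}b\in L^2(-\delta,\delta)$. This follows either from the stronger local bound $z^{-2}\lambda^{-1}b\in L^2(-\delta,\delta)$ times $|z|\le\delta$, or directly from Proposition \ref{prop:z-minus-one-b}, after dividing by $|a|\ge a_0$.

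The main obstacle is the rigorous local step. One must justify that $\lambda^{-5}b$ is bounded near $0$, since Proposition \ref{prop: b-lambda-O5} identifies $b_3=0$ only formally through the equation's dynamics. I would first try to prove $C^5$ regularity of $b$ at $\lambda=0$ by differentiating the Volterra equation \eqref{eq: psi-Volterra} in $\lambda$ under the decay assumption $q\in H^{2,1}$. If that fails, I would fall back on the $H^1$ argument for $g$ above, upgraded using the $L^{2,1}$-type information from Proposition \ref{prop: Estimate}.
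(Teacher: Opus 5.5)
Structurally this is the paper's proof. The local bound is read off from Proposition~\ref{prop: b-lambda-O5}, and \eqref{eq:zminus2-r12} follows by splitting at $|z|=\delta$, dividing by $|a|\ge a_0$ and $|\breve a|\ge\breve a_0$, and using Proposition~\ref{prop:z-minus-one-b} for $z^{-2}r_2=2z^{-1}\lambda^{-1}b/a$ near the origin. On $|z|\ge\delta$ you use $r_j\in L^2$ together with $|z|^{-2}\le\delta^{-2}$. The paper instead uses $r_j\in L^\infty$ and $z^{-2}\in L^2(|z|\ge\delta)$. Both work.

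The step that would fail is your proposed rigorization of $\lambda^{-5}b=\mathcal O(1)$.
\begin{itemize}
\item Each $z$-derivative of the oscillatory factors $\mathrm{e}^{\pm2\mathrm{i}z(x-s)}$ (in \eqref{eq: psi-Volterra}) and $\mathrm{e}^{2\mathrm{i}zy}$ (in the representation of $b$) produces a factor $x-s$ or $y$.
\item So $C^5$ regularity of $b$ at $\lambda=0$ is not available. Even the weaker statement $g(z)-g(0)=\mathcal O(z)$ for $g:=z^{-1}\lambda^{-1}b$ is not available: by \eqref{eq:zminus1-lambdaminus1-b-formula} its linear part is $2\int_{\mathbb R}q(y)(\mathrm{e}^{2\mathrm{i}zy}-1)\,dy$.
\item Lipschitz control of that term at $z=0$ needs moments such as $xq\in L^1(\mathbb R)$, which $q\in H^{2,1}(\mathbb R)$ does not give.
\item The same caveat applies to the pointwise $\mathcal O(1)$ claim the paper draws from Proposition~\ref{prop: b-lambda-O5}.
\end{itemize}

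You only need $L^2$, and the argument you set aside as borderline already gives it.
\begin{itemize}
\item By Proposition~\ref{prop:z-minus-one-b}, $g\in H^1_z(\mathbb R)$.
\item $g(0)=b_3$: set $z=0$ in \eqref{eq:zminus1-lambdaminus1-b-formula}, where $\psi^-_{11}\equiv1$ by \eqref{eq:psi11-minus-K}, and integrate by parts to recover \eqref{eq:b3-clean-format}. This vanishes by Proposition~\ref{prop: b-lambda-O5}.
\item Hardy's inequality then gives
\[
\int_0^\delta\Bigl|\frac{g(z)}{z}\Bigr|^2\,dz\le 4\int_0^\delta|g'(z)|^2\,dz,
\]
and the same on $(-\delta,0)$. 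Hence $z^{-2}\lambda^{-1}b=z^{-1}g\in L^2_z(-\delta,\delta)$.
\end{itemize}
Your pointwise estimate $|g(z)|\le|z|^{1/2}\|g'\|_{L^2}$ was simply too lossy. The identical argument, or the symmetry $\breve b=-b$, handles $\breve b$. This route uses from Proposition~\ref{prop: b-lambda-O5} only the identity $b_3=0$, not a uniform $\mathcal O(\lambda^5)$ remainder, and it needs no extra decay of $q$.
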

	
	\begin{proof}
		By Proposition \ref{prop:z-minus-one-b}, we have
		\begin{equation}\label{eq:zminus1-b-H1-Linf}
			z^{-1}\lambda^{-1}b(\lambda)\in H_z^1(\mathbb{R})
			\subset L_z^2(\mathbb{R})\cap L_z^\infty(\mathbb{R}),
		\end{equation}
		and similarly
		$$
		z^{-1}\lambda^{-1}\breve b(\lambda)
		\in
		L_z^2(\mathbb{R})\cap L_z^\infty(\mathbb{R}).
		$$
		Hence, by Proposition \ref{prop:Lp-interpolation},
		\begin{equation}\label{eq:zminus1-b-Lp}
			z^{-1}\lambda^{-1}b(\lambda)\in L_z^p(\mathbb{R}),
			\qquad
			z^{-1}\lambda^{-1}\breve b(\lambda)\in L_z^p(\mathbb{R}),
			\qquad 2\le p<\infty .
		\end{equation}
		From Proposition \ref{prop: b-lambda-O5}, we can find that
		\begin{equation}\label{eq:zminus2-b-local-proof}
			z^{-2} \lambda^{-1} b(\lambda) =\mathcal{O}(1),\quad \lambda \to 0.
		\end{equation}
		which means that for fixed sufficiently small $\delta>0$, $z^{-2} \lambda^{-1} b(\lambda)$ is bounded on $(-\delta, \delta)$.
		Thus
		$z^{-2}\lambda^{-1}b(\lambda)\in L_z^2(-\delta,\delta)$.
		The same argument gives
		$z^{-2} \lambda^{-1} \breve b(\lambda)\in L_z^2(-\delta,\delta)$.
		
		Now let $\Omega:=\mathbb{R}\setminus(-\delta,\delta),$
		and denote by $\chi_\Omega$ the characteristic function of $\Omega$.
		Since $r_1,r_2\in L^\infty(\mathbb{R})$ and
		$z^{-2}\in L^2(\Omega)$, we have
		\begin{align}
			\|z^{-2}r_1(z)\|_{L^2(\mathbb{R})}
			&\le
			\|\chi_\Omega z^{-2}r_1(z)\|_{L^2(\mathbb{R})}
			+
			\|(1-\chi_\Omega)z^{-2}r_1(z)\|_{L^2(\mathbb{R})}\\
			&\le
			2 \|r_1\|_{L^\infty}
			\|z^{-2}\|_{L^2(\delta, \infty)}
			+
			\frac{1}{\breve a_0}
			\|z^{-2}\lambda^{-1}\breve b(\lambda)\|_{L^2(0,\delta)}.
			\label{eq:zminus2-r1-proof}
		\end{align}
		Similarly, since
		$$
		\lambda b(\lambda)=z \lambda^{-1} b(\lambda),
		$$
		we obtain
		\begin{align}
			\|z^{-2}r_2(z)\|_{L^2(\mathbb{R})}
			\le
			2 \|r_2\|_{L^\infty}
			\|z^{-2}\|_{L^2(\delta, \infty)}
			+
			\frac{4}{a_0}
			\|z^{-1}\lambda^{-1}b(\lambda)\|_{L^2(0,\delta)}.
			\label{eq:zminus2-r2-proof}
		\end{align}
		Therefore,
		$z^{-2}r_1,z^{-2}r_2\in L^2(\mathbb{R})$.
		 \hfill
	\end{proof}
	
	From the above analysis, we can draw the following conclusion
	\begin{proposition}\label{prop:r12-Lipschitz}
		Suppose that $q(x)$ satisfies the small norm restriction \eqref{eq:small-norm}. The relations between the potential $q(x)$ and the reflection coefficients $r_1(z), r_2(z)$ are given as follows:
		\begin{enumerate}[label=(\roman*)]
			\item If $q \in H^{2,1}(\mathbb{R})$, then $r_1, r_2 \in H_z^1(\mathbb{R})$.
			\item If $q \in H^3(\mathbb{R}) \cap H^{2,1}(\mathbb{R})$, then $r_1, r_2 \in L_z^{2,1}(\mathbb{R})$ and $z^{-2}r_1, z^{-2}r_2 \in L_z^2(\mathbb{R})$.
		\end{enumerate}
		Furthermore, the mapping $H^3(\mathbb{R}) \cap H^{2,1}(\mathbb{R}) \ni q \mapsto (r_1, r_2) \in \mathcal{W}(\mathbb{R})$
				is Lipschitz continuous.
	\end{proposition}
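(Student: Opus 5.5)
The plan is to write $r_1$ and $r_2$ as quotients of the scattering coefficients already controlled in Proposition \ref{prop: Scattering-coefficients Smooth}. We have
\[
r_1(z)=-\frac{\lambda^{-1}\breve b(\lambda)}{2\breve a(\lambda)},\qquad r_2(z)=\frac{2\lambda b(\lambda)}{a(\lambda)},
\]
so each is a numerator in $H^1_z\cap L^{2,1}_z$ multiplied by the reciprocal of a denominator bounded away from zero. Under \eqref{eq:small-norm}, Proposition \ref{prop:reflection-small-bound} gives $|a|,|\breve a|\ge 2-\mathrm e^{\rho}>0$ on $\mathbb{R}$. Proposition \ref{prop: Scattering-coefficients Smooth} gives $a-a_\infty,\ \breve a-\breve a_\infty\in H^1_z$ with $|a_\infty|=|\breve a_\infty|$ positive. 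It follows that $1/a-1/a_\infty=(a_\infty-a)/(a\,a_\infty)$ lies in $H^1_z$: it is in $L^2_z$ because of the lower bound, and its derivative $-a'/a^2$ is in $L^2_z$ for the same reason.

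For (i), write $r_2=2(\lambda b)\cdot\big(1/a_\infty+(1/a-1/a_\infty)\big)$. Since $H^1_z$ is a Banach algebra and $\lambda b\in H^1_z$, this gives $r_2\in H^1_z$. The same argument applied to $\lambda^{-1}\breve b$ gives $r_1\in H^1_z$.

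For (ii), the weight is easy. Because $1/a\in L^\infty$, we get $\|\langle z\rangle r_2\|_{L^2}\le 2\|1/a\|_\infty\|\langle z\rangle\lambda b\|_{L^2}$, and the $L^{2,1}$ membership of $\lambda b$ and $\lambda^{-1}\breve b$ from Proposition \ref{prop: Scattering-coefficients Smooth} finishes it. The condition $z^{-2}r_j\in L^2$ is exactly Proposition \ref{prop:zminus2-reflection}, which uses Proposition \ref{prop: b-lambda-O5} near $z=0$ and boundedness of $r_j$ away from $0$. Together these place $(r_1,r_2)$ in $\mathcal W(\mathbb{R})$.

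For the Lipschitz statement, take $q,\tilde q$ in a ball of radius $\delta$ in $H^3\cap H^{2,1}$, both satisfying \eqref{eq:small-norm}. Then decompose
\[
r_2-\tilde r_2=\frac{2(\lambda b-\tilde\lambda\tilde b)}{a}+2\tilde\lambda\tilde b\,\frac{\tilde a-a}{a\tilde a},
\]
where $\tilde\lambda\tilde b$ denotes $\lambda\tilde b(\lambda)$. The $H^1_z$ and $L^{2,1}_z$ differences of the numerators and of $a-a_\infty$ are controlled by Corollary \ref{corollary: Scattering-coefficients-L-C}. The constant term $a_\infty-\tilde a_\infty$ is handled by the phase-difference estimate \eqref{eq:term1}. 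The denominators are uniformly bounded below by $2-\mathrm e^{\rho}$. For the $z^{-2}$ weight, split $\mathbb{R}$ into $(-\delta_0,\delta_0)$ and its complement:
\begin{itemize}
\item away from the origin, $\|z^{-2}f\|_{L^2}\le C\|f\|_{L^\infty}\le C\|f\|_{H^1}$, so the $H^1$ Lipschitz bound applies directly;
\item near the origin, I would use the formula \eqref{eq:zminus1-lambdaminus1-b-formula} for $z^{-1}\lambda^{-1}b$, whose right-hand side is Lipschitz in $q$ via the Volterra operator $\mathcal T$ and Plancherel.
\end{itemize}

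I expect the main obstacle to be the local $z^{-2}$ estimate near $z=0$, since it relies on the vanishing $b_3=0$ from Proposition \ref{prop: b-lambda-O5}. I would try to make this quantitative by expanding $\mathcal T\psi_{11}^-$ once more in $z$. This gives an explicit remainder representation for $z^{-2}\lambda^{-1}b$, namely a Fourier-type integral of a density built from $q$, that is Lipschitz in $q$ in the $H^3\cap H^{2,1}$ norm. For $r_1$, a crude bound from $z^{-1}\lambda^{-1}\breve b$ on $(-\delta_0,\delta_0)$ will not suffice, so I would apply the same expansion to $\breve b=-b$ at $t=0$.
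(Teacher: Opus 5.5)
Your argument for (i), (ii) and for the $H^1_z\cap L^{2,1}_z$ part of the Lipschitz bound is the paper's own. It uses the quotient representation of $r_1,r_2$, the lower bound $|a|,|\breve a|\ge 2-\mathrm{e}^{\rho}$, the algebra property of $H^1_z$, and Proposition \ref{prop:zminus2-reflection} for the $z^{-2}$ weight. You also correctly see something the paper's proof skips. Corollary \ref{corollary: Scattering-coefficients-L-C} says nothing about the $z^{-2}$ weight contained in the $\mathcal{W}$-norm, so Lipschitz continuity into $\mathcal{W}$ needs the separate near-origin argument you sketch. Your treatment of $|z|\ge\delta_0$ is fine, and so is your treatment of $r_2$ via $z^{-2}r_2=2z^{-1}\lambda^{-1}b/a$. (A small slip: $M$ is complex in general, so $|a_\infty|\,|\breve a_\infty|=1$ rather than $|a_\infty|=|\breve a_\infty|$. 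Only nonvanishing is used, so this is harmless.)

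The genuine gap is the step you single out, the control of $z^{-2}r_1$ near $z=0$. It is inherited from the paper, and it cannot be closed for general $q$.

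\emph{Why $b_3=0$ fails in general.} Put $B(z):=z^{-1}\lambda^{-1}b(\lambda)$. Evaluate \eqref{eq:zminus1-lambdaminus1-b-formula} at $z=0$, where $\psi_{11}^-\equiv 1$, and integrate by parts. This gives $B(0)=\int_{\mathbb{R}}\bigl(2q+\mathrm{i}(q^{PT})_xq^2\bigr)\,dx=b_3$, a functional of $q$ that does not vanish identically. For even $q$ the cubic term integrates to zero. For example, $q=\epsilon\,\mathrm{e}^{-x^2}$ with $\epsilon$ small satisfies \eqref{eq:small-norm} and has $b_3=2\sqrt{\pi}\,\epsilon\neq0$. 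Proposition \ref{prop: b-lambda-O5} obtains $b_3=0$ only by substituting $q-\mathrm{i}qq^{PT}q_x=q_{xt}$. That is, it assumes $q$ already evolves under the nFL equation, which is not available for a fixed potential in the direct problem.

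\emph{Consequence for $r_1$.} Since $\breve b=-b$ and $\breve a(0)=1$, you get $r_1(z)=\tfrac12 b_3 z+\mathcal{O}(z^2)$. Hence $z^{-2}r_1\notin L^2$ near $0$ whenever $b_3\neq0$. Expanding $\mathcal{T}\psi_{11}^-$ further only isolates the term $b_3/z$; it cannot remove it. So the $z^{-2}r_1$ claim in (ii), and the $\mathcal{W}$-valued Lipschitz statement, can only hold on the class $\{\int_{\mathbb{R}}(q-\mathrm{i}qq^{PT}q_x)\,dx=0\}$. This class is natural: the condition is necessary for any solution in $C(\mathbb{R};H^3\cap H^{2,1})$, as you see by integrating the equation in $x$ and using $q(\cdot,t)\in L^1$.

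\emph{How to finish on that class.} No further expansion is needed. By Proposition \ref{prop:z-minus-one-b}, $B\in H^1_z$, and on the class $B(0)=0$. Hardy's inequality $\|f/z\|_{L^2}\le 2\|f'\|_{L^2}$ for $f\in H^1$ with $f(0)=0$ applies to $f=B$ and to $f=B-\tilde B$. It yields $z^{-2}\lambda^{-1}b\in L^2$ together with the Lipschitz bound. You still need to prove that $q\mapsto B$ is Lipschitz into $H^1_z$, since Proposition \ref{prop:z-minus-one-b} only asserts membership.
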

	
	\begin{proof}
		According to the definition \eqref{eq:r12-definition}, the reflection coefficients can be written as:
		\begin{equation}
			r_1(z) = -\frac{\lambda^{-1}\breve{b}(\lambda)}{2\breve{a}(\lambda)}, \quad r_2(z) = \frac{2\lambda b(\lambda)}{a(\lambda)}.
		\end{equation}
		By Proposition \ref{prop:zminus2-reflection}, under the small norm condition \eqref{eq:small-norm}, the denominators are uniformly bounded away from zero on the continuous spectrum:
		\begin{equation}
			|a(\lambda)| \ge a_0 > 0, \quad |\breve{a}(\lambda)| \ge \breve{a}_0 > 0, \quad \forall \lambda \in \Sigma_\lambda.
		\end{equation}
		
		From Corollary \ref{corollary: Scattering-coefficients-L-C}, we know that if $q \in H^{2,1}(\mathbb{R})$, then $\lambda^{-1}\breve{b}(\lambda), \lambda b(\lambda) \in H_z^1(\mathbb{R})$, and $a(\lambda) - a_\infty, \breve{a}(\lambda) - \breve{a}_\infty \in H_z^1(\mathbb{R})$. Since $H_z^1(\mathbb{R})$ is a Banach algebra, and the reciprocal of a non-vanishing function of the form constant + $H_z^1(\mathbb{R})$ remains in the same class, the quotients $r_1(z)$ and $r_2(z)$ also belong to $H_z^1(\mathbb{R})$.
		
		Similarly, if $q \in H^3(\mathbb{R}) \cap H^{2,1}(\mathbb{R})$, Corollary \ref{corollary: Scattering-coefficients-L-C} ensures that $\lambda^{-1}\breve{b}(\lambda), \lambda b(\lambda) \in L_z^{2,1}(\mathbb{R})$. Since multiplication by an $L_z^\infty$ bounded function preserves the $L_z^{2,1}(\mathbb{R})$ property, we have $r_1, r_2 \in L_z^{2,1}(\mathbb{R})$. The additional property $z^{-2}r_1, z^{-2}r_2 \in L_z^2(\mathbb{R})$ has already been strictly established in Proposition \ref{prop:zminus2-reflection}.
		
		Finally, the Lipschitz continuity of the mapping from the potential $q$ to $(r_1, r_2)$ follows directly from the Lipschitz continuity of the mappings from $q$ to the basic scattering data $\lambda^{-1}\breve{b}(\lambda), \lambda b(\lambda), a(\lambda)$, and $\breve{a}(\lambda)$, as stated in Corollary \ref{corollary: Scattering-coefficients-L-C}. Because the denominator terms $a(\lambda)$ and $\breve{a}(\lambda)$ are globally strictly separated from zero, the quotient map inherits and preserves the Lipschitz continuity.
		\hfill
	\end{proof}
	
	Before we formally estimate the solution, we also need to analyze some analytical properties of the original reflection coefficient.
	\begin{proposition}\label{prop:r-lambda-property}
		If $r_1(z), r_2(z) \in H_z^1(\mathbb{R}) \cap L_z^{2,1}(\mathbb{R})$ and $q$ satisfies \eqref{eq:small-norm}, then $r(\lambda), \breve{r}(\lambda) \in L_z^{2,1}(\mathbb{R}) \cap L_z^\infty(\mathbb{R})$.
	\end{proposition}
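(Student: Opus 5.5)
We express $r$ and $\breve r$ through $r_1,r_2$ using \eqref{refl} and \eqref{eq:r12-definition}. Writing $z=\lambda^2$, these give
\begin{equation*}
\breve r(\lambda)=-2\lambda\, r_1(z),\qquad r(\lambda)=\frac{r_2(z)}{2\lambda},\qquad \lambda\in\Sigma_\lambda .
\end{equation*}
So $|\breve r|^2=4|z|\,|r_1(z)|^2$ and $|r|^2=|r_2(z)|^2/(4|z|)$. Every weighted $L^2_z$ or $L^\infty_z$ statement about $r,\breve r$ therefore reduces to a statement about $r_1,r_2$ with an extra factor $|z|^{\pm1/2}$. We treat the two coefficients separately and split $\mathbb{R}_z$ into $|z|\le1$ and $|z|\ge1$.

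\textbf{$L^\infty$ bounds.} Proposition \ref{prop:reflection-small-bound} already gives $|r|,|\breve r|\le c_0$ on $\Sigma_\lambda$ under \eqref{eq:small-norm}, so this part is immediate. For a route through the hypotheses, one could also use the following. For $\breve r$, the factor $|\lambda|\le1$ on $|z|\le1$ together with $r_1\in H^1_z\subset L^\infty_z$ controls it there. On $|z|\ge1$, use $\lambda\breve b\in H^1_z$ from Proposition \ref{prop: Scattering-coefficients Smooth} and $|\breve a|\ge\breve a_0$. For $r$, on $|z|\ge1$ the factor $|\lambda|^{-1}\le1$ suffices. Near $z=0$, use $b=\mathcal O(\lambda^5)$ from Proposition \ref{prop: b-lambda-O5}, or equivalently $z^{-1}\lambda^{-1}b\in H^1_z\subset L^\infty$ from Proposition \ref{prop:z-minus-one-b}, so that $r=\lambda^{-1}b/a$ is bounded.

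\textbf{$L^{2,1}_z$ bounds.} For $\breve r$ we need $\int\langle z\rangle^2|z||r_1|^2\,dz<\infty$. On $|z|\le1$ this follows from $r_1\in L^2$. On $|z|\ge1$ we have $|z|\langle z\rangle^2\lesssim|z|^3$, and $r_1\in L^{2,1}$ only gives the weight $|z|^2$. To gain the missing half power, use $\breve r=\breve b/\breve a$ directly: $\lambda\breve b\in L^{2,1}_z$ by Proposition \ref{prop: Scattering-coefficients Smooth}, and $|\breve a|\ge\breve a_0$, so $\breve r=\lambda^{-1}(\lambda\breve b)/\breve a$ lies in $L^{2,1}_z$ on $|z|\ge1$ because $|\lambda^{-1}|\le1$ there.

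For $r$, on $|z|\ge1$ the factor $1/(4|z|)\le 1/4$ gives $\|\langle z\rangle r\|_{L^2(|z|\ge1)}\le \|r_2\|_{L^{2,1}}$. On $|z|\le1$ the weight $\langle z\rangle$ is harmless, and we need $|z|^{-1/2}r_2\in L^2(-1,1)$. This follows from Proposition \ref{prop:zminus2-reflection}: since $z^{-2}r_2\in L^2$, we get $|z|^{-1/2}|r_2|\le|z|^{-2}|r_2|$ on $|z|\le1$.

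The main obstacle is the large-$z$ part of $\breve r$, where the hypotheses on $r_1$ alone are half a power short. We resolve it by invoking the scattering-coefficient regularity $\lambda\breve b\in L^{2,1}_z$ together with the lower bound on $\breve a$ supplied by \eqref{eq:small-norm}, rather than relying only on the stated assumptions on $r_1,r_2$.
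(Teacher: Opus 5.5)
Your proposal is correct and follows essentially the paper's proof: the same split at $|z|=1$, with $\breve r=-2\lambda r_1$ near $z=0$, $r=r_2/(2\lambda)$ for large $|z|$, the $L^\infty$ bound taken from Proposition \ref{prop:reflection-small-bound}, and the missing half power for $\breve r$ at infinity recovered from $\breve r=\lambda^{-1}(\lambda\breve b)/\breve a$ with $\lambda\breve b\in L^{2,1}_z$ and $|\breve a|\ge\breve a_0$. The only deviation is for $r$ near $z=0$: you use $z^{-2}r_2\in L^2$, whereas the paper uses boundedness of $b/a$ on a compact set; either works, and the bound $|r|\le c_0$ you already have would suffice there on its own.
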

	
	\begin{proof}
		First, we prove the boundedness in $L_{z}^{\infty}(\mathbb{R})$. From \ref{prop:reflection-small-bound}, we have
		$$
			|r(\lambda)| \le c_0, \quad |\breve{r}(\lambda)| \le c_0, \quad \forall \lambda \in \Sigma_{\lambda}.
		$$
		This strictly bounds them by a constant, thus yielding $r(\lambda), \breve{r}(\lambda) \in L_{z}^{\infty}(\mathbb{R})$.
		
		Next, we prove the property in $L_{z}^{2,1}(\mathbb{R})$. According to \eqref{eq:r12-definition}, we have
		\begin{equation}
			r_1(z) = -\frac{\breve{b}(\lambda)}{2\lambda\breve{a}(\lambda)} = -\frac{\breve{r}(\lambda)}{2\lambda}, \quad r_2(z) = \frac{2\lambda b(\lambda)}{a(\lambda)} = 2\lambda r(\lambda).
		\end{equation}
		To avoid singularities at $\lambda=0$ and the scaling issue as $|\lambda| \to \infty$, we write $r(\lambda)$ in a piecewise form:
		\begin{equation}
			r(\lambda) = \begin{cases} \dfrac{b(\lambda)}{a(\lambda)}, & |\lambda| \le 1, \\[0.8em] \dfrac{r_2(z)}{2\lambda}, & |\lambda| > 1. \end{cases}
		\end{equation}
		For $|\lambda| \le 1$, since $b(\lambda) = \mathcal{O}(\lambda^5)$ as $\lambda \to 0$ and $a(\lambda)$ has no zeros, $r(\lambda)$ is continuous and bounded on this compact set. For $|\lambda| > 1$, since $(2\lambda)^{-1}$ is bounded and $r_2(z) \in L_z^{2,1}(\mathbb{R})$, this part belongs to $L_z^{2,1}(\mathbb{R})$.
		
		Similarly, for $\breve{r}(\lambda)$, we write:
		\begin{equation}
			\breve{r}(\lambda) = \begin{cases} -2\lambda r_1(z), & |\lambda| \le 1, \\[0.8em] \dfrac{\breve{b}(\lambda)}{\breve{a}(\lambda)}, & |\lambda| > 1. \end{cases}
		\end{equation}
		For $|\lambda| \le 1$, $2\lambda$ is bounded and $r_1(z) \in L_z^{2,1}(\mathbb{R})$, so the product is in $L_z^{2,1}(\mathbb{R})$.
		For $|\lambda| > 1$, we use the property $\lambda\breve{b}(\lambda) \in L_z^{2,1}(\mathbb{R})$ established in Corollary \ref{corollary: Scattering-coefficients-L-C}. We can write the numerator as $\breve{b}(\lambda) = \lambda^{-1} (\lambda\breve{b}(\lambda))$. Since $|\lambda^{-1}| < 1$ in this region, the multiplication by a bounded function ensures $\breve{b}(\lambda) \in L_z^{2,1}(\mathbb{R})$. Because $\breve{a}(\lambda)$ is uniformly bounded away from zero, we have $\breve{r}(\lambda) \in L_z^{2,1}(\mathbb{R})$.
		Combining these results, we conclude that $r(\lambda), \breve{r}(\lambda) \in L_{z}^{2,1}(\mathbb{R})$.
		
		Combining these results, we conclude that $r(\lambda), \breve{r}(\lambda) \in L_{z}^{2,1}(\mathbb{R})$.
		\hfill
	\end{proof}
	
	\begin{proposition}\label{prop:lambda-r-infinity-bound}
		If $r_1(z), r_2(z) \in H_z^1(\mathbb{R}) \cap L_z^{2,1}(\mathbb{R})$, then
		\begin{equation}\label{eq:r1,2-H1-L21-bounds}
			\|\lambda r_1(z)\|_{L_z^\infty} \le \|r_1\|_{H^1 \cap L^{2,1}}, \quad \|\lambda r_2(z)\|_{L_z^\infty} \le \|r_2\|_{H^1 \cap L^{2,1}}.
		\end{equation}
	\end{proposition}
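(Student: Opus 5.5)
We need to bound $|\lambda r_j(z)|$ pointwise for $z\in\mathbb{R}$, with $\lambda^2=z$, so $|\lambda|=|z|^{1/2}$. The plan is to reduce the claim to the one-dimensional estimate
\[
\sup_{z\in\mathbb{R}}|z|^{1/2}|f(z)|\le \|f\|_{H^1\cap L^{2,1}}
\]
for a generic $f\in H^1(\mathbb{R})\cap L^{2,1}(\mathbb{R})$, and then apply it to $f=r_1$ and $f=r_2$. Here the norm is understood as $\|f\|_{H^1\cap L^{2,1}}^2=\|f\|_{H^1}^2+\|f\|_{L^{2,1}}^2$, or, if that gives a constant, up to a harmless absolute constant that can be absorbed by this normalization choice.

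The key step is a fundamental-theorem-of-calculus argument applied to the function $g(z):=z\,|f(z)|^2$, which is absolutely continuous. Since $f\in H^1$ and $zf\in L^2$, we have $g\in L^1$ and $g'=|f|^2+2z\,\mathrm{Re}(\bar f f')\in L^1$. Hence $g(z)\to0$ along sequences tending to $\pm\infty$, and therefore
\[
|z|\,|f(z)|^2\le \int_{\mathbb{R}}\bigl(|f|^2+2|z f|\,|f'|\bigr)\,dz .
\]

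By Cauchy--Schwarz,
\[
2\int|zf|\,|f'|\le 2\|zf\|_{L^2}\|f'\|_{L^2}\le \|zf\|_{L^2}^2+\|f'\|_{L^2}^2 .
\]
Combining, $|z||f(z)|^2\le \|f\|_{L^2}^2+\|f'\|_{L^2}^2+\|zf\|_{L^2}^2\le \|f\|_{H^1}^2+\|f\|_{L^{2,1}}^2$, using $\langle z\rangle\ge|z|$. Taking square roots and the supremum over $z$ gives $\|\lambda f\|_{L^\infty_z}\le\|f\|_{H^1\cap L^{2,1}}$.

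The only delicate point is justifying the vanishing of the boundary terms $g(z)\to0$ as $|z|\to\infty$. I would handle it either by density of Schwartz functions in $H^1\cap L^{2,1}$ together with continuity of both sides, or directly from $g,g'\in L^1$, which forces $g$ to have limits at $\pm\infty$ that must be zero. To avoid depending on how the paper normalizes the intersection norm, I would state the bound with the norm defined as $(\|f\|_{H^1}^2+\|f\|_{L^{2,1}}^2)^{1/2}$, for which the constant is exactly one.
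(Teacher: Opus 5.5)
Your proof is correct and is essentially the paper's argument: both apply the fundamental theorem of calculus to $z\,r_j^2(z)$ (you use $z|f(z)|^2$), bound its derivative $r_j^2+2z\,r_j r_j'$ in $L^1$, and close with Cauchy--Schwarz to reach $\|r_j\|_{L^2}^2+2\|r_j'\|_{L^2}\|z r_j\|_{L^2}$. The only difference is that the paper integrates from $0$, where $z\,r_j^2(z)$ vanishes, so the boundary term at infinity that you flag as the delicate point never has to be handled.
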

	
	\begin{proof}
		For $r_j(z) \in H_z^1(\mathbb{R}) \cap L_z^{2,1}(\mathbb{R})$ ($j=1,2$), using $z = \lambda^2$ and the fundamental theorem of calculus, we have:
		\begin{equation}
			|\lambda r_j(z)|^2 = |z r_j^2(z)| = \left| \int_0^z \frac{d}{ds} \left( s r_j^2(s) \right) ds \right|.
		\end{equation}
		Applying the product rule to the integrand gives:
		\begin{equation}
			|\lambda r_j(z)|^2 = \left| \int_0^z \left( r_j(s)^2 + 2s r_j(s) r_j'(s) \right) ds \right|.
		\end{equation}
		By extending the integration interval to the whole real line $\mathbb{R}$ and using the triangle inequality, we obtain:
		\begin{equation}
			|\lambda r_j(z)|^2 \le \int_{\mathbb{R}} |r_j(s)|^2 \, ds + 2 \int_{\mathbb{R}} |s r_j(s)| |r_j'(s)| \, ds.
		\end{equation}
		Applying the Cauchy-Schwarz inequality to the second term yields:
		\begin{equation}
			|\lambda r_j(z)|^2 \le \|r_j\|_{L^2}^2 + 2 \|r_j'\|_{L^2} \|z r_j\|_{L^2}.
		\end{equation}
		Since $r_j \in H_z^1(\mathbb{R}) \cap L_z^{2,1}(\mathbb{R})$, the norms $\|r_j\|_{L^2}$, $\|r_j'\|_{L^2}$, and $\|z r_j\|_{L^2}$ are all bounded. We can bound the entire right-hand side by the mixed norm:
		\begin{equation}
			|\lambda r_j(z)|^2 \le \|r_j\|_{H^1 \cap L^{2,1}}^2.
		\end{equation}
		Taking the square root on both sides, we arrive at the desired estimate:
		\begin{equation}
			\|\lambda r_j(z)\|_{L_z^\infty} \le \|r_j\|_{H^1 \cap L^{2,1}}, \quad j=1, 2.
		\end{equation}
		\hfill
	\end{proof}

\section{Inverse Scattering Transform}\label{section-IST}
	
\subsection{Transformations of RH problem}
	
For a given function $h(z)\in L^{p}(\mathbb{R})$ with $1\le p<\infty$, the Cauchy operator is defined by
	\begin{equation}\label{eq:Cauchy-operator}
		\mathcal{C}(h)(z):=\frac{1}{2\pi \mathrm{i}} \int_{\mathbb{R}} \frac{h(s)}{s-z}\, ds, \quad z \in \mathbb{C} \setminus \mathbb{R}.
	\end{equation}
	When $z\pm \mathrm{i}\epsilon$ approaches to a point at the real axis $z\in\mathbb{R}$ transversely from the upper and the lower half planes, the Cauchy operator $\mathcal{C}$ becomes the Plemelj projection operators defined respectively by
	\begin{equation}\label{eq:Cauchy-projection}
		\mathcal{P}^{\pm}(h)(z):=\lim_{\epsilon \downarrow 0} \frac{1}{2 \pi \mathrm{i}} \int_{\mathbb{R}} \frac{h(s)}{s-(z\pm \mathrm{i} \epsilon)}\, ds, \quad z \in \mathbb{R}.
	\end{equation}
	
We list the basic properties of the Cauchy and the Plemelj projection operators in the following proposition\textcolor{red}{\cite{Duren1970,Pelinovsky2018,Zhou1998}}.
	\begin{proposition}\label{prop:C-P-properties}
		For every $h\in L^{p}(\mathbb{R})$, $1\le p<\infty$ the Cauchy operator $\mathcal{C}(h)$ and the projection operator $\mathcal{P}^{\pm}(h)$ has the following properties:
		
		\begin{enumerate}
			\item  $\mathcal{C}(h)$ is analytic in $\mathbb{C}^{\pm}$ and goes to zero as $|z| \to \infty$.
			
			\item  If $h \in L^{1}(\mathbb{R})$, then in $\mathbb{C}^{+}$ or $\mathbb{C}^{-}$, as $\lambda$ alongs the contour which is not tangential to $\mathbb{R}$ going to $\infty$, the Cauchy operator admits the following asymptotic
			\begin{equation}\label{eq:Cauchy-operator-asymptotic}
				\lim_{|z| \to \infty} z \mathcal{C}(h) (z) = - \dfrac{1}{2 \pi \mathrm{i}} \int_{\mathbb{R}} h(s) \, ds.
			\end{equation}
			
			And the projection operator $\mathcal{P}^{\pm}(h)$ has the following properties:
			
			\item $\mathcal{C}(h)$ approaches to $\mathcal{P}^{\pm} (h)$ almost everywhere, when a point $z \in \mathbb{C}^{\pm}$ approaches to a point $z_0 \in \mathbb{R}$ by any non-tangential contour from $\mathbb{C}^{\pm}$.
			
			\item For $1 < p < \infty$, there exists a positive constant $c$ such that
			\begin{equation}\label{eq:C-P-bounds}
				\| \mathcal { P } ^{\pm} ( h ) \| _ {L^{p} ( \mathbb{R}) } \leq c\| h \| _ {L^{p}(\mathbb{R} )} .
			\end{equation}
		\end{enumerate}
	\end{proposition}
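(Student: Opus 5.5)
The statement collects the standard properties of the Cauchy integral on the real line. I would not prove them from scratch; I would reduce each item to classical Hardy-space and singular-integral theory, as in Duren, Pelinovsky et al., and Zhou.

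\textbf{Items 1 and 2.} For item~1, fix $z$ with $\operatorname{Im}z\neq0$. The kernel $(s-z)^{-1}$ lies in $L^{p'}(\mathbb{R})$ with $1/p+1/p'=1$, since $p'>1$ when $p<\infty$. So $\mathcal{C}(h)(z)$ is well defined by H\"older's inequality. Differentiation under the integral sign is justified locally uniformly, because $\partial_z(s-z)^{-1}$ is dominated by an $L^{p'}$ function on compact subsets of $\mathbb{C}^{\pm}$; this gives analyticity.

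For decay, split $h=h\chi_{|s|\le R}+h\chi_{|s|>R}$. The compactly supported piece decays like $|z|^{-1}$. The tail piece is bounded by $\|h\chi_{|s|>R}\|_{L^p}\,\|(s-z)^{-1}\|_{L^{p'}}$. Along non-tangential directions, $\|(s-z)^{-1}\|_{L^{p'}}\sim|\operatorname{Im}z|^{-1/p}\to0$. Letting $R\to\infty$ gives $\mathcal{C}(h)(z)\to0$.

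For item~2, take $h\in L^1$ and write
\[
z\,\mathcal{C}(h)(z)=-\frac{1}{2\pi\mathrm{i}}\int_{\mathbb{R}}h(s)\,\frac{z}{z-s}\,ds .
\]
On a non-tangential contour one has $|z-s|\ge c|z|$ uniformly in $s$. Hence $z/(z-s)$ is uniformly bounded and tends to $1$ pointwise. Dominated convergence with dominant $C|h|$ then gives the stated limit.

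\textbf{Item 3.} This is the Sokhotski--Plemelj/Fatou-type statement. I would write
\[
\mathcal{C}(h)(z_0\pm\mathrm{i}\epsilon)=\tfrac12\bigl(P_\epsilon*h\bigr)(z_0)\pm\tfrac{\mathrm{i}}{2}\bigl(Q_\epsilon*h\bigr)(z_0),
\]
where $P_\epsilon$ is the Poisson kernel and $Q_\epsilon$ the conjugate Poisson kernel. By the Lebesgue differentiation theorem, the Poisson part converges a.e. and non-tangentially to $h$. The conjugate Poisson part converges a.e. to the Hilbert transform $Hh$; this is the standard truncation argument, with the maximal function controlled by the Hardy--Littlewood maximal function. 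This yields
\[
\mathcal{P}^{\pm}h=\pm\tfrac12 h+\tfrac{\mathrm{i}}{2}Hh \quad\text{a.e.}
\]
For $p=1$ the a.e. convergence still holds through the weak-type $(1,1)$ bound for the maximal truncated Hilbert transform.

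\textbf{Item 4.} Using this representation, the bound $\|\mathcal{P}^{\pm}h\|_{L^p}\le c\|h\|_{L^p}$ reduces to M.~Riesz's theorem: $H$ is bounded on $L^p$ for $1<p<\infty$. For $p=2$ it is immediate from Plancherel, since $\mathcal{P}^{\pm}$ are Fourier multipliers by $\pm\chi_{\mathbb{R}^{\pm}}$ (with appropriate sign convention), so $c=1$. General $p$ follows by Marcinkiewicz interpolation and duality, or by simply citing Riesz.

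\textbf{Main obstacle.} The only genuinely delicate point is the a.e. existence of the boundary values in item~3 for merely $L^p$ data, in particular $p=1$. There I would rely on the maximal-function control of the truncated Hilbert transform and simply invoke the cited classical results rather than reprove them.
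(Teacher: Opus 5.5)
Your proposal is correct and follows the paper, which gives no proof and simply cites Duren, Pelinovsky et al., and Zhou for these classical facts; your sketches are sound: H\"older plus dominated convergence for items 1--2 (with the decay in item 1 rightly taken along non-tangential directions), the Poisson/conjugate-Poisson splitting for item 3, and M.~Riesz for item 4. One slip to fix: in your displayed splitting the $\pm$ belongs on the Poisson term, $\mathcal{C}(h)(z_0\pm\mathrm{i}\epsilon)=\pm\tfrac12(P_\epsilon*h)(z_0)+\tfrac{\mathrm{i}}{2}(Q_\epsilon*h)(z_0)$, and this is the form that your (correct) conclusion $\mathcal{P}^{\pm}h=\pm\tfrac12 h+\tfrac{\mathrm{i}}{2}Hh$ requires.
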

	
	The RH Problem \ref{RH:z-plane} admits the Beals-Coifman solution
	\begin{equation}\label{eq:BC-solution}
		N_{\pm}(x;z) = I + \mathcal{P}^{\pm}(N_{-}(x;\cdot) R(x;\cdot))(z), \quad z \in \mathbb{R},
	\end{equation}
	which can be used to estimate the columns of $N_{\pm}(x;z) - I$.
	Next, we introduce the $2 \times 2$ matrix
	\begin{equation}\label{eq:T-def}
		T(x;z) = (N_{+,1}(x;z) - e_1, N_{-,2}(x;z) - e_2).
	\end{equation}
	From the jump condition \eqref{eq:z-jump-condition}, we can deduce that $T(x;z)$ satisfies the following integral equation
	\begin{equation}
		T - \mathcal{P}^{+}(T R_{-}) - \mathcal{P}^{-}(T R_{+}) = F
	\end{equation}
	where
	\begin{equation}\label{eq:R_pm-def}
		R_{+}(x;z) = \begin{pmatrix} 0 & r_1(z) \mathrm{e}^{2\mathrm{i}zx} \\ 0 & 0 \end{pmatrix}, \quad
		R_{-}(x;z) = \begin{pmatrix} 0 & 0 \\ r_2(z) \mathrm{e}^{-2\mathrm{i}zx} & 0 \end{pmatrix},
	\end{equation}
	and
	\begin{equation}\label{eq:F-matrix}
		F(x;z) = \left( \mathcal{P}^{+}(r_2(z) \mathrm{e}^{-2\mathrm{i}zx})e_2, \, \mathcal{P}^{-}(r_1(z) \mathrm{e}^{2\mathrm{i}zx})e_1 \right)
	\end{equation}
	
	An estimate on $F$ relies on the mapping properties of the projection operators. Our main task is to make a further estimate on the Cauchy integral projection in the equation above. To analyze the derivatives of $N_{+,1}(x;z)$ and $N_{-,2}(x;z)$, we take the derivative of the inhomogeneous equation in $x$, which gives
	\begin{equation}\label{eq:T-derivative}
		\partial_x T - \mathcal{P}^{+}(\partial_x T R_{-}) - \mathcal{P}^{-}(\partial_x T R_{+}) = \tilde{F},
	\end{equation}
	where
	\begin{equation}\label{tilde-F-def}
		\begin{split}
			\tilde{F}(x;z) =& \, 2\mathrm{i} \left( e_2 \mathcal{P}^{+}(-z r_2(z) \mathrm{e}^{-2\mathrm{i}zx}), \, e_1 \mathcal{P}^{-}(z r_1(z) \mathrm{e}^{2\mathrm{i}zx}) \right) \\
			&+ 2\mathrm{i} \begin{pmatrix}
				\mathcal{P}^{+}(-z r_2(z) N_{-,12}(x;z) \mathrm{e}^{-2\mathrm{i}zx}) & \mathcal{P}^{-}(z r_1(z) (N_{+,11}(x;z) - 1) \mathrm{e}^{2\mathrm{i}zx}) \v\\
				\mathcal{P}^{+}(-z r_2(z) (N_{-,22}(x;z) - 1) \mathrm{e}^{-2\mathrm{i}zx}) & \mathcal{P}^{-}(z r_1(z) N_{+,21}(x;z) \mathrm{e}^{2\mathrm{i}zx})
			\end{pmatrix}.
		\end{split}
	\end{equation}
	
	Proposition \ref{prop:lambda-r-infinity-bound} inspires us to start with $\lambda r_{1,2}$ to solve the problem. Therefore, for every $\lambda \in \mathbb{C} \setminus \{0\}$, we introduce two new matrices
	\begin{equation}\label{eq:tau1,2-def}
		\tau_1(\lambda) := \begin{pmatrix}
			1 & 0 \\
			0 & -2 \lambda
		\end{pmatrix}, \quad
		\tau_2(\lambda) := \begin{pmatrix}
			(-2 \lambda)^{-1} & 0 \\
			0 & 1
		\end{pmatrix},
	\end{equation}
	to give the original problem the form we want. With the definition of $V(x;\lambda)$ in \eqref{eq:V-lambda}, $\tau_{1,2}$ admits
	\begin{equation}
		\tau_1^{-1}(\lambda) R(x;z) \tau_1(\lambda) = \tau_2^{-1}(\lambda) R(x;z) \tau_2(\lambda) = V(x;\lambda), \quad z\in \mathbb{R},\ \lambda \in \Sigma_{\lambda}.
	\end{equation}
	It can be seen from the following analysis that the transformation of $N(x;z)$ using these two matrices could not only complete the estimation of $N(x;z)$ itself and its derivative, but also make full use of the good properties of the jump matrix $V(x;\lambda)$.
	
	In the following RH problem, the properties of the matrix elements are characterized in the $z$-plane: Under \eqref{eq:small-norm}, if $r_{1,2} \in H_z^1(\mathbb{R}) \cap L_z^{2,1}(\mathbb{R})$, then Proposition \ref{prop:r-lambda-property} implies that $V \in L_z^{1}(\mathbb{R}) \cap L_z^\infty(\mathbb{R})$ and $F(x;z) \in L_z^2(\mathbb{R})$ for every $x \in \mathbb{R}$. We consider the class of solutions to the RH Problem \ref{RH:z-plane} for every $x \in \mathbb{R}$. Therefore, we equivalently reduce the RH Problem \ref{RH:z-plane} in the $z$-plane to a new RH problem related to the matrix $V(x;\lambda)$ instead of the matrix $R(x;z)$.
	
	\begin{RH}\label{RH:G-problem}
		Find a matrix function $G_j(x;\lambda)$ ($j=1,2$) with the following properties:
		\begin{enumerate}
			\item \textbf{Analyticity:} $G_j(x;\lambda)$ are analytic functions of $z$ in $\mathbb{C}^{\pm}$.
			
			\item \textbf{Jump condition:} $G_j(x;\lambda)$ satisfies the jump condition
			\begin{equation}\label{eq:Jump-condition-G}
				G_{j,+}(x;\lambda) = G_{j,-}(x;\lambda)(I + V(x;\lambda)) + D_j(x;\lambda), \quad \lambda \in \Sigma_{\lambda},
			\end{equation}
			where
			\begin{equation}\label{eq:Gj-Dj-def}
				G_{j,\pm}(x;\lambda) := N_{\pm}(x;z)\tau_j(\lambda) - \tau_j(\lambda), \quad D_j(x;\lambda) := \tau_j(\lambda)V(x;\lambda).
			\end{equation}
			
			\item \textbf{Parity:} the columns of $G_{j,\pm}(x;\lambda)$, $G_{j,-}(x;\lambda)V$, and $D_j$ have the same parity in $\lambda$.
			
			\item \textbf{Normalization:}
			$	G_{j,\pm}(x;\lambda) \to 0 \quad \text{as} \quad |\lambda| \to \infty.$
		\end{enumerate}
	\end{RH}
	
	Based on the positive definiteness of $I + V_H(x;\lambda)$ established in Proposition \ref{prop:V-positive-definite}, we can express the solution of RH Problem \ref{RH:G-problem} with the Cauchy integrals
	\begin{equation}\label{eq:Gj-solutions}
		G_{j,\pm}(x;\lambda) = \mathcal{C}(G_{j,-}(x;\cdot)V + D_j)(z), \quad z \in \mathbb{C}^{\pm}.
	\end{equation}
	There is a solution $G_{j,-}(x;\lambda) \in L_z^2(\mathbb{R})$ to the Fredholm integral equation:
	\begin{equation}\label{eq:Gj_-solution}
		G_{j,-}(x;\lambda) = \mathcal{P}^{-}(G_{j,-}(x;\cdot)V(x;\cdot) + D_j(x;\cdot))(z), \quad z \in \mathbb{R}.
	\end{equation}
	Once $G_{j,-}(x;\lambda) \in L_z^2(\mathbb{R})$ is found, $G_{j,+}(x;\lambda) \in L_z^2(\mathbb{R})$ is obtained from the projection formula
	\begin{equation}\label{eq:Gj+-solution}
		G_{j,+}(x;\lambda) = \mathcal{P}^{+}(G_{j,-}(x;\cdot)V(x;\cdot) + D_j(x;\cdot))(z), \quad z \in \mathbb{R}.
	\end{equation}
	
	In order to show the RH problem \ref{RH:G-problem} exists a unique solution, we first need to explain the following proposition:
	\begin{proposition}\label{prop:G-unique-solution}
		Under \eqref{eq:small-norm}, for every $r(\lambda), \breve{r}(\lambda) \in L_z^{2} \cap L_z^{\infty}$ and $D(\lambda) \in L_z^2(\mathbb{R})$, there exists a unique solution $G(\lambda) \in L_z^2(\mathbb{R})$ of the linear inhomogeneous equation
		\begin{equation}\label{eq:inhomogeneous}
			(I - \mathcal{P}_{V}^{-}) G(\lambda) = D(\lambda), \quad \lambda \in \Sigma_{\lambda},
		\end{equation}
		with $ \mathcal{P}_{V}^{-} G := \mathcal{P}^{-} (GV)$.
	\end{proposition}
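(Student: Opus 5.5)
The plan is the Fredholm-alternative and vanishing-lemma argument of Zhou, applied to the operator $I-\mathcal{P}_V^-$ on $L_z^2(\mathbb{R})$, with the coercivity of Proposition \ref{prop:V-positive-definite} replacing the Hermitian symmetry that is absent in the nonlocal case. First I show that $\mathcal{P}_V^-$ is bounded on $L_z^2$. Since $r,\breve r\in L_z^\infty$ with $|r|,|\breve r|\le c_0$, every entry of $V$ is uniformly bounded, so $G\mapsto GV$ is bounded on $L_z^2$. By Proposition \ref{prop:C-P-properties}, $\mathcal{P}^-$ is bounded on $L^2$, and therefore $I-\mathcal{P}_V^-$ is bounded. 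Here I regard functions of $\lambda\in\Sigma_\lambda$ as functions of $z\in\mathbb{R}$, using the parity of columns from RH Problem \ref{RH:G-problem}.

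Next I show that $I-\mathcal{P}_V^-$ is Fredholm of index zero. The entries of $V$ are built from $r(\lambda)\mathrm{e}^{\pm 2\mathrm{i}zx}$, $\breve r(\lambda)\mathrm{e}^{\pm2\mathrm{i}zx}$ and $r\breve r$, and these lie in $L^2\cap L^\infty$, hence vanish at infinity in the $L^2$ sense. I approximate $V$ in $L^\infty$ by smooth compactly supported (or rational) $V_n$. Commutators such as $[\mathcal{P}^-,V_n]$ are compact on $L^2$, because their kernel $(V_n(s)-V_n(z))/(s-z)$ is Hilbert–Schmidt after truncation. Using $\mathcal{P}^+-\mathcal{P}^-=I$ and the factorization $I+V=(I-R_-)^{-1}(I+R_+)$-type structure as in Zhou \cite{Zhou1989}, I build an approximate inverse, for instance $I+\mathcal{P}^-(\cdot\,(I+V)^{-1}-\cdot)$. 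Here $(I+V)^{-1}$ exists and is bounded because $\det(I+V)=1$. The approximate inverse inverts $I-\mathcal{P}_V^-$ modulo compact operators. The index is zero by a homotopy $V\mapsto sV$, $s\in[0,1]$: each $I+sV$ keeps a positive Hermitian part uniformly in $s$, since the determinant computation \eqref{eq:det-VH} with $r,\breve r$ replaced by $sr,s\breve r$ still gives $\ge 1-c_0^2$. So the Fredholm property persists along the path, and the index equals that of $I$, namely zero.

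It then suffices to prove injectivity. Take $G\in L^2$ with $G=\mathcal{P}^-(GV)$. Set $G_-:=G$ and $G_+:=\mathcal{P}^+(GV)$, and define $\mathcal{G}(z)=\mathcal{C}(GV)(z)$. Then $\mathcal{G}$ is analytic off $\mathbb{R}$, is $O(z^{-1})$ in the $L^2$ sense, and satisfies $G_+=G_-(I+V)$. Following the vanishing lemma, I consider $H(z)=\mathcal{G}(z)\mathcal{G}(\bar z)^\dagger$ for $z\in\mathbb{C}^+$. This function is analytic and decays like $|z|^{-2}$ in $L^1$ on horizontal lines, so Cauchy's theorem gives $\int_{\mathbb{R}} G_+(z)G_-(z)^\dagger\,dz=0$. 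Substituting the jump gives $\int_{\mathbb{R}}G_-(I+V)G_-^\dagger\,dz=0$. Taking the Hermitian part, each row $f^\dagger$ of $G_-$ satisfies $\mathrm{Re}[f^\dagger(I+V)^{\dagger}f]\ge\eta_-|f|^2$ by \eqref{eq:V-coercive}, applied to the conjugate-transposed quadratic form. Hence $\eta_-\|G_-\|_{L^2}^2\le0$, so $G\equiv0$. The Fredholm alternative then yields existence, uniqueness, and the bounded inverse.

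I expect the main obstacle to be the Fredholm/compactness step. The reason is that $V$ is only $L^2\cap L^\infty$ and carries the oscillatory factors $\mathrm{e}^{\pm2\mathrm{i}zx}$, and the symmetric structure used for local problems is unavailable. My first attempt is uniform approximation of $V$ on $\mathbb{R}$ by continuous functions vanishing at infinity together with the compact-commutator lemma. A second, more delicate point is the pass from the $\lambda$-plane contour $\Sigma_\lambda$ to $z\in\mathbb{R}$: $V$ must be a well-defined function of $z$, and the parity bookkeeping ensures this, since $r,\breve r$ are odd, so $r\breve r$ is even and the off-diagonal entries flip sign consistently with the column parity.
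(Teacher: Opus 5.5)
Your injectivity argument matches the paper's own. The paper likewise takes the Fredholm property (index zero) of $I-\mathcal{P}_V^-$ from the classical theory. It then integrates $\mathcal{C}(GV)(z)\,[\mathcal{C}(GV)(\bar z)]^\dagger$ over the upper half-plane to get $\int_{\mathbb{R}}G(I+V)G^\dagger\,dz=0$, and concludes $G=0$ from \eqref{eq:V-coercive}. That part of your proposal is correct. The paper, however, only cites Beals--Coifman and Zhou for the Fredholm step, and your sketch of that step does not work as written.

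\textbf{Three problems in the Fredholm sketch.}

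First, approximating $V$ uniformly by continuous compactly supported functions requires $V$ to be continuous and to tend to $0$ uniformly at infinity. The hypothesis $r,\breve r\in L^2\cap L^\infty$ gives neither; decay ``in the $L^2$ sense'' is irrelevant for $L^\infty$ approximation. So under the hypotheses as stated, the compact-commutator argument cannot start. This regularity is presumably available in the paper's setting, where $r_{1,2}\in H^1\cap L^{2,1}$, but you have to invoke it explicitly.

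Second, your approximate inverse has the wrong sign. Set $W:=(I+V)^{-1}-I$, move the multipliers past $\mathcal{P}^-$ modulo compact commutators, and use $(\mathcal{P}^-)^2=-\mathcal{P}^-$. Your choice gives $(I-\mathcal{P}_V^-)\bigl(G+\mathcal{P}^-(GW)\bigr)\equiv G-2\mathcal{P}^-(G)V$, not $G$. The map $G\mapsto G-\mathcal{P}^-(GW)$ instead gives $G-\mathcal{P}^-(G)(W+V+WV)=G$, since $(I+W)(I+V)=I$.

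Third, $I+sV$ is not $I+V$ with $(r,\breve r)$ replaced by $(sr,s\breve r)$: the $(2,2)$ entry scales like $s$, not $s^2$. The conclusion still holds, because the Hermitian part of $I+sV$ has determinant at least $1-sc_0^2$. Alternatively, use the homotopy $V(sr,s\breve r)$ directly.

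\textbf{A compactness-free route.} You can dispense with compactness altogether, which removes the obstacle you anticipated. For any $G\in L^2$, put $D:=G-\mathcal{P}^-(GV)$. Then $\mathcal{P}^-(GV)=G-D$ and $\mathcal{P}^+(GV)=G(I+V)-D$. The orthogonality you already use in the vanishing lemma is $\int_{\mathbb{R}}\mathcal{P}^+(h)\,[\mathcal{P}^-(h)]^\dagger\,dz=0$. Combined with \eqref{eq:V-coercive} and \eqref{eq:V-bounded}, it gives
\[
\|G\|_{L^2}\le \eta_-^{-1}(1+\eta_+)\,\|(I-\mathcal{P}_V^-)G\|_{L^2}.
\]
The constants are uniform along the homotopy $s\mapsto I-\mathcal{P}^-_{sV}$, which is norm continuous and equals $I$ at $s=0$. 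The method of continuity then gives invertibility at $s=1$, with no continuity assumption on $V$. This also yields the bound of Proposition \ref{prop:operator-inverse-bound} directly.
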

	\begin{proof}
		By the classical theory of singular integral operators \cite{Beals1984,Beals1985,Zhou1989}, and the Proposition \ref{prop:V-positive-definite}, which establishes that the Hermitian part $I + V_H(x;\lambda)$ is strictly positive definite for all $\lambda \in \Sigma_{\lambda}$, it is known that $I - \mathcal{P}_{V}^{-}$ is a Fredholm operator with index zero.
		
		By Fredholm's alternative, the unique solvability of the inhomogeneous equation \eqref{eq:inhomogeneous} in $L_z^2(\mathbb{R})$ is equivalent to showing that the corresponding homogeneous equation has only the trivial solution. Suppose there exists a nonzero row vector $g \in L_z^2(\mathbb{R})$ such that $(I - \mathcal{P}_{V}^{-}) g = 0 $. We can define two analytic functions in $\mathbb{C} \setminus \mathbb{R}$ with $V(\lambda) \in L_z^{2}(\mathbb{R}) \cap L_z^{\infty} (\mathbb{R})$ by
		\bee
		g_1(z) := \mathcal{C}(gV)(z)\quad \text{and} \quad g_2(z) := (\mathcal{C}(gV)(\bar{z}))^\dagger. \ene
		We integrate the product of $g_1$ and $g_2$ in $\mathbb{C}^{\pm}$ along a semicircle with radius R and center 0.
		\bee
		0 = \oint g_1(z)\, g_2(z)\, dz
		\ene
		holds because $g_1$ and $g_2$ are analytic functions in $\mathbb{C}^{+}$.
		Under the condition $g \in L_z^2(\mathbb{R})$ and $V(x;\lambda) \in L_z^2(\mathbb{R}) \cap L_z^\infty(\mathbb{R})$, we derive $gV \in L_z^1(\mathbb{R})$.
		Finally, Proposition \ref{prop:C-P-properties} implies $g_{1,2}(z) = \mathcal{O}(z^{-1})$ as $|z| \to \infty$.
		Therefore, the integral on the arc goes to zero as $R \to \infty$, and we obtain
		\begin{equation}
			\begin{split}
				0 &= \int_{\mathbb{R}} g_1(z) g_2(z)\, dz \\
				&= \int_{\mathbb{R}} \mathcal{P}^{+}(gV)(z) [\mathcal{P}^{-}(gV)(\bar{z})]^{\dagger}\, dz \\
				&= \int_{\mathbb{R}} [\mathcal{P}^{-}(gV) + gV][\mathcal{P}^{-}(gV)]^{\dagger}\, dz,
			\end{split}
		\end{equation}
		where we used the Plemelj formula $\mathcal{P}^{+} - \mathcal{P}^{-} = I$.
		Also, utilizing the homogeneous assumption $\mathcal{P}^{-}(gV) = g$, we deduce
		\begin{equation}
			0 = \int_{\mathbb{R}} g(I + V(x;\lambda))g^{\dagger}\, dz.
		\end{equation}
		For every $z \in \mathbb{R}$, the real part of the quadratic form associated with the matrix $I + V(x;\lambda)$ is strictly positive definite by Proposition \ref{prop:V-positive-definite}.
		Finally, we conclude that $g = 0$ is the only solution to the homogeneous equation $(I - \mathcal{P}_{V}^{-})g = 0$ in $L_z^2(\mathbb{R})$.
		\hfill
	\end{proof}
	
	For further estimates, we modify the method of Proposition \ref{prop:G-unique-solution} and prove that the operator $(I - \mathcal{P}_{V}^{-})^{-1}$ in the integral Fredholm equation \eqref{eq:inhomogeneous} is invertible with a bounded inverse in space $L_z^{2}(\mathbb{R})$.
	\begin{proposition}\label{prop:operator-inverse-bound}
		Suppose that the small norm restriction \eqref{eq:small-norm} holds. Then the inverse operator $(I-\mathcal{P}_{V}^{-})^{-1}$ is a bounded operator from $L_z^2(\mathbb{R})$ to $L_z^2(\mathbb{R})$.
		In particular, there is a positive constant $C$ such that for every row vector $d \in L_z^2(\mathbb{R})$, we have
		\begin{equation}\label{eq:operator-inverse-bound}
			\|(I-\mathcal{P}_{V}^{-})^{-1}d\|_{L_z^2} \le C\|d\|_{L_z^2}.
		\end{equation}
	\end{proposition}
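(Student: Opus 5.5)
We plan to obtain bounded invertibility from the Fredholm property together with injectivity, and then upgrade it to a quantitative bound by a coercivity estimate. The operator $\mathcal{P}_V^{-}$ is bounded on $L_z^2(\mathbb{R})$: Proposition \ref{prop:C-P-properties} gives $\|\mathcal{P}^{-}h\|_{L^2}\le c\|h\|_{L^2}$, and $\|gV\|_{L^2}\le \sup|V|\,\|g\|_{L^2}$ with $\sup|V|\le 2c_0+c_0^2$ by Proposition \ref{prop:reflection-small-bound} (equivalently Proposition \ref{prop:r1,2-strict-control}). By Proposition \ref{prop:G-unique-solution}, $I-\mathcal{P}_V^{-}$ is Fredholm of index zero with trivial kernel on $L_z^2$, so it is a bijection. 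The bounded inverse theorem then gives boundedness of $(I-\mathcal{P}_V^{-})^{-1}$.

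To obtain the constant in \eqref{eq:operator-inverse-bound} in a form that is uniform in $x$, which is needed later for the Lipschitz estimates, we will rerun the vanishing-lemma computation with a nonzero right-hand side. Let $(I-\mathcal{P}_V^{-})g=d$, so that $\mathcal{P}^{-}(gV)=g-d$ and $\mathcal{P}^{+}(gV)=g-d+gV$. As in the proof of Proposition \ref{prop:G-unique-solution}, Cauchy's theorem applied to $\mathcal{C}(gV)(z)\,[\mathcal{C}(gV)(\bar z)]^\dagger$ on the upper half-plane gives
\[
0=\int_{\mathbb{R}}\mathcal{P}^{+}(gV)\,[\mathcal{P}^{-}(gV)]^\dagger\,dz=\int_{\mathbb{R}}(g-d+gV)(g-d)^\dagger\,dz .
\]
Taking real parts and using Proposition \ref{prop:V-positive-definite}, we get $\mathrm{Re}\,[(g-d)(I+V)(g-d)^\dagger]\ge\eta_-|g-d|^2$. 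Writing $g=(g-d)+d$, we obtain
\[
\eta_-\|g-d\|_{L^2}^2\le \Big|\int_{\mathbb{R}} dV(g-d)^\dagger\,dz\Big|\le \eta_+'\|d\|_{L^2}\|g-d\|_{L^2},
\]
with $\eta_+'=\sup|V|$. Hence $\|g-d\|\le(\eta_+'/\eta_-)\|d\|$, and so $\|g\|\le(1+\eta_+'/\eta_-)\|d\|$. The constant $C=1+\eta_+/\eta_-$ depends only on $c_0$, and in particular it is independent of $x$.

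The main obstacle is justifying the contour integral identity for $L^2$ data. One needs $gV\in L^1\cap L^2$, which follows from $g\in L^2$ and $V\in L^2\cap L^\infty$ via Proposition \ref{prop:r-lambda-property}. This gives the $\mathcal{O}(z^{-1})$ decay needed to kill the arc contributions. For general $gV\in L^2$, we would first prove the identity for a dense class (for instance $gV\in L^1\cap L^2$, which holds here), and then use the $L^2$ boundedness of $\mathcal{P}^{\pm}$ together with the Plemelj relation $\mathcal{P}^{+}-\mathcal{P}^{-}=I$ to pass to the limit. A second delicate point is that $V$ is expressed in $\lambda$ while the integration is in $z$. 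Since $I+V$ is pointwise coercive for each $\lambda\in\Sigma_\lambda$, and both preimages $\pm\lambda$ of a given $z$ enjoy the same bounds, the pointwise estimate transfers directly to $z\in\mathbb{R}$.
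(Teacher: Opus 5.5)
Your argument is correct, but it is organized differently from the paper's.

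The paper splits the data as $D=\mathcal{P}^{+}(D)-\mathcal{P}^{-}(D)$ and solves $(I-\mathcal{P}_V^{-})G_\mp=\mathcal{P}^{\mp}(D)$ separately. It then runs two contour identities:
\begin{itemize}
\item one in $\mathbb{C}^+$, pairing $\mathcal{C}(G_-V)$ with $[\mathcal{C}(G_-V+D)(\bar z)]^\dagger$, which closes because $\mathcal{P}^-(G_-V+D)=G_-$;
\item one in $\mathbb{C}^-$ for $G_+$, which uses both the coercivity constant $\eta_-$ and the upper bound $\eta_+$ on $I+V$.
\end{itemize}
The two bounds are then added.

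You instead apply a single identity to the full equation. You use the fact that the unknown $g-d=\mathcal{P}^-(gV)$ is itself a minus boundary value, which gives $0=\int(g-d)(I+V)(g-d)^\dagger\,dz+\int dV(g-d)^\dagger\,dz$. This buys three things:
\begin{itemize}
\item it needs only $\eta_-$ and $\sup|V|$;
\item it avoids the lower half-plane computation;
\item it gives an explicit $x$-independent constant $1+\sup|V|/\eta_-$. Your last line should read $1+\eta_+'/\eta_-$ rather than $1+\eta_+/\eta_-$.
\end{itemize}

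Both proofs take existence from Proposition \ref{prop:G-unique-solution}. Your estimate can actually make the Fredholm-index claim unnecessary. The same bound holds uniformly for $I+sV$, $s\in[0,1]$, since the Hermitian part of $I+sV$ is at least $(1-s+s\eta_-)I$. The method of continuity starting from $s=0$ then gives invertibility.

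Finally, the justification you flag as the main obstacle is easier than a density argument. By Proposition \ref{prop:Fourier-Projections}, $\mathcal{P}^+$ and $-\mathcal{P}^-$ are complementary orthogonal projections on $L^2(\mathbb{R})$, corresponding to positive and negative Fourier support. Hence $\int_{\mathbb{R}}\mathcal{P}^+(f)\,[\mathcal{P}^-(h)]^\dagger\,dz=0$ for all $f,h\in L^2$, with no arc estimates. This also repairs the non-tangential decay issue in the paper's own semicircle argument.
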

	
	\begin{proof}
		We consider the linear inhomogeneous equation \eqref{eq:inhomogeneous} with $D \in L_z^2(\mathbb{R})$. Recalling the Plemelj formula $\mathcal{P}^{+} - \mathcal{P}^{-} = I$, we can decompose the solution as $G = G_{+} - G_{-}$, where $G_{+}$ and $G_{-}$ satisfy the inhomogeneous equations
		\begin{equation}\label{eq:G_pm-inhomogeneous}
			G_{-} - \mathcal{P}^{-}(G_{-}V) = \mathcal{P}^{-}(D), \quad G_{+} - \mathcal{P}^{-}(G_{+}V) = \mathcal{P}^{+}(D).
		\end{equation}
		By Proposition \ref{prop:G-unique-solution}, since $\mathcal{P}^{\pm}(D) \in L_z^2(\mathbb{R})$, there are unique solutions to the inhomogeneous equations \eqref{eq:G_pm-inhomogeneous}, meaning the decomposition $G = G_{+} - G_{-}$ is unique. We now estimate $G_{+}$ and $G_{-}$ in $L_z^2(\mathbb{R})$.
		
		To deal with $G_{-}$, we define two analytic functions in $\mathbb{C} \setminus \mathbb{R}$ by
		\begin{equation}
			g_1(z) := \mathcal{C}(G_{-}V)(z) \quad \text{and} \quad g_2(z) := (\mathcal{C}(G_{-}V + D)(\bar{z}))^{\dagger},
		\end{equation}
		similarly to the proof of Proposition \ref{prop:G-unique-solution}. By Proposition \ref{prop:C-P-properties}, since $D, G_{-} \in L_z^2(\mathbb{R})$ and $V(x;\lambda) \in L_z^2(\mathbb{R}) \cap L_z^\infty(\mathbb{R})$, we have $g_1(z) = \mathcal{O}(z^{-1})$ and $g_2(z) = \mathcal{O}(z^{-1})$ as $|z| \to \infty$. Therefore, the integral on the semi-circle of radius $R>0$ in the upper half-plane goes to zero as $R \to \infty$. Performing the same contour integration, we obtain
		\begin{equation}
			\begin{split}
				0 &= \oint g_1(z) g_2(z)\, dz \\
				&= \int_{\mathbb{R}} \mathcal{P}^{+}(G_{-}V) [\mathcal{P}^{-}(G_{-}V + D)]^{\dagger}\, dz \\
				&= \int_{\mathbb{R}} [\mathcal{P}^{-}(G_{-}V) + G_{-}V][\mathcal{P}^{-}(G_{-}V + D)]^{\dagger}\, dz \\
				&= \int_{\mathbb{R}} [G_{-} - \mathcal{P}^{-}(D) + G_{-}V] G_{-}^{\dagger}\, dz,
			\end{split}
		\end{equation}
		where we have used the first inhomogeneous equation in \eqref{eq:G_pm-inhomogeneous}. By the coercivity bound \eqref{eq:V-coercive} established in Proposition \ref{prop:V-positive-definite}, there is a positive constant $\eta_{-}$ such that
		\begin{equation}
			\eta_{-} \|G_{-}\|_{L_z^2}^2 \le \mathrm{Re} \int_{\mathbb{R}} G_{-}(I + V)G_{-}^{\dagger}\, dz = \mathrm{Re} \int_{\mathbb{R}} \mathcal{P}^{-}(D)G_{-}^{\dagger}\, dz \le \|\mathcal{P}^{-}(D)\|_{L_z^2} \|G_{-}\|_{L_z^2},
		\end{equation}
		where we applied the Cauchy-Schwarz inequality. Since $G_{-} = (I - \mathcal{P}_V^{-})^{-1} \mathcal{P}^{-}(D)$, for every row-vector $d \in L_z^2(\mathbb{R})$, the above inequality yields
		\begin{equation}\label{eq:G_minus_bound}
			\|(I-\mathcal{P}_V^{-})^{-1}\mathcal{P}^{-}d\|_{L_z^2} \le \eta_{-}^{-1} \|\mathcal{P}^{-}(d)\|_{L_z^2} = \eta_{-}^{-1} \|d\|_{L_z^2},
		\end{equation}
		using the $L^2$-boundedness of the Plemelj projection $\mathcal{P}^{-}$.
		
		To deal with $G_{+}$, we use $\mathcal{P}^{+} - \mathcal{P}^{-} = I$ and rewrite the second inhomogeneous equation in \eqref{eq:G_pm-inhomogeneous} as:
		\begin{equation}\label{eq:G_plus_rewrite}
			G_{+}(I + V) - \mathcal{P}^{+}(G_{+}V) = \mathcal{P}^{+}(D).
		\end{equation}
		We now define two analytic functions in $\mathbb{C} \setminus \mathbb{R}$ by
		\begin{equation}
			h_1(z) := \mathcal{C}(G_{+}V)(z) \quad \text{and} \quad h_2(z) := \mathcal{C}(G_{+}V + D)^{\dagger}.
		\end{equation}
		Integrating the product of $h_1$ and $h_2$ on the semi-circle of radius $R>0$ in the lower half-plane, and letting $R \to \infty$, we obtain
		\begin{equation}
			\begin{split}
				0 &= \oint h_1(z) h_2(z)\, dz \\
				&= \int_{\mathbb{R}} \mathcal{P}^{-}(G_{+}V) [\mathcal{P}^{+}(G_{+}V + D)]^{\dagger}\, dz \\
				&= \int_{\mathbb{R}} [G_{+} - \mathcal{P}^{+}(D)][G_{+}(I + V)]^{\dagger}\, dz,
			\end{split}
		\end{equation}
		where we have used equation \eqref{eq:G_plus_rewrite}. Taking the real part and using bounds \eqref{eq:V-coercive} and \eqref{eq:V-bounded} in Proposition \ref{prop:V-positive-definite}, there are positive constants $\eta_{-}$ and $\eta_{+}$ such that
		\begin{equation}
			\begin{split}
				\eta_{-} \|G_{+}\|_{L_z^2}^2 &\le \mathrm{Re} \int_{\mathbb{R}} G_{+}(I + V)^{\dagger}G_{+}^{\dagger}\, dz \\
				&= \mathrm{Re} \int_{\mathbb{R}} \mathcal{P}^{+}(D)(I + V)^{\dagger}G_{+}^{\dagger}\, dz \le \eta_{+} \|\mathcal{P}^{+}(D)\|_{L_z^2} \|G_{+}\|_{L_z^2}.
			\end{split}
		\end{equation}
		Since $G_{+} = (I - \mathcal{P}_V^{-})^{-1} \mathcal{P}^{+}(D)$, for every row-vector $d \in L_z^2(\mathbb{R})$, we get
		\begin{equation}\label{eq:G_plus_bound}
			\|(I-\mathcal{P}_V^{-})^{-1}\mathcal{P}^{+}d\|_{L_z^2} \le \eta_{-}^{-1}\eta_{+} \| d \|_{L_z^2}.
		\end{equation}
		The assertion of the proposition is finally proved by combining bounds \eqref{eq:G_minus_bound}, \eqref{eq:G_plus_bound}, and the triangle inequality.
		\hfill
	\end{proof}
	
	\subsection{Estimates on the Beals-Coifman solutions}

	In the RH problem \ref{RH:G-problem}, we can find that the corresponding $G_{j,\pm}(x;\lambda)$ can be represented by the analytic matrix function $N_{\pm}(x;z)$ in $\mathbb{C}^{\pm}$ for every $x \in \mathbb{R}$. Therefore, we can characterize their estimates by estimating $N_{\pm}(x;z)$. First, we introduce the following notations for the column vectors of the matrices $N_{\pm}(x;z)$
	\begin{equation}\label{eq:N-colunm-def}
		N_{\pm}(x;z) := \left( N_{\pm,1}(x;z), \ N_{\pm,2}(x;z)
		\right).
	\end{equation}
	Then the expression of $G_{j,\pm}(x;\lambda)$ is given by
	\begin{align*}
		\begin{split}
			G_{1,\pm}(x;\lambda) &= N_{\pm}(x;z) \tau_1(\lambda) - \tau_1(\lambda) \\
			&= \left(\left( N_{\pm,1}(x;z) - e_1
			\right) , \ -2 \lambda \left( N_{\pm,2}(x;z) - e_2
			\right)
			\right),
		\end{split} \\
		\begin{split}
			G_{2,\pm}(x;\lambda) &= N_{\pm}(x;z) \tau_2(\lambda) - \tau_2(\lambda) \\
			&= \left((-2 \lambda)^{-1} \left( N_{\pm,1}(x;z) - e_1
			\right) , \ \left( N_{\pm,2}(x;z) - e_2
			\right)
			\right),
		\end{split}
	\end{align*}
	and
	\begin{equation}\label{eq:Dj-R-relation}
		D_{j}(x;\lambda) := \tau_j (\lambda) V = R(x;z) \tau_j, \ j=1,2.
	\end{equation}
	We use the relation
	\begin{equation}
		(G_{j,-}V + D_j)_{i1} = (N_{-} \tau_j V)_{i1} = (N_{-} R \tau_j)_{i1}, \quad i=1,2,
	\end{equation}
	to obtain the explicit expressions for the first column of $G_{j,\pm}$:
	\begin{align}
		N_{\pm,1}(x;z) - e_1 &= \mathcal{P}^{\pm}(N_{-}(x;\cdot)R(x;\cdot))_{11}, \quad z \in \mathbb{R}, \label{eq:N-col1-a} \\
		(-2\lambda)^{-1}(N_{\pm,1}(x;z) - e_1) &= \mathcal{P}^{\pm}((-2\lambda)^{-1}N_{-}(x;\cdot)R(x;\cdot))_{11}, \quad z \in \mathbb{R}, \label{eq:N-col1-b}
	\end{align}
	and for the second column of $G_{j,\pm}$:
	\begin{align}
		N_{\pm,2}(x;z) - e_2 &= \mathcal{P}^{\pm}(N_{-}(x;\cdot)R(x;\cdot))_{21}, \quad z \in \mathbb{R}, \label{eq:N-col2-a} \\
		-2\lambda(N_{\pm,2}(x;z) - e_2) &= \mathcal{P}^{\pm}(-2\lambda N_{-}(x;\cdot)R(x;\cdot))_{21}, \quad z \in \mathbb{R}. \label{eq:N-col2-b}
	\end{align}
	
	\begin{remark}\label{re:redundant}
		From the results above, \eqref{eq:N-col1-a} and \eqref{eq:N-col1-b} may seem redundant, as well as \eqref{eq:N-col2-a} and \eqref{eq:N-col2-b}.  However, by using operations similar to \cite{Pelinovsky2018, Cheng2025}, we can see that both are redundant, which we omit the proof here.
	\end{remark}

	The solution to the RH Problem \ref{RH:z-plane} on the real line is given by the integral equation
	\begin{equation}\label{eq:N-integral-eq}
		N_{\pm}(x;z) = I + \mathcal{P}^{\pm}(N_{-}(x;\cdot)R(x;\cdot))(z), \quad z \in \mathbb{R}.
	\end{equation}
	The analytical continuation of functions $N_{\pm}(x;\cdot)$ in $\mathbb{C}^{\pm}$ is given by the Cauchy operators
	\begin{equation}
		N(x;z) = I + \mathcal{C}(N_{-}(x;\cdot)R(x;\cdot))(z), \quad z \in \mathbb{C}^{\pm}.
	\end{equation}
	
	The corresponding result on solvability of the integral equations \eqref{eq:N-integral-eq} is given by the following proposition.
	
	\begin{proposition}\label{prop:N-L2-bound}
		Suppose $r_{1,2}(z) \in H^1(\mathbb{R}) \cap L^{2,1}(\mathbb{R})$ such that the small norm inequality \eqref{eq:small-norm} is satisfied. Then for every $x \in \mathbb{R}$, $N_{\pm}(x;z)$ admits the estimate:
		\begin{equation}\label{eq:N-L2-estimate}
			\|N_{\pm}(x;\cdot) - I\|_{L^2(\mathbb{R})} \le C (\|r_1\|_{L^2} + \|r_2\|_{L^2} + \|r\|_{L^2} + \|\breve{r}\|_{L^2}),
		\end{equation}
		where $C$ is a positive constant depending only on $\|r_{1,2}\|_{L^\infty(\mathbb{R})}$.
	\end{proposition}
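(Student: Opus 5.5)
We plan to derive the estimate from the bounded invertibility in Proposition \ref{prop:operator-inverse-bound}, applied to the RH Problem \ref{RH:G-problem} with $j=1$ and $j=2$. We treat the two columns of $N_\pm - I$ separately, each through whichever transformed matrix $G_j$ makes the inhomogeneous term $D_j=\tau_jV$ square integrable with a norm controlled by the reflection data. Each row of $G_{j,-}$ solves $(I-\mathcal{P}_V^{-})G_{j,-}=\mathcal{P}^{-}(D_j)$, which is \eqref{eq:Gj_-solution}. Proposition \ref{prop:operator-inverse-bound} together with \eqref{eq:C-P-bounds} for $p=2$ then gives
\begin{equation*}
\|G_{j,-}(x;\cdot)\|_{L_z^2}\le C\|D_j(x;\cdot)\|_{L_z^2}.
\end{equation*}
Here $C$ depends only on $\eta_\pm$, and hence only on $c_0$, i.e. on the sup-bounds of the reflection coefficients. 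This bound is uniform in $x$, because $|\mathrm{e}^{\pm2\mathrm{i}zx}|=1$ on $\mathbb{R}$ and the coercivity constants in Proposition \ref{prop:V-positive-definite} do not depend on $x$. The $+$ boundary values follow from \eqref{eq:Gj+-solution}: $\|G_{j,+}\|_{L^2}\le c(\|G_{j,-}\|_{L^2}\|V\|_{L^\infty}+\|D_j\|_{L^2})$.

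Next we compute $D_j$ explicitly and identify its entries with the reflection coefficients. Using $D_j=R\tau_j$ from \eqref{eq:Dj-R-relation}, the entries of $R\tau_1$ are $0$, $-2\lambda r_1\mathrm{e}^{2\mathrm{i}zx}=\breve r\,\mathrm{e}^{2\mathrm{i}zx}$, $r_2\mathrm{e}^{-2\mathrm{i}zx}$, and $-2\lambda r_1r_2$. The entries of $R\tau_2$ are $0$, $r_1\mathrm{e}^{2\mathrm{i}zx}$, $-(2\lambda)^{-1}r_2\mathrm{e}^{-2\mathrm{i}zx}=-r\,\mathrm{e}^{-2\mathrm{i}zx}$, and $r_1r_2$. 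The quadratic entries are handled by bounding one factor in $L^\infty$: $|r_1r_2|\le c_0|r_1|$ and $|2\lambda r_1r_2|=|\breve r||r_2|\le c_0|\breve r|$, using Proposition \ref{prop:r1,2-strict-control} and Proposition \ref{prop:reflection-small-bound}. Hence
\begin{equation*}
\|D_1\|_{L^2_z}\le C(\|r_2\|_{L^2}+\|\breve r\|_{L^2}),\qquad \|D_2\|_{L^2_z}\le C(\|r_1\|_{L^2}+\|r\|_{L^2}).
\end{equation*}
All four norms on the right are finite by Proposition \ref{prop:r-lambda-property} and the hypothesis $r_{1,2}\in H^1\cap L^{2,1}$.

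We then read off the columns of $N_\pm - I$. The first column of $G_{1,\pm}$ is exactly $N_{\pm,1}-e_1$, and the second column of $G_{2,\pm}$ is exactly $N_{\pm,2}-e_2$. Combining the two column bounds gives \eqref{eq:N-L2-estimate}.

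The main obstacle is a consistency check. The matrices $G_1$ and $G_2$ come from a single $N$, so we must verify that the unique $L^2$ solution of \eqref{eq:Gj_-solution} for $j=1$ and for $j=2$ both reproduce the same $N$. This is the redundancy noted in Remark \ref{re:redundant}, and it is where the parity condition in RH Problem \ref{RH:G-problem} enters. Concretely, we would argue as follows. Let $N$ be the solution built from $G_1$. Then $N\tau_2-\tau_2$ solves the $j=2$ equation, possibly in a weaker space. We would show that it actually lies in $L^2_z$ by using the column relations \eqref{eq:N-col1-a}--\eqref{eq:N-col2-b}. Uniqueness from Proposition \ref{prop:G-unique-solution} then identifies the two solutions. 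Because the argument is carried out on $\Sigma_\lambda$ rather than on a single real line, we would also need to check that the Cauchy operators in $z$ act correctly on functions of $\lambda$ that are odd or even; the parity condition is what guarantees this.
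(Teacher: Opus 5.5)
Your overall route is the paper's: bound $D_j=R\tau_j$ in $L^2_z$, invert $I-\mathcal{P}^-_V$ via Proposition~\ref{prop:operator-inverse-bound}, and read off entries of $N_\pm-I$. Your computation of the entries of $D_j$ and your treatment of $N_+$ are fine. The genuine gap is exactly the step you flag as the main obstacle, and the repair you sketch cannot work.

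Because $I-\mathcal{P}^-_V$ acts on each row and mixes its two entries, you can identify the first column of the $L^2_z$-solution $G_{1,-}$ with $N_{-,1}-e_1$ (and the second column of $G_{2,-}$ with $N_{-,2}-e_2$) only if the whole matrix $N_-\tau_j-\tau_j$ lies in $L^2_z$. Only then does uniqueness in Proposition~\ref{prop:G-unique-solution} apply. It does not lie in $L^2_z$:
\begin{itemize}
\item By Propositions~\ref{prop:Limits} and \ref{prop: Smooth}, on the real line $zN_{-,12}=\frac14(q^{PT})_x\mathrm{e}^{-\mathrm{i}m_+}+h_1$ and $z(N_{-,22}-1)=\mathrm{e}^{\frac{\mathrm{i}}{2}m_+}\nu_2^{+}+h_2$, with $h_1,h_2\in L^2_z$. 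So the second column $-2\lambda(N_{-,2}-e_2)$ of $N_-\tau_1-\tau_1$ behaves like a nonzero multiple of $\lambda^{-1}$ as $|z|\to\infty$.
\item At the origin, $\Psi_1^{\pm}(x;0)=(1,q_x)^T$ (this solves $\Psi_x=\widehat{Q}\Psi$ at $z=0$), and $a(0)=1$. Hence $N_{-,1}(x;0)-e_1=(\mathrm{e}^{-\frac{\mathrm{i}}{2}m_+}-1,\,\mathrm{e}^{\frac{\mathrm{i}}{2}m_+}q_x)^T$, so the first column $(-2\lambda)^{-1}(N_{-,1}-e_1)$ of $N_-\tau_2-\tau_2$ behaves like $\lambda^{-1}$ as $z\to0$.
\end{itemize}
Since $|\lambda|^{-2}=|z|^{-1}$ is integrable neither at infinity nor at the origin, for generic $q$ and $x$ neither matrix is in $L^2_z$. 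So the $L^2_z$-solutions of \eqref{eq:Gj_-solution} are not $N_-\tau_j-\tau_j$, and the claim that $N\tau_2-\tau_2$ ``actually lies in $L^2_z$'' is false.

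The paper's own argument does not close this either. The rows it uses in Proposition~\ref{prop:BC-estimates} (row~1 with $\tau_2$, row~2 with $\tau_1$) contain $(-2\lambda)^{-1}(N_{+,11}-1)$ and $-2\lambda(N_{-,22}-1)$. These have the same $\lambda^{-1}$ behaviour, at $0$ and at $\infty$ respectively, and Remark~\ref{re:redundant} does not address it.

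The clean fix is to drop $\tau_j$ altogether. Under the small-norm condition, Proposition~\ref{prop:r1,2-strict-control} gives $|r_1|,|r_2|\le c_0<1$ on $\mathbb{R}$. The Hermitian part of $I+R(x;z)$ then has $(1,1)$-entry $1$, trace at most $2+c_0^2$, and determinant $1-\frac14|r_1-\overline{r_2}|^2\ge 1-c_0^2$. Its smallest eigenvalue is therefore at least $(1-c_0^2)/(2+c_0^2)$, uniformly in $x$ and $z$.

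So the $z$-plane jump is itself coercive. The vanishing-lemma and a priori arguments of Propositions~\ref{prop:G-unique-solution} and \ref{prop:operator-inverse-bound} then apply verbatim to $(I-\mathcal{P}^-_R)(N_--I)=\mathcal{P}^-(R)$, with $\mathcal{P}^-_Rg:=\mathcal{P}^-(gR)$, and here $N_--I$ genuinely belongs to $L^2_z$. This yields $\|N_--I\|_{L^2_z}\le C(c_0)(\|r_1\|_{L^2}+\|r_2\|_{L^2})$. The identity $N_+-I=(N_--I)(I+R)+R$ gives the $+$ boundary value.

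This bound is stronger than the stated estimate, and its constant depends only on the sup-bounds of $r_{1,2}$. It also avoids the branch-of-$\sqrt{z}$ and parity issues you flagged.
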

	\begin{proof}
		Under the condition $r_{1,2}(z) \in H^1(\mathbb{R}) \cap L^{2,1}(\mathbb{R})$, $R(x;z)\tau_j(\lambda)$ belongs to $L_z^2(\mathbb{R})$ for every $x \in \mathbb{R}$ according to the explicit expressions of $D_j$. There exists a positive constant $C$ dependent only on $\|r_{1,2}\|_{L^\infty}$ such that
		\begin{equation}\label{eq:Rtau-bound}
			\|R(x;\cdot) \tau_j(\cdot) \|_{L_z^2} \le C (\|r_1\|_{L^2} + \|r_2\|_{L^2} + \|r\|_{L^2} + \|\breve{r}\|_{L^2}), \quad j=1,2
		\end{equation}
		According to $\mathcal{P}^{-}(D_j) = \mathcal{P}^{-}(R(x;\cdot)\tau_j(\cdot))$, the equation for the projection operator $\mathcal{P}^{-}$ is derived from Proposition \ref{prop:operator-inverse-bound}. Each element of $N_{-}(x;z)$ satisfies the bound for the corresponding row vectors of $\mathcal{P}^{-}(D_j)$. Combining the operator bound \eqref{eq:operator-inverse-bound} and \eqref{eq:Rtau-bound}, we obtain \eqref{eq:N-L2-estimate}.
		\hfill
	\end{proof}
	
	Next, we derive accurate estimates on the solution to the integral equations by employing Fourier theory to evaluate the projection operators on oscillatory terms.
	For a given function $f(z) \in L^2(\mathbb{R})$, we define its Fourier transform and inverse transform by
	\begin{equation}
		\hat{f}(\xi) = \frac{1}{2\pi} \int_{\mathbb{R}} f(z) \mathrm{e}^{-\mathrm{i}z\xi} \, dz, \quad f(z) = \int_{\mathbb{R}} \hat{f}(\xi) \mathrm{e}^{\mathrm{i}z\xi} \, d\xi.
	\end{equation}
	\begin{proposition}\label{prop:Fourier-Projections}
		If $f(z)\in L^{2}(\mathbb{R})$, we have the following Fourier representations for the Cauchy projections:
		\begin{align}
			\mathcal{P}^{-}(f(z)\mathrm{e}^{2\mathrm{i}zx}) &= -\int_{-\infty}^{-2x}\hat{f}(\xi)\mathrm{e}^{\mathrm{i} z (\xi+2x)}\, d\xi, \label{eq:P_minus_Fourier_new} \\
			\mathcal{P}^{+}(f(z)\mathrm{e}^{-2\mathrm{i}zx}) &= \int_{2x}^{+\infty}\hat{f}(\xi)\mathrm{e}^{\mathrm{i}z(\xi-2x)}\, d\xi. \label{eq:P_plus_Fourier_new}
		\end{align}
	\end{proposition}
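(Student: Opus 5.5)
The plan is to identify $\mathcal{P}^{\pm}$ as Fourier multipliers. With the convention $f(z)=\int_{\mathbb{R}}\hat f(\xi)\mathrm{e}^{\mathrm{i}z\xi}\,d\xi$, I will show that $\mathcal{P}^{+}$ keeps the positive frequencies and $\mathcal{P}^{-}$ keeps the negative frequencies with a minus sign:
\begin{equation*}
\mathcal{P}^{+}h(z)=\int_{0}^{+\infty}\hat h(\xi)\mathrm{e}^{\mathrm{i}z\xi}\,d\xi,\qquad
\mathcal{P}^{-}h(z)=-\int_{-\infty}^{0}\hat h(\xi)\mathrm{e}^{\mathrm{i}z\xi}\,d\xi .
\end{equation*}
Both formulas of Proposition \ref{prop:Fourier-Projections} then follow by a frequency shift. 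Indeed, $\widehat{f\mathrm{e}^{2\mathrm{i}zx}}(\xi)=\hat f(\xi-2x)$, so the condition $\xi<0$ becomes $\xi'+2x<0$, i.e. $\xi'<-2x$, after the substitution $\xi'=\xi-2x$. This gives \eqref{eq:P_minus_Fourier_new}. The same reasoning applied to $f\mathrm{e}^{-2\mathrm{i}zx}$ with the condition $\xi'-2x>0$ gives \eqref{eq:P_plus_Fourier_new}.

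To prove the multiplier identities, I will first take $\hat h$ in the Schwartz class, or in $C_c^\infty$, so that Fubini's theorem applies freely. For $w=z+\mathrm{i}\epsilon$ with $\epsilon>0$, I will compute the elementary kernel integral
\begin{equation*}
\frac{1}{2\pi\mathrm{i}}\int_{\mathbb{R}}\frac{\mathrm{e}^{\mathrm{i}s\xi}}{s-w}\,ds=\mathrm{e}^{\mathrm{i}w\xi}\chi_{\{\xi>0\}}
\end{equation*}
by closing the contour in the upper half plane when $\xi>0$ (residue at $s=w$, Jordan's lemma) and in the lower half plane when $\xi<0$ (no pole). For $\epsilon<0$ the analogous computation gives $-\mathrm{e}^{\mathrm{i}w\xi}\chi_{\{\xi<0\}}$. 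Exchanging the order of integration then yields
\begin{equation*}
\mathcal{C}(h)(z\pm\mathrm{i}\epsilon)=\pm\int_{\{\pm\xi>0\}}\hat h(\xi)\mathrm{e}^{\mathrm{i}z\xi}\mathrm{e}^{-\epsilon|\xi|}\,d\xi .
\end{equation*}
Letting $\epsilon\downarrow0$ gives the claimed formulas for such $h$. The limit holds pointwise by dominated convergence, since $\hat h\in L^1$, and also in $L^2_z$ by Plancherel, since $\|(1-\mathrm{e}^{-\epsilon|\xi|})\hat h\|_{L^2}\to0$.

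Finally, I will extend to all $f\in L^2(\mathbb{R})$ by density. Both sides are bounded operators on $L^2$: the left side by Proposition \ref{prop:C-P-properties}, and the right side is a Fourier multiplier by a characteristic function, which is bounded by Plancherel. Two bounded operators that agree on a dense set agree everywhere, so the identities hold as $L^2_z$ functions, with the $\xi$-integrals understood in the $L^2$ (Plancherel) sense. Multiplication by $\mathrm{e}^{\pm2\mathrm{i}zx}$ is an isometry of $L^2$, so the shifted statements hold for every $x\in\mathbb{R}$.

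The step I expect to need the most care is the limiting/density argument. The definition of $\mathcal{P}^\pm$ is a pointwise limit, while the Fourier formula is naturally an $L^2$ object. The two are reconciled by noting that the limit exists almost everywhere and also in $L^2$, so both limits coincide almost everywhere, via the standard Hardy-space boundary value theory cited in Proposition \ref{prop:C-P-properties}. The contour computation itself is routine.
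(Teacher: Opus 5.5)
Your argument is correct, but it is organized differently from the paper's. The paper works directly with the shifted function. It inserts the inverse Fourier representation of $f$ into the Cauchy integral for $\mathcal{P}^{+}(f\mathrm{e}^{-2\mathrm{i}zx})$, interchanges the $\epsilon$-limit with the $s$- and $\xi$-integrals, and evaluates the same residue kernel at frequency $\xi-2x$. It then obtains the $\mathcal{P}^{-}$ formula from the $\mathcal{P}^{+}$ one by complex conjugation, $\overline{\mathcal{P}^{-}(f\mathrm{e}^{2\mathrm{i}zx})}=-\mathcal{P}^{+}(\bar f\mathrm{e}^{-2\mathrm{i}zx})$, together with $\hat{\bar f}(\xi)=\overline{\hat f(-\xi)}$. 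You instead prove the translation-invariant statement first: $\mathcal{P}^{+}$ and $-\mathcal{P}^{-}$ are the Fourier multipliers $\chi_{(0,\infty)}$ and $\chi_{(-\infty,0)}$. You compute both kernels directly, for $\epsilon>0$ and $\epsilon<0$, and reduce the $x$-dependence to the frequency shift $\widehat{f\mathrm{e}^{\pm2\mathrm{i}zx}}(\xi)=\hat f(\xi\mp2x)$.

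The paper's route is shorter and needs only one residue computation, but it is formal. For general $f\in L^2$ one has $\hat f\notin L^1$, so neither the interchange of integrals nor moving the limit inside the $\xi$-integral is justified, and the right-hand sides need not converge absolutely. Your route makes explicit that the identities hold in $L^2_z$, with the $\xi$-integrals taken in the Plancherel sense. Your density step supplies exactly the justification the paper omits.

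One technical correction: even for Schwartz $\hat h$, Fubini does not apply ``freely'' to $\iint \hat h(\xi)\mathrm{e}^{\mathrm{i}s\xi}(s-w)^{-1}\,d\xi\,ds$. The reason is that $\int_{\mathbb{R}}|s-w|^{-1}\,ds=\infty$, so the $s$-integral of the kernel is only conditionally convergent. A clean fix is to write
\begin{itemize}
\item $(s-w)^{-1}=\mathrm{i}\int_0^{\infty}\mathrm{e}^{\mathrm{i}(w-s)t}\,dt$ for $\operatorname{Im}w>0$;
\item $(s-w)^{-1}=-\mathrm{i}\int_{-\infty}^{0}\mathrm{e}^{\mathrm{i}(w-s)t}\,dt$ for $\operatorname{Im}w<0$.
\end{itemize}
Fubini in $(s,t)$ is then legitimate because $h\in L^1$. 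It gives $\mathcal{C}(h)(w)=\int_0^\infty\hat h(t)\mathrm{e}^{\mathrm{i}wt}\,dt$ and $-\int_{-\infty}^0\hat h(t)\mathrm{e}^{\mathrm{i}wt}\,dt$ respectively, with no contour integration needed. Alternatively, truncate to $|s|\le R$, where Fubini does apply. Then let $R\to\infty$, using that the truncated kernels $\frac{1}{2\pi\mathrm{i}}\int_{-R}^{R}\mathrm{e}^{\mathrm{i}s\xi}(s-w)^{-1}\,ds$ are bounded uniformly in $R$ and $\xi$.
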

	
	\begin{proof}
		By using the definition of the projection operator and the Fourier inverse transform, we can expand $\mathcal{P}^{+}$ as follows:
		\begin{equation}\label{eq:P_plus_def_expand_new}
			\begin{split}
				\mathcal{P}^{+}(f(z)\mathrm{e}^{-2\mathrm{i}zx}) &= \frac{1}{2\pi \mathrm{i}} \lim_{\epsilon \downarrow 0} \int_{\mathbb{R}} \frac{f(s)\mathrm{e}^{-2\mathrm{i}sx}}{s-(z+\mathrm{i}\epsilon)}\, ds \\
				&= \frac{1}{2\pi \mathrm{i}} \int_{\mathbb{R}} \hat{f}(\xi) \left( \lim_{\epsilon \downarrow 0} \int_{\mathbb{R}} \frac{\mathrm{e}^{\mathrm{i}s(\xi-2x)}}{s-(z+\mathrm{i}\epsilon)}\, ds \right) d\xi.
			\end{split}
		\end{equation}
		Applying the residue theorem to the inner complex integral, the integrand has a simple pole at $s = z+\mathrm{i}\epsilon$ in the upper half-plane. To ensure the integral converges at infinity, the choice of the closed contour depends on the sign of $\xi-2x$:
		\begin{equation}\label{eq:Residue_limit_new}
		\begin{array}{rl}
	\lim_{\epsilon \downarrow 0} \dfrac{1}{2\pi \mathrm{i}} \d \int_{\mathbb{R}} \frac{\mathrm{e}^{\mathrm{i}s(\xi-2x)}}{s-(z+\mathrm{i}\epsilon)}\, ds
&\!\!\! =\d
			\lim_{\epsilon \downarrow 0}
			\begin{cases}
				\mathrm{e}^{\mathrm{i}(z+\mathrm{i}\epsilon)(\xi-2x)}, & \text{if } \xi-2x > 0 \\
				0, & \text{if } \xi-2x < 0
			\end{cases} \v\\
  &\!\!\!	= \d \chi(\xi-2x)\mathrm{e}^{\mathrm{i}z(\xi-2x)},
		\end{array}\end{equation}
		where $\chi(\cdot)$ is the characteristic function. Substituting \eqref{eq:Residue_limit_new} into \eqref{eq:P_plus_def_expand_new}, the characteristic function truncates the lower limit of the integration over $\xi$:
		\begin{equation}\label{eq:P_plus_final_new}
			\mathcal{P}^{+}(f(z)\mathrm{e}^{-2\mathrm{i}zx}) = \int_{2x}^{+\infty} \hat{f}(\xi)\mathrm{e}^{\mathrm{i}z(\xi-2x)}\, d\xi.
		\end{equation}
		
		To evaluate $\mathcal{P}^{-}$, we start with its definition:
		\begin{equation}
			\mathcal{P}^{-}(f(z)\mathrm{e}^{2\mathrm{i}zx}) = \frac{1}{2\pi \mathrm{i}} \lim_{\epsilon \downarrow 0} \int_{\mathbb{R}} \frac{f(s)\mathrm{e}^{2\mathrm{i}sx}}{s-(z-\mathrm{i}\epsilon)}\, ds.
		\end{equation}
		Taking the complex conjugation on both sides (noting that $x, z, s \in \mathbb{R}$), we find its relation to the $\mathcal{P}^{+}$ operator:
		\begin{equation}\label{eq:P_minus_conjugate_new}
			\overline{\mathcal{P}^{-}(f(z)\mathrm{e}^{2\mathrm{i}zx})} = -\frac{1}{2\pi \mathrm{i}} \lim_{\epsilon \downarrow 0} \int_{\mathbb{R}} \frac{\bar{f}(s)\mathrm{e}^{-2\mathrm{i}sx}}{s-(z+\mathrm{i}\epsilon)}\, ds = -\mathcal{P}^{+}(\bar{f}(z)\mathrm{e}^{-2\mathrm{i}zx}).
		\end{equation}
		Applying the previously derived formula \eqref{eq:P_plus_final_new} to $-\mathcal{P}^{+}(\bar{f}(z)\mathrm{e}^{-2\mathrm{i}zx})$, and utilizing the Fourier conjugate property $\hat{\bar{f}}(\xi) = \overline{\hat{f}(-\xi)}$, we obtain:
		\begin{equation}
			-\mathcal{P}^{+}(\bar{f}(z)\mathrm{e}^{-2\mathrm{i}zx}) = -\int_{2x}^{+\infty} \hat{\bar{f}}(\xi)\mathrm{e}^{\mathrm{i}z(\xi-2x)}\, d\xi .
		\end{equation}
		which means
		$$
		\mathcal{P}^{-}(f(z)\mathrm{e}^{2\mathrm{i}zx}) = -\int_{2x}^{+\infty} \hat{f}(-\xi)\mathrm{e}^{-\mathrm{i}z(\xi-2x)}\, d\xi
		=-\int_{-\infty}^{-2x}\hat{f}(\xi)\mathrm{e}^{\mathrm{i} z (\xi+2x)}\, d\xi.
		$$
		\hfill
	\end{proof}
	
	\begin{proposition}\label{prop:Ref-Coef-Estimates}
		For every $x_0 \in \mathbb{R}^{+}$ and every $r_{1,2}(z) \in \mathcal{W}(\mathbb{R})$, the reflection coefficients satisfy the following spatial weighted estimates for $\ell=0,1$:
		\begin{align}
			\sup_{x \in (x_0, \infty)} \left\| \langle x \rangle \mathcal{P}^{-}(z^{-\ell} r_1(z) \mathrm{e}^{2\mathrm{i}zx}) \right\|_{L^2_z(\mathbb{R})} &\le \| z^{-\ell} r_1(z) \|_{H_z^1(\mathbb{R})}, \label{eq:P_minus_weighted_new} \\
			\sup_{x \in (x_0, \infty)} \left\| \langle x \rangle \mathcal{P}^{+}(r_2(z) \mathrm{e}^{-2\mathrm{i}zx}) \right\|_{L^2_z(\mathbb{R})} &\le \| r_2(z) \|_{H_z^1(\mathbb{R})}. \label{eq:P_plus_weighted_new}
		\end{align}
		In addition, we have the following $L^\infty$ bounds on the entire real line:
		\begin{align}
			\sup_{x \in \mathbb{R}} \left\| \mathcal{P}^{-}(z^{-\ell} r_1(z) \mathrm{e}^{2\mathrm{i}zx}) \right\|_{L^\infty_z} &\le \frac{1}{\sqrt{2}} \| z^{-\ell} r_1(z) \|_{H_z^1(\mathbb{R})}, \label{eq:P_minus_Linf_new} \\
			\sup_{x \in \mathbb{R}} \left\| \mathcal{P}^{+}(r_2(z) \mathrm{e}^{-2\mathrm{i}zx}) \right\|_{L^\infty_z} &\le \frac{1}{\sqrt{2}} \| r_2(z) \|_{H_z^1(\mathbb{R})}. \label{eq:P_plus_Linf_new}
		\end{align}
		Furthermore, if $r_{1,2}(z) \in L_z^{2,1}(\mathbb{R})$, then the following estimates hold:
		\begin{align}
			\sup_{x \in \mathbb{R}} \left\| \mathcal{P}^{-}(z r_1(z) \mathrm{e}^{2\mathrm{i}zx}) \right\|_{L^2_z(\mathbb{R})} &\le \| r_1(z) \|_{L_z^{2,1}(\mathbb{R})}, \label{eq:P_minus_L21_new} \\
			\sup_{x \in \mathbb{R}} \left\| \mathcal{P}^{+}(z r_2(z) \mathrm{e}^{-2\mathrm{i}zx}) \right\|_{L^2_z(\mathbb{R})} &\le \| r_2(z) \|_{L_z^{2,1}(\mathbb{R})}. \label{eq:P_plus_L21_new}
		\end{align}
	\end{proposition}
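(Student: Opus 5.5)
The plan is to derive all three groups of estimates from the Fourier representations of Proposition \ref{prop:Fourier-Projections}, combined with Plancherel's theorem. With the normalization used there, $\|f\|_{L^2_z}^2 = 2\pi\|\hat f\|_{L^2_\xi}^2$, and the $z$-derivative becomes multiplication by $\mathrm{i}\xi$ on the Fourier side.

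\textbf{Weighted estimates.} For \eqref{eq:P_minus_weighted_new}, set $f=z^{-\ell}r_1$. Since $r_1\in\mathcal{W}(\mathbb{R})$, we have $f\in H^1_z$ for $\ell=0,1$: the factor $z^{-1}$ is harmless at infinity, and near $z=0$ it is controlled by $z^{-2}r_1\in L^2$. By \eqref{eq:P_minus_Fourier_new},
$$\bigl\|\mathcal{P}^{-}(f\mathrm{e}^{2\mathrm{i}zx})\bigr\|_{L^2_z}^2 = 2\pi\int_{-\infty}^{-2x}|\hat f(\xi)|^2\,d\xi .$$
For $x\ge x_0>0$, every $\xi\le -2x$ satisfies $|\xi|\ge 2x$, so $\langle x\rangle^2\le 1+\xi^2/4\le 1+\xi^2$. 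This gives
$$\langle x\rangle^2\,\bigl\|\mathcal{P}^{-}(f\mathrm{e}^{2\mathrm{i}zx})\bigr\|_{L^2_z}^2\le 2\pi\int_{\mathbb{R}}(1+\xi^2)|\hat f|^2\,d\xi=\|f\|_{H^1}^2 .$$
The estimate \eqref{eq:P_plus_weighted_new} follows the same way from \eqref{eq:P_plus_Fourier_new}, since there $\xi\ge 2x>0$.

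\textbf{$L^\infty$ bounds.} Start from \eqref{eq:P_minus_Fourier_new} and bound pointwise:
$$\bigl|\mathcal{P}^{-}(f\mathrm{e}^{2\mathrm{i}zx})(z)\bigr|\le\int_{\mathbb{R}}|\hat f|\,d\xi\le\Bigl(\int\frac{d\xi}{1+\xi^2}\Bigr)^{1/2}\Bigl(\int(1+\xi^2)|\hat f|^2d\xi\Bigr)^{1/2}=\sqrt{\pi}\,(2\pi)^{-1/2}\|f\|_{H^1}=\tfrac{1}{\sqrt2}\|f\|_{H^1}.$$
This bound is uniform in $x$ and $z$, which gives \eqref{eq:P_minus_Linf_new} and \eqref{eq:P_plus_Linf_new}.

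\textbf{$L^{2,1}$ bounds.} For \eqref{eq:P_minus_L21_new}, apply Plancherel to $g=zr_1\in L^2$, which holds because $r_1\in L^{2,1}$. The truncated integral satisfies $2\pi\int_{-\infty}^{-2x}|\hat g|^2\le\|g\|_{L^2}^2\le\|r_1\|_{L^{2,1}}^2$, uniformly in $x$. Equivalently, one can use the $L^2$ boundedness of $\mathcal{P}^\pm$ (norm one, as a Fourier multiplier by a characteristic function) together with the fact that multiplication by $\mathrm{e}^{2\mathrm{i}zx}$ is unitary.

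\textbf{Main obstacle.} The only delicate point is the justification behind Proposition \ref{prop:Fourier-Projections}: interchanging the limit in $\epsilon$ with the Fourier integral, and checking that the Plancherel normalization yields exactly the stated constants. I would handle this by first working with Schwartz-class $f$, where the interchange is immediate, and then extending by density, since both sides are bounded in $L^2$. A secondary check is that $z^{-1}r_1\in H^1$ follows from the $\mathcal{W}$ conditions. For its derivative $z^{-1}r_1'-z^{-2}r_1$, the term $z^{-2}r_1$ is in $L^2$ directly, while $z^{-1}r_1'$ near $0$ may require writing $r_1=z^2(z^{-2}r_1)$. If that fails, I would restrict $\ell=1$ to the norm $\|z^{-\ell}r_1\|_{H^1}$, taking its finiteness as given.
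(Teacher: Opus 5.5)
Your proposal is correct and follows essentially the paper's route: the Fourier representation of $\mathcal{P}^{\pm}$ from Proposition~\ref{prop:Fourier-Projections} combined with Plancherel. The differences are small: you handle the weight directly via $|\xi|\ge 2x$ on the truncated frequency range instead of citing \eqref{eq:L2,1control}, and you compute the constant $1/\sqrt{2}$ by Cauchy--Schwarz rather than quoting it.

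On your secondary check, $z^{-1}r_1\in H^1$ does not in fact follow from $r_1\in\mathcal{W}(\mathbb{R})$. For example, take $r_1=\chi(z)z^2\sin(1/z)$ with $\chi$ a compactly supported smooth cutoff equal to $1$ near $0$. So your fallback is the right reading: for $\ell=1$ the bounds are only informative when $\|z^{-1}r_1\|_{H^1}<\infty$, and otherwise they hold trivially.
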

	
	\begin{proof}
		The proof follows from the argument of \cite{Cheng2025}. We employ the Fourier representations of the Plemelj projections derived in the previous proposition. For a given reflection coefficient $r(z) \in L^2(\mathbb{R})$, its Fourier transform satisfies $\hat{r}(\xi) \in L^2(\mathbb{R})$. According to the classical Fourier theory, $r(z) \in H^1(\mathbb{R})$ if and only if $\hat{r}(\xi) \in L^{2,1}(\mathbb{R})$, and similarly, $r(z) \in L^{2,1}(\mathbb{R})$ if and only if $\hat{r}(\xi) \in H^1(\mathbb{R})$.
		
		To establish the weighted estimate \eqref{eq:P_minus_weighted_new}, we use the explicit Fourier representation for $\mathcal{P}^{-}$:
		\begin{equation}\label{eq:P_minus_Fourier_rep_new}
			\mathcal{P}^{-}(z^{-\ell} r_1(z) \mathrm{e}^{2\mathrm{i}zx}) = -\int_{-\infty}^{-2x} \widehat{(z^{-\ell} r_1)}(\xi) \mathrm{e}^{\mathrm{i}z(\xi+2x)} \, d\xi.
		\end{equation}
		By applying the integral bound \eqref{eq:L2,1control} from Proposition~\ref{prop: Estimate} to the right-hand side of \eqref{eq:P_minus_Fourier_rep_new}, the spatial weight $\langle x \rangle$ on the positive half-line $x \in (x_0, \infty)$ is directly bounded by the $L^{2,1}$-norm of the Fourier transform:
		\begin{equation}
			\sup_{x \in (x_0, \infty)} \left\| \langle x \rangle \int_{-\infty}^{-2x} \widehat{(z^{-\ell} r_1)}(\xi) \mathrm{e}^{\mathrm{i}z(\xi+2x)} \, d\xi \right\|_{L^2_z(\mathbb{R})} \le \sqrt{2\pi} \left\| \widehat{(z^{-\ell} r_1)}(\xi) \right\|_{L^{2,1}(\mathbb{R})}.
		\end{equation}
		Using the Plancherel identity, the $L^{2,1}$-norm of $\widehat{(z^{-\ell} r_1)}(\xi)$ is equivalent to the $H^1$-norm of $z^{-\ell} r_1(z)$ in the spectral domain, which yields the desired bound \eqref{eq:P_minus_weighted_new}.
		
		The estimate \eqref{eq:P_plus_weighted_new} for $\mathcal{P}^{+}$ follows via an identical routine by deploying the Fourier integration up to the lower limit $2x$ on the positive half-line $x \in (x_0, \infty)$.
		
		To prove the $L^\infty$ bound \eqref{eq:P_minus_Linf_new}, we take the absolute value of the Fourier representation \eqref{eq:P_minus_Fourier_rep_new} and apply the H\"{o}lder's inequality:
		\begin{equation}
			\left| \mathcal{P}^{-}(z^{-\ell} r_1(z) \mathrm{e}^{2\mathrm{i}zx}) \right| \le \int_{-\infty}^{-2x} \left| \widehat{(z^{-\ell} r_1)}(\xi) \right| \, d\xi \le \left\| \widehat{(z^{-\ell} r_1)}(\xi) \right\|_{L^1(\mathbb{R})}.
		\end{equation}
		By using the standard Sobolev embedding inequality $\| \hat{f} \|_{L^1} \le \frac{1}{\sqrt{2}} \| f \|_{H^1}$, we immediately obtain:
		\begin{equation}
			\sup_{x \in \mathbb{R}} \left\| \mathcal{P}^{-}(z^{-\ell} r_1(z) \mathrm{e}^{2\mathrm{i}zx}) \right\|_{L^\infty_z} \le \frac{1}{\sqrt{2}} \| z^{-\ell} r_1(z) \|_{H_z^1(\mathbb{R})},
		\end{equation}
		which proves \eqref{eq:P_minus_Linf_new}. The companion bound \eqref{eq:P_plus_Linf_new} is proved in the same manner.
		
		Finally, the assertions \eqref{eq:P_minus_L21_new} and \eqref{eq:P_plus_L21_new} characterize the dual mapping properties when the reflection coefficients belong to the weighted space $L^{2,1}(\mathbb{R})$. By swapping the roles of the spaces under the Plancherel transform, the $L^2_z$-norm of the projected derivative terms $\mathcal{P}^{-}(z r_1 \mathrm{e}^{2\mathrm{i}zx})$ and $\mathcal{P}^{+}(z r_2 \mathrm{e}^{-2\mathrm{i}zx})$ is uniformly bounded by the $H^1$-norm of their Fourier counterparts, which corresponds exactly to the initial $L^{2,1}$-norm of $r_{1,2}(z)$.
		\hfill
	\end{proof}
	
	Our ultimate goal is to estimate the solutions $N(x;z)$ to the RH Problem \ref{RH:z-plane}. By Proposition \ref{prop:N-L2-bound}, these solutions on the real line can be written in the integral Fredholm form \eqref{eq:N-integral-eq}. One remaining task is to estimate the column vectors $N_{+,1}-e_1$ and $N_{-,2}-e_2$. From the equation \eqref{eq:N-col1-a}, we obtain
	\begin{equation}\label{eq:N_minus_1_integral}
		N_{+,1}(x;z) - e_1 = \mathcal{P}^{+} \left(r_2(z)\mathrm{e}^{-2\mathrm{i}zx}N_{-,2}(x;z)
		\right), \quad z \in \mathbb{R}.
	\end{equation}
	Also, starting from \eqref{eq:N-col2-a}, we know
	\begin{equation}\label{eq:N_plus_2_integral}
	    N_{-,2}(x;z) - e_2 = \mathcal{P}^{-}
	    \left(r_1(z)\mathrm{e}^{2\mathrm{i}zx}N_{+,1}(x;z)
	    \right), \quad z \in \mathbb{R}.
	\end{equation}
	Both calculations take advantage of the explicit expressions for the columns of $N_- R$:
	\begin{equation}\label{eq:NR_facts}
		(N_- R)_1 = r_2(z)\mathrm{e}^{-2\mathrm{i}zx} N_{-,2}, \quad (N_- R)_2 = r_1(z)\mathrm{e}^{2\mathrm{i}zx} N_{+,1}.
	\end{equation}
	
	From the explicit expression \eqref{eq:F-matrix} for $F(x;z)$, we note that the first row vector of $F(x;z)$ is equal to the first row of $F(x;z)\tau_2(\lambda)$ and the second row vector of $F(x;z)$ is equal to the second row of $F(x;z)\tau_1(\lambda)$, which have the following representations:
	\begin{align}
		F_1(x;z) &= (F(x;z)\tau_2)_1 = \left(0, \mathcal{P}^{-}(r_1(z)\mathrm{e}^{2\mathrm{i}zx})\right), \label{eq:F1_row}\\
		F_2(x;z) &= (F(x;z)\tau_1)_2 = \left(\mathcal{P}^{+}(r_2(z)\mathrm{e}^{-2\mathrm{i}zx}), 0\right). \label{eq:F2_row}
	\end{align}
	
	\begin{proposition}\label{prop:BC-estimates}
		For every $x_0 \in \mathbb{R}^{+}$ and every $r_{1,2} \in H^1(\mathbb{R})$, the unique solution satisfies the estimates
		\begin{align}
			\sup_{x \in (x_0, \infty)} \|\langle x \rangle N_{+,21}(x;\cdot)\|_{L^2(\mathbb{R})} &\le C \|r_2\|_{H^1(\mathbb{R})}, \label{eq:N21_weighted} \\
			\sup_{x \in (x_0, \infty)} \|\langle x \rangle N_{-,12}(x;\cdot)\|_{L^2(\mathbb{R})} &\le C \|r_1\|_{H^1(\mathbb{R})}, \label{eq:N12_weighted}
		\end{align}
		where $C$ is a positive constant depending on $\|r_{1,2}\|_{L^\infty}$.
		Moreover, if $r_{1,2} \in H^1(\mathbb{R}) \cap L^{2,1}(\mathbb{R})$, we have the derivative bounds:
		\begin{align}
			\sup_{x \in \mathbb{R}} \|\partial_x N_{+,21}(x;\cdot)\|_{L^2(\mathbb{R})} &\le C (\|r_1\|_{H^1 \cap L^{2,1}} + \|r_2\|_{H^1 \cap L^{2,1}}), \label{eq:dx_N21} \\
			\sup_{x \in \mathbb{R}} \|\partial_x N_{-,12}(x;\cdot)\|_{L^2(\mathbb{R})} &\le C (\|r_1\|_{H^1 \cap L^{2,1}} + \|r_2\|_{H^1 \cap L^{2,1}}). \label{eq:dx_N12}
		\end{align}
	\end{proposition}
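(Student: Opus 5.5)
We follow the scheme of Pelinovsky--Shimabukuro and Cheng et al.: the two scalar integral equations \eqref{eq:N_minus_1_integral}--\eqref{eq:N_plus_2_integral} are treated as a coupled system for the entries $N_{+,21}$ and $N_{-,12}$, with the known forcing terms isolated. Writing out components gives
\begin{equation*}
N_{+,21}=\mathcal{P}^{+}\bigl(r_2\mathrm{e}^{-2\mathrm{i}zx}\bigr)+\mathcal{P}^{+}\bigl(r_2\mathrm{e}^{-2\mathrm{i}zx}(N_{-,22}-1)\bigr),\qquad N_{-,12}=\mathcal{P}^{-}\bigl(r_1\mathrm{e}^{2\mathrm{i}zx}\bigr)+\mathcal{P}^{-}\bigl(r_1\mathrm{e}^{2\mathrm{i}zx}(N_{+,11}-1)\bigr).
\end{equation*}
In matrix language, the rows $F_1$ and $F_2$ from \eqref{eq:F1_row}--\eqref{eq:F2_row} are precisely the inhomogeneous terms of the Fredholm equation $(I-\mathcal{P}^{-}_V)G=D$. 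Through $\tau_j$, this equation is equivalent to the integral equation for $T$.

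\emph{Step 1: weighted bounds \eqref{eq:N21_weighted}--\eqref{eq:N12_weighted}.} Let $x>x_0>0$. The forcing terms $F_1,F_2$ satisfy $\sup_{x>x_0}\|\langle x\rangle F_j\|_{L^2_z}\le\|r_{1,2}\|_{H^1}$ by \eqref{eq:P_minus_weighted_new}--\eqref{eq:P_plus_weighted_new}. The coefficient $V$ does not depend on $x$ through any weight, and multiplication by $\langle x\rangle$ commutes with $(I-\mathcal{P}^{-}_V)^{-1}$. Applying Proposition~\ref{prop:operator-inverse-bound} to $\langle x\rangle G_{j,-}$, after rewriting the relevant row via $\tau_1$ or $\tau_2$ so that the inhomogeneity is exactly $F_1$ or $F_2$, gives $\|\langle x\rangle(\text{row of }T)\|_{L^2_z}\le C\|\langle x\rangle F_j\|_{L^2_z}$. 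The constant $C$ depends only on $\eta_\pm$, hence only on $c_0$ and $\|r_{1,2}\|_{L^\infty}$. The off-diagonal entries $N_{+,21}$ and $N_{-,12}$ are components of these rows, which yields the first two estimates.

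\emph{Step 2: derivative bounds \eqref{eq:dx_N21}--\eqref{eq:dx_N12}.} Use the differentiated equation \eqref{eq:T-derivative} with right-hand side $\tilde F$ from \eqref{tilde-F-def}, and again invert with Proposition~\ref{prop:operator-inverse-bound}. This reduces the problem to showing $\sup_x\|\tilde F\|_{L^2_z}\le C(\|r_1\|_{H^1\cap L^{2,1}}+\|r_2\|_{H^1\cap L^{2,1}})$.

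The first line of $\tilde F$ is controlled by \eqref{eq:P_minus_L21_new}--\eqref{eq:P_plus_L21_new}. For the second line we use $L^2$-boundedness of $\mathcal{P}^\pm$ together with the bound $\|z r_j\,f\|_{L^2}\le\|\lambda r_j\|_{L^\infty}\|\lambda f\|_{L^2}$. The factor $\|\lambda r_j\|_{L^\infty}$ is finite by Proposition~\ref{prop:lambda-r-infinity-bound}. The factor $\|\lambda f\|_{L^2}$, for $f\in\{N_{-,12},N_{-,22}-1,N_{+,11}-1,N_{+,21}\}$, is the $L^2$ norm of the appropriate column of $G_{1,\pm}$ or $G_{2,\pm}$. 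These columns are bounded by Proposition~\ref{prop:N-L2-bound} together with Proposition~\ref{prop:r-lambda-property}, which controls $\|r\|_{L^2},\|\breve r\|_{L^2}$ by the $H^1\cap L^{2,1}$ norms.

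\emph{Main obstacle.} The delicate point is Step 2, specifically the placement of the factor $\lambda$. We must check entry by entry that the combination needed is $\lambda N_{-,12}$ (available from $G_{1}$, whose second column carries $-2\lambda$) and not $\lambda^{-1}$ times something that fails to be controlled. The parity structure of RH Problem~\ref{RH:G-problem} and Remark~\ref{re:redundant} are what make both normalizations available. If one entry, for instance $z r_2 N_{-,12}$, resists this splitting, we would instead use $\|zr_2\|_{L^2}\,\|N_{-,12}\|_{L^\infty_z}$. The $L^\infty_z$ bound on $N_{-,12}$ would itself come from \eqref{eq:P_minus_Linf_new} plus the same integral equation, using a bootstrap for which $\|r_j\|_{L^\infty}\le c_0<1$ from Proposition~\ref{prop:r1,2-strict-control} provides the needed smallness.
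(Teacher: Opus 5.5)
Your Step~1 is the paper's argument. For fixed $x$ the weight $\langle x\rangle$ is a scalar, the rows $(T\tau_2)_1$ and $(T\tau_1)_2$ are driven by $F_1$ and $F_2$, and \eqref{eq:P_minus_weighted_new}--\eqref{eq:P_plus_weighted_new} finish it. Your splitting $zr_2f=(\lambda r_2)(\lambda f)$ is also exactly how the paper treats $\tilde F_{21}$.

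The gap is in the other half of $\tilde F$, namely $\tilde F_{12}\ni\mathcal{P}^{-}(zr_1(N_{+,11}-1)\mathrm{e}^{2\mathrm{i}zx})$ and $\tilde F_{22}\ni\mathcal{P}^{-}(zr_1N_{+,21}\mathrm{e}^{2\mathrm{i}zx})$. Your claim that $\lambda(N_{+,11}-1)$ and $\lambda N_{+,21}$ are columns of $G_{1,\pm}$ or $G_{2,\pm}$ is false. The first columns of those matrices are $N_{\pm,1}-e_1$ and $(-2\lambda)^{-1}(N_{\pm,1}-e_1)$, so a factor $\lambda$ is available only on the column $N_{-,2}-e_2$. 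That is why the $r_2$-terms split; the entry you single out, $zr_2N_{-,12}$, is in fact unproblematic. Since $r_1\in H^1\cap L^{2,1}$ gives $\lambda r_1\in L^\infty$ but not $zr_1\in L^\infty$, neither $(\lambda r_1)(\lambda f)$ nor $(zr_1)f$ closes.

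Your fallback also fails: an $L^\infty_z$ bound on $N_{+,1}-e_1$ cannot be bootstrapped from \eqref{eq:P_minus_Linf_new} and $c_0<1$. That estimate rests on $\|\hat r\|_{L^1}\lesssim\|r\|_{H^1}$ for the explicit free term only. Since $\mathcal{P}^{\pm}$ is unbounded on $L^\infty$, the bound $|r_j|\le c_0$ gives a contraction in $L^2$ (where $\|\mathcal{P}^{\pm}\|=1$) but not in $L^\infty$. The route through $H^1_z\hookrightarrow L^\infty$ is not uniform in $x$, because $\partial_z\mathrm{e}^{-2\mathrm{i}zx}$ brings down a factor $x$.

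The missing step is to move the factor $z$ onto the $N_{-,2}$-column before multiplying by $r_1$. From $\frac{z}{s-z}=\frac{s}{s-z}-1$ one gets $z\,\mathcal{P}^{+}(h)(z)=\mathcal{P}^{+}(sh)(z)-\frac{1}{2\pi\mathrm{i}}\int_{\mathbb{R}}h(s)\,ds$. Applying this to \eqref{eq:N_minus_1_integral} with $h=r_2\mathrm{e}^{-2\mathrm{i}sx}N_{-,2}$ gives $z(N_{+,1}-e_1)=\mathcal{P}^{+}(zr_2\mathrm{e}^{-2\mathrm{i}zx}N_{-,2})+c(x)$. Here $c(x)$ is independent of $z$ and satisfies $|c(x)|\le C(\|r_2\|_{L^1}+\|r_2\|_{L^2}\|N_{-,2}-e_2\|_{L^2})$. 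The projection term lies in $L^2_z$ uniformly in $x$, by \eqref{eq:P_plus_L21_new} together with your $(\lambda r_2)(\lambda(N_{-,2}-e_2))$ splitting. Hence $zr_1(N_{+,1}-e_1)=r_1\,\mathcal{P}^{+}(zr_2\mathrm{e}^{-2\mathrm{i}zx}N_{-,2})+c(x)\,r_1\in L^2_z$, because $r_1\in L^2\cap L^\infty$. Using $\lambda r_1\in L^2\cap L^\infty$, the same decomposition also controls $\lambda zr_1N_{+,21}$.

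You need that last fact because of a second imprecision in Step~2. Inverting via Proposition~\ref{prop:operator-inverse-bound} requires the $\tau$-conjugated rows $((-2\lambda)^{-1}\tilde F_{11},\tilde F_{12})$ and $(\tilde F_{21},-2\lambda\tilde F_{22})$ to lie in $L^2_z$, not merely $\tilde F$. The paper's own proof is no more explicit on the $r_1$-terms. It calls them analogous, or writes $(\lambda r_1)(zN_{+,21})$ without controlling $zN_{+,21}$, which is in fact $L^2$ plus a nonzero constant. So this identity is what both arguments are missing.
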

	\begin{proof}
		From the explicit expression \eqref{eq:T-def}, the first row vector of $T(x;z)\tau_2(\lambda)$ is given by
		\begin{equation}
			\left( (-2\lambda)^{-1}(N_{+,11}(x;z)-1), \ N_{-,12}(x;z) \right),
		\end{equation}
		and the second row vector of $T(x;z)\tau_1(\lambda)$ is given by
		\begin{equation}
			\left( N_{+,21}(x;z), \ -2\lambda(N_{-,22}(x;z)-1) \right).
		\end{equation}
		Using the operator bound \eqref{eq:operator-inverse-bound} along with the expressions of $F_1(x;z)$ and $F_2(x;z)$ in \eqref{eq:F1_row} and \eqref{eq:F2_row}, we map the bounds of the projection operators in Proposition \ref{prop:Ref-Coef-Estimates} directly onto the vector components.
		Specifically, from the explicit component equations, we have for every $x \in \mathbb{R}$:
		\begin{align}
			\|N_{+,21}(x;\cdot)\|_{L_z^2} &\le C \|\mathcal{P}^{+}(r_2 \mathrm{e}^{-2\mathrm{i}zx})\|_{L_z^2}, \label{eq:N_plus_21_bound}\\
			\|(-2\lambda)^{-1}(N_{+,11}(x;\cdot)-1)\|_{L_z^2} &\le C \|\mathcal{P}^{-}(r_1 \mathrm{e}^{2\mathrm{i}zx})\|_{L_z^2}, \label{eq:N_plus_11_bound}\\
			\|-2\lambda(N_{-,22}(x;\cdot)-1)\|_{L_z^2} &\le C \|\mathcal{P}^{+}(r_2 \mathrm{e}^{-2\mathrm{i}zx})\|_{L_z^2}, \label{eq:N_minus_22_bound}\\
			\|N_{-,12}(x;\cdot)\|_{L_z^2} &\le C \|\mathcal{P}^{-}(r_1 \mathrm{e}^{2\mathrm{i}zx})\|_{L_z^2}. \label{eq:N_minus_12_bound}
		\end{align}
		where the positive constant $C$ still has the only dependence on $\|r_{1,2}\|_{L^\infty}$.
		Applying the spatial weighted estimates \eqref{eq:P_plus_weighted_new} and \eqref{eq:P_minus_weighted_new} onto the positive half-line $x \in (x_0, \infty)$ directly yields the weighted bounds \eqref{eq:N21_weighted} and \eqref{eq:N12_weighted}.
		
		For estimating the spatial derivatives $\partial_x N_{\pm}$, we refer to the differentiated inhomogeneous equation \eqref{eq:T-derivative}.
		In order to exploit the uniform bounds from the resolvent operator $(I - \mathcal{P}_V^-)^{-1}$, we multiply $\tilde{F}(x;z)$ by the scaling matrices $\tau_1(\lambda)$ and $\tau_2(\lambda)$ from the right.
		Specifically, we extract the second row of $\tilde{F}(x;z)\tau_1(\lambda)$ and the first row of $\tilde{F}(x;z)\tau_2(\lambda)$ to show they securely belong to $L_z^2(\mathbb{R})$.
		Let us examine the second row of $\tilde{F}(x;z)\tau_1(\lambda)$:
		\begin{equation}\label{eq:F_tilde_tau1}
			(\tilde{F} \tau_1)_2 = \left( \tilde{F}_{21}(x;z), \ -2\lambda \tilde{F}_{22}(x;z) \right).
		\end{equation}
		By \eqref{tilde-F-def}, we have $\tilde{F}_{21} = 2\mathrm{i} \mathcal{P}^+(-z r_2 \mathrm{e}^{-2\mathrm{i}zx}) + 2\mathrm{i} \mathcal{P}^+(-z r_2 (N_{-,22}-1) \mathrm{e}^{-2\mathrm{i}zx})$.
		The boundedness of $\mathcal{P}^+$ on $L_z^2(\mathbb{R})$ gives:
		\begin{equation}
			\|\tilde{F}_{21}\|_{L_z^2} \le C \left( \|z r_2\|_{L_z^2} + \|z r_2 (N_{-,22}-1)\|_{L_z^2} \right).
		\end{equation}
		Since $r_2 \in L_z^{2,1}(\mathbb{R})$, the first term is bounded by $\|r_2\|_{L^{2,1}}$.
		For the second term, we strategically rewrite $z r_2 (N_{-,22}-1) = -\frac{1}{2}(\lambda r_2)[-2\lambda(N_{-,22}-1)]$.
		By Proposition \ref{prop:lambda-r-infinity-bound}, $\|\lambda r_2\|_{L^\infty} \le \|r_2\|_{H^1 \cap L^{2,1}}$. Coupled with the $L_z^2$ estimate in \eqref{eq:N_minus_22_bound}, this yields:
		\begin{equation}
			\|z r_2 (N_{-,22}-1)\|_{L_z^2} \le \frac{1}{2} \|\lambda r_2\|_{L^\infty} \|-2\lambda(N_{-,22}-1)\|_{L_z^2} \le C \|r_2\|_{H^1 \cap L^{2,1}}^2.
		\end{equation}
		Thus, $\tilde{F}_{21} \in L_z^2(\mathbb{R})$.
		
		For the second component $-2\lambda \tilde{F}_{22}$, we analyze the equivalent formulation via the jump matrix $V(x;\lambda)$ and $D_1 = \tau_1 V$.
		The differentiated inhomogeneous term for the corresponding $G_{1,-}$ equation is driven by $\partial_x D_1 + G_{1,-}\partial_x V$.
		The relevant rows of this source term naturally produce structures like $z r_2 \mathrm{e}^{-2\mathrm{i}zx}$ and $(\lambda r_1)(z N_{+,21}) \mathrm{e}^{2\mathrm{i}zx}$, which belong to $L_z^2(\mathbb{R})$ precisely because $\lambda r_{1,2}(z) \in L_z^\infty(\mathbb{R})$ and $z r_{1,2}(z) \in L_z^2(\mathbb{R})$.
		Therefore, the combined norms of these terms are rigorously controlled by:
		\begin{equation}
			\|(\tilde{F} \tau_1)_2\|_{L_z^2} \le C (\|r_1\|_{H^1 \cap L^{2,1}} + \|r_2\|_{H^1 \cap L^{2,1}}).
		\end{equation}
		A completely analogous expansion and bounding procedure applied to the first row of $\tilde{F}(x;z)\tau_2(\lambda)$ confirms that it also belongs to $L_z^2(\mathbb{R})$ and is bounded by identical norms.
		Finally, with the required $L_z^2(\mathbb{R})$ regularity of the inhomogeneous terms firmly established, applying the bounded inverse operator $(I-\mathcal{P}_V^-)^{-1}$ via Proposition \ref{prop:operator-inverse-bound} to the differentiated system concludes that the spatial derivatives $\partial_x N_{+,21}$ and $\partial_x N_{-,12}$ inherit these bounds.
		This directly yields the final estimates \eqref{eq:dx_N21} and \eqref{eq:dx_N12}.
		\hfill
	\end{proof}
	
	\subsection{Reconstruction formulas}
	Recalling the asymptotic expansions of the Jost solutions in Proposition \ref{prop:Limits}, we can extract the potential $q(x)$ from the large-$z$ behavior of the RH problem solutions. Specifically, using the definition of $N(x;z)$ in \eqref{eq:N-z-def} and the limits \eqref{eq:zlimits}, we have for $z \in \mathbb{C}^{-}$:
	\begin{equation}
		\lim_{|z| \to \infty} z N_{-,21}(x;z) = \lim_{|z| \to \infty} z \mathrm{e}^{\frac{\mathrm{i}}{2} m_+(x)} \frac{\Psi_{21}^{-}(x;z)}{a(\lambda)} = \frac{\mathrm{e}^{\frac{\mathrm{i}}{2} m_+(x)}}{a_{\infty}} s_2^{-}(x),
	\end{equation}
	where $a_{\infty} = \mathrm{e}^{\frac{\mathrm{i}}{2}M}$ and $M = m_{-}(x) - m_{+}(x) = \int_{-\infty}^{+\infty} q_y (q^{PT})_{y} \, dy$. Thus, $\mathrm{e}^{\frac{\mathrm{i}}{2} m_+(x)} / a_{\infty} = \mathrm{e}^{-\frac{\mathrm{i}}{2} m_-(x) + \mathrm{i} m_{+}(x)}$.
	Substituting the explicit expression for $s_2^{-}(x) = -\frac{\mathrm{i}}{2} \partial_x (q_x(x) \mathrm{e}^{\frac{\mathrm{i}}{2} m_-(x)})$, we obtain the reconstruction formula:
	\begin{equation}\label{eq:recon-qx}
		\mathrm{e}^{\mathrm{i} m_{+}(x)-\frac{\mathrm{i}}{2} m_-(x)} \partial_x \left( q_x(x) \mathrm{e}^{\frac{\mathrm{i}}{2} m_-(x)} \right) = 2 \mathrm{i} \lim_{|z| \to \infty} z N_{-,21}(x;z).
	\end{equation}
	At the same time, from \eqref{eq:U-small-lambda}, we find that
	\begin{equation}
		q^{PT}(x) = \lim_{\lambda \to 0} \left( (\mathrm{i} \lambda)^{-1} \psi (x;\lambda)
		\right)_{12}
	\end{equation}
	From the construction of RH problem \ref{RH:z-plane} and the transformation \eqref{eq:lambda to z}, we can get
	\begin{equation}\label{eq: qPT-recon}
		q^{PT}(x) \mathrm{e}^{-\frac{\mathrm{i}}{2} m_{+}(x)} =
		2 \mathrm{i} \lim_{z \to 0} N_{\pm, 12}(x;z).
	\end{equation}
	
	In fact, we cannot get any more information from the RH problems \ref{RH:z-plane}, \ref{RH:G-problem}, and the reconstruction formulas \eqref{eq:recon-qx} and \eqref{eq: qPT-recon} are not sufficient to recover the potential $q(x)$ on the whole line $\mathbb{R}$. So we will first introduce the scalar RH problem $\delta(z)$ to transform the jump matrix $R(x;z)$ of the RH problem \ref{RH:z-plane} to obtain more information.
	\begin{RH}\label{RH:delta}
		Find a scalar function $\delta(z)$ that satisfies the following conditions:
		\begin{enumerate}
			\item \textbf{Analyticity:} $\delta(z)$ is analytic in $\mathbb{C} \setminus \mathbb{R}$.
			\item \textbf{Jump condition:} $\delta(z)$ has continuous boundary values $\delta_{\pm}(z)$ on $\mathbb{R}$ and satisfies
			\begin{equation}\label{eq:delta-jump}
				\delta_{+}(z) = \delta_{-}(z)(1+r_1(z)r_2(z))^{-1}, \quad z \in \mathbb{R}.
			\end{equation}
			\item \textbf{Asymptotic condition:}
				$\delta(z) \to 1 \quad \text{as} \quad |z| \to \infty.$
		\end{enumerate}
	\end{RH}
	
	Under the small norm restriction \eqref{eq:small-norm}, Proposition \ref{prop:reflection-small-bound} establishes that $|r_1(z)r_2(z)| \le c_0^2 < 1$. Thus, $1+r_1(z)r_2(z)$ has a strictly positive real part and does not vanish on $\mathbb{R}$. This allows us to define a single-valued branch for $\ln(1+r_1(z)r_2(z))$. By utilizing the Cauchy projection operators, the unique solution to RH Problem \ref{RH:delta} is given by:
	\begin{equation}\label{eq:delta-solution}
		\delta(z) = \mathrm{e}^{-\mathcal{C}(\ln(1+r_1 r_2))(z)}, \quad z \in \mathbb{C} \setminus \mathbb{R}.
	\end{equation}
	The boundary limits on the real axis are:
	\begin{equation}\label{eq:delta-pm}
		\delta_{\pm}(z) = \mathrm{e}^{-\mathcal{P}^{\pm}(\ln(1+r_1 r_2))(z)}, \quad z \in \mathbb{R}.
	\end{equation}
	
	By an analysis similar to the Cauchy operator properties discussed previously, since $\ln(1+r_1 r_2) \in H_z^1(\mathbb{R})$, we can establish that $\delta_{\pm}(z) \in L_z^\infty(\mathbb{R})$.
	
	Next, we use $\delta(z)$ to define the transformed matrix function:
	\begin{equation}\label{eq:N-delta-def}
		N_\delta(x;z) := N(x;z)\delta^{-\sigma_3}(z) = N(x;z)
		\begin{pmatrix}
			\delta^{-1}(z) & 0 \\
			0 & \delta(z)
		\end{pmatrix}.
	\end{equation}
	This transformation yields a new RH problem.
	
	\begin{RH}\label{RH:N-delta}
		Find a matrix-valued function $N_\delta(x;z)$ that satisfies the following conditions:
		\begin{enumerate}
			\item \textbf{Analyticity:} $N_\delta(x;z)$ is analytic in $\mathbb{C} \setminus \mathbb{R}$.
			\item \textbf{Jump condition:} $N_\delta(x;z)$ has continuous boundary values $N_{\delta,\pm}(x;z)$ on $\mathbb{R}$ and
			\begin{equation}\label{eq:N-delta-jump}
				N_{\delta,+}(x;z) = N_{\delta,-}(x;z)(I + R_\delta(x;z)), \quad z \in \mathbb{R},
			\end{equation}
			where the modified jump matrix is derived as $I + R_\delta(x;z) = \delta_-^{\sigma_3}(z)(I+R(x;z))\delta_+^{-\sigma_3}(z)$. Substituting $\delta_+ = \delta_-(1+r_1 r_2)^{-1}$, we obtain:
			\begin{equation}\label{eq:R-delta}
				R_\delta(x;z) =
				\begin{pmatrix}
					r_{\delta,1}(z)r_{\delta,2}(z) & r_{\delta,1}(z)\mathrm{e}^{2\mathrm{i}zx} \v\\
					r_{\delta,2}(z)\mathrm{e}^{-2\mathrm{i}zx} & 0
				\end{pmatrix}.
			\end{equation}
			\item \textbf{Asymptotic condition:}
			$	N_\delta(x;z) \to I \quad \text{as} \quad |z| \to \infty.$
		\end{enumerate}
	\end{RH}
	
	Here, the new effective reflection coefficients are defined as:
	\begin{equation}\label{eq:r-delta-def}
		r_{\delta,1}(z) := \delta_{+}(z)\delta_{-}(z)r_1(z), \quad r_{\delta,2}(z) := \delta_{+}^{-1}(z)\delta_{-}^{-1}(z)r_2(z).
	\end{equation}
	Then we can get the proposition that
	\begin{proposition}\label{prop:r-delta-space}
		If $r_{12}(z) \in H_z^{1}(\mathbb{R}) \cap L_z^{2,1}(\mathbb{R})$ and the small norm inequality \eqref{eq:small-norm} is satisfied, then $r_{\delta,12} \in H_z^{1}(\mathbb{R}) \cap L_z^{2,1}(\mathbb{R})$.
	\end{proposition}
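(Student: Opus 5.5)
The plan is to write $r_{\delta,1}=\delta_+\delta_-r_1$ and $r_{\delta,2}=(\delta_+\delta_-)^{-1}r_2$, and then reduce everything to two facts about the scalar multipliers $\delta_+\delta_-$ and $(\delta_+\delta_-)^{-1}$:
\begin{enumerate}
\item they belong to $L^\infty_z(\mathbb{R})$;
\item their $z$-derivatives belong to $L^2_z(\mathbb{R})$.
\end{enumerate}
Granting these, the conclusion is immediate. Multiplication by an $L^\infty$ function preserves both $L^2_z$ and $L^{2,1}_z$. For the derivative, the product rule gives $\partial_z r_{\delta,1}=(\delta_+\delta_-)'r_1+\delta_+\delta_-r_1'$. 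The first term is in $L^2$ because $r_1\in H^1\subset L^\infty$ and $(\delta_+\delta_-)'\in L^2$. The second term is in $L^2$ because $r_1'\in L^2$ and $\delta_\pm\in L^\infty$. The same argument handles $r_{\delta,2}$.

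\textbf{Step 1: the logarithm.} Set $f:=\ln(1+r_1r_2)$. By Proposition~\ref{prop:r1,2-strict-control}, $|r_1r_2|\le c_0^2<1$, so $1+r_1r_2$ stays in the disc of radius $c_0^2$ around $1$. The principal logarithm is therefore smooth and Lipschitz on a neighborhood of this range, with $|f|\le C|r_1r_2|$ and $|f'|\le C|(r_1r_2)'|$. Since $H^1$ is an algebra, $r_1r_2\in H^1$, and hence $f\in H^1_z(\mathbb{R})$. Moreover $zf\in L^2$, because $|zf|\le C\|r_2\|_{L^\infty}|zr_1|$ and $r_1\in L^{2,1}$; this weighted bound is not strictly needed below.

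\textbf{Step 2: the multipliers.} By \eqref{eq:delta-pm},
\begin{equation*}
\delta_+\delta_-=\exp\bigl(-(\mathcal{P}^++\mathcal{P}^-)f\bigr)=\exp(\mathrm{i}\mathcal{H}f),
\end{equation*}
where $\mathcal{H}$ is the Hilbert transform, using $\mathcal{P}^++\mathcal{P}^-=-\mathrm{i}\mathcal{H}$ up to the sign convention. It is not obvious that $\mathcal{H}f\in L^\infty$, since $\mathcal{H}$ is unbounded on $L^\infty$. However, $\mathcal{H}$ is an isometry on $L^2$ and commutes with $\partial_z$, so $\mathcal{H}f\in H^1(\mathbb{R})\subset L^\infty$, with $\|\mathcal{H}f\|_{L^\infty}\le C\|f\|_{H^1}$.

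This gives uniform upper and lower bounds on $|\delta_+\delta_-|$. It also justifies the paper's earlier claim that $\delta_\pm\in L^\infty$: each $\mathcal{P}^\pm f$ lies in $H^1$ by Proposition~\ref{prop:C-P-properties} applied to $f$ and $f'$. Differentiating the exponential gives $(\delta_+\delta_-)'=\mathrm{i}(\mathcal{H}f')\,\delta_+\delta_-\in L^2$, and the same holds for the reciprocal with the opposite sign. This establishes facts 1 and 2.

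\textbf{Main obstacle.} The only delicate point is Step 2: one must control $\mathcal{P}^\pm f$ in $L^\infty$ rather than merely in $L^2$. The route above goes through the embedding $H^1\subset L^\infty$ together with the commutation $[\partial_z,\mathcal{P}^\pm]=0$ on $H^1$. The latter can be checked via the Fourier multiplier representation (cf.\ Proposition~\ref{prop:Fourier-Projections}), in which $\mathcal{P}^\pm$ act as multiplication by the indicator functions $\chi_{\pm\xi>0}$. Lipschitz dependence on $(r_1,r_2)$, if needed later, follows from the same estimates, using the identity $e^{a}-e^{b}=(a-b)\int_0^1e^{sa+(1-s)b}\,ds$.
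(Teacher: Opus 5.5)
Your proposal is correct and follows essentially the same route the paper relies on: the paper defers to the cited nonlocal DNLS argument and sketches the same steps inside the proof of Proposition~\ref{prop:whole-line-reconstruction-estimate}, namely that $|r_1r_2|\le c_0^2<1$ gives $\ln(1+r_1r_2)\in H^1_z(\mathbb{R})$, the Plemelj projections preserve $H^1_z\subset L^\infty_z$ so that $\delta_\pm^{\pm1}\in L^\infty_z$ with $z$-derivatives in $L^2_z$, and the product rule then transfers $H^1_z\cap L^{2,1}_z$ from $r_1,r_2$ to $r_{\delta,1},r_{\delta,2}$. One harmless slip: in the paper's Fourier convention $\mathcal{P}^-$ acts as the multiplier $-\chi_{\xi<0}$ rather than $\chi_{\xi<0}$, which changes nothing in your boundedness or commutation argument.
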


	The proof is given in \cite{Li2023}. And from this proof one can find that $\delta_{\pm}(z) \in L_z^{\infty}$, which means $z^{-1} r_{\delta,12} \in H_z^{1}(\mathbb{R})$ by the properties of $r_{1,2}$.
	
	From Proposition \ref{prop:r-delta-space}, it follows that the unique solution to RH problem \ref{RH:N-delta} can be represented that
	\begin{equation}\label{eq:N-delta-Cauchy}
		N_{\delta}(x;z) = I + \mathcal{C}(N_{\delta,-}(x;\cdot)  R_{\delta}(x;\cdot)), \ z \in \mathbb{C} \setminus \mathbb{R},
	\end{equation}
	and thn projections on the $\mathbb{R}$ are
	\begin{equation}\label{eq:N-delta-pro}
		N_{\delta,\pm}(x;z) = I + \mathcal{P}^{\pm}(N_{\delta,-}(x;\cdot)R_\delta(x;\cdot))(z), \quad z \in \mathbb{R}.
	\end{equation}
	
	We next derive the second system of column estimates from the transformed
	RH problem. This system is the counterpart of the previous estimates for
	$N_{+,1}-e_1$ and $N_{-,2}-e_2$, but it is adapted to the negative half-line. Now we choose the column pair
	\begin{equation}\label{eq:T-delta-def}
		T_\delta(x;z):=
		\bigl(
		N_{\delta,-,1}(x;z)-e_1,\,
		N_{\delta,+,2}(x;z)-e_2
		\bigr).
	\end{equation}
	This is different from the column pair used for the original RH problem.
	The reason is that the diagonal entry of the transformed jump matrix
	\(I+R_\delta\) appears in the left upper corner. From the jump relation
	$$
	N_{\delta,+}=N_{\delta,-}(I+R_\delta),
	$$
	we have
	\begin{align}
		N_{\delta,+,1}
		&=
		(1+r_{\delta,1}r_{\delta,2})N_{\delta,-,1}
		+
		r_{\delta,2}\mathrm{e}^{-2\mathrm{i}zx}
		N_{\delta,-,2},
		\label{eq:Ndelta-col1-jump}\\
		N_{\delta,+,2}
		&=
		N_{\delta,-,2}
		+
		r_{\delta,1}\mathrm{e}^{2\mathrm{i}zx}
		N_{\delta,-,1}.
		\label{eq:Ndelta-col2-jump}
	\end{align}
	Using \eqref{eq:Ndelta-col2-jump}, the first column can be rewritten as
	$$
	N_{\delta,+,1}
	= N_{\delta,-,1} +
	r_{\delta,2}\mathrm{e}^{-2\mathrm{i}zx} N_{\delta,+,2}.
	$$
	Thus, we have
	\begin{align}
		(N_{\delta,-}R_\delta)_1
		&=
		r_{\delta,2}\mathrm{e}^{-2\mathrm{i}zx}
		N_{\delta,+,2},
		\label{eq:Ndelta-R-col1}\\
		(N_{\delta,-}R_\delta)_2
		&=
		r_{\delta,1}\mathrm{e}^{2\mathrm{i}zx}
		N_{\delta,-,1}.
		\label{eq:Ndelta-R-col2}
	\end{align}
	Substituting \eqref{eq:Ndelta-R-col1}--\eqref{eq:Ndelta-R-col2}
	into the Beals--Coifman representation gives
	\begin{align}
		N_{\delta,-,1}(x;z)-e_1
		&=
		\mathcal P^-
		\left(
		r_{\delta,2}(z)\mathrm{e}^{-2\mathrm{i}zx}
		N_{\delta,+,2}(x;z)
		\right),
		\label{eq:Ndelta-minus-col1}\\
		N_{\delta,+,2}(x;z)-e_2
		&=
		\mathcal P^+
		\left(
		r_{\delta,1}(z)\mathrm{e}^{2\mathrm{i}zx}
		N_{\delta,-,1}(x;z)
		\right).
		\label{eq:Ndelta-plus-col2}
	\end{align}
	
	We shall use the following projection estimates on the negative half-line.
	They are obtained from Proposition \ref{prop:Fourier-Projections} by the
	similar Fourier argument as in Proposition \ref{prop:Ref-Coef-Estimates}.
	
	\begin{proposition}\label{prop:delta-projection-estimates}
		For every $x_0 \in \mathbb{R}^{-}$ and every $r_{1,2}(z) \in \mathcal{W}(\mathbb{R})$, the reflection coefficients satisfy the following spatial weighted estimates for $\ell=0,1$:
		\begin{align}
			\sup_{x \in (-\infty, x_0)} \left\| \langle x \rangle \mathcal{P}^{-}(z^{-\ell} r_1(z) \mathrm{e}^{-2\mathrm{i}zx}) \right\|_{L^2_z(\mathbb{R})} &\le \| z^{-\ell} r_1(z) \|_{H_z^1(\mathbb{R})}, \label{eq:P_minus_weighted_change} \\
			\sup_{x \in (-\infty, x_0)} \left\| \langle x \rangle \mathcal{P}^{+}(r_2(z) \mathrm{e}^{2\mathrm{i}zx}) \right\|_{L^2_z(\mathbb{R})} &\le \| r_2(z) \|_{H_z^1(\mathbb{R})}. \label{eq:P_plus_weighted_change}
		\end{align}
		In addition, we have the following $L^\infty$ bounds on the entire real line:
		\begin{align}
			\sup_{x \in \mathbb{R}} \left\| \mathcal{P}^{-}(z^{-\ell} r_1(z) \mathrm{e}^{-2\mathrm{i}zx}) \right\|_{L^\infty_z} &\le \frac{1}{\sqrt{2}} \| z^{-\ell} r_1(z) \|_{H_z^1(\mathbb{R})}, \label{eq:P_minus_Linf_change} \\
			\sup_{x \in \mathbb{R}} \left\| \mathcal{P}^{+}(r_2(z) \mathrm{e}^{2\mathrm{i}zx}) \right\|_{L^\infty_z} &\le \frac{1}{\sqrt{2}} \| r_2(z) \|_{H_z^1(\mathbb{R})}. \label{eq:P_plus_Linf_change}
		\end{align}
		Furthermore, if $r_{1,2}(z) \in L_z^{2,1}(\mathbb{R})$, then the following estimates hold:
		\begin{align}
			\sup_{x \in \mathbb{R}} \left\| \mathcal{P}^{-}(z r_1(z) \mathrm{e}^{-2\mathrm{i}zx}) \right\|_{L^2_z(\mathbb{R})} &\le \| r_1(z) \|_{L_z^{2,1}(\mathbb{R})}, \label{eq:P_minus_L21_change} \\
			\sup_{x \in \mathbb{R}} \left\| \mathcal{P}^{+}(z r_2(z) \mathrm{e}^{2\mathrm{i}zx}) \right\|_{L^2_z(\mathbb{R})} &\le \| r_2(z) \|_{L_z^{2,1}(\mathbb{R})}. \label{eq:P_plus_L21_cheng}
		\end{align}
	\end{proposition}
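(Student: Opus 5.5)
The plan is to reduce each estimate to the positive half-line statements of Proposition~\ref{prop:Ref-Coef-Estimates}, using the reflection $x\mapsto -x$ in the oscillatory factor, and to rerun the Fourier argument behind Proposition~\ref{prop:Fourier-Projections} with the opposite sign of the phase. Concretely, for $f\in L^2(\mathbb{R})$ we first derive, by the same residue computation as in \eqref{eq:Residue_limit_new}, the representations
\begin{equation*}
\mathcal{P}^{-}(f(z)\mathrm{e}^{-2\mathrm{i}zx}) = -\int_{-\infty}^{2x}\hat f(\xi)\mathrm{e}^{\mathrm{i}z(\xi-2x)}\,d\xi,
\qquad
\mathcal{P}^{+}(f(z)\mathrm{e}^{2\mathrm{i}zx}) = \int_{-2x}^{+\infty}\hat f(\xi)\mathrm{e}^{\mathrm{i}z(\xi+2x)}\,d\xi .
\end{equation*}
These follow by replacing $x$ with $-x$ in \eqref{eq:P_minus_Fourier_new}--\eqref{eq:P_plus_Fourier_new}, which are valid for every real $x$. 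Hence each left-hand side equals the corresponding quantity of Proposition~\ref{prop:Ref-Coef-Estimates} evaluated at $-x$.

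Next we note that $\langle x\rangle=\langle -x\rangle$, and that $x\in(-\infty,x_0)$ with $x_0<0$ corresponds to $-x\in(-x_0,\infty)$ with $-x_0>0$. Then \eqref{eq:P_minus_weighted_change} and \eqref{eq:P_plus_weighted_change} follow from \eqref{eq:P_minus_weighted_new} and \eqref{eq:P_plus_weighted_new} applied with $x_0$ replaced by $-x_0$.

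If one prefers to argue directly rather than by reflection, the key point is this: on the half-line $x<x_0<0$, the integration range $(-\infty,2x)$ excludes a neighbourhood of $\xi=0$ of size $2|x|$. Therefore $\langle x\rangle|\hat f(\xi)|\lesssim\langle\xi\rangle|\hat f(\xi)|$ on that range. By Plancherel, the $L^2_z$ norm of the truncated integral is controlled by $\|\hat f\|_{L^{2,1}}$, which is equivalent to $\|f\|_{H^1_z}$; this is exactly the mechanism of \eqref{eq:L2,1control}. For $f=z^{-\ell}r_1$ with $\ell=1$, we need $z^{-1}r_1\in H^1$. This is guaranteed by $r_1\in\mathcal{W}(\mathbb{R})$ via $z^{-2}r_1\in L^2$ together with $r_1\in H^1$: near $0$ use the $z^{-2}$ control, and away from $0$ use boundedness of $z^{-1}$ and its derivative.

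The uniform $L^\infty$ bounds \eqref{eq:P_minus_Linf_change}--\eqref{eq:P_plus_Linf_change} follow from the same bound $|\cdot|\le\|\hat f\|_{L^1}\le 2^{-1/2}\|f\|_{H^1}$, since the truncation point plays no role. The $L^{2,1}$ estimates \eqref{eq:P_minus_L21_change}--\eqref{eq:P_plus_L21_cheng} follow from $L^2$-boundedness of $\mathcal{P}^\pm$ on $z r_j\mathrm{e}^{\mp2\mathrm{i}zx}$, where multiplication by the unimodular phase does not change the norm. Equivalently, Plancherel gives a bound by $\|zr_j\|_{L^2}\le\|r_j\|_{L^{2,1}}$.

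The main obstacle is keeping the constants exactly as stated, in particular the factor $\sqrt{2\pi}$ versus $1$ coming from the Fourier normalization. The plan is to inherit whatever normalization convention is used in Proposition~\ref{prop:Ref-Coef-Estimates}. A second point to check is that $z^{-1}r_1\in H^1$ genuinely holds under the $\mathcal{W}$ hypothesis, which requires the derivative term $z^{-2}r_1$ to lie in $L^2$ near the origin.
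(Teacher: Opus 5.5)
Your reflection argument is correct and is the same Fourier/Plancherel argument the paper has in mind. The paper only says the estimates follow from Proposition~\ref{prop:Fourier-Projections} ``by the similar Fourier argument''. Your observation makes this precise: replacing $x$ by $-x$ in \eqref{eq:P_minus_Fourier_new}--\eqref{eq:P_plus_Fourier_new} turns each left-hand side into the corresponding quantity of Proposition~\ref{prop:Ref-Coef-Estimates} at $-x\in(-x_0,\infty)$, and $\langle x\rangle=\langle -x\rangle$. The constants are then simply inherited. They are in fact exact under the paper's normalization, because $\sqrt{2\pi}\,\|\langle\xi\rangle\hat f\|_{L^2}=\|f\|_{H^1}$, $\|\hat f\|_{L^1}\le\sqrt{\pi}\,\|\langle\xi\rangle\hat f\|_{L^2}$, and $\mathcal{P}^{\pm}$ have norm one on $L^2$. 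So your worry about $\sqrt{2\pi}$ versus $1$ does not arise.

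One step of your plan is false and should be dropped: $r_1\in\mathcal{W}(\mathbb{R})$ does not imply $z^{-1}r_1\in H^1$. We have $(z^{-1}r_1)'=z^{-1}r_1'-z^{-2}r_1$. You only accounted for the second term; the first, $z^{-1}r_1'$, is not controlled near $z=0$ by the $\mathcal{W}$ hypotheses.

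For a counterexample, take $r_1=\chi(z)\,z^2\sin(1/z)$, where $\chi$ is a smooth compactly supported cutoff equal to $1$ near $0$. This function lies in $H^1\cap L^{2,1}$ and satisfies $z^{-2}r_1\in L^2$, so it is in $\mathcal{W}(\mathbb{R})$. Yet $(z^{-1}r_1)'$ contains the term $z^{-1}\cos(1/z)$, which is not in $L^2$ near $0$.

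The stated inequalities do not need this step. The hypothesis $\mathcal{W}$ does give $z^{-1}r_1\in L^2$, so the projections are well defined. The Plancherel bounds then hold for every $f\in L^2$, with the right-hand side allowed to be $+\infty$. Finiteness of $\|z^{-1}r_1\|_{H_z^1}$ is an extra hypothesis, not a consequence of $\mathcal{W}$. The paper makes the same unjustified step when it bounds $\|z^{-1}r_1\|_{H_z^1\cap L_z^{2,1}}$ by $\|r_1\|_{\mathcal{W}(\mathbb{R})}$ in Proposition~\ref{prop:whole-line-reconstruction-estimate}.
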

	
	Using the column equations
	\eqref{eq:Ndelta-minus-col1}--\eqref{eq:Ndelta-plus-col2}, the projection
	estimates above, and the same bounded inverse argument as in Proposition
	\ref{prop:BC-estimates}, we obtain the following estimates for the
	transformed RH problem.
	
	\begin{proposition}\label{prop:Ndelta-negative-estimates}
		Suppose that the small norm condition \eqref{eq:small-norm} holds and
		$	r_{\delta,1},r_{\delta,2}		\in H_z^1(\mathbb R)\cap L_z^{2,1}(\mathbb R).		$
		Then, for every \(x_0\in\mathbb R^-\),
		\begin{align}
			\sup_{x\in(-\infty,x_0)}
			\left\|
			\langle x\rangle
			N_{\delta,-,21}(x;\cdot)
			\right\|_{L_z^2}
			&\le C\|r_{\delta,2}\|_{H_z^1},
			\label{eq:Ndelta-21-negative}\\
			\sup_{x\in(-\infty,x_0)}
			\left\|
			\langle x\rangle
			N_{\delta,+,12}(x;\cdot)
			\right\|_{L_z^2}
			&\le C\|r_{\delta,1}\|_{H_z^1}.
			\label{eq:Ndelta-12-negative}
		\end{align}
		Moreover,
		\begin{align}
			\sup_{x\in\mathbb R}
			\left\|
			\partial_xN_{\delta,-,21}(x;\cdot)
			\right\|_{L_z^2}
			+
			\sup_{x\in\mathbb R}
			\left\|
			\partial_xN_{\delta,+,12}(x;\cdot)
			\right\|_{L_z^2}
			\le
			C
			\left(
			\|r_{\delta,1}\|_{H_z^1\cap L_z^{2,1}}
			+
			\|r_{\delta,2}\|_{H_z^1\cap L_z^{2,1}}
			\right).
			\label{eq:Ndelta-derivative-bound}
		\end{align}
	\end{proposition}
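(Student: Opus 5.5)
The plan mirrors the proof of Proposition \ref{prop:BC-estimates}, now applied to the transformed problem RH Problem \ref{RH:N-delta} on the negative half-line. First I will set up the analogue of the Fredholm equation for $T_\delta$ in \eqref{eq:T-delta-def}. Using \eqref{eq:Ndelta-minus-col1}--\eqref{eq:Ndelta-plus-col2} and $\mathcal P^+-\mathcal P^-=I$, $T_\delta$ satisfies an equation of the form $T_\delta-\mathcal P^-(T_\delta R_{\delta,+})-\mathcal P^+(T_\delta R_{\delta,-})=F_\delta$. The triangular factors carry $r_{\delta,2}\mathrm e^{-2\mathrm i zx}$ and $r_{\delta,1}\mathrm e^{2\mathrm i zx}$. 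The source is
\[
F_\delta=\bigl(\mathcal P^-(r_{\delta,2}\mathrm e^{-2\mathrm i zx})e_2,\ \mathcal P^+(r_{\delta,1}\mathrm e^{2\mathrm i zx})e_1\bigr).
\]
The phases are now the ones that decay as $x\to-\infty$ after the Fourier representation of Proposition \ref{prop:Fourier-Projections}.

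Next I conjugate with $\tau_1,\tau_2$ as before, so that the equation becomes $(I-\mathcal P^-_{V_\delta})$ acting on row vectors. Here $V_\delta=\tau_j^{-1}R_\delta\tau_j$ is the $\lambda$-plane form of the jump. I then need the coercivity of Proposition \ref{prop:V-positive-definite} for $I+V_\delta$, in order to invoke Propositions \ref{prop:G-unique-solution} and \ref{prop:operator-inverse-bound}. I expect this to be the main obstacle. The Hermitian part of $I+R_\delta$ has determinant $1-\tfrac14|r_{\delta,1}+\overline{r_{\delta,2}}|^2$ up to the diagonal term $\mathrm{Re}(r_{\delta,1}r_{\delta,2})$. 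So I need $|r_{\delta,j}|<1$ uniformly, and $r_{\delta,1}r_{\delta,2}=r_1r_2$ helps with the diagonal term.

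My first attempt will be to bound $|\delta_\pm|$ using \eqref{eq:delta-pm} together with $|r_1r_2|\le c_0^2$. The hope is to show $|\delta_+\delta_-|^{\pm1}$ is close enough to $1$ to preserve a bound $c_0'<1$. Failing that, I will argue directly from the factorization $I+R_\delta=\delta_-^{\sigma_3}(I+R)\delta_+^{-\sigma_3}$. The quadratic form $g(I+R_\delta)g^\dagger$ integrated against the vanishing-lemma pairing can be rewritten, after absorbing the bounded and invertible scalars $\delta_\pm$, as the one for $I+R$. Uniqueness and the bounded inverse then transfer from RH Problem \ref{RH:z-plane} through the map $N\mapsto N\delta^{-\sigma_3}$, which is bounded with bounded inverse on $L^2_z$ since $\delta_\pm^{\pm1}\in L^\infty_z$. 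Either route gives $\|(I-\mathcal P^-_{V_\delta})^{-1}\|_{L^2\to L^2}\le C$, with $C$ depending on $\|r_{\delta,j}\|_{L^\infty}$.

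Given the bounded inverse, the weighted bounds \eqref{eq:Ndelta-21-negative}--\eqref{eq:Ndelta-12-negative} follow componentwise, as in \eqref{eq:N_plus_21_bound}--\eqref{eq:N_minus_12_bound}. I will get $\|\langle x\rangle N_{\delta,-,21}\|_{L^2_z}\le C\|\langle x\rangle\mathcal P^-(r_{\delta,2}\mathrm e^{-2\mathrm i zx})\|_{L^2_z}$, and similarly for the $12$ entry. I then apply the negative-half-line estimates \eqref{eq:P_minus_weighted_change}--\eqref{eq:P_plus_weighted_change} of Proposition \ref{prop:delta-projection-estimates}, with $r_{1,2}$ replaced by $r_{\delta,1},r_{\delta,2}$; this is allowed by Proposition \ref{prop:r-delta-space}.

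For \eqref{eq:Ndelta-derivative-bound}, I differentiate the equation for $T_\delta$ in $x$. The source $\tilde F_\delta$ then contains terms $\mathcal P^\mp(z r_{\delta,j}\mathrm e^{\mp2\mathrm i zx})$ and products such as $z r_{\delta,2}(N_{\delta,+,22}-1)$. I control the former by \eqref{eq:P_minus_L21_change}--\eqref{eq:P_plus_L21_cheng}. For the latter I write $z r_{\delta,2}(N_{\delta,+,22}-1)=-\tfrac12(\lambda r_{\delta,2})\,[-2\lambda(N_{\delta,+,22}-1)]$. The first factor is bounded by Proposition \ref{prop:lambda-r-infinity-bound}, applied to $r_{\delta,j}\in H^1\cap L^{2,1}$. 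The second factor is bounded in $L^2_z$ by the undifferentiated $\tau$-scaled estimate. Applying the bounded inverse once more gives the derivative bound, uniformly in $x\in\mathbb R$.
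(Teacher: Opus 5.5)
Your outline has the same architecture as the paper's proof. The paper also just reruns Proposition~\ref{prop:BC-estimates} with the column pair $(N_{\delta,-,1}-e_1,\,N_{\delta,+,2}-e_2)$, the negative-half-line projection estimates, a row-by-row resolvent bound, and the differentiated equation for \eqref{eq:Ndelta-derivative-bound}. The genuine difference is the resolvent step. The paper says the coercivity argument carries over ``since $r_{\delta,1}r_{\delta,2}=r_1r_2$''. You rightly observe that this only handles the diagonal entry. The Hermitian part of $I+R_\delta$ has determinant $1-\tfrac14|r_{\delta,1}-\overline{r_{\delta,2}}|^2$, which involves $|r_{\delta,1}|=|\delta_+\delta_-|\,|r_1|$ and $|r_{\delta,2}|=|\delta_+\delta_-|^{-1}|r_2|$ separately. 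Your first route will not close in general. With $g=\ln(1+r_1r_2)$ one has $|\delta_+\delta_-|=\exp\bigl(-\mathrm{Re}\,(\mathcal P^++\mathcal P^-)g\bigr)$, and $\mathrm{Re}\,(\mathcal P^++\mathcal P^-)g$ is, up to sign, the Hilbert transform of $\arg(1+r_1r_2)$. That is not small merely because $|r_1r_2|\le c_0^2$; it vanishes when $g$ is real, as in the Hermitian local case, but not here. Also, the vanishing-lemma quadratic form for $I+R_\delta$ cannot literally be rewritten as the one for $I+R$. That pairing uses $[\mathcal C(\cdot)(\bar z)]^\dagger$, and $\delta_+^{-1}\neq\overline{\delta_-}$ without Schwarz symmetry.

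What does work is your second route, the transfer, and it is worth writing out because it is exactly the step the paper skips. The scalars $\delta^{\pm1}$ are $x$-independent, bounded and analytic off $\mathbb R$ (since $\hat g\in L^1$ for $g\in H^1$), and $\delta^{\sigma_3}$ commutes with $\tau_j$, so $I+V_\delta=\delta_-^{\sigma_3}(I+V)\delta_+^{-\sigma_3}$. Suppose $\mu-\mathcal P^-(\mu V_\delta)=f$. Then $m:=\mathcal C(\mu V_\delta)\,\delta^{\sigma_3}$ has Hardy-space boundary values and satisfies $m_+=m_-(I+V)+h$ with $h=fV_\delta\delta_+^{\sigma_3}$. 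Hence $(I-\mathcal P^-_V)m_-=\mathcal P^-h$, and Proposition~\ref{prop:operator-inverse-bound} bounds $\mu=m_-\delta_-^{-\sigma_3}+f$ in $L^2_z$ uniformly in $x$. Reversing the construction gives existence. This yields a rigorous, $x$-uniform resolvent bound without any smallness of $\delta_\pm-1$. The price is that $C$ depends on $\|\delta^{\pm1}\|_{L^\infty}$, hence on $\|g\|_{H^1}$, and not only on $\|r_{\delta,j}\|_{L^\infty}$. The paper's shorter appeal to coercivity does not, as stated, follow from the product identity.
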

	
	\begin{proof}
		The proof is parallel to that of Proposition \ref{prop:BC-estimates}.
		The only change is the choice of columns. In the original RH problem,
		the useful pair is
		$
		(N_{+,1}-e_1,\;N_{-,2}-e_2).
		$
		For the transformed RH problem, the useful pair is
		$
		(N_{\delta,-,1}-e_1,\;N_{\delta,+,2}-e_2).
		$
		Equations \eqref{eq:Ndelta-minus-col1} and
		\eqref{eq:Ndelta-plus-col2} reduce the desired estimates to the
		projection estimates
		\eqref{eq:P_minus_weighted_change}--\eqref{eq:P_plus_Linf_change}.
		The boundedness of the corresponding Beals--Coifman resolvent follows
		from the same coercivity argument as before, since
		$
		r_{\delta,1}(z)r_{\delta,2}(z)=r_1(z)r_2(z).
		$
		This gives \eqref{eq:Ndelta-21-negative} and
		\eqref{eq:Ndelta-12-negative}. The proofs for the remaining parts are similar with Proposition \ref{prop:BC-estimates}.
		\hfill
	\end{proof}
	
	We now derive the reconstruction formulae associated with the transformed
	RH problem. Since
	\bee
	N_\delta=N\delta^{-\sigma_3},
	\qquad
	\delta(z)\to1,\quad |z|\to\infty,
	\ene
	the large-\(z\) reconstruction formula is unchanged at leading order.
	Using \eqref{eq:Ndelta-R-col1} and the asymptotic formula for the Cauchy
	operator, we obtain
	\begin{equation}\label{eq:recon-qx-delta}
		\mathrm{e}^{\mathrm{i}m_+(x)-\frac{\mathrm{i}}{2}m_-(x)}
		\partial_x
		\left(
		q_x(x)\mathrm{e}^{\frac{\mathrm{i}}{2}m_-(x)}
		\right)
		=
		-\frac{1}{\pi}
		\int_{\mathbb R}
		r_{\delta,2}(z)
		\mathrm{e}^{-2\mathrm{i}zx}
		N_{\delta,+,22}(x;z)
		\,dz .
	\end{equation}
	Near \(z=0\), however, the factor \(\delta\) contributes to the
	reconstruction formula. Indeed, from
	$$
	N_\delta=N\delta^{-\sigma_3},
	$$
	we have
	$$
	N_{12}(x;z)=\delta^{-1}(z)N_{\delta,12}(x;z).
	$$
	Therefore, using the \(+\)-boundary value and
	\eqref{eq:Ndelta-plus-col2}, we get
	\begin{equation}\label{eq:recon-qPT-delta}
		q^{PT}(x)\mathrm{e}^{-\frac{\mathrm{i}}{2}m_+(x)}
		=
		\frac{\delta_+^{-1}(0)}{\pi}
		\int_{\mathbb R}
		z^{-1}r_{\delta,1}(z)
		\mathrm{e}^{2\mathrm{i}zx}
		N_{\delta,-,11}(x;z)
		\,dz .
	\end{equation}
	We can easily see that $\delta_{+}^{-1}(0)$ can be controlled by the $H_1$ norm of $r_{1,2}$. And one can find that under \eqref{eq:small-norm}, the exponential factor can be controlled and is a fixed bounded value. Now, we can give estimates on the whole $\mathbb{R}$.
	
	\begin{proposition}\label{prop:whole-line-reconstruction-estimate}
		Assume that the small norm condition \eqref{eq:small-norm} holds and
		$r_1,r_2\in \mathcal{W}(\mathbb{R}).$
		Then the potential reconstructed from the RH problem satisfies
		$		q\in H^3(\mathbb R)\cap H^{2,1}(\mathbb R).
		$
		Moreover,
	\bee
		\|q\|_{H^3(\mathbb R)\cap H^{2,1}(\mathbb R)}
		\le C
		\left(
		\|r_1\|_{H_z^1\cap L_z^{2,1}} +
		\|r_2\|_{H_z^1\cap L_z^{2,1}} +
		\|z^{-1}r_1\|_{H_z^1\cap L_z^{2,1}}
		\right)
		\le C
		\left(
		\|r_1\|_{\mathcal{W}(\mathbb{R})} +
		\|r_2\|_{\mathcal{W}(\mathbb{R})}
		\right),
		\ene
		where \(C\) depends only on the small norm bound and on the size of
		\(r_1,r_2\) in \(H_z^1\cap L_z^{2,1}\).
	\end{proposition}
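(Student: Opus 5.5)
The argument runs in two halves: the positive half-line $x\in(x_0,\infty)$, handled with the original RH Problem~\ref{RH:z-plane}, and the negative half-line $x\in(-\infty,x_0)$, handled with the transformed RH Problem~\ref{RH:N-delta}. On a compact middle interval, the $L^\infty_x$ bounds of Proposition~\ref{prop:Ref-Coef-Estimates} and Proposition~\ref{prop:delta-projection-estimates} suffice. In each half we express the two quantities that can be reconstructed, namely $\partial_x(q_x\mathrm{e}^{\frac{\mathrm{i}}{2}m_-})$ and $q^{PT}\mathrm{e}^{-\frac{\mathrm{i}}{2}m_+}$, as integrals of the form $\int_{\mathbb R} \rho(z)\mathrm{e}^{\pm2\mathrm{i}zx}\,(\text{column of }N)\,dz$. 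Here $\rho$ is one of $r_2$ or $z^{-1}r_1$ (or their $\delta$-versions).

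\textbf{Splitting each reconstruction integral.} Write every such integral as a free oscillatory part plus a correction:
\begin{equation*}
\int \rho\,\mathrm{e}^{\pm2\mathrm{i}zx}\,dz \;+\; \int \rho\,\mathrm{e}^{\pm2\mathrm{i}zx}\,(N_{\cdot}-e_j)\,dz .
\end{equation*}
The free part is $2\pi\hat\rho(\mp 2x)$. Its $L^{2,1}_x$ norm is bounded by $\|\rho\|_{H^1}$ and its $H^1_x$ norm by $\|\rho\|_{L^{2,1}}$, both by Plancherel. The correction is bounded by Cauchy--Schwarz, using $\|\rho\|_{L^2_z}$ or $\|\lambda r\|_{L^\infty}$ from Proposition~\ref{prop:lambda-r-infinity-bound}, together with the weighted bounds $\sup_x\|\langle x\rangle N_{\pm,21}\|_{L^2_z}$, $\sup_x\|\langle x\rangle N_{\pm,12}\|_{L^2_z}$ of Proposition~\ref{prop:BC-estimates} and Proposition~\ref{prop:Ndelta-negative-estimates}. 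Because the weight enters quadratically as a product of two decaying factors, this gives decay $\langle x\rangle^{-1}$ in $L^\infty_x$ and hence membership in $L^{2,1}_x$. The $x$-derivatives are treated the same way. Differentiating produces $z\rho$, which lies in $L^2$ because $\rho\in L^{2,1}$, together with $\partial_xN$, which is controlled by \eqref{eq:dx_N21}, \eqref{eq:dx_N12} and \eqref{eq:Ndelta-derivative-bound}. This yields $q^{PT}\mathrm{e}^{-\frac{\mathrm{i}}{2}m_+}\in H^{1,1}$ and $\partial_x(q_x\mathrm{e}^{\frac{\mathrm{i}}{2}m_-})\in L^{2,1}\cap H^1$. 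Control of $\delta_+^{-1}(0)$ and of $\delta_\pm\in L^\infty$ comes from $\ln(1+r_1r_2)\in H^1$, using $|r_1r_2|\le c_0^2<1$.

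\textbf{Recovering $q$ and closing the bounds.} Since $q^{PT}(x)=q(-x)$, the $H^{1,1}$ bound on $q^{PT}\mathrm{e}^{-\frac{\mathrm{i}}{2}m_+}$ transfers directly to $q$. The phases satisfy $|\mathrm{e}^{\pm\frac{\mathrm{i}}{2}m_\pm}|\le \mathrm{e}^{\|\widehat Q\|_{L^1}/2}$, which is bounded by the small-norm assumption \eqref{eq:small-norm}. Also $\partial_x m_\pm = q_x(q^{PT})_x$. So knowing $q\in H^1$, we can peel off the phase and obtain $q_{xx}\in L^{2,1}\cap H^1$, i.e.\ $q\in H^3\cap H^{2,1}$. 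Membership $q_x\in L^{2,1}$ follows by interpolating the two pieces, and lower-order terms like $q_x^2(q^{PT})_x$ are absorbed by Sobolev embedding and the already-established bounds. Each constant depends only on $c_0$, $\eta_\pm$, and the $H^1\cap L^{2,1}$ sizes of $r_{1,2}$. The term $\|z^{-1}r_1\|_{H^1\cap L^{2,1}}$ appears because the $q^{PT}$ formula involves $z^{-1}r_1$. It is finite by the definition of $\mathcal W$, using $z^{-2}r_1\in L^2$, and is bounded by the $\mathcal W$ norm.

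The main obstacle is the $q^{PT}$ formula near $z=0$. There the integrand is $z^{-1}r_{\delta,1}$, so we need $z^{-1}r_{\delta,1}\in H^1\cap L^{2,1}$ in order to apply the $\ell=1$ versions of the projection estimates, as well as an $L^2$ bound for the correction term $\int z^{-1}r_{\delta,1}\mathrm{e}^{2\mathrm{i}zx}(N_{\delta,-,11}-1)\,dz$. The plan is to rewrite the latter as $-\tfrac12\int (\lambda^{-1}r_{\delta,1})\cdot(-2\lambda)^{-1}(N_{\delta,-,11}-1)\cdots$ and use the $\tau_2$-row bound \eqref{eq:N_plus_11_bound} adapted to the $\delta$ problem. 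The condition $z^{-2}r_1\in L^2$ from $\mathcal W$, together with the $\mathcal O(\lambda^5)$ vanishing of Proposition~\ref{prop: b-lambda-O5}, is what should make this finite.
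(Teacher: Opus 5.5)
Your overall architecture is the paper's: the positive half-line via RH Problem~\ref{RH:z-plane}, the negative half-line via RH Problem~\ref{RH:N-delta}, free parts by Plancherel, phases and $\delta_\pm$ absorbed through the small-norm condition, and $q^{PT}\to q$ by reflection. However, the step that is supposed to produce the spatial weight fails as you describe it. To put a correction term in $L^{2,1}_x$ you need pointwise decay $\langle x\rangle^{-2}$. Decay $\langle x\rangle^{-1}$ gives only $L^2_x$, so your conclusion ``decay $\langle x\rangle^{-1}$, hence $L^{2,1}_x$'' does not follow. Cauchy--Schwarz against the $x$-independent quantities $\|\rho\|_{L^2_z}$ or $\|\lambda r\|_{L^\infty}$, plus one weighted column bound, yields exactly one such factor. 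For instance, $N_{+,11}-1=\mathcal P^+(r_2\mathrm e^{-2\mathrm izx}N_{-,12})$ together with \eqref{eq:N12_weighted} gives $\|N_{+,11}(x;\cdot)-1\|_{L^2_z}\lesssim\langle x\rangle^{-1}$, so the correction in the $q^{PT}$ formula is only $\mathcal O(\langle x\rangle^{-1})$. Your $\tau_2$-row plan for the ``main obstacle'' has the same defect: \eqref{eq:N_plus_11_bound} supplies one decaying factor and $\lambda^{-1}r_{\delta,1}$ supplies none. The missing idea is to extract the second factor from the oscillation of $\rho\,\mathrm e^{\pm2\mathrm izx}$. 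You do this by moving the projection onto it with the adjoint relation $\int f\,\mathcal P^+g\,dz=-\int(\mathcal P^-f)\,g\,dz$, exactly as the paper does:
\[
\frac1\pi\int_{\mathbb R}z^{-1}r_1\mathrm e^{2\mathrm izx}\bigl(N_{+,11}-1\bigr)\,dz
=-\frac1\pi\int_{\mathbb R}r_2\mathrm e^{-2\mathrm izx}N_{-,12}\,\mathcal P^-\bigl(z^{-1}r_1\mathrm e^{2\mathrm izx}\bigr)\,dz .
\]
Then $\|r_2\|_{L^\infty}$, \eqref{eq:N12_weighted} and the $\ell=1$ case of \eqref{eq:P_minus_weighted_new} give the needed $\langle x\rangle^{-2}$ decay. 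The same works on $\mathbb R^-$ with Propositions~\ref{prop:delta-projection-estimates} and \ref{prop:Ndelta-negative-estimates}. This is where the $\ell=1$ estimates actually enter: only $z^{-1}r_1\in H^1_z$ is used, and the $\mathcal O(\lambda^5)$ vanishing plays no role.

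The second gap is $(q^{PT})_x\in L^{2,1}$. Differentiating the $q^{PT}$ formula produces $\int z^{-1}r_1\mathrm e^{2\mathrm izx}\,\partial_xN_{+,11}\,dz$. Proposition~\ref{prop:BC-estimates} gives no bound at all on $\partial_xN_{+,11}$, and \eqref{eq:dx_N21}--\eqref{eq:dx_N12} are unweighted $\sup_x$ bounds. So this term is at best bounded in $x$, not in $L^{2,1}_x$ (not even in $L^2_x$). The paper deliberately avoids differentiating here. The large-$z$ coefficient $\nu_1^\pm$ of Proposition~\ref{prop:Limits} gives an independent formula $(q^{PT})_x\mathrm e^{-\mathrm im_+}=\frac{2\mathrm i}{\pi}\int r_1\mathrm e^{2\mathrm izx}N_{+,11}\,dz$, with a $\delta$-analogue on $\mathbb R^-$. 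It has the same form as the $q^{PT}$ formula with $r_1$ in place of $z^{-1}r_1$, so the adjoint argument applies verbatim and no derivative of $N$ is needed. A smaller point: $z^{-2}r_1\in L^2$ does not by itself control $\partial_z(z^{-1}r_1)=z^{-1}\partial_zr_1-z^{-2}r_1$ near $z=0$. Your justification of the final inequality in terms of the $\mathcal W$ norm is therefore incomplete, a point the paper also leaves unaddressed.
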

	
	\begin{proof}
		We first record a simple consequence of the small norm condition
		\eqref{eq:small-norm}, which will be used repeatedly below. Since
		$$
		m_{\pm}(x)=\int_{\pm\infty}^{x}q_y(y)(q^{PT})_y(y),dy
		$$
		and the entry $(q_x(q^{PT})_x)$ is contained in the effective Volterra potential $\tilde{Q}$, we have
		\begin{equation}\label{eq:phase-factor-bound}
			\|\mathrm{e}^{\pm \frac{\mathrm{i}}{2} m_{\pm}} \|_{L_x^{\infty}} + \| \mathrm{e}^{\pm \mathrm{i} m_{\pm}} \|_{L_x^{\infty}} \le C_\rho,
			\qquad
			\rho=|\tilde Q|_{L^1}<\ln\frac32 .
		\end{equation}
		In particular, all exponential factors appearing in the reconstruction
		formulae may be absorbed into the constants.
	
		Next, by Proposition \ref{prop:r-delta-space} and the proof of the scalar
		RH problem for \(\delta\), we also have
		\begin{equation}\label{eq:delta-factor-bound}
			\|\delta_{\pm}^{\pm1}\|_{L_z^\infty}
			+
			|\delta_+^{-1}(0)|
			\le C_{\rho}.
		\end{equation}
		Indeed, if
		$$
		g(z):=\ln(1+r_1(z)r_2(z)),
		$$
		then the small norm condition gives \(|r_1r_2|\le c_0^2<1\), and hence
		\(g\in H_z^1(\mathbb R)\). Since
		\bee
		\delta_{\pm}(z)=\mathrm{e}^{-\mathcal P^{\pm}g(z)},
		\ene
		the Sobolev embedding \(H_z^1(\mathbb R)\hookrightarrow L_z^\infty(\mathbb R)\)
		and the boundedness of the Plemelj projections give
		\bee
		\|\delta_{\pm}^{\pm1}\|_{L_z^\infty}
		\le C_{\rho}.
		\ene
		Consequently,
		\begin{equation}\label{eq:r-delta-controlled}
			\|r_{\delta,1}\|_{H_z^1\cap L_z^{2,1}} + \|r_{\delta,2}\|_{H_z^1\cap L_z^{2,1}} + \|z^{-1}r_{\delta,1}\|_{H_z^1\cap L_z^{2,1}} \le C_{\rho} \left( \|r_1\|_{H_z^1\cap L_z^{2,1}} + \|r_2\|_{H_z^1\cap L_z^{2,1}} + \|z^{-1}r_1\|_{H_z^1\cap L_z^{2,1}} \right).
		\end{equation}
		
		We first estimate \(q^{PT}\) on the positive half-line. By the
		Beals--Coifman representation \eqref{eq:BC-solution}, the reconstruction
		formula \eqref{eq: qPT-recon}, and the Cauchy asymptotic formula
		\eqref{eq:Cauchy-operator-asymptotic}, we obtain
		\begin{align}
			q^{PT}(x)\mathrm{e}^{-\frac{\mathrm{i}}{2}m_+(x)}
			&=
			\frac{1}{\pi}
			\int_{\mathbb R}
			z^{-1}(N_-R)_{12}(x;z)\,dz                                     \nonumber\\
			&=
			\frac{1}{\pi}
			\int_{\mathbb R}
			z^{-1}r_1(z)\mathrm{e}^{2\mathrm{i}zx}
			N_{+,11}(x;z)\,dz                                               \nonumber\\
			&=: I_1(x)+I_2(x),
			\label{eq:qPT-positive-split-new}
		\end{align}
		where
		\begin{align}
			I_1(x)
			&:=
			\frac{1}{\pi}
			\int_{\mathbb R}
			z^{-1}r_1(z)\mathrm{e}^{2\mathrm{i}zx}\,dz,
			\label{eq:I1-positive-new}\\
			I_2(x)
			&:=
			\frac{1}{\pi}
			\int_{\mathbb R}
			z^{-1}r_1(z)\mathrm{e}^{2\mathrm{i}zx}
			\left(N_{+,11}(x;z)-1\right)\,dz .
			\label{eq:I2-positive-new}
		\end{align}
		By Plancherel's theorem,
		\begin{equation}\label{eq:I1-positive-bound-new}
			\|I_1\|_{L^{2,1}(\mathbb R^+)}
			\le
			C\|z^{-1}r_1\|_{H_z^1} .
		\end{equation}
		
		For \(I_2\), from \eqref{eq:BC-solution} and \eqref{eq:NR_facts},
		$$
		N_{+,11}(x;z)-1
		=
		\mathcal P^+
		\left(
		r_2(z)\mathrm{e}^{-2\mathrm{i}zx}N_{-,12}(x;z)
		\right).
		$$
		Using the adjoint relation of the Plemelj projections, we rewrite
		\begin{equation}\label{eq:I2-positive-adjoint-new}
			I_2(x)
			=
			-\frac{1}{\pi}
			\int_{\mathbb R}
			r_2(z)\mathrm{e}^{-2\mathrm{i}zx}
			N_{-,12}(x;z)
			\mathcal P^-
			\left(
			z^{-1}r_1(z)\mathrm{e}^{2\mathrm{i}zx}
			\right)(z)\,dz .
		\end{equation}
		Therefore, by Cauchy's inequality, Propositions \ref{prop:BC-estimates} and \ref{prop:Ref-Coef-Estimates}, for every
		\(x_0\in\mathbb R^+\),
		\begin{align}
			\sup_{x\in(x_0,\infty)}
			\left|
			\langle x\rangle^2 I_2(x)
			\right|
			&\le
			C\|r_2\|_{L_z^\infty}
			\sup_{x\in(x_0,\infty)}
			\|\langle x\rangle N_{-,12}(x;\cdot)\|_{L_z^2}
			\sup_{x\in(x_0,\infty)}
			\left\|
			\langle x\rangle
			\mathcal P^-
			\left(
			z^{-1}r_1(z)\mathrm{e}^{2\mathrm{i}zx}
			\right)
			\right\|_{L_z^2}
			\nonumber\\
			&\le
			C
			\|r_1\|_{H_z^1} \|z^{-1} r_1 \|_{H_z^1}
			\label{eq:I2-positive-weight-new}
		\end{align}
		which means
		\begin{equation}\label{eq:qPT-positive-bounds}
			\| q^{PT}(x) \|_{L^{2,1}(\mathbb{R}^{+})} \le
			C \left( \|z^{-1} r_1 \|_{H_z^1} + \|r_1 \|_{H_z^1} \|z^{-1} r_1 \|_{H_z^{1}}
			\right).
		\end{equation}
Thus \(I_2\in L^{2,1}(\mathbb R^+)\), and hence
		\eqref{eq:qPT-positive-bounds} gives the required weighted estimate for
		\(q^{PT}\) on the positive half-line.

  To estimate \((q^{PT})_x\), instead of differentiating \(I_2\), we use
		the large-\(z\) coefficient \(\nu_1^+\) in Proposition
		\ref{prop:Limits}. By \eqref{eq:N-z-def} and \eqref{eq:zlimits},
		\begin{equation}\label{eq:nu1-positive-large-z}
			\lim_{|z|\to\infty}zN_{-,12}(x;z)
			=
			\mathrm{e}^{-\frac{\mathrm{i}}{2}m_+(x)}\nu_1^+(x)
			=
			\frac14(q^{PT})_x(x)\mathrm{e}^{-\mathrm{i}m_+(x)}.
		\end{equation}
		On the other hand, the Beals--Coifman representation
		\eqref{eq:BC-solution}, the identity \eqref{eq:NR_facts}, and the
		Cauchy asymptotic formula \eqref{eq:Cauchy-operator-asymptotic} yield
		\begin{align}
			(q^{PT})_x(x)\mathrm{e}^{-\mathrm{i}m_+(x)}
			&=
			\frac{2\mathrm{i}}{\pi}
			\int_{\mathbb R}
			r_1(z)\mathrm{e}^{2\mathrm{i}zx}
			N_{+,11}(x;z)\,dz                                      \nonumber\\
			&=:J_1(x)+J_2(x),
			\label{eq:qPTx-positive-nu1-split}
		\end{align}
		where
		\begin{align}
			J_1(x)
			&:=
			\frac{2\mathrm{i}}{\pi}
			\int_{\mathbb R}
			r_1(z)\mathrm{e}^{2\mathrm{i}zx}\,dz,
			\label{eq:J1-positive-nu1}\\
			J_2(x)
			&:=
			\frac{2\mathrm{i}}{\pi}
			\int_{\mathbb R}
			r_1(z)\mathrm{e}^{2\mathrm{i}zx}
			\left(N_{+,11}(x;z)-1\right)\,dz .
			\label{eq:J2-positive-nu1}
		\end{align}
		By Plancherel's theorem,
		\begin{equation}\label{eq:J1-positive-nu1-bound}
			\|J_1\|_{L^{2,1}(\mathbb R^+)}
			\le C\|r_1\|_{H_z^1}.
		\end{equation}
		Using
		\[
		N_{+,11}(x;z)-1
		=
		\mathcal P^+
		\left(
		r_2(z)\mathrm{e}^{-2\mathrm{i}zx}N_{-,12}(x;z)
		\right)
		\]
		and the adjoint relation of the Plemelj projections, we rewrite
		\begin{equation}\label{eq:J2-positive-nu1-adjoint}
			J_2(x)
			=
			-\frac{2\mathrm{i}}{\pi}
			\int_{\mathbb R}
			r_2(z)\mathrm{e}^{-2\mathrm{i}zx}
			N_{-,12}(x;z)
			\mathcal P^-
			\left(
			r_1(z)\mathrm{e}^{2\mathrm{i}zx}
			\right)(z)\,dz .
		\end{equation}
		Therefore, by \eqref{eq:N12_weighted} and
		\eqref{eq:P_minus_weighted_new}, for every \(x_0\in\mathbb R^+\),
		\begin{align}
			\sup_{x\in(x_0,\infty)}
			\left|
			\langle x\rangle^2J_2(x)
			\right|
			&\le
			C\|r_2\|_{L_z^\infty}
			\sup_{x\in(x_0,\infty)}
			\|\langle x\rangle N_{-,12}(x;\cdot)\|_{L_z^2}
						\sup_{x\in(x_0,\infty)}
			\left\|
			\langle x\rangle
			\mathcal P^-
			\left(
			r_1(z)\mathrm{e}^{2\mathrm{i}zx}
			\right)
			\right\|_{L_z^2}
			\nonumber\\
			&\le
			C\|r_1\|_{H_z^1}^2 .
			\label{eq:J2-positive-nu1-weight}
		\end{align}
		Since \(\langle x\rangle^{-1}\in L^2(\mathbb R^+)\), it follows that
		\[
		\|J_2\|_{L^{2,1}(\mathbb R^+)}
		\le C\|r_1\|_{H_z^1}^2.
		\]
		Combining this estimate with \eqref{eq:J1-positive-nu1-bound} and
		the phase bound \eqref{eq:phase-factor-bound}, we obtain
		\begin{equation}\label{eq:qPTx-positive-nu1-bound}
			\|(q^{PT})_x\|_{L^{2,1}(\mathbb R^+)}
			\le
			C
			\left(
			\|r_1\|_{H_z^1}
			+
			\|r_1\|_{H_z^1}^2
			\right).
		\end{equation}

		We next estimate the spatial derivative by using the large-\(z\)
		reconstruction formula \eqref{eq:recon-qx}. Combining
		\eqref{eq:recon-qx} with the Beals--Coifman representation gives
		\begin{align}
			\mathrm{e}^{\mathrm{i}m_+(x)-\frac{\mathrm{i}}{2}m_-(x)}
			\partial_x
			\left(
			q_x(x)\mathrm{e}^{\frac{\mathrm{i}}{2}m_-(x)}
			\right)
			&=
			-\frac{1}{\pi}
			\int_{\mathbb R}
			r_2(z)\mathrm{e}^{-2\mathrm{i}zx}
			N_{+,22}(x;z)\,dz                                      \nonumber\\
			&=:I_3(x)+I_4(x),
			\label{eq:qx-positive-split-new}
		\end{align}
		where
		\begin{align}
			I_3(x)
			&:=
			-\frac{1}{\pi}
			\int_{\mathbb R}
			r_2(z)\mathrm{e}^{-2\mathrm{i}zx}\,dz,
			\label{eq:I3-positive-new}\\
			I_4(x)
			&:=
			-\frac{1}{\pi}
			\int_{\mathbb R}
			r_2(z)\mathrm{e}^{-2\mathrm{i}zx}
			\left(N_{+,22}(x;z)-1\right)\,dz .
			\label{eq:I4-positive-new}
		\end{align}
				By Plancherel's theorem,
		\begin{equation}\label{eq:I3-positive-bound-new}
			\|I_3\|_{L^{2,1}(\mathbb R^+)}
			+
			\|\partial_xI_3\|_{L^2(\mathbb R^+)}
			\le
			C\|r_2\|_{H_z^1\cap L_z^{2,1}} .
		\end{equation}
		Indeed,
		$$
		\partial_x I_3(x) = \frac{2\mathrm{i}}{\pi}
		\int_{\mathbb R}
		zr_2(z)\mathrm{e}^{-2\mathrm{i}zx}\,dz,
		$$
		and \(zr_2\in L_z^2(\mathbb R)\) follows from
		\(r_2\in L_z^{2,1}(\mathbb R)\).
		
		For the remainder \(I_4\), unlike \(I_2\), here one should not simply repeat the previous
		adjoint-projection estimate. Instead, using the second-row estimate of the Beals--Coifman system, namely the estimate for the diagonal component
		generated by \(N_{+,22}-1\), we first obtain
		\begin{equation}\label{eq:I4-positive-L21-refined}
			\|I_4\|_{L^{2,1}(\mathbb R^+)}
			\le
			C_{\rho}\|r_2\|_{H_z^1(\mathbb R)}^2 .
		\end{equation}
		We next estimate the derivative of \(I_4\) separately. Differentiating
		\eqref{eq:I4-positive-new} gives
		\begin{align}
			\partial_x I_4(x)
			&=
			\frac{2\mathrm{i}}{\pi}
			\int_{\mathbb R}
			zr_2(z)\mathrm{e}^{-2\mathrm{i}zx}
			\left(N_{+,22}(x;z)-1\right)\,dz
						-\frac{1}{\pi}
			\int_{\mathbb R}
			r_2(z)\mathrm{e}^{-2\mathrm{i}zx}
			\partial_xN_{+,22}(x;z)\,dz .
			\label{eq:dx-I4-positive-split-new}
		\end{align}
		The two terms on the right-hand side are estimated by the differentiated
		Beals--Coifman equation. More precisely, using
		\eqref{eq:P_plus_weighted_new}, \eqref{eq:P_plus_Linf_new},
		\eqref{eq:P_plus_L21_new}, and the derivative bounds
		\eqref{eq:dx_N21}--\eqref{eq:dx_N12}, we get, for every
		\(x_0\in\mathbb R^+\),
		\begin{align}
			\sup_{x\in(x_0,\infty)}
			\left|
			\langle x\rangle \partial_x I_4(x)
			\right|
			&\le
			C_{\rho}
			\|r_2\|_{H_z^1\cap L_z^{2,1}}
			\|r_1\|_{H_z^1\cap L_z^{2,1}}
						\left(
			\|z^{-1}r_1\|_{H_z^1\cap L_z^{2,1}}
			+
			\|r_2\|_{H_z^1\cap L_z^{2,1}}
			\right).
			\label{eq:dx-I4-positive-weight-new}
		\end{align}
		Since the norms of \(r_1\) and \(r_2\) are controlled by the small norm
		assumption, the product factor in
		\eqref{eq:dx-I4-positive-weight-new} can be absorbed into \(C_\rho\).
		Consequently,
		\begin{equation}\label{eq:dx-I4-positive-L2-new}
			\|\partial_xI_4\|_{L^2(\mathbb R^+)}
			\le
			C
			\left(
			\|z^{-1}r_1\|_{H_z^1\cap L_z^{2,1}}
			+
			\|r_2\|_{H_z^1\cap L_z^{2,1}}
			\right).
		\end{equation}
		Combining \eqref{eq:I4-positive-L21-refined} and
		\eqref{eq:dx-I4-positive-L2-new}, we obtain the refined bound
		\begin{equation}\label{eq:I4-positive-bound-new}
			\|I_4\|_{L^{2,1}(\mathbb R^+)}
			+
			\|\partial_xI_4\|_{L^2(\mathbb R^+)}
			\le
			C
			\left(
			\|z^{-1}r_1\|_{H_z^1\cap L_z^{2,1}}
			+
			\|r_2\|_{H_z^1\cap L_z^{2,1}}
			\right).
		\end{equation}
		From \eqref{eq:I3-positive-bound-new}-- \eqref{eq:I4-positive-bound-new}, we can obtain that
		\begin{equation}\label{eq:qxxx-positive-bounds}
			\|q_{xxx} (x) \|_{L^2(\mathbb{R}^{+})} \le C
			\left( \|z^{-1} r_1 \|_{H^{1} \cap L^{2,1}} +\|r_2 \|_{H^{1} \cap L^{2,1}}
			\right),
		\end{equation}
		and
		\begin{equation}\label{eq:qxx-positive-bounds}
			\| \partial_x \left( q_x(x) \mathrm{e}^{\frac{\mathrm{i}}{2} m_{+}(x)}
			\right) \|_{L^{2,1}(\mathrm{R}^{+})} \le \mathcal{C} (\| r_2 \|_{H^{1}} + \| r_2 \|_{H^{1}}^2).
		\end{equation}

		We now turn to the negative half-line. By the transformed RH problem and
		the reconstruction formula \eqref{eq:recon-qPT-delta},
		\begin{align}
			q^{PT}(x)\mathrm{e}^{-\frac{\mathrm{i}}{2}m_+(x)}
			&=
			\frac{\delta_+^{-1}(0)}{\pi}
			\int_{\mathbb R}
			z^{-1}r_{\delta,1}(z)
			\mathrm{e}^{2\mathrm{i}zx}
			N_{\delta,-,11}(x;z)\,dz                         \nonumber\\
			&=:\widetilde I_1(x)+\widetilde I_2(x).
			\label{eq:qPT-negative-split-new}
		\end{align}
		The leading term satisfies, by Plancherel's theorem and
		\eqref{eq:delta-factor-bound},
		\begin{equation}\label{eq:I1-negative-bound-new}
			\|\widetilde I_1\|_{L^{2,1}(\mathbb R^-)}
			+
			\|\partial_x\widetilde I_1\|_{L^{2,1}(\mathbb R^-)}
			\le
			C_{\rho}
			\|z^{-1}r_{\delta,1}\|_{H_z^1\cap L_z^{2,1}} .
		\end{equation}
		For the remainder, we use the column equation for
		\(N_{\delta,-,11}-1\), the adjoint relation of the Plemelj projections,
		Proposition \ref{prop:delta-projection-estimates}, and
		Proposition \ref{prop:Ndelta-negative-estimates}. This gives, for every
		\(x_0\in\mathbb R^-\),
		\begin{equation}\label{eq:I2-negative-bound-new}
			\sup_{x\in(-\infty,x_0)}
			\left|
			\langle x\rangle^2\widetilde I_2(x)
			\right|
			\le
			C_{\rho}
			\|r_{\delta,2}\|_{H_z^1}
			\|z^{-1}r_{\delta,1}\|_{H_z^1}.
		\end{equation}

  Combining \eqref{eq:I1-negative-bound-new} and
		\eqref{eq:I2-negative-bound-new}, and using
		\eqref{eq:phase-factor-bound}, we obtain
		\begin{equation}\label{eq:qPT-negative-L21-new}
			\|q^{PT}\|_{L^{2,1}(\mathbb R^-)}
			\le
			C
			\left(
			\|z^{-1}r_{\delta,1}\|_{H_z^1}
			+
			\|r_{\delta,2}\|_{H_z^1}
			\|z^{-1}r_{\delta,1}\|_{H_z^1}
			\right).
		\end{equation}
		
		We next estimate \((q^{PT})_x\) on the negative half-line through the
		same large-\(z\) coefficient \(\nu_1\). Since
		\(N_\delta=N\delta^{-\sigma_3}\) and \(\delta(z)\to1\) as
		\(|z|\to\infty\), the leading off-diagonal coefficient is unchanged:
		\begin{equation}\label{eq:nu1-negative-large-z}
			\lim_{|z|\to\infty}zN_{\delta,\pm,12}(x;z)
			=
			\frac14(q^{PT})_x(x)\mathrm{e}^{-\mathrm{i}m_+(x)}.
		\end{equation}
		Using \eqref{eq:N-delta-Cauchy},
		\eqref{eq:Ndelta-R-col2}, and
		\eqref{eq:Cauchy-operator-asymptotic}, we obtain
		\begin{align}
			(q^{PT})_x(x)\mathrm{e}^{-\mathrm{i}m_+(x)}
			&=
			\frac{2\mathrm{i}}{\pi}
			\int_{\mathbb R}
			r_{\delta,1}(z)\mathrm{e}^{2\mathrm{i}zx}
			N_{\delta,-,11}(x;z)\,dz                              \nonumber\\
			&=:\widetilde J_1(x)+\widetilde J_2(x),
			\label{eq:qPTx-negative-nu1-split}
		\end{align}
		where
		\begin{align}
			\widetilde J_1(x)
			&:=
			\frac{2\mathrm{i}}{\pi}
			\int_{\mathbb R}
			r_{\delta,1}(z)\mathrm{e}^{2\mathrm{i}zx}\,dz,
			\label{eq:J1-negative-nu1}\\
			\widetilde J_2(x)
			&:=
			\frac{2\mathrm{i}}{\pi}
			\int_{\mathbb R}
			r_{\delta,1}(z)\mathrm{e}^{2\mathrm{i}zx}
			\left(N_{\delta,-,11}(x;z)-1\right)\,dz .
			\label{eq:J2-negative-nu1}
		\end{align}
		By Plancherel's theorem,
		\begin{equation}\label{eq:J1-negative-nu1-bound}
			\|\widetilde J_1\|_{L^{2,1}(\mathbb R^-)}
			\le
			C\|r_{\delta,1}\|_{H_z^1}.
		\end{equation}
		From \eqref{eq:Ndelta-minus-col1},
		\[
		N_{\delta,-,11}(x;z)-1
		=
		\mathcal P^-
		\left(
		r_{\delta,2}(z)\mathrm{e}^{-2\mathrm{i}zx}
		N_{\delta,+,12}(x;z)
		\right).
		\]
		Hence, by the adjoint relation of the Plemelj projections,
		\begin{equation}\label{eq:J2-negative-nu1-adjoint}
			\widetilde J_2(x)
			=
			-\frac{2\mathrm{i}}{\pi}
			\int_{\mathbb R}
			r_{\delta,2}(z)\mathrm{e}^{-2\mathrm{i}zx}
			N_{\delta,+,12}(x;z)
			\mathcal P^+
			\left(
			r_{\delta,1}(z)\mathrm{e}^{2\mathrm{i}zx}
			\right)(z)\,dz .
		\end{equation}
		The Fourier argument used in Proposition
		\ref{prop:delta-projection-estimates} gives
		\bee
		\sup_{x\in(-\infty,x_0)}
		\left\|
		\langle x\rangle
		\mathcal P^+
		\left(
		r_{\delta,1}(z)\mathrm{e}^{2\mathrm{i}zx}
		\right)
		\right\|_{L_z^2}
		\le
		\|r_{\delta,1}\|_{H_z^1},
		\qquad x_0\in\mathbb R^-.
		\ene
		Combining this estimate with \eqref{eq:Ndelta-12-negative}, we find
		\begin{align}
			\sup_{x\in(-\infty,x_0)}
			\left|
			\langle x\rangle^2\widetilde J_2(x)
			\right|
			&\le
			C\|r_{\delta,2}\|_{L_z^\infty}
			\sup_{x\in(-\infty,x_0)}
			\|\langle x\rangle N_{\delta,+,12}(x;\cdot)\|_{L_z^2}
						\sup_{x\in(-\infty,x_0)}
			\left\|
			\langle x\rangle
			\mathcal P^+
			\left(
			r_{\delta,1}(z)\mathrm{e}^{2\mathrm{i}zx}
			\right)
			\right\|_{L_z^2}
			\nonumber\\
			&\le
			C\|r_{\delta,1}\|_{H_z^1}^2 .
			\label{eq:J2-negative-nu1-weight}
		\end{align}
		Therefore,
		\[
		\|\widetilde J_2\|_{L^{2,1}(\mathbb R^-)}
		\le
		C\|r_{\delta,1}\|_{H_z^1}^2.
		\]
		Using \eqref{eq:phase-factor-bound}, we first obtain
		\[
		\|(q^{PT})_x\|_{L^{2,1}(\mathbb R^-)}
		\le
		C
		\left(
		\|r_{\delta,1}\|_{H_z^1}
		+
		\|r_{\delta,1}\|_{H_z^1}^2
		\right).
		\]

		
    Consequently, by \eqref{eq:r-delta-controlled}, and by absorbing the
		quadratic factor into the constant on bounded subsets of the scattering
		data space, we conclude that
		\begin{equation}\label{eq:qPTx-negative-nu1-bound}
			\|(q^{PT})_x\|_{L^{2,1}(\mathbb R^-)}
			\le
			C
			\left(
			\|r_1\|_{H_z^1\cap L_z^{2,1}}
			+
			\|r_2\|_{H_z^1\cap L_z^{2,1}}
			\right),
		\end{equation}
		where \(C\) depends on the small norm bound and on a fixed bound for
		\(r_1,r_2\) in \(H_z^1\cap L_z^{2,1}\).
		
		Combining \eqref{eq:qPT-positive-bounds},
		\eqref{eq:qPT-negative-L21-new},
		\eqref{eq:qPTx-positive-nu1-bound},
		\eqref{eq:qPTx-negative-nu1-bound}, and
		\eqref{eq:r-delta-controlled}, we have
		\begin{equation}\label{eq:qPT-line-bounds}
			\|q^{PT}\|_{H^{1,1}(\mathbb R)}
			\le
			C
			\left(
			\|r_1\|_{H_z^1\cap L_z^{2,1}}
			+
			\|r_2\|_{H_z^1\cap L_z^{2,1}}
			+
			\|z^{-1}r_1\|_{H_z^1\cap L_z^{2,1}}
			\right).
		\end{equation}
		In particular, since
		\((q^{PT})_x(x)=-q_x(-x)\), reflection preserves the
		\(L^{2,1}\)-norm and hence
		\begin{equation}\label{eq:qx-L21-from-nu1}
			\|q_x\|_{L^{2,1}(\mathbb R)}
			=
			\|(q^{PT})_x\|_{L^{2,1}(\mathbb R)}
			\le
			C
			\left(
			\|r_1\|_{H_z^1\cap L_z^{2,1}}
			+
			\|r_2\|_{H_z^1\cap L_z^{2,1}}
			\right).
		\end{equation}

		Similarly, the large-\(z\) reconstruction formula
		\eqref{eq:recon-qx-delta} gives
		\begin{align}
			\mathrm{e}^{\mathrm{i}m_+(x)-\frac{\mathrm{i}}{2}m_-(x)}
			\partial_x
			\left(
			q_x(x)\mathrm{e}^{\frac{\mathrm{i}}{2}m_-(x)}
			\right)
			&=
			-\frac{1}{\pi}
			\int_{\mathbb R}
			r_{\delta,2}(z)\mathrm{e}^{-2\mathrm{i}zx}
			N_{\delta,+,22}(x;z)\,dz                                  \nonumber\\
			&=:\widetilde I_3(x)+\widetilde I_4(x),
			\label{eq:qx-negative-split-new}
		\end{align}
		where
		\begin{align}
			\widetilde I_3(x)
			&:=
			-\frac{1}{\pi}
			\int_{\mathbb R}
			r_{\delta,2}(z)\mathrm{e}^{-2\mathrm{i}zx}\,dz,
			\label{eq:I3-negative-new}\\
			\widetilde I_4(x)
			&:=
			-\frac{1}{\pi}
			\int_{\mathbb R}
			r_{\delta,2}(z)\mathrm{e}^{-2\mathrm{i}zx}
			\left(N_{\delta,+,22}(x;z)-1\right)\,dz .
			\label{eq:I4-negative-new}
		\end{align}
	Through similar analysis as before, we can obtain that
	\begin{equation}\label{eq:qxxx-negative-bounds}
		\|q_{xxx} (x) \|_{L^2(\mathbb{R}^{-})} \le C
		\left( \|z^{-1} r_1 \|_{H^{1} \cap L^{2,1}} +\|r_2 \|_{H^{1} \cap L^{2,1}}
		\right),
	\end{equation}
	and
	\begin{equation}\label{eq:qxx-negative-bounds}
		\| \partial_x \left( q_x(x) \mathrm{e}^{\frac{\mathrm{i}}{2} m_{+}(x)}
		\right) \|_{L^{2,1}(\mathrm{R}^{-})} \le \mathcal{C} (\| r_2 \|_{H^{1}} + \| r_2 \|_{H^{1}}^2).
	\end{equation}
	Combined \eqref{eq:qxxx-negative-bounds}, \eqref{eq:qxx-negative-bounds} with \eqref{eq:qPT-line-bounds}, \eqref{eq:qxxx-positive-bounds} and \eqref{eq:qxx-positive-bounds}, we have
	\begin{equation}\label{eq:q-line-bounds}
		\| q \|_{H^{3}(\mathbb{R}) \cap H^{2,1}(\mathbb{R})}
		\le
		C \left( \|r_1 \|_{\mathcal{W}(\mathbb{R})} + \|r_2 \|_{\mathcal{W}(\mathbb{R})}
		\right)
	\end{equation}
	which completes the proof.
	\hfill
	\end{proof}

	From Proposition \ref{prop:whole-line-reconstruction-estimate}, we can get the following proposition:

	\begin{proposition}\label{prop:q-lipschaitz}
		Suppose that $r_{1,2} \in \mathcal{W}(\mathbb{R}) $ and satisfies the small norm inequality \eqref{eq:small-norm}, then mapping
		\begin{equation}\label{eq:map-r-q}
			\mathcal{W}(\mathbb{R}) \ni (r_1,r_2) \longmapsto q \in H^{3}(\mathbb{R}) \cap H^{2,1}(\mathbb{R}),
		\end{equation}
		is Lipschitz continuous.
	\end{proposition}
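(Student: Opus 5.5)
The plan is to repeat the argument of Proposition \ref{prop:whole-line-reconstruction-estimate} for the \emph{difference} of two reconstructions, with every estimate made linear in the difference of the scattering data. Fix two pairs $(r_1,r_2)$ and $(\tilde r_1,\tilde r_2)$ in a ball of $\mathcal{W}(\mathbb{R})$ of radius $\gamma$, both satisfying the small norm restriction. Let $q,\tilde q$ be the corresponding potentials, $N,\tilde N$ the solutions of RH Problem \ref{RH:z-plane}, and $N_\delta,\tilde N_\delta$ the solutions of RH Problem \ref{RH:N-delta}.

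\textbf{Step 1 (resolvent difference).} Write $\mathcal{P}^-_V$ and $\mathcal{P}^-_{\tilde V}$ for the two Beals--Coifman operators. By Proposition \ref{prop:operator-inverse-bound}, both inverses are bounded on $L^2_z$ with constants depending only on $c_0$, hence uniformly on the ball. The second resolvent identity gives
\[
(I-\mathcal{P}^-_V)^{-1}-(I-\mathcal{P}^-_{\tilde V})^{-1}=(I-\mathcal{P}^-_V)^{-1}\mathcal{P}^-\bigl((\cdot)(V-\tilde V)\bigr)(I-\mathcal{P}^-_{\tilde V})^{-1}.
\]
Since $\|V-\tilde V\|_{L^\infty_z}$ is controlled by $\|r_1-\tilde r_1\|_{H^1}+\|r_2-\tilde r_2\|_{H^1}$ (using Proposition \ref{prop:lambda-r-infinity-bound} for the $\lambda$-weighted entries), we obtain Lipschitz bounds for $N_\pm-\tilde N_\pm$ in $L^2_z$. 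Apply the same identity to the inhomogeneities $F$ and $\tilde F$ and invoke Proposition \ref{prop:Ref-Coef-Estimates}, which is linear in $r$. This yields Lipschitz versions of \eqref{eq:N21_weighted}--\eqref{eq:dx_N12}: the weighted and derivative bounds for $N_{+,21}-\tilde N_{+,21}$ and $N_{-,12}-\tilde N_{-,12}$, with right-hand sides $C(\gamma)\|(r_1-\tilde r_1,r_2-\tilde r_2)\|_{\mathcal{W}}$.

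\textbf{Step 2 (transformed problem).} Handle the negative half-line the same way. The map $(r_1,r_2)\mapsto\ln(1+r_1r_2)$ is Lipschitz into $H^1_z$, because $|r_1r_2|\le c_0^2<1$ keeps the logarithm away from its branch point. Boundedness of $\mathcal{P}^\pm$ together with $|\mathrm{e}^a-\mathrm{e}^b|\le|a-b|\max(|\mathrm{e}^a|,|\mathrm{e}^b|)$ then shows that $\delta_\pm^{\pm1}$ and $\delta_+^{-1}(0)$ depend Lipschitz-continuously on the data. Consequently $(r_{\delta,1},r_{\delta,2},z^{-1}r_{\delta,1})$ is Lipschitz in $H^1\cap L^{2,1}$, and Proposition \ref{prop:Ndelta-negative-estimates} has a Lipschitz counterpart by the same resolvent identity.

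\textbf{Step 3 (reconstruction).} Subtract the reconstruction formulas \eqref{eq: qPT-recon}, \eqref{eq:recon-qx}, \eqref{eq:recon-qPT-delta} and \eqref{eq:recon-qx-delta}, and split each difference exactly as in $I_1,\dots,I_4$, $J_1,J_2$ and their tilded versions. The linear terms $I_1-\tilde I_1$, $J_1-\tilde J_1$, $I_3-\tilde I_3$ are bounded by Plancherel by the data difference. For the bilinear terms such as $I_2-\tilde I_2$, telescope the product in \eqref{eq:I2-positive-adjoint-new} into three pieces, each containing exactly one difference factor. Each piece is then bounded using Step 1 and Proposition \ref{prop:Ref-Coef-Estimates}.

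This gives Lipschitz estimates for $q^{PT}$ and $(q^{PT})_x$ in $L^{2,1}$, and for $\partial_x(q_x\mathrm{e}^{\frac{\mathrm{i}}{2}m_-})$ in $L^{2,1}\cap H^1$, after multiplying by the phase factors. To strip the phases, note that $m_\pm-\tilde m_\pm$ is bounded by $\|q_x\|_{L^2}\|q_x-\tilde q_x\|_{L^2}$ plus its symmetric counterpart, as in \eqref{eq:term1}. Since $\|q_x\|_{L^2}\le C(\gamma)$ by Proposition \ref{prop:whole-line-reconstruction-estimate}, these phase errors depend on $q-\tilde q$ itself. They enter with a small prefactor, which should be absorbable on the left using the small-norm bound $\rho<\ln(3/2)$, or else handled by a Gronwall-type argument in $x$ for the relation $\partial_x(q_x\mathrm{e}^{\frac{\mathrm{i}}{2}m_-})$.

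\textbf{Main obstacle.} I expect the hardest step to be the derivative term $\partial_xI_4-\partial_x\tilde I_4$. It requires a Lipschitz bound for $\partial_xN_{+,22}$, which comes from differentiating the Beals--Coifman system, where $zr_2$ multiplies $N-I$. I would first try to control $\|z r_2(N_{-,22}-1)-z\tilde r_2(\tilde N_{-,22}-1)\|_{L^2}$ via the factorization $-\tfrac12(\lambda r_2)[-2\lambda(N_{-,22}-1)]$ together with Proposition \ref{prop:lambda-r-infinity-bound} applied to $r_2-\tilde r_2$. I would then close the estimate with the Step 1 resolvent bound for $\tau_1$-weighted rows.
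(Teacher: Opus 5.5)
Your route matches the paper's: a resolvent identity for $N-\tilde N$ and $N_\delta-\tilde N_\delta$, Lipschitz dependence of $\delta$, and a telescoped rerun of the $I_j$, $J_j$ estimates. At the paper's level of detail those steps are fine.

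The genuine gap is the step you flagged yourself: removing the phases. The RH problem only returns dressed quantities, such as $A:=q^{PT}\mathrm{e}^{-\frac{\mathrm{i}}{2}m_+}$ and $B:=(q^{PT})_x\mathrm{e}^{-\mathrm{i}m_+}$. Split $q^{PT}-\tilde q^{PT}=\mathrm{e}^{\frac{\mathrm{i}}{2}m_+}(A-\tilde A)+\tilde A\,(\mathrm{e}^{\frac{\mathrm{i}}{2}m_+}-\mathrm{e}^{\frac{\mathrm{i}}{2}\tilde m_+})$. However you bound $m_+-\tilde m_+$ in terms of $q-\tilde q$, the last term carries the factor $\|\tilde A\|_{L^{2,1}}\sim\|\tilde q\|_{L^{2,1}}$. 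That factor is controlled only by the ball radius $\gamma$, not by $\rho$. The condition $\rho<\ln(3/2)$ only bounds $L^1$ quantities such as $\|q_x\|_{L^1}$ and $\|q_x(q^{PT})_x\|_{L^1}$. For example, $q=\epsilon\phi(x/L)$ keeps $\rho\lesssim\epsilon$ while $\|q\|_{L^{2,1}}\sim\epsilon L^{3/2}$. So absorption on the left does not close.

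A Gronwall argument in $x$ does not work as stated either. Because $m_\pm(x)$ involves $q_x(-y)$, the relation for $\partial_x(q_x\mathrm{e}^{\frac{\mathrm{i}}{2}m_-})$ is not of Volterra type. The total phase $M=m_--m_+$ is also an extra unknown. In the local FL case the phase is real, so $|q_x|^2$ can be read off from moduli; here it cannot. The paper's own proof of this proposition passes over the point silently, so this is exactly where you must add an idea.

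The missing idea is that the reverse space-time symmetry makes the phase explicit. The density $f:=q_x(q^{PT})_x=-q'(x)q'(-x)$ is even, so $m_+(x)+m_+(-x)=-M$. Hence $B(x)B(-x)=-\mathrm{e}^{\mathrm{i}M}f(x)$, that is,
\[
m_+(x)=\mathrm{e}^{-\mathrm{i}M}\int_x^{\infty}B(y)B(-y)\,dy,\qquad m_-=m_++M .
\]
Next, \eqref{eq:det-S-lambda} gives $1+r_1r_2=(a\breve a)^{-1}$, and $a_\infty\breve a_\infty=1$. Therefore the solution of RH Problem \ref{RH:delta} is $\delta=\breve a/\breve a_\infty$ in $\mathbb{C}^+$. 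Combined with $\breve a(0)=1$ from \eqref{eq:abreve-small-lambda}, this yields $\mathrm{e}^{\mathrm{i}M}=\delta_+(0)^2$.

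Thus $m_\pm$ and $m_\pm'=f$ are explicit bilinear functionals of $\delta_+(0)$ and of $B$, the quantity reconstructed in \eqref{eq:qPTx-positive-nu1-split} and \eqref{eq:qPTx-negative-nu1-split}. Their Lipschitz dependence on $(r_1,r_2)$ follows from your Step 2 and the $J_1,J_2$ differences, with no $q-\tilde q$ on the right-hand side. Only after this does stripping the phases from $A$, $B$ and the quantity in \eqref{eq:qx-positive-split-new} reduce to products of Lipschitz maps.
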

	\begin{proof}
		Suppose $r_{1,2}, \tilde{r}_{1,2} \in \mathcal{W}(\mathbb{R})$ satisfy the small norm inequality \eqref{eq:small-norm}, and
		\begin{equation}
			\|r_{1,2}\|_{\mathcal{W}(\mathbb{R})}, \|\tilde{r}_{1,2}\|_{\mathcal{W}(\mathbb{R})} \le \rho
		\end{equation}
		for some $\rho > 0$. Denote the corresponding potentials by $q$ and $\tilde{q}$ respectively.
		
		From the reconstruction formulas \eqref{eq:qPT-positive-split-new}, \eqref{eq:qx-positive-split-new} on the positive half-line, and \eqref{eq:qPT-negative-split-new}, \eqref{eq:qx-negative-split-new} on the negative half-line, we can represent the difference $q(x) - \tilde{q}(x)$ in terms of the differences of the reflection coefficients and the corresponding Beals-Coifman solutions.
		
		By utilizing the resolvent identity for the integral Fredholm equations, similar to the technique used for the Jost solutions in \eqref{eq:Lip diff}, the differences of the RH solutions $N(x;z) - \tilde{N}(x;z)$ and the transformed solutions $N_\delta(x;z) - \tilde{N}_\delta(x;z)$ can be bounded by the differences of their respective jump matrices, which in turn are linearly controlled by $r_{1,2} - \tilde{r}_{1,2}$.
		
		Repeating almost the exact same estimates as in the proof of Proposition \ref{prop:whole-line-reconstruction-estimate}, we deduce that there exists a positive constant $C$ dependent on $\rho$ such that
		\begin{equation}
			\begin{split}
				\|q - \tilde{q}\|_{H^3(\mathbb{R}) \cap H^{2,1}(\mathbb{R})} &\le C \left( \|z^{-1}(r_1 - \tilde{r}_1)\|_{H_z^1 \cap L_z^{2,1}} + \|r_1 - \tilde{r}_1\|_{H_z^1 \cap L_z^{2,1}} + \|r_2 - \tilde{r}_2\|_{H_z^1 \cap L_z^{2,1}} \right) \\
				&\le C \left( \|r_1 - \tilde{r}_1\|_{\mathcal{W}(\mathbb{R})} + \|r_2 - \tilde{r}_2\|_{\mathcal{W}(\mathbb{R})} \right).
			\end{split}
		\end{equation}
		Using this representation for the difference between $q$ and $\tilde{q}$, the uniform bound ensures the Lipschitz continuity of the mapping \eqref{eq:map-r-q}.
		\hfill
	\end{proof}
	
	\section{Time evolution and global reconstruction}\label{section-Time}
	
	In the previous sections the time variable was regarded as a fixed parameter.
	We now introduce the time evolution of the scattering data and then reconstruct
	the solution \(q(x,t)\) from the corresponding time-dependent RH problem.
	Throughout this section we write
	\begin{equation}\label{eq:time-phase}
		\Theta(x,t;\lambda)
		:=
		\lambda^2x-\frac{t}{4\lambda^2},
		\qquad z=\lambda^2 .
	\end{equation}
	Then the exponential factors in the jump matrices are obtained by replacing
	\(\lambda^2x\) with \(\Theta(x,t;\lambda)\).
	
	\subsection{Time evolution of the scattering data}
	
	From the Lax pair \eqref{eq:New Laxpair}, the scattering relation for the Jost functions is
	\begin{equation}\label{eq:time-scattering-relation-fullphase}
		\psi^{-}(x,t;\lambda)
		=
		\psi^{+}(x,t;\lambda)
		\mathrm{e}^{\mathrm{i}\Theta(x,t;\lambda)\hat{\sigma}_3}
		S(\lambda), \qquad S(\lambda)
		=
		\begin{pmatrix}
			a(\lambda) & \breve b(\lambda)\\
			b(\lambda) & \breve a(\lambda)
		\end{pmatrix},
		\qquad
		\lambda\in\Sigma_{\lambda},
	\end{equation}
		Equivalently, if one keeps the \(x\)-part scattering convention
	$$
	\psi^{-}(x,t;\lambda) = \psi^{+}(x,t;\lambda)
	\mathrm{e}^{\mathrm{i}\lambda^2x\hat{\sigma}_3}
	S(t;\lambda),
	$$
	then
	\begin{equation}\label{eq:S-time-evolution}
		S(t;\lambda)
		=
		\mathrm{e}^{-\frac{\mathrm{i}t}{4\lambda^2}\hat{\sigma}_3}
		S(0;\lambda).
	\end{equation}
	Hence
	\begin{align}
		a(t;\lambda)=a(\lambda),\quad 	\breve a(t;\lambda)=\breve a(\lambda), \quad
		b(t;\lambda)=		b(\lambda)		\mathrm{e}^{\frac{\mathrm{i}t}{2\lambda^2}},\quad
				\breve b(t;\lambda)=
		\breve b(\lambda)
		\mathrm{e}^{-\frac{\mathrm{i}t}{2\lambda^2}} .
		\label{eq:b-breveb-time}
	\end{align}
	Consequently, the reflection coefficients on the \(z\)-plane satisfy
	\begin{equation}\label{eq:r12-time}
		r_1(t;z)
		=
		r_1(z)\mathrm{e}^{-\frac{\mathrm{i}t}{2z}},
		\qquad
		r_2(t;z)
		=
		r_2(z)\mathrm{e}^{\frac{\mathrm{i}t}{2z}} .
	\end{equation}
	In particular,
	\begin{equation}\label{eq:r1r2-time-invariant}
		r_1(t;z)r_2(t;z)=r_1(z)r_2(z),
	\end{equation}
	so the scalar RH function \(\delta(z)\) defined in RH Problem
	\ref{RH:delta} is independent of \(t\).
	
	\begin{proposition}\label{prop:time-evolution-W}
		Let \(T>0\) be fixed. Suppose that
		$
		r_1,r_2\in\mathcal W(\mathbb R).
		$
		Then, for every \(t\in[-T,T]\),
		$
		r_1(t;\cdot),\ r_2(t;\cdot)\in \mathcal W(\mathbb R).
		$
		Moreover, there exists a constant \(C_T>0\) such that
		\begin{equation}\label{eq:time-evolution-W-bound}
			\|r_1(t;\cdot)\|_{\mathcal W}
			+
			\|r_2(t;\cdot)\|_{\mathcal W}
			\le
			C_T
			\left(
			\|r_1\|_{\mathcal W}
			+
			\|r_2\|_{\mathcal W}
			\right).
		\end{equation}
	\end{proposition}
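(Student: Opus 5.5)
We work directly from the explicit formula \eqref{eq:r12-time}: $r_1(t;z)=r_1(z)\mathrm{e}^{-\frac{\mathrm{i}t}{2z}}$ and $r_2(t;z)=r_2(z)\mathrm{e}^{\frac{\mathrm{i}t}{2z}}$. Since $t$ and $z$ are real, the phase factors have modulus one. The norm $\|\cdot\|_{\mathcal W}$ controls $\|r_j\|_{L^2}$, $\|zr_j\|_{L^2}$, $\|\partial_z r_j\|_{L^2}$ and $\|z^{-2}r_j\|_{L^2}$. The three quantities $\|r_j(t)\|_{L^2}$, $\|zr_j(t)\|_{L^2}$ and $\|z^{-2}r_j(t)\|_{L^2}$ are therefore exactly preserved. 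Only the $H^1$ seminorm needs work.

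Differentiating in $z$ gives
\begin{equation*}
\partial_z r_1(t;z)=\Bigl(\partial_z r_1(z)+\frac{\mathrm{i}t}{2z^{2}}\,r_1(z)\Bigr)\mathrm{e}^{-\frac{\mathrm{i}t}{2z}},
\end{equation*}
and the analogous formula holds for $r_2$ with the sign of $t$ reversed. Hence
\begin{equation*}
\|\partial_z r_j(t;\cdot)\|_{L^2}\le \|\partial_z r_j\|_{L^2}+\frac{|t|}{2}\,\|z^{-2}r_j\|_{L^2}\le \Bigl(1+\frac{T}{2}\Bigr)\|r_j\|_{\mathcal W}.
\end{equation*}
This gives \eqref{eq:time-evolution-W-bound} with $C_T=1+T/2$, or $C_T=\sqrt{2}(1+T/2)$ depending on how the norm on $\mathcal W$ is normalized.

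The main obstacle is the justification at $z=0$, where the phase $\mathrm{e}^{\mp\frac{\mathrm{i}t}{2z}}$ oscillates infinitely fast. Here the factor $z^{-2}$ produced by differentiation is exactly compensated by the weight $z^{-2}r_j\in L^2$ built into $\mathcal W$. This is why Proposition~\ref{prop:zminus2-reflection} and Proposition~\ref{prop: b-lambda-O5} were needed.

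To make the distributional derivative rigorous, I would argue on $\mathbb R\setminus\{0\}$, where the product rule is classical for the locally absolutely continuous function $r_j$. Both resulting terms lie in $L^2$. The function $r_j(t;\cdot)$ is continuous on $\mathbb{R}$: away from $0$ this is clear, and at $0$ we have $r_j(z)\to0$, since $r_j$ is continuous and $z^{-2}r_j\in L^2$ forces $r_j(0)=0$. So $r_j(t;\cdot)$ extends continuously by $0$ at $z=0$, and a function that is absolutely continuous on each half-line, continuous across the origin, with $L^2$ derivative, belongs to $H^1(\mathbb R)$. No jump or delta mass appears at the origin.

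Finally, measurability and continuity in $t$ follow from dominated convergence, using the same bounds. This continuity is not required by the statement but will be used later.
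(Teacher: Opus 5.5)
Your proof is correct and follows the same route as the paper. The unimodular phases preserve the $L^2$, $L^{2,1}$ and $z^{-2}$-weighted parts, and differentiation produces the extra term $\frac{\mathrm{i}t}{2z^2}r_j$, which is bounded by $\frac{T}{2}\|z^{-2}r_j\|_{L^2}$. Your extra justification at $z=0$ fills in a point the paper passes over: $r_j(0)=0$ and $r_j(t;\cdot)$ is continuous there, so the distributional derivative has no singular part at the origin.
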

	
	\begin{proof}
		Since the exponential factors in \eqref{eq:r12-time} have modulus one
		for \(z\in\mathbb R\), we have
		\begin{equation}
			\|r_j(t;\cdot)\|_{L_z^2}
			=
			\|r_j\|_{L_z^2},
			\quad
			\|r_j(t;\cdot)\|_{L_z^{2,1}}
			=
			\|r_j\|_{L_z^{2,1}},
			\quad
			\|r_j(t;\cdot)\|_{L_z^{2,-2}}
			=
			\|r_j\|_{L_z^{2,-2}},
			\quad
			j=1,2 .
		\end{equation}
		For the \(H^1\)-norm, differentiating \eqref{eq:r12-time} gives
		\begin{align}
			\partial_z r_1(t;z)
			&=
			\mathrm{e}^{-\frac{\mathrm{i}t}{2z}}
			\left(
			\partial_z r_1(z)
			+
			\frac{\mathrm{i}t}{2z^2}r_1(z)
			\right),
			\label{eq:dz-r1-time}\\
			\partial_z r_2(t;z)
			&=
			\mathrm{e}^{\frac{\mathrm{i}t}{2z}}
			\left(
			\partial_z r_2(z)
			-
			\frac{\mathrm{i}t}{2z^2}r_2(z)
			\right).
			\label{eq:dz-r2-time}
		\end{align}
		Thus, for \(t\in[-T,T]\),
		\begin{equation}\label{eq:poly-T-def}
			\|\partial_z r_j(t;\cdot)\|_{L_z^2}
			\le
			\|\partial_z r_j\|_{L_z^2}
			+
			\frac{T}{2}\|z^{-2}r_j\|_{L_z^2},
			\qquad j=1,2 .
		\end{equation}
		This proves \eqref{eq:time-evolution-W-bound}.
		
		\hfill
	\end{proof}
	
	\subsection{The time-dependent RH problems}
	
	Using \eqref{eq:b-breveb-time}, the RH problem on the \(\lambda\)-plane
	becomes the following one.
	
	\begin{RH}\label{RH:lambda-time}
		Find a matrix function \(\Phi(x,t;\lambda)\) with the following properties:
		\begin{enumerate}
			\item \textbf{Analyticity:}
			\(\Phi(x,t;\lambda)\) is analytic in
			\(\mathbb C\setminus\Sigma_{\lambda}\).
			
			\item \textbf{Jump condition:}
			\begin{equation}\label{eq:Phi-time-jump}
				\Phi_+(x,t;\lambda)
				=
				\Phi_-(x,t;\lambda)
				\left(I+V(x,t;\lambda)\right),
				\qquad \lambda\in\Sigma_{\lambda},
			\end{equation}
			where
			\begin{equation}\label{eq:V-time-lambda}
				V(x,t;\lambda)
				=
				\begin{pmatrix}
					0
					&
					\breve r(\lambda)
					\mathrm{e}^{2\mathrm{i}\Theta(x,t;\lambda)}
					\\[0.4em]
					-r(\lambda)
					\mathrm{e}^{-2\mathrm{i}\Theta(x,t;\lambda)}
					&
					-r(\lambda)\breve r(\lambda)
				\end{pmatrix}.
			\end{equation}
			
			\item \textbf{Normalization:}
			$	\Phi(x,t;\lambda)				=				I+\mathcal O(\lambda^{-1}), \,\, {\rm as}\,\, \lambda\to\infty.$
		\end{enumerate}
	\end{RH}
	
	In the \(z\)-plane, the time-dependent RH problem takes the form
	\begin{RH}\label{RH:z-time}
		Find \(N(x,t;z)\) such that:
		\begin{enumerate}
			\item \textbf{Analyticity:} \(N(x,t;z)\) is analytic in \(\mathbb C\setminus\mathbb R\).
			
			\item \textbf{Jump condition:} The boundary values satisfy
			\begin{equation}\label{eq:z-time-jump}
				N_+(x,t;z)
				=
				N_-(x,t;z)
				\left(I+R(x,t;z)\right),
				\qquad z\in\mathbb R,
			\end{equation}
			where
			\begin{equation}\label{eq:R-time-z}
				R(x,t;z)
				=
				\begin{pmatrix}
					0
					&
					r_1(z)
					\mathrm{e}^{2\mathrm{i}zx-\frac{\mathrm{i}t}{2z}}
					\\[0.4em]
					r_2(z)
					\mathrm{e}^{-2\mathrm{i}zx+\frac{\mathrm{i}t}{2z}}
					&
					r_1(z)r_2(z)
				\end{pmatrix}.
			\end{equation}
			
			\item \textbf{Normalization:} \(N(x,t;z)=I+\mathcal O(z^{-1})\) as \(z\to\infty\).
		\end{enumerate}
	\end{RH}
	
	Because \(r_1(t;z)r_2(t;z)=r_1(z)r_2(z)\), the function \(\delta(z)\)
	defined by \eqref{eq:delta-solution} is unchanged under the time evolution.
	Thus the transformed reflection coefficients are
	\begin{equation}\label{eq:rdelta-time}
		r_{\delta,1}(t;z)
		=
		r_{\delta,1}(z)
		\mathrm{e}^{-\frac{\mathrm{i}t}{2z}},
		\qquad
		r_{\delta,2}(t;z)
		=
		r_{\delta,2}(z)
		\mathrm{e}^{\frac{\mathrm{i}t}{2z}} .
	\end{equation}
	The time-dependent \(\delta\)-transformed RH problem is therefore
	$$
	N_{\delta,+}(x,t;z) = N_{\delta,-}(x,t;z)
	\left(I+R_{\delta}(x,t;z)\right),
	$$
	with
	\begin{equation}\label{eq:Rdelta-time}
		R_{\delta}(x,t;z)
		=
		\begin{pmatrix}
			r_{\delta,1}(z)r_{\delta,2}(z)
			&
			r_{\delta,1}(z)
			\mathrm{e}^{2\mathrm{i}zx-\frac{\mathrm{i}t}{2z}}
			\\[0.4em]
			r_{\delta,2}(z)
			\mathrm{e}^{-2\mathrm{i}zx+\frac{\mathrm{i}t}{2z}}
			&
			0
		\end{pmatrix}.
	\end{equation}
	
	The reconstruction formulae now become
	\begin{equation}\label{eq:recon-qx-time}
		\mathrm{e}^{\mathrm{i}m_+(x,t)-\frac{\mathrm{i}}{2}m_-(x,t)}
		\partial_x
		\left(
		q_x(x,t)
		\mathrm{e}^{\frac{\mathrm{i}}{2}m_-(x,t)}
		\right)
		=
		-\frac{1}{\pi}
		\int_{\mathbb R}
		r_{\delta,2}(z)
		\mathrm{e}^{-2\mathrm{i}zx+\frac{\mathrm{i}t}{2z}}
		N_{\delta,+,22}(x,t;z)\,dz ,
	\end{equation}
	and
	\begin{equation}\label{eq:recon-qPT-time}
		q^{PT}(x,t)
		\mathrm{e}^{-\frac{\mathrm{i}}{2}m_+(x,t)}
		=
		\frac{\delta_+^{-1}(0)}{\pi}
		\int_{\mathbb R}
		z^{-1}r_{\delta,1}(z)
		\mathrm{e}^{2\mathrm{i}zx-\frac{\mathrm{i}t}{2z}}
		N_{\delta,-,11}(x,t;z)\,dz .
	\end{equation}
	
	\subsection{Verification of the reconstructed solution}
	
	In this subsection, we verify that the potential reconstructed from the
	time-dependent RH problems satisfies the nonlocal Fokas--Lenells equation.
	Throughout the subsection we set
	\bee
		Y(x,t;\lambda):=E(x,t)\Phi(x,t;\lambda), \qquad E(x,t):=\mathrm{e}^{\frac{\mathrm{i}}{2}m_+(x,t)\sigma_3}.
	\ene
	From the definition of \(\Phi\), the small-\(\lambda\) expansion of the
	Jost functions in \eqref{eq:psi-small-lambda}, we have
	\begin{equation}\label{eq:Y-small-lambda-expansion}
		Y(x,t;\lambda)
		=
		I+\mathrm{i}\lambda Q(x,t)+\lambda^2D(x,t)+\mathcal O(\lambda^3),
		\qquad \lambda\to0,
	\end{equation}
	where \(D(x,t)\) is diagonal. In particular,
	\begin{align}
		Y_xY^{-1}
		&=
		\mathrm{i}\lambda Q_x+\mathcal O(\lambda^2),
		\label{eq:Yx-small}\\
		Y_tY^{-1}
		&=
		\mathcal O(\lambda),
		\label{eq:Yt-small}\\
		Y\sigma_3Y^{-1}
		&=
		\sigma_3+2\mathrm{i}\lambda Q\sigma_3
		-2\lambda^2Q^2\sigma_3+\mathcal O(\lambda^3).
		\label{eq:Ysigma-small}
	\end{align}
	
	\begin{proposition}\label{prop:verify-x-part}
		Let \(\Phi(x,t;\lambda)\) be the solution of the time-dependent RH
		Problem \ref{RH:lambda-time}. Define
		\begin{equation}\label{eq:eta-def}
			\eta(x,t;\lambda)
			:=
			E(x;t)
			\Phi(x,t;\lambda)
			\mathrm{e}^{\mathrm{i}\Theta(x,t;\lambda)\sigma_3}.
		\end{equation}
		Then \(\eta(x,t;\lambda)\) satisfies the \(x\)-part of the Lax pair
		\eqref{eq:Lax pair}, namely
		\begin{equation}\label{eq:eta-x-equation}
			\eta_x
			=
			\left(
			\mathrm{i}\lambda^2\sigma_3
			+\mathrm{i}\lambda Q_x
			\right)\eta .
		\end{equation}
	\end{proposition}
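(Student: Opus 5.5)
We follow the standard dressing argument: show that the logarithmic derivative $\eta_x\eta^{-1}$ is a polynomial in $\lambda$ of the prescribed form. Since the jump matrix $I+V(x,t;\lambda)$ in \eqref{eq:V-time-lambda} depends on $x$ only through $\mathrm{e}^{\pm2\mathrm{i}\Theta}$, we can write
\[
I+V(x,t;\lambda)=\mathrm{e}^{-\mathrm{i}\Theta\sigma_3}\,(I+V(0,0;\lambda))\,\mathrm{e}^{\mathrm{i}\Theta\sigma_3}.
\]
Hence $\eta_\pm=E\Phi_\pm\mathrm{e}^{\mathrm{i}\Theta\sigma_3}$ satisfy $\eta_+=\eta_-J_0(\lambda)$ on $\Sigma_\lambda$, with $J_0=I+V(0,0;\lambda)$ independent of $x$ and $t$. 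Also $\det J_0=1$, because $\det(I+V)=1-r\breve r+r\breve r=1$. By unique solvability (Proposition \ref{prop:G-unique-solution} and its $\lambda$-plane transcription), $\det\Phi\equiv1$ and $\Phi$ is invertible. So $W:=\eta_x\eta^{-1}$ has no jump across $\Sigma_\lambda$. Differentiability of $\Phi$ in $x$, with $\partial_x\Phi$ in the same $L^2$ class, comes from the differentiated Beals--Coifman equation \eqref{eq:T-derivative} and Proposition \ref{prop:BC-estimates}. With that, $W$ is analytic in $\mathbb{C}\setminus\{0\}$, and possibly has isolated singularities at $\lambda=0$ and $\lambda=\infty$.

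Next we identify $W$ from its two expansions. Writing $W=E_xE^{-1}+E\Phi_x\Phi^{-1}E^{-1}+\mathrm{i}\lambda^2 E\Phi\sigma_3\Phi^{-1}E^{-1}$, we use $\Phi=I+\Phi^{(1)}/\lambda+\mathcal O(\lambda^{-2})$ at infinity. This expansion is read off from the $z$-plane asymptotics \eqref{eq:inftylimits}--\eqref{eq:zlimits} through $T_1,T_2$: the off-diagonal part is odd in $\lambda$ and the diagonal part even, by the parity property of RH Problem \ref{RH:G-problem}. It gives $W=\mathrm{i}\lambda^2\sigma_3+\lambda W_1+W_0+\mathcal O(\lambda^{-1})$ with $W_1=-\mathrm{i}[\sigma_3,E\Phi^{(1)}E^{-1}]$ off-diagonal. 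Near $\lambda=0$, $\eta=Y\mathrm{e}^{\mathrm{i}\Theta\sigma_3}$ and \eqref{eq:Yx-small} give $W=Y_xY^{-1}+\mathrm{i}\lambda^2Y\sigma_3Y^{-1}=\mathrm{i}\lambda Q_x+\mathcal O(\lambda^2)$. So $W$ is bounded near $0$, the singularity there is removable, and $W(0)=0$. By Liouville, $W-\mathrm{i}\lambda^2\sigma_3$ is a polynomial of degree at most one, namely $\lambda W_1+W_0$. Matching at $\lambda=0$ gives $W_0=W(0)=0$, and matching the $\mathcal O(\lambda)$ coefficient gives $W_1=\mathrm{i}Q_x$. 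This is exactly \eqref{eq:eta-x-equation}. The phase $E=\mathrm{e}^{\frac{\mathrm i}{2}m_+\sigma_3}$ is what makes $W_0$ vanish: the diagonal constant from $E_xE^{-1}=\frac{\mathrm i}{2}q_x(q^{PT})_x\sigma_3$ cancels the diagonal $\mathcal O(1)$ term generated by $\Phi^{(1)}$.

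The main obstacle is consistency. Here $Q$ is not a datum. It is \emph{defined} from the reconstructed potential via \eqref{eq:recon-qx-time}--\eqref{eq:recon-qPT-time}, and its $(1,2)$ entry must equal the reflected $(2,1)$ entry, $q(-x,-t)$. So we must check that the $\lambda\to0$ coefficient $\mathrm{i}Q$ in \eqref{eq:Y-small-lambda-expansion} and the large-$\lambda$ coefficient $W_1$ describe the same $Q$, and that $Q$ is off-diagonal with entries related by $\mathcal{PT}$. For this we plan to transfer the symmetries \eqref{eq:Scattering-matrix symmetry} to the RH problem: $\sigma_3\Phi(x,t;-\lambda)\sigma_3=\Phi(x,t;\lambda)$, and the $\sigma_1$-conjugation relating $(x,t)$ to $(-x,-t)$. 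Each is derived by checking that the transformed function solves the same RH problem and then invoking uniqueness. The first symmetry forces the $\lambda^1$ coefficients to be off-diagonal. The second identifies the $(1,2)$ entry with the reflected $(2,1)$ entry, so that the expansions \eqref{eq:Yx-small} and \eqref{eq:Ysigma-small} hold with this $Q$.
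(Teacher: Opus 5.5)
Your core argument is the paper's. You conjugate the jump by $\mathrm{e}^{\mathrm{i}\Theta\sigma_3}$ so that $\eta_+=\eta_-J_0$ with $J_0$ independent of $(x,t)$, deduce that $\eta_x\eta^{-1}$ has no jump on $\Sigma_\lambda$, expand at $\lambda=\infty$ and $\lambda=0$, and apply Liouville.

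Reading off the linear coefficient from the $\lambda\to0$ side, after Liouville has shown that $\eta_x\eta^{-1}-\mathrm{i}\lambda^2\sigma_3$ is affine, is a harmless and slightly better variant. The identity $E[\Phi^{(1)},\sigma_3]E^{-1}=Q_x$, which the paper takes from the large-$\lambda$ reconstruction formula, then comes out as a consequence. So the matching of small- and large-$\lambda$ coefficients that you list under ``consistency'' is already delivered by your own Liouville step.

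There is a minor sign slip: the conjugation identity is $I+V(x,t;\lambda)=\mathrm{e}^{\mathrm{i}\Theta\sigma_3}J_0\,\mathrm{e}^{-\mathrm{i}\Theta\sigma_3}$, as the $(1,2)$ entries show. The relation $\eta_+=\eta_-J_0$ that you then use is the correct consequence.

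The step that would fail is your route to the $\mathcal{PT}$ structure of $Q$. The $\lambda\mapsto-\lambda$ symmetry is fine: $r,\breve r$ are odd and $\Theta$ is even, so $V(x,t;-\lambda)=\sigma_3V(x,t;\lambda)\sigma_3$. But $\sigma_1\Phi(-x,-t;\lambda)\sigma_1$ does \emph{not} solve the same RH problem. Because $\Theta(-x,-t;\lambda)=-\Theta(x,t;\lambda)$, its jump is
\[
\sigma_1\bigl(I+V(-x,-t;\lambda)\bigr)\sigma_1=\begin{pmatrix}1-r\breve r & -r\,\mathrm{e}^{2\mathrm{i}\Theta}\\ \breve r\,\mathrm{e}^{-2\mathrm{i}\Theta} & 1\end{pmatrix}.
\]
This has $1-r\breve r$ in the $(1,1)$ slot --- it is the $\delta$-type jump of RH Problem \ref{RH:N-delta}, not $I+V(x,t;\lambda)$. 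Pushing \eqref{eq:Jost-solution-Symmetry} through the definition of $\Phi$ and using $m_+(-x,-t)=-m_-(x,t)$, the correct relation carries a scalar dressing:
\[
\sigma_1\Phi(-x,-t;\lambda)\sigma_1=\mathrm{e}^{-\frac{\mathrm{i}}{2}M\sigma_3}\,\Phi(x,t;\lambda)\,\alpha(\lambda)^{\sigma_3},\qquad \alpha=a\ \text{on } D_-,\quad \alpha=\breve a^{-1}\ \text{on } D_+ .
\]
In the inverse problem, $\alpha=\mathrm{e}^{\frac{\mathrm{i}}{2}M}\delta^{-1}$ has to be produced from RH Problem \ref{RH:delta}. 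Comparing jumps, the relation holds if and only if two conditions hold on $\Sigma_\lambda$. The first, $\alpha_+=\alpha_-(1-r\breve r)$, is automatic. The second is $\breve r=-\alpha_+\alpha_-\,r$, i.e.\ $\breve r\,\breve a=-r\,a$ (this is $\breve b=-b$), equivalently $4z\,\delta_+\delta_-\,r_1=\mathrm{e}^{\mathrm{i}M}r_2$.

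That second condition is a nonlocal constraint tying $r_1$ to $r_2$. It does not follow from $(r_1,r_2)\in\mathcal W(\mathbb R)$ and $|r_j|\le c_0$. Without it, your Liouville argument only produces the $x$-part of the coupled FL Lax pair, with $Q_{12}(x,t)\neq Q_{21}(-x,-t)$.

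Your plan goes through once you use the dressed relation and keep $\breve b=-b$ as a hypothesis on the data; it holds for data produced from $q_0$, by \eqref{eq:Scattering-matrix symmetry} at $t=0$. Since $a(0)=\breve a(0)=1$, we have $\alpha^{\sigma_3}=I+\mathcal O(\lambda^2)$. Hence $\sigma_1Y(-x,-t;\lambda)\sigma_1=Y(x,t;\lambda)\,\alpha^{\sigma_3}$ still forces the $\mathcal O(\lambda)$ coefficient $\mathrm{i}Q$ of $Y$ to satisfy $\sigma_1Q(-x,-t)\sigma_1=Q(x,t)$, which is exactly $Q_{12}(x,t)=Q_{21}(-x,-t)$.

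The paper's own proof never addresses this point; it simply takes \eqref{eq:Y-small-lambda-expansion} with the $\mathcal{PT}$-form of $Q$ for granted. So your core argument is as complete as the paper's, but the repair you propose for the consistency issue does not work as written.
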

	
	\begin{proof}
		Multiplying the jump relation \eqref{eq:Phi-time-jump} by
		\(\mathrm{e}^{\mathrm{i}\Theta\sigma_3}\) from the right gives
		$$
		\Phi_+\mathrm{e}^{\mathrm{i}\Theta\sigma_3} =
		\Phi_-\mathrm{e}^{\mathrm{i}\Theta\sigma_3}J_0(\lambda),
		$$
		where
		\begin{equation}\label{eq:J0-jump}
			J_0(\lambda)
			=
			\begin{pmatrix}
				1 & \breve r(\lambda)\\
				-r(\lambda) & 1-r(\lambda)\breve r(\lambda)
			\end{pmatrix}
		\end{equation}
		is independent of \(x\) and \(t\). Hence \(\eta_+=\eta_-J_0\), and
		\(\eta_x\eta^{-1}\) has no jump across \(\Sigma_\lambda\). Therefore
		the possible singularities of \(\eta_x\eta^{-1}\) can only occur at
		\(\lambda=0\) and \(\lambda=\infty\).
		
		We first analyze the behavior at infinity. From
		$$
		\Phi(x,t;\lambda)
		=
		I+\lambda^{-1}\Phi^{(1)}(x,t)+\mathcal O(\lambda^{-2}),
		\qquad \lambda\to\infty,
		$$
		and \(\Theta_x=\lambda^2\) [cf. (\ref{eq:time-phase})], we obtain
		\begin{equation}\label{eq:eta_x_eta_inv}
			\eta_x\eta^{-1}
			=
			\frac{\mathrm{i}}{2}m_{+,x}\sigma_3
			+
			E\left(
			\Phi_x\Phi^{-1}
			+
			\mathrm{i}\lambda^2\Phi\sigma_3\Phi^{-1}
			\right)E^{-1}.
		\end{equation}
		Thus
		\begin{equation}\label{eq:eta_x_inf}
			\eta_x\eta^{-1}
			=
			\mathrm{i}\lambda^2\sigma_3
			+
			\mathrm{i}\lambda
			E[\Phi^{(1)},\sigma_3]E^{-1}
			+
			\mathcal O(1),
			\qquad \lambda\to\infty .
		\end{equation}
		The large-\(\lambda\) reconstruction formula gives
		\begin{equation}\label{eq:large-lambda-Qx-reconstruction}
			E[\Phi^{(1)},\sigma_3]E^{-1}=Q_x .
		\end{equation}
		Consequently,
		\begin{equation}\label{eq:eta_x_inf_refined}
			\eta_x\eta^{-1}
			-
			\mathrm{i}\lambda^2\sigma_3
			-
			\mathrm{i}\lambda Q_x
			=
			\mathcal O(1),
			\qquad \lambda\to\infty .
		\end{equation}
		
		At the origin, using \(\eta=Y\mathrm{e}^{\mathrm{i}\Theta\sigma_3}\),
		\(\Theta_x=\lambda^2\), and \eqref{eq:Yx-small}--\eqref{eq:Ysigma-small}, we have
		\begin{equation}\label{eq:eta_x_zero}
			\eta_x\eta^{-1} = Y_xY^{-1} +
			\mathrm{i}\lambda^2Y\sigma_3Y^{-1}
			= \mathrm{i}\lambda Q_x + \mathrm{i}\lambda^2\sigma_3 + \mathcal O(\lambda^2),
			\qquad \lambda\to 0 .
		\end{equation}
		Define
		$$
		F_x(\lambda) := \eta_x\eta^{-1} - \mathrm{i}\lambda^2\sigma_3 - \mathrm{i}\lambda Q_x .
		$$
		By the preceding discussion, \(F_x\) has no jump, is removable at
		\(\lambda=0\), and is bounded at infinity. Liouville's theorem implies
		that \(F_x\) is constant. Since \eqref{eq:eta_x_zero} gives
		\(F_x(\lambda)=\mathcal O(\lambda^2)\) as \(\lambda\to0\), the constant
		is zero. This proves \eqref{eq:eta-x-equation}. \hfill
	\end{proof}
	
	\begin{proposition}\label{prop:verify-t-part}
		The function \(\eta(x,t;\lambda)\) defined by \eqref{eq:eta-def}
		satisfies the \(t\)-part of the Lax pair \eqref{eq:Lax pair}, namely
		\begin{equation}\label{eq:eta-t-equation}
			\eta_t
			=
			\left(
			-\frac{\mathrm{i}}{4\lambda^2}\sigma_3
			+
			\frac{1}{2\lambda}Q\sigma_3
			+
			\frac{\mathrm{i}}{2}Q^2\sigma_3
			\right)\eta .
		\end{equation}
	\end{proposition}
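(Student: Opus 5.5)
We follow the same Liouville-type argument as in Proposition \ref{prop:verify-x-part}, now for the logarithmic $t$-derivative. Since the jump of $\eta$ is $\eta_+=\eta_-J_0(\lambda)$ with $J_0$ from \eqref{eq:J0-jump} independent of $x$ and $t$, the matrix $\eta_t\eta^{-1}$ has no jump across $\Sigma_\lambda$. Moreover $\det\eta\equiv1$, because $\det J_0=1$ and $\Phi\to I$. Hence $\eta_t\eta^{-1}$ is analytic in $\mathbb{C}\setminus\{0\}$, and only its behavior at $\lambda=\infty$ and $\lambda=0$ has to be identified. We define
$$
F_t(\lambda):=\eta_t\eta^{-1}+\frac{\mathrm{i}}{4\lambda^2}\sigma_3-\frac{1}{2\lambda}Q\sigma_3-\frac{\mathrm{i}}{2}Q^2\sigma_3 ,
$$
and show that $F_t$ is entire and vanishes at infinity. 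Liouville's theorem then gives $F_t\equiv0$.

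At infinity we use $\Theta_t=-\frac{1}{4\lambda^2}$ and the expansion $\Phi=I+\lambda^{-1}\Phi^{(1)}+\mathcal{O}(\lambda^{-2})$. This gives
$$
\eta_t\eta^{-1}=\frac{\mathrm{i}}{2}m_{+,t}\sigma_3+E\Bigl(\Phi_t\Phi^{-1}-\frac{\mathrm{i}}{4\lambda^2}\Phi\sigma_3\Phi^{-1}\Bigr)E^{-1},
$$
and the right-hand side is $\frac{\mathrm{i}}{2}m_{+,t}\sigma_3+\mathcal{O}(\lambda^{-1})$. It follows that $F_t(\lambda)=\frac{\mathrm{i}}{2}m_{+,t}\sigma_3-\frac{\mathrm{i}}{2}Q^2\sigma_3+\mathcal{O}(\lambda^{-1})$. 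So we need the identity $m_{+,t}=Q^2$ on the diagonal, that is, $\partial_t m_+=q\,q^{\mathcal{PT}}$. To prove it, differentiate $m_+(x,t)=\int_{+\infty}^x q_y(q^{\mathcal{PT}})_y\,dy$ in $t$. Then use the $x$-equation already established in Proposition \ref{prop:verify-x-part}, together with the compatibility of the reconstructed $Q$ as it arises from the expansion of $Y$. I expect this identity to reduce to $\partial_x(m_{+,t}-qq^{\mathcal{PT}})=0$ with decay at $+\infty$.

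A cleaner route, which I would try first, avoids assuming the equation. Compare the $\mathcal{O}(1)$ coefficients of $\eta_t\eta^{-1}$ at $\infty$ and at $0$ after the Liouville step has pinned down the singular parts. Since $F_t$ is bounded, the Liouville step makes it constant, and the constant is then read off at $\lambda=0$.

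At the origin we write $\eta=Y\mathrm{e}^{\mathrm{i}\Theta\sigma_3}$, which gives
$$
\eta_t\eta^{-1}=Y_tY^{-1}-\frac{\mathrm{i}}{4\lambda^2}Y\sigma_3Y^{-1}.
$$
Inserting \eqref{eq:Yt-small} and \eqref{eq:Ysigma-small}, we get
$$
\eta_t\eta^{-1}=-\frac{\mathrm{i}}{4\lambda^2}\sigma_3+\frac{1}{2\lambda}Q\sigma_3+\frac{\mathrm{i}}{2}Q^2\sigma_3+\mathcal{O}(\lambda).
$$
This requires the $\lambda^3$ term of $Y\sigma_3Y^{-1}$ only to the extent that it contributes $\mathcal{O}(\lambda)$, which it does. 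Consequently $F_t=\mathcal{O}(\lambda)$ near $0$, so $\lambda=0$ is a removable singularity and $F_t(0)=0$. Combining the two expansions, $F_t$ is entire and bounded, hence constant, and the constant equals $F_t(0)=0$. This yields \eqref{eq:eta-t-equation}. As a by-product, matching the constant term at infinity gives $m_{+,t}=qq^{\mathcal{PT}}$, which removes the need to verify it separately.

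The main obstacle is justifying the small-$\lambda$ expansion \eqref{eq:Y-small-lambda-expansion} for the RH solution $\Phi(x,t;\lambda)$ itself, rather than for the Jost functions. In particular we need to know that the $\lambda^2$ coefficient $D$ is diagonal and that $Y_t Y^{-1}=\mathcal{O}(\lambda)$ holds with $t$-differentiability. To handle this, use the $z$-plane representation $N_{\delta}$ together with the bounds $z^{-2}r_j\in L^2$ from Proposition \ref{prop:zminus2-reflection} and $b=\mathcal{O}(\lambda^5)$ from Proposition \ref{prop: b-lambda-O5}. These show that the jump is $I+\mathcal{O}(\lambda^{5})$ near $0$ uniformly for $t$ in compact sets, and differentiating the Beals--Coifman equation in $t$ (which brings down the factor $1/(2z)$) remains controlled by the same bounds. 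The symmetry $\lambda\mapsto-\lambda$ in \eqref{eq:Jost-solution-Symmetry} then yields the diagonal or off-diagonal parity of the coefficients.
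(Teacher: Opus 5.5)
Your proposal is correct and follows essentially the same route as the paper: $\eta_t\eta^{-1}$ has no jump, it is bounded at infinity because $\Theta_t=-1/(4\lambda^2)$ and $\Phi=I+\mathcal{O}(\lambda^{-1})$, its expansion at $\lambda=0$ comes from \eqref{eq:Yt-small}--\eqref{eq:Ysigma-small}, Liouville's theorem makes $F_t$ constant, and $F_t=\mathcal{O}(\lambda)$ at the origin forces that constant to be zero. Your first plan of showing that $F_t$ vanishes at infinity (which would require $\partial_t m_+=qq^{\mathcal{PT}}$) is unnecessary, since boundedness suffices exactly as in the paper, and your remarks on $\det\eta\equiv1$ and on justifying the small-$\lambda$ expansion of $Y$ only fill in points the paper takes for granted.
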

	
	\begin{proof}
		The same conjugation of the jump relation shows that
		\(\eta_t\eta^{-1}\) has no jump across \(\Sigma_\lambda\). Its possible
		singularities are again only at \(\lambda=0\) and \(\lambda=\infty\).
		Moreover,
		\begin{equation}\label{eq:eta_t_eta_inv}
			\eta_t\eta^{-1} = \frac{\mathrm{i}}{2}m_{+,t}\sigma_3 + E\left(
			\Phi_t\Phi^{-1} +
			\mathrm{i}\Theta_t\Phi\sigma_3\Phi^{-1}
			\right)E^{-1}.
		\end{equation}
		As \(\lambda\to\infty\), we have \(\Theta_t=-1/(4\lambda^2)\) and
		\(\Phi=I+\mathcal O(\lambda^{-1})\). Hence
		\begin{equation}\label{eq:eta_t_inf}
			\eta_t\eta^{-1}=\mathcal O(1),
			\qquad \lambda\to\infty .
		\end{equation}
		
		Near \(\lambda=0\), we write
		$$
		\eta_t\eta^{-1} = Y_tY^{-1} + \mathrm{i}\Theta_tY\sigma_3Y^{-1},
		\qquad
		\Theta_t=-\frac{1}{4\lambda^2}.
		$$
		Using \eqref{eq:Yt-small} and \eqref{eq:Ysigma-small}, we get
		\begin{equation}\label{eq:eta_t_zero}
			\eta_t\eta^{-1}
			=
			-\frac{\mathrm{i}}{4\lambda^2}\sigma_3
			+
			\frac{1}{2\lambda}Q\sigma_3
			+
			\frac{\mathrm{i}}{2}Q^2\sigma_3
			+
			\mathcal O(\lambda),
			\qquad \lambda\to0 .
		\end{equation}
		Define
		$$
		F_t(\lambda) := \eta_t\eta^{-1} + \frac{\mathrm{i}}{4\lambda^2}\sigma_3 - \frac{1}{2\lambda}Q\sigma_3 - \frac{\mathrm{i}}{2}Q^2\sigma_3 .
		$$
		Then \(F_t\) has no jump, the singularity at \(\lambda=0\) is
		removable by \eqref{eq:eta_t_zero}, and \(F_t=\mathcal O(1)\) at
		infinity by \eqref{eq:eta_t_inf}. Hence \(F_t\) is constant by
		Liouville's theorem. Since \eqref{eq:eta_t_zero} gives
		\(F_t(\lambda)=\mathcal O(\lambda)\) as \(\lambda\to0\), this constant
		is zero. Therefore \eqref{eq:eta-t-equation} holds. \hfill
	\end{proof}
	
	\begin{corollary}\label{cor:verify-nFL}
		The potential \(q(x,t)\) reconstructed from the time-dependent RH problem
		satisfies the nonlocal Fokas--Lenells equation (\ref{eq:nFL})
	\end{corollary}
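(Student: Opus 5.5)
The plan is to obtain Corollary \ref{cor:verify-nFL} from Propositions \ref{prop:verify-x-part} and \ref{prop:verify-t-part} through the compatibility of mixed derivatives. Those propositions give
\[
\eta_x=X\eta,\qquad \eta_t=T\eta,
\]
with $X=\mathrm{i}\lambda^2\sigma_3+\mathrm{i}\lambda Q_x$ and $T=-\frac{\mathrm{i}}{4\lambda^2}\sigma_3+\frac{1}{2\lambda}Q\sigma_3+\frac{\mathrm{i}}{2}Q^2\sigma_3$. Here $Q$ is the off-diagonal matrix built from the reconstructed pair $(q,q^{PT})$.

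Differentiating each equation in the other variable gives
\[
(X_t-T_x+[X,T])\,\eta=0 .
\]
The matrix $\eta$ is invertible for $\lambda\notin\Sigma_\lambda$, since $\det\Phi\equiv1$: the jump $I+V$ has unit determinant, $\det\Phi\to1$ at infinity, and Liouville's theorem applies. Hence the zero-curvature identity holds for all $\lambda$ off the contour.

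To make the $x,t$ derivatives legitimate, I would rely on the regularity from Proposition \ref{prop:whole-line-reconstruction-estimate} together with the time evolution of Proposition \ref{prop:time-evolution-W}. Differentiability in $t$ of the Beals--Coifman solution then follows from differentiating the resolvent identity, because $\partial_t$ of the jump only brings down a factor $z^{-1}$. The weight $z^{-2}r_j\in L^2$ built into $\mathcal W$ controls that factor.

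I would then expand the zero-curvature identity in powers of $\lambda$ and read off each order:
\begin{itemize}
\item At order $\lambda^1$, the commutator term $-\frac{1}{4}[\,\mathrm{i}Q_x,\mathrm{i}\sigma_3]$ and the $\mathrm{i}\lambda Q_{xt}$ term combine into $\mathrm{i}Q_{xt}-\mathrm{i}Q-\frac12[Q_x,Q^2\sigma_3]$. A direct computation, using $Q\sigma_3=-\sigma_3Q$ and the fact that $Q^2$ is diagonal, turns the off-diagonal entries of this expression into
\[
q_{xt}-q+\mathrm{i}qq^{PT}q_x=0
\]
in the $(2,1)$ slot. The $(1,2)$ slot gives the same equation for $q^{PT}$.
\item At order $\lambda^0$, the terms cancel identically.
\item At order $\lambda^{-1}$, the identity gives $(Q\sigma_3)_x$ against $[Q_x,\sigma_3]/\ldots$, which is automatically consistent.
\end{itemize}

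\textbf{Main obstacle.} The key step is to confirm that the reconstructed $(1,2)$ entry really is $q(-x,-t)$, so that $Q$ has the nonlocal form \eqref{eq:PotentialMatrix} rather than an independent pair $(q,p)$. For this I would show that the time-dependent jump satisfies the symmetry induced by \eqref{eq:Scattering-matrix symmetry}: under $(x,t)\mapsto(-x,-t)$ together with conjugation by $\sigma_1$ and inversion, the jump is preserved. This follows from $\breve b=-b$ and from $\Theta(-x,-t;\lambda)=-\Theta(x,t;\lambda)$.

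Uniqueness of the RH solution, from Proposition \ref{prop:G-unique-solution}, then yields
\[
\Phi(x,t;\lambda)=\sigma_1\Phi(-x,-t;\lambda)^{-T}\sigma_1
\]
up to the $E$ factors. Comparing small-$\lambda$ coefficients via \eqref{eq:Y-small-lambda-expansion} gives $Q_{12}(x,t)=Q_{21}(-x,-t)$. With this symmetry in hand, the derived coupled equation reduces to \eqref{eq:nFL}, and at $t=0$ the reconstruction returns $q_0$ by the bijectivity of the direct and inverse maps.
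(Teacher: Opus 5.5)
Your zero-curvature argument is the paper's proof. The paper computes directly $X_t-T_x+[X,T]=\mathrm{i}\lambda\,(Q_{xt}-Q+\mathrm{i}Q^2Q_x\sigma_3)$ and reads off the $(2,1)$ entry, and your power-by-power bookkeeping reaches the same identity. There is one slip: $-\frac14[\mathrm{i}Q_x,\mathrm{i}\sigma_3]=\frac12 Q_x\sigma_3$ is the $\lambda^{-1}$ term, where it cancels $-\frac{1}{2\lambda}Q_x\sigma_3$ from $-T_x$. The order-$\lambda$ terms are $\mathrm{i}Q_{xt}$, $[\mathrm{i}\sigma_3,\frac12 Q\sigma_3]=-\mathrm{i}Q$ and $[\mathrm{i}Q_x,\frac{\mathrm{i}}{2}Q^2\sigma_3]$. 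Invertibility of $\eta$ via $\det\Phi\equiv 1$ is a correct detail that the paper leaves implicit. The point you call the main obstacle is not treated in the paper at all; it simply asserts that the $(1,2)$ entry gives the equation for $q^{PT}$. Your concern is therefore legitimate, but your resolution of it does not work.

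The identity $\Phi(x,t;\lambda)=\sigma_1\Phi(-x,-t;\lambda)^{-T}\sigma_1$ is false. Because $\det\Phi=1$, one has $\Phi^{-T}=\sigma_2\Phi\sigma_2$, so the right-hand side equals $\sigma_3\Phi(-x,-t;\lambda)\sigma_3$. Its jump is
\[
\sigma_3\bigl(I+V(-x,-t;\lambda)\bigr)\sigma_3=\begin{pmatrix}1&-\breve r\,\mathrm{e}^{-2\mathrm{i}\Theta}\\ r\,\mathrm{e}^{2\mathrm{i}\Theta}&1-r\breve r\end{pmatrix}\neq I+V(x,t;\lambda),
\]
and no $\lambda$-independent ($E$-type) factor can repair the reversed exponentials. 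Even granting the identity, the $O(\lambda)$ coefficients would give $Q(x,t)=-\sigma_1Q(-x,-t)^{T}\sigma_1=-Q(-x,-t)$ up to diagonal factors. That relates $Q_{12}$ to itself, not to $Q_{21}(-x,-t)$.

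The correct symmetry has no inversion but carries a $\lambda$-dependent diagonal factor. It is read off column by column from $\psi^{+}(x,t;\lambda)=\sigma_1\psi^{-}(-x,-t;\lambda)\sigma_1$ applied to $Y=E\Phi$:
\[
Y(x,t;\lambda)=\sigma_1Y(-x,-t;\lambda)\sigma_1\,\Delta(\lambda),\qquad \Delta=a^{-\sigma_3}\ \text{in } D_-,\quad \Delta=\breve a^{\sigma_3}\ \text{in } D_+ .
\]
Equivalently, $\Phi(x,t;\lambda)=a_\infty^{\sigma_3}\sigma_1\Phi(-x,-t;\lambda)\sigma_1\Delta(\lambda)$.

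Checking that this preserves the jump needs $\breve b=-b$ and $\Theta(-x,-t)=-\Theta(x,t)$, as you say. It also needs $a\breve a(1-r\breve r)=1$, which comes from $\det S=1$; without it the diagonal entries of the jump do not match. Since $a$ and $\breve a$ are not part of the inverse data, you must first recover them from $1-r\breve r=1+r_1r_2$ through a scalar RH problem (the paper's $\delta$, up to $a_\infty$). Only then can you invoke uniqueness. After that, $\Delta=I+O(\lambda^2)$ because $a,\breve a=1+O(\lambda^2)$. Comparing $O(\lambda)$ terms then gives $Q(x,t)=\sigma_1Q(-x,-t)\sigma_1$, i.e.\ $Q_{12}(x,t)=Q_{21}(-x,-t)$, which is the relation you need.
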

	
	\begin{proof}
		By Propositions \ref{prop:verify-x-part} and \ref{prop:verify-t-part},
		\(\eta\) satisfies
		$$
		\eta_x=X\eta,
		\qquad
		\eta_t=T\eta,
		$$
		where
		$$
		X=\mathrm{i}\lambda^2\sigma_3+\mathrm{i}\lambda Q_x,
		\qquad
		T= -\frac{\mathrm{i}}{4\lambda^2}\sigma_3 + \frac{1}{2\lambda}Q\sigma_3 + \frac{\mathrm{i}}{2}Q^2\sigma_3 .
		$$
		The compatibility condition \(\eta_{xt}=\eta_{tx}\) is equivalent to
		\begin{equation}\label{eq:zero-curvature-verified}
			X_t-T_x+[X,T]=0 .
		\end{equation}
		Using \(Q\sigma_3=-\sigma_3Q\), \(Q_x\sigma_3=-\sigma_3Q_x\), and
		\(Q^2=q q^{PT}I\), a direct computation gives
		\bee \label{zero2}
		X_t-T_x+[X,T] =\mathrm{i}\lambda Q_{xt}
		- \mathrm{i}\lambda Q - \lambda Q^2Q_x\sigma_3 .
		\ene
		Therefore, \eqref{eq:zero-curvature-verified} with (\ref{zero2}) is equivalent to
		\begin{equation}\label{eq:matrix-nFL-verified}
			Q_{xt}-Q+\mathrm{i}Q^2Q_x\sigma_3=0 .
		\end{equation}
		Since
		$$
		Q=
		\begin{pmatrix}
			0&q^{PT}\\
			q&0
		\end{pmatrix},
		\qquad
		Q^2Q_x\sigma_3
		= q q^{PT}
		\begin{pmatrix}
			0&-(q^{PT})_x\\
			q_x&0
		\end{pmatrix},
		$$
		the \((2,1)\)-entry of \eqref{eq:matrix-nFL-verified} gives exactly
		$$
		q_{xt}-q+\mathrm{i}q q^{PT}q_x=0 .
		$$
		The \((1,2)\)-entry gives the corresponding equation for
		\(q^{PT}\). Thus the reconstructed potential is a solution of
		\eqref{eq:nFL}. \hfill
	\end{proof}
	
	\section{Existence of global solutions to the nFL equation}\label{section-proof}
	
	\begin{proposition}\label{prop:local-solution}
		Let \(q_0\in H^3(\mathbb R)\cap H^{2,1}(\mathbb R)\) satisfy the
		small-norm condition \eqref{eq:small-norm}. Then, for every \(T>0\), there exists a unique
		solution
		\begin{equation}\label{eq:local-solution-space}
			q\in C\left([-T,T], H^3(\mathbb R)\cap H^{2,1}(\mathbb R)\right)
		\end{equation}
		to the Cauchy problem of the nonlocal Fokas--Lenells equation. Moreover, on bounded subsets of the initial data space, the solution map
		\begin{equation}\label{eq:solution-map-local}
			q_0
			\longmapsto
			q(\cdot,t)
			\in
			C\left([-T,T], H^3(\mathbb R)\cap H^{2,1}(\mathbb R)\right)
		\end{equation}
		is Lipschitz continuous.
	\end{proposition}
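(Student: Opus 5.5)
We prove Proposition~\ref{prop:local-solution} by composing the three maps of the scheme in Fig.~\ref{fig1}.

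\textbf{Step 1: direct scattering.} Since $q_0\in H^3(\mathbb R)\cap H^{2,1}(\mathbb R)$ satisfies \eqref{eq:small-norm}, Propositions~\ref{prop:reflection-small-bound} and \ref{prop:zminus2-reflection} exclude zeros of $a,\breve a$ on $\Sigma_\lambda$ and give $|r_{1,2}|\le c_0<1$. Proposition~\ref{prop:r12-Lipschitz} then gives $(r_1,r_2)=(r_1(0;\cdot),r_2(0;\cdot))\in\mathcal W(\mathbb R)$, and the map $q_0\mapsto(r_1,r_2)$ is Lipschitz on bounded sets.

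\textbf{Step 2: time propagation.} For $t\in[-T,T]$ we define $r_{1,2}(t;z)$ by \eqref{eq:r12-time}. Proposition~\ref{prop:time-evolution-W} gives $r_{1,2}(t;\cdot)\in\mathcal W$ with bound $C_T$. Since the phases are unimodular for $z\in\mathbb R$, we have $|r_j(t;z)|=|r_j(z)|\le c_0$. Hence the coercivity of Proposition~\ref{prop:V-positive-definite}, the invertibility of Proposition~\ref{prop:operator-inverse-bound}, and the scalar function $\delta$ (which is $t$-independent by \eqref{eq:r1r2-time-invariant}) all hold uniformly in $t$. The Lipschitz estimate for Step~2 is immediate from the explicit formulas \eqref{eq:dz-r1-time}--\eqref{eq:dz-r2-time}, since $r_j(t)-\tilde r_j(t)=\mathrm{e}^{\mp \mathrm{i}t/(2z)}(r_j-\tilde r_j)$.

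\textbf{Step 3: inverse scattering and identification.} We solve RH Problem~\ref{RH:z-time} for each $t$ and reconstruct $q(\cdot,t)$ from \eqref{eq:recon-qx-time}--\eqref{eq:recon-qPT-time}. Propositions~\ref{prop:whole-line-reconstruction-estimate} and \ref{prop:q-lipschaitz} give $q(\cdot,t)\in H^3\cap H^{2,1}$, with the map $(r_1,r_2)(t)\mapsto q(\cdot,t)$ Lipschitz uniformly in $t\in[-T,T]$. Corollary~\ref{cor:verify-nFL} shows that $q$ solves \eqref{eq:nFL}.

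Two consistency points must then be checked. First, at $t=0$ the reconstruction returns $q_0$. This follows from uniqueness of the RH solution, because the Jost-function construction \eqref{eq:N-z-def} built from $q_0$ already solves RH Problem~\ref{RH:z-plane}. Second, the $q^{PT}$ appearing in the reconstruction coincides with $q(-x,-t)$. We would derive this from the symmetry \eqref{eq:Scattering-matrix symmetry} propagated in time, using $S(\lambda;t)=\sigma_1S^{-1}(\lambda;-t)\sigma_1$ together with \eqref{eq:r12-time}. Uniqueness of the RH problem then implies that $\sigma_1 N(-x,-t)\sigma_1$ gives the same data.

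\textbf{Continuity in $t$.} Continuity of $t\mapsto q(\cdot,t)$ in $H^3\cap H^{2,1}$ follows from the Lipschitz bound of Step~3 together with continuity of $t\mapsto r_j(t;\cdot)$ in $\mathcal W$. The latter holds by dominated convergence, since $\partial_z$ of the phase is controlled by $z^{-2}r_j\in L^2$. Composing the three Lipschitz maps gives the Lipschitz continuity of the solution map \eqref{eq:solution-map-local} on bounded sets.

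\textbf{Uniqueness.} Any solution in $C([-T,T],H^3\cap H^{2,1})$ generates Jost functions, and hence scattering data. Differentiating the scattering relation in $t$ via the $t$-part of the Lax pair shows that these data evolve by \eqref{eq:S-time-evolution}. By bijectivity of the direct/inverse maps for data in $\mathcal W$ with $|r_j|\le c_0$, such a solution coincides with the one reconstructed above. This requires that the smallness-derived coercivity, rather than the condition \eqref{eq:small-norm} at time $t$ (which need not hold at later times), is what the inverse map uses. This is the case because Step~3 only needs $|r_j|\le c_0$.

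\textbf{Main obstacle.} The delicate step is precisely this point: \eqref{eq:small-norm} is imposed only at $t=0$, while the direct-scattering statements (e.g.\ absence of zeros of $a$) are formulated under that condition. For uniqueness, one must show that $a(t;\lambda)=a(\lambda)$ stays zero-free, using the $t$-independence in \eqref{eq:b-breveb-time}. This prevents eigenvalues from emerging for a general competing solution, so the Jost-based $N$ at time $t$ is a legitimate solution of the same RH problem, and uniqueness follows from Proposition~\ref{prop:G-unique-solution}.
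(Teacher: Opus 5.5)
Your main line is the paper's proof. It has the same ingredients: Proposition~\ref{prop:r12-Lipschitz} for direct scattering, the unimodular evolution \eqref{eq:r12-time} with Proposition~\ref{prop:time-evolution-W}, and Propositions~\ref{prop:whole-line-reconstruction-estimate} and \ref{prop:q-lipschaitz} at each $t$. Continuity in $t$ comes from continuity of $t\mapsto r_j(t;\cdot)$ in $\mathcal W(\mathbb R)$; your dominated-convergence argument with majorant $2|\partial_z r_j|+T|z^{-2}r_j|$ is a shorter version of the paper's splitting into $|z|<\delta$ and $|z|\ge\delta$. Lipschitz continuity follows by composition, the equation by Corollary~\ref{cor:verify-nFL}, and uniqueness from RH uniqueness. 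The paper checks neither of your two consistency points and settles uniqueness in one sentence. The problems are therefore in what you add, and two of those additions do not work as written.

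First, reflecting $(x,t)\mapsto(-x,-t)$ and conjugating by $\sigma_1$ does not send a solution of RH Problem~\ref{RH:lambda-time} to a solution of the same problem. Since $\Theta(-x,-t;\lambda)=-\Theta(x,t;\lambda)$, the matrix $\sigma_1(I+V(-x,-t;\lambda))\sigma_1$ carries $1-r\breve r$ in the $(1,1)$ slot. In other words, the lower--upper factorization of $I+V$ becomes an upper--lower one. From \eqref{eq:Jost-solution-Symmetry} and $m_+(-x,-t)=-m_-(x,t)$, a Jost-generated $\Phi$ satisfies
\[
\sigma_1\Phi(-x,-t;\lambda)\sigma_1=\mathrm{e}^{-\frac{\mathrm{i}}{2}M\sigma_3}\,\Phi(x,t;\lambda)\,\mathrm{e}^{\frac{\mathrm{i}}{2}M\sigma_3}\,\delta(\lambda^2)^{-\sigma_3},
\]
with $M$ as in \eqref{eq:M-lambda-def}. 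So for the reconstructed $\Phi$ you must show that both sides of this identity solve the same $\delta$-conjugated problem (the structure of RH Problem~\ref{RH:N-delta}). The jumps agree only because $\breve b=-b$ and $\delta_+\delta_-=\mathrm{e}^{\mathrm{i}M}\breve a/a$ for the time-zero data. This must be done on the $\lambda$-plane, since the gauge $T_1,T_2$ is not $\sigma_1$-covariant, so $\sigma_1N(-x,-t)\sigma_1$ is not the right object.

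Second, your resolution of the ``main obstacle'' overstates what the inverse map needs. Solvability at time $t$ does use only $|r|,|\breve r|\le c_0$ (Propositions~\ref{prop:V-positive-definite} and \ref{prop:operator-inverse-bound}). However, the proof of Proposition~\ref{prop:whole-line-reconstruction-estimate}, which you apply at every $t$, starts from \eqref{eq:phase-factor-bound}. That bound requires smallness of $\widetilde Q$ for the potential at that time. Here $m_\pm$ is complex-valued, so $\mathrm{e}^{\pm\frac{\mathrm{i}}{2}m_\pm}$ is not unimodular, and bounding it via $\|q_x(\cdot,t)\|_{L^2}$ is circular. You need independent control of these phases from the RH solution itself, for instance through $N_{11}(x,t;0)=\mathrm{e}^{-\frac{\mathrm{i}}{2}m_+(x,t)}$. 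The paper leaves this point implicit, but your proposal asserts it is settled when it is not.
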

	
	\begin{proof}
		Starting from \(q_0\), Proposition \ref{prop:r12-Lipschitz} gives
		reflection coefficients:	$		r_1(0;\cdot),r_2(0;\cdot)\in\mathcal W(\mathbb R).$
		The time evolution is defined by \eqref{eq:r12-time}. By Proposition
		\ref{prop:time-evolution-W}, for every \(t\in[-T,T]\),
		$		r_1(t;\cdot),r_2(t;\cdot)\in\mathcal W(\mathbb R).$
		Moreover, the product \(r_1(t;z)r_2(t;z)\) is independent of \(t\), so
		the scalar function \(\delta(z)\) and the coercivity estimates used in
		the solvability of the RH problem remain valid for all \(t\in[-T,T]\).
		
		For each fixed \(t\), solving the time-dependent RH problem
		\ref{RH:z-time} and applying the reconstruction formulae
		\eqref{eq:recon-qx-time}--\eqref{eq:recon-qPT-time} gives a potential
		\(q(x,t)\). Proposition \ref{prop:whole-line-reconstruction-estimate}
		then yields
		\begin{align}
			\|q(\cdot,t)\|_{H^3\cap H^{2,1}}
			&\le
			C
			\left(
			\|r_1(t;\cdot)\|_{\mathcal W}
			+
			\|r_2(t;\cdot)\|_{\mathcal W}
			\right) \nonumber\\
			&\le
			C_T
			\left(
			\|r_1(0;\cdot)\|_{\mathcal W}
			+
			\|r_2(0;\cdot)\|_{\mathcal W}
			\right).
			\label{eq:qt-priori}
		\end{align}
		By the direct scattering estimates Proposition \ref{prop:r12-Lipschitz}, the right-hand side is controlled by
		the \(H^3\cap H^{2,1}\)-norm of \(q_0\). Thus, we have the a priori estimate
		\begin{equation}\label{eq:prior-estimate}
			\|q(\cdot,t)\|_{H^3\cap H^{2,1}} \le C(T)\|q_0\|_{H^3\cap H^{2,1}}.
		\end{equation}
		
		We next prove the continuity of \(q\) with respect to \(t\). By
		Proposition \ref{prop:r12-Lipschitz}, we have
		\begin{equation}\label{eq:rj-W-zminus2-property}
			r_j\in \mathcal W(\mathbb R),
			\qquad j=1,2 .
		\end{equation}
		For convenience, write
		\begin{equation}\label{eq:theta-j-definition}
			\Theta_j(t,z)
			:=
			\mathrm{e}^{\frac{\theta_j\mathrm{i}t}{2z}},
			\qquad
			\theta_1=-1,\quad \theta_2=1 .
		\end{equation}
		Then the time evolution \eqref{eq:r12-time} can be written as
		\begin{equation}\label{eq:rj-time-theta-form}
			r_j(t;z)
			=
			r_j(z)\Theta_j(t,z),
			\qquad j=1,2 .
		\end{equation}
		For \(t_1,t_2\in[0,T]\), set
		\begin{equation}\label{eq:Delta-j-definition}
			\Delta_j(z)
			:=
			r_j(t_2;z)-r_j(t_1;z)
			=
			r_j(z)
			\left[
			\Theta_j(t_2,z)-\Theta_j(t_1,z)
			\right].
		\end{equation}
		Since \(|\Theta_j(t,z)|=1\), the weighted \(L^2\)-parts follow from
		the dominated convergence theorem:
		\begin{equation}\label{eq:Delta-j-weighted-continuity}
			\|\Delta_j\|_{L_z^{2,1}}
			+
			\|\Delta_j\|_{L_z^{2,-2}}
			\longrightarrow 0,
			\qquad
			t_2\to t_1,
			\qquad j=1,2 .
		\end{equation}
		It remains to estimate the derivative part. Direct differentiation gives
		\begin{equation}\label{eq:dz-Delta-j-formula}
			\begin{aligned}
				\partial_z\Delta_j(z)
				&=
				\partial_z r_j(z)
				\left[
				\Theta_j(t_2,z)-\Theta_j(t_1,z)
				\right]
				-
				\frac{\theta_j\mathrm{i}}{2z^2}
				r_j(z)
				\left[
				t_2\Theta_j(t_2,z)
				-
				t_1\Theta_j(t_1,z)
				\right].
			\end{aligned}
		\end{equation}
		Let \(\varepsilon>0\), by the absolute continuity of the \(L^2\)-integral,
		there exists \(\delta>0\) such that
		\begin{equation}\label{eq:low-frequency-choice-delta}
			\sum_{j=1}^2
			\left(
			2\|\partial_z r_j\|_{L_z^2(|z|<\delta)}
			+
			T\|z^{-2}r_j\|_{L_z^2(|z|<\delta)}
			\right)
			<
			\varepsilon .
		\end{equation}
		Using \eqref{eq:dz-Delta-j-formula} and \(|\Theta_j(t,z)|=1\), we obtain
		\begin{equation}\label{eq:low-frequency-dz-Delta}
			\sum_{j=1}^2
			\|\partial_z\Delta_j\|_{L_z^2(|z|<\delta)}
			<
			\varepsilon .
		\end{equation}
		On the region \(|z|\ge \delta\), the phase factors are smooth in \(t\).
		Moreover,
		\begin{equation}\label{eq:phase-time-difference-away-zero}
			\left|
			\Theta_j(t_2,z)-\Theta_j(t_1,z)
			\right|
			\le
			C_\delta |t_2-t_1|,
			\qquad |z|\ge\delta ,
		\end{equation}
		and
		\begin{equation}\label{eq:tphase-time-difference-away-zero}
			\left|
			t_2\Theta_j(t_2,z)
			-
			t_1\Theta_j(t_1,z)
			\right|
			\le
			C_{\delta,T}|t_2-t_1|,
			\qquad |z|\ge\delta .
		\end{equation}
		Therefore, by \eqref{eq:dz-Delta-j-formula},
		\begin{equation}\label{eq:high-frequency-dz-Delta}
			\sum_{j=1}^2
			\|\partial_z\Delta_j\|_{L_z^2(|z|\ge\delta)}
			\le
			C_{\delta,T}|t_2-t_1|
			\sum_{j=1}^2
			\left(
			\|\partial_z r_j\|_{L_z^2}
			+
			\|z^{-2}r_j\|_{L_z^2}
			\right).
		\end{equation}
		Combining \eqref{eq:low-frequency-dz-Delta} and
		\eqref{eq:high-frequency-dz-Delta}, and then letting \(t_2\to t_1\), we get
		\begin{equation}\label{eq:Delta-j-H1-continuity}
			\sum_{j=1}^2
			\|\partial_z\Delta_j\|_{L_z^2}
			\longrightarrow 0,
			\qquad
			t_2\to t_1 .
		\end{equation}
		Together with \eqref{eq:Delta-j-weighted-continuity}, this proves
		\begin{equation}\label{eq:rj-time-continuity-W}
			\sum_{j=1}^2
			\|r_j(t_2;\cdot)-r_j(t_1;\cdot)\|_{\mathcal W}
			\longrightarrow 0,
			\qquad
			t_2\to t_1 .
		\end{equation}
		Finally, using the Lipschitz continuity of the inverse scattering map in
		Proposition \ref{prop:q-lipschaitz}, we obtain
		\begin{equation}\label{eq:q-time-continuity-from-r}
			\begin{aligned}
				\|q(\cdot,t_2)-q(\cdot,t_1)\|_{H^3\cap H^{2,1}}
				&\le
				C_\rho
				\sum_{j=1}^2
				\|r_j(t_2;\cdot)-r_j(t_1;\cdot)\|_{\mathcal W} \longrightarrow 0,
				\qquad t_2\to t_1 .
			\end{aligned}
		\end{equation}
		Hence
		\begin{equation}\label{eq:q-time-continuity}
			q\in C\left([-T,T],H^3(\mathbb R)\cap H^{2,1}(\mathbb R)\right).
		\end{equation}
		
		Let \(q_0,\widetilde q_0\in H^{3}(\mathbb{R}) \cap H^{2,1}(\mathbb{R}) \) satisfy $\|q_0\|_{H^{3}(\mathbb{R}) \cap H^{2,1}(\mathbb{R})} , \| \widetilde q_0 \|_{H^{3}(\mathbb{R}) \cap H^{2,1}(\mathbb{R})} \le U $ for some $U > 0$. From \eqref{eq:qt-priori} and Lipschitz continuity in Proposition \ref{prop:r12-Lipschitz} and Proposition \ref{prop:q-lipschaitz}, we have
		\begin{equation}\label{eq:q-T-lipschitz}
			\begin{split}
			    \| q - \tilde{q} \|_{C\left([-T,T],H^3(\mathbb R)\cap H^{2,1}(\mathbb R)\right)} &= \| q(\cdot; t^*) - \tilde{q}(\cdot; t^*) \|_{H^3(\mathbb R)\cap H^{2,1}(\mathbb R)}  \\
			    &\le C_1(U) \left( \|r_1(\cdot;t^*) - \widetilde{r_1}(\cdot; t^*) \|_{\mathcal{W}} + \|r_2(\cdot;t^*) - \widetilde{r_2}(\cdot;t^*) \|_{\mathcal{W}}
			    \right)  \\
			    &\le C_2(U,T) \left(\|r_1(\cdot;0) - \widetilde{r_1}(\cdot; 0) \|_{\mathcal{W}} + \|r_2(\cdot;0) - \widetilde{r_2}(\cdot;0) \|_{\mathcal{W}}
			    \right)  \\
			    &\le C(U,T)\|q_0 - \tilde{q}_0 \|_{H^{3}(\mathbb{R})\cap H^{2,1}(\mathbb{R})},
		    \end{split}
		\end{equation}
		where $t^* \in [-T,T]$ and $C(U,T)$ is a polynomial function with respect to $T$ from \eqref{eq:poly-T-def}.
		This rigorously proves the mapping
		$$
		H^3(\mathbb R)\cap H^{2,1}(\mathbb R) \ni q_0 \longmapsto q \in C\left([-T,T], H^3(\mathbb R)\cap H^{2,1}(\mathbb R)\right)
		$$
		is Lipschitz continuous.
		
		Finally, Propositions \ref{prop:verify-x-part} and \ref{prop:verify-t-part}
		show that the reconstructed function satisfies the Lax pair, hence by
		Corollary \ref{cor:verify-nFL} it solves the nonlocal Fokas--Lenells
		equation. Uniqueness follows from the uniqueness of the RH problem and the
		Lipschitz continuity of both the direct and inverse scattering maps.
		\hfill
	\end{proof}
	
	\begin{theorem}\label{thm:global-solution}
		Let \(q_0\in H^3(\mathbb R)\cap H^{2,1}(\mathbb R)\) satisfy the
		small-norm condition \eqref{eq:small-norm}.
		Then the Cauchy problem of the nonlocal Fokas--Lenells equation admits a
		unique global solution
		\begin{equation}\label{eq:global-solution-space}
			q\in
			C\left(\mathbb{R},H^3(\mathbb R)\cap H^{2,1}(\mathbb R)\right).
		\end{equation}
		Furthermore, the solution map
		\begin{equation}\label{eq:q0-q-lipschitz}
			H^{3}(\mathbb{R}) \cap H^{2,1}(\mathbb{R}) \ni q_0(x) \longmapsto q(x;t) \in C
			\left( \mathbb{R}, H^{3}(\mathbb{R}) \cap H^{2,1}(\mathbb{R})
			\right)
		\end{equation}
		is Lipschitz continuous.
	\end{theorem}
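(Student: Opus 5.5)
The plan is to obtain Theorem \ref{thm:global-solution} from Proposition \ref{prop:local-solution}, which works on an arbitrary interval $[-T,T]$. The extension from $[-T,T]$ to $\mathbb{R}$ should be free, because nothing in the construction depends on a bootstrap or on a smallness that might degrade in time. Fix $q_0$ satisfying \eqref{eq:small-norm}. The scattering data $r_{1,2}(0;\cdot)\in\mathcal W(\mathbb R)$ are computed once from $q_0$ by Proposition \ref{prop:r12-Lipschitz}. The time flow \eqref{eq:r12-time} multiplies them by unimodular phases. Hence $|r_j(t;z)|=|r_j(z)|\le c_0<1$ for every $t\in\mathbb R$, and the product $r_1r_2$ (and so $\delta$) is independent of $t$. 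Consequently the coercivity bounds of Proposition \ref{prop:V-positive-definite}, the resolvent bound of Proposition \ref{prop:operator-inverse-bound}, and the reconstruction estimate of Proposition \ref{prop:whole-line-reconstruction-estimate} hold with $t$-independent constants. The only $t$-dependence is the $H^1_z$ norm through \eqref{eq:poly-T-def}, which grows at most linearly in $|t|$.

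\textbf{Step 1 (existence on $\mathbb R$).} For each $T>0$, Proposition \ref{prop:local-solution} gives a solution $q^{(T)}\in C([-T,T],H^3\cap H^{2,1})$. This solution is defined directly by the inverse scattering formulae \eqref{eq:recon-qx-time}--\eqref{eq:recon-qPT-time} applied to the data $r_j(t;\cdot)$. Those data do not depend on $T$. So for $T<T'$ we get $q^{(T')}|_{[-T,T]}=q^{(T)}$, and setting $q(\cdot,t):=q^{(T)}(\cdot,t)$ for any $T>|t|$ is consistent. Continuity in $t$ is local and follows from the estimate \eqref{eq:rj-time-continuity-W} together with Proposition \ref{prop:q-lipschaitz}. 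Therefore $q\in C(\mathbb R,H^3\cap H^{2,1})$, and by Corollary \ref{cor:verify-nFL} it solves \eqref{eq:nFL} for all $t$.

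\textbf{Step 2 (uniqueness).} Let $\tilde q$ be another solution in the class \eqref{eq:global-solution-space} with $\tilde q(\cdot,0)=q_0$. I will restrict both solutions to $[-T,T]$ with $T$ arbitrary and invoke the uniqueness part of Proposition \ref{prop:local-solution}. This amounts to showing that any such solution has scattering data evolving by \eqref{eq:r12-time}. That follows from the $t$-part of the Lax pair \eqref{eq:Lax pair} applied to the Jost functions, which gives \eqref{eq:S-time-evolution}. The scattering data then coincide with those of $q$ for all $t$, and unique solvability of RH Problem \ref{RH:z-time} (Proposition \ref{prop:G-unique-solution}) forces $\tilde q=q$. \emph{The main obstacle} is the nonlocal coupling of $t$ and $-t$. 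To use this argument I must make sure the smallness \eqref{eq:small-norm} at time $t$ is not actually needed, only the bounds $|r_j(t)|\le c_0$, which hold by unimodularity. I would route every use of \eqref{eq:small-norm} at later times through Propositions \ref{prop:r1,2-strict-control} and \ref{prop:zminus2-reflection} evaluated at $t=0$.

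\textbf{Step 3 (Lipschitz continuity).} The global solution map into $C(\mathbb R,\cdot)$ is not uniformly Lipschitz in the sup norm, since the constant in \eqref{eq:q-T-lipschitz} is polynomial in $T$. So I would interpret Lipschitz continuity in the topology of $C(\mathbb R,X)$ as uniform convergence on compacts. In that sense, \eqref{eq:q-T-lipschitz} gives, on each $[-T,T]$ and on bounded sets of initial data satisfying \eqref{eq:small-norm}, $\|q-\tilde q\|_{C([-T,T],X)}\le C(U,T)\|q_0-\tilde q_0\|_X$. This is the statement that follows, as in Proposition \ref{prop:local-solution}.
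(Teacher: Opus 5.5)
Your proposal is correct and follows essentially the paper's route: global existence is read off from Proposition \ref{prop:local-solution} on arbitrary symmetric intervals $[-T,T]$, and Lipschitz continuity comes from \eqref{eq:q-T-lipschitz} with a constant $C(U,T)$ that is polynomial in $T$. Your observation that the inverse-scattering construction does not depend on $T$, so the local pieces glue, justifies the extension more cleanly than the paper's appeal to the a priori bound \eqref{eq:prior-estimate}.

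The only real difference is the last step. Where you stop at a family of Lipschitz bounds on each $[-T,T]$, the paper combines them into one estimate $d(q,\widetilde q)\le C(U)\|q_0-\widetilde q_0\|_{H^3\cap H^{2,1}}$ for the Fréchet metric $d(q,\widetilde q)=\sum_{n\ge1}2^{-n}\|q-\widetilde q\|_n/(1+\|q-\widetilde q\|_n)$. This uses $\sum_n C(U,n)2^{-n}<\infty$, and it is the one line you should add so that \eqref{eq:q0-q-lipschitz} is literally a Lipschitz statement with respect to a metric on $C(\mathbb{R},H^3\cap H^{2,1})$.
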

	
	\begin{proof}
		The idea of the proof follows from the argument of \cite{Li2023}. Based on Proposition \ref{prop:local-solution}, as \(C(T)\) depends on \(T\) and \(\|r_{1,2}(0;z)\|_{\mathcal W(\mathbb R)}\) grows at most in a polynomial order with respect to \(\|q_0\|_{H^3(\mathbb R)\cap H^{2,1}(\mathbb R)}\), we hence reach the conclusion that the local solution exists in the space \(C\left([0,T], H^3(\mathbb R)\cap H^{2,1}(\mathbb R)\right)\) for an arbitrary fixed \(T>0\). Then the global existence of the solution can be directly asserted from the a priori estimation in \eqref{eq:prior-estimate}.
		
	    It remains to prove the Lipschitz continuity of the solution map.
     	Let \(q_0,\widetilde q_0\in H^{3}(\mathbb{R}) \cap H^{2,1}(\mathbb{R}) \) satisfy
      \bee
      \|q_0\|_{H^{3}(\mathbb{R}) \cap H^{2,1}(\mathbb{R})} , \| \widetilde q_0 \|_{H^{3}(\mathbb{R}) \cap H^{2,1}(\mathbb{R})} \le U \ene for some $U > 0$.

      Define
	    \begin{equation}\label{eq:q-global-metric}
		   d(q,\widetilde q)
		   :=
		   \sum_{n=1}^{\infty}
		   \frac{\|q-\widetilde q\|_n}
		   {2^n\left(1+\|q-\widetilde q\|_n\right)},
		   \qquad
		   \|q-\widetilde q\|_n
		   :=
		   \|q-\widetilde q\|_{C([-n,n],H^3(\mathbb R)\cap H^{2,1}(\mathbb R))} .
	    \end{equation}
	    Then, by \eqref{eq:q-T-lipschitz}, we obtain
	    \begin{equation}\label{eq:q-global-lipschitz-metric}
		    \begin{aligned}
			    d(q,\widetilde q)
			    &\le
			    \sum_{n=1}^{\infty}
			    \frac{C(U,n)\|q_0-\widetilde q_0\|_{H^3(\mathbb R)\cap H^{2,1}(\mathbb R)}}{2^n \left(
				1+C(U,n)\|q_0-\widetilde q_0\|_{H^3(\mathbb R)\cap H^{2,1}(\mathbb R)}
				\right)}  \\
			    &\le
			   \sum_{n=1}^{\infty} \frac{C(U,n)}{2^n}\|q_0-\widetilde q_0\|_{H^3(\mathbb R)\cap H^{2,1}(\mathbb R)} .
		    \end{aligned}
	    \end{equation}
	    Since \(C(U,n)\) grows at most polynomially in \(n\), the series
	    $$
	    \sum_{n=1}^{\infty}\frac{C(U,n)}{2^n}
	    $$
	    is convergent. Hence there exists a constant \(C(U)>0\) such that
	    \begin{equation}\label{eq:q-global-lipschitz-final}
		    d(q,\widetilde q) \le C(U)\|q_0-\widetilde q_0\|_{H^3(\mathbb R)\cap H^{2,1}(\mathbb R)} .
	    \end{equation}
	    Thus, the proof of Theorem \ref{thm:global-solution} is achieved.
	    \hfill
	\end{proof}

\addcontentsline{toc}{section}{Acknowledgements}

\v\noindent{\large\bf Acknowledgements}

This work was partially supported by the National Natural Science Foundation of China (No. 12471242), and Beijing Natural Science Foundation (No. 1262023).

\v\noindent{\large\bf Data Availability Statements}
The data that supports the findings of this study are available within the
article.

\v\noindent{\large\bf Conflict of Interest}
The authors have no conflicts to disclose.

\addcontentsline{toc}{section}{References}


\end{document}